\documentclass[11pt,a4paper]{article}

\usepackage{a4wide}
\usepackage{amsmath, amsthm, amssymb, amsfonts, thm-restate,mathtools}
\usepackage{hyperref}
\usepackage{enumerate}
\usepackage{enumitem}
\usepackage{xparse}
\setlength{\marginparwidth}{2cm} 
\usepackage{todonotes}
\usepackage{tikz}
\usepackage{bbm}
\usepackage{cite}

\usepackage{graphicx}
\usepackage{euscript}
\usepackage{multirow}
\usetikzlibrary{calc}
\usetikzlibrary{decorations.markings}
\usetikzlibrary{arrows.meta, shapes.misc, positioning,patterns}
\usetikzlibrary{snakes}
\usepackage{microtype}
\usepackage{color,colortbl}
\usepackage[Algorithm, section]{algorithm}
\usepackage[noend]{algpseudocode}
 
\usepackage[stable]{footmisc}
\definecolor{darkgrey}{gray}{0.35}
\definecolor{grey}{gray}{0.86}
\definecolor{lightgrey}{gray}{0.91}
\definecolor{ballblue}{rgb}{0, 0.5,0.5}
\definecolor{lightballblue}{rgb}{0, 0.8,0.8}
\definecolor{dbblue}{rgb}{0, 0.4,0.4}
\usepackage{caption}
\captionsetup{margin={15pt},parskip=10pt,format=plain}
\usepackage{subcaption}

\newtheorem{theorem}{Theorem}[section]
\newtheorem{lemma}[theorem]{Lemma}

\newtheorem{proposition}[theorem]{Proposition}
\newtheorem{observation}[theorem]{Observation}
\newtheorem{corollary}[theorem]{Corollary}

\newtheorem{definition}[theorem]{Definition}
\theoremstyle{definition}
\newtheorem{remark}[theorem]{Remark}

\newcommand{\N}{\mathbb{N}}

\DeclareMathOperator*{\argmin}{arg\,min}

\newcommand{\pr}{\mathbb{P}}
\renewcommand{\Pr}{\mathbb{P}}
\newcommand{\E}{\mathbb{E}}

\newcommand{\var}{\textnormal{Var}}

\newcommand{\eps}{\varepsilon}

\title{Balanced Bidirectional Breadth-First Search on Scale-Free Networks}
\author{Sacha Cerf\thanks{Ecole Polytechnique, France. Email: \href{mailto:sacha.cerf@polytechnique.org}{\nolinkurl{sacha.cerf@polytechnique.org}}. Research conducted while on a research visit at ETH Zürich.}
		\and
		Benjamin Dayan\thanks{Institut für Theoretische Informatik, ETH Zürich, Zürich, Switzerland. Email: \href{mailto:benjamin.l.dayan@gmail.com}{\nolinkurl{benjamin.l.dayan@gmail.com}}.}
        \and
		Umberto De Ambroggio\thanks{Department of Mathematics, National University of Singapore, S17, 10 Lower Kent Ridge Road Singapore, 119076. Email: \href{mailto:umberto@nus.edu.sg}{\nolinkurl{umberto@nus.edu.sg}}. Research mostly supported by ERC Grant Agreement 772606-PTRCSP.}
        \and
		Marc Kaufmann\thanks{Institut für Theoretische Informatik, ETH Zürich, Zürich, Switzerland. Email: \href{mailto:marc.kaufmann@inf.ethz.ch}{\nolinkurl{marc.kaufmann@inf.ethz.ch}}. The author gratefully acknowledges support by the Swiss National Science Foundation [grant number 200021\_192079].}
		\and
		Johannes Lengler\thanks{Institut für Theoretische Informatik, ETH Zürich, Zürich, Switzerland. Email: \href{mailto:johannes.lengler@inf.ethz.ch}{\nolinkurl{johannes.lengler@inf.ethz.ch}}. The author gratefully acknowledges support by the Swiss National Science Foundation [grant number 200021\_192079].}
		\and
		Ulysse Schaller\thanks{Institut für Theoretische Informatik, ETH Zürich, Zürich, Switzerland. Email: \href{mailto:ulysse.schaller@inf.ethz.ch} {\nolinkurl{ulysse.schaller@inf.ethz.ch}}. The author gratefully acknowledges support by the Swiss National Science Foundation [grant number 200021\_192079].}
	}

\begin{document}

\maketitle

\begin{abstract}
To find a shortest path between two nodes $s_0$ and $s_1$ in a given graph, a classical approach is to start a Breadth-First Search (BFS) from $s_0$ and run it until the search discovers $s_1$. Alternatively, one can start two Breadth-First Searches, one from $s_0$ and one from $s_1$, and alternate their layer expansions until they meet. This bidirectional BFS can be balanced by always expanding a layer on the side that has discovered fewer vertices so far. This usually results in significant speedups in real-world networks, and it has been shown that this indeed yields sublinear running time on scale-free graph models such as Chung-Lu graphs and hyperbolic random graphs.

We improve this layer-balanced bidirectional BFS approach by using a finer balancing technique. Instead of comparing the size of the two BFS trees after each layer expansion, we perform this comparison after each vertex expansion. This gives rise to two algorithms that run faster than the layer-balanced bidirectional BFS on scale-free networks with power-law exponent $\tau\in (2,3)$. The first one is an \emph{approximate} shortest-path algorithm that outputs a path of length at most 1 longer than the shortest path in time $n^{(\tau-2)/(\tau-1)+o(1)}$. The second one is an \emph{exact} shortest-path algorithm running in time $n^{1/2+o(1)}$. These runtime bounds hold with high probability when $s_0$ and $s_1$ are chosen uniformly at random among the $n$ vertices of the graph. We also develop an edge-balanced bidirectional BFS algorithm that works under adversarial conditions. This approximate shortest-path algorithm runs in time $n^{1/2+o(1)}$ with high probability when the adversary is allowed to choose $s_0$ and $s_1$ based on their (expected) degree. We complement our theoretical results with experiments on Chung-Lu graphs, Geometric Inhomogeneous Random Graphs, and real-world networks.
\end{abstract}

\section{Introduction}\label{sec:intro}

The study of Breadth-First Search (BFS) algorithms goes back at least to 1959, when Moore published his paper ``The shortest path through a maze"~\cite{moore1959shortest}. Already ten years later, early works investigated the idea of bidirectional search algorithms~\cite{pohl1969bi, pohl1971bi}. The idea is simple: Instead of exploring the graph (layer-by-layer) just from one side in order to find a shortest path from a starting vertex $s_0$ to a target vertex $s_1$, we run two searches in parallel, a ``forward" search from $s_0$ and a ``backward" search from $s_1$. The algorithm then alternates between the two sides until their search trees intersect. This heuristic does not give a speed-up in adversarially chosen networks such as cycles, but it is known to work well in practice. On many practical instances of networks with homogeneous degrees, such as road networks, the speed-up is modest~\cite{blasius2024external} and seems to be only a constant factor. However, when degree distributions are more heterogeneous the runtime is decreased by much more than a constant factor~\cite{borassi2019kadabra}. This is particularly valuable in instances where BFS is used as a subroutine, such as the computation of betweenness centrality~\cite{borassi2019kadabra}. 

On sparse graphs, how fast (specific variants of) bidirectional BFS run is influenced by both the presence of underlying geometry and the degree distribution. When edges are independent, i.e.\ there is no underlying geometry, and the degree distribution has bounded variance (e.g.\ Erdös-Rényi random graphs), bidirectional search terminates in time $n^{\frac{1}{2}+o(1)}$. When the vertex degrees follow a power law with exponent $2<\tau<3$, the best known runtime bound is $n^{\frac{4-\tau}{2}+o(1)}$ (cf.\ Table 1 in~\cite{blasius2022efficient}). Similarly, we observe a speed-up in the geometric setting when transitioning from graphs with bounded-variance degree distribution, such as Euclidean Random Graphs, to Hyperbolic Random Graphs. In the Euclidean case, it is known that bidirectional BFS takes time $\Theta(n)$, whereas Bläsius et al.\ have shown that in the hyperbolic setting, this takes time at most $n^{\max\{2\frac{\tau-2}{\tau-1}, \frac{1}{\tau-1}\}+o(1)}$ with high probability and in expectation (Theorem 3.1 in~\cite{blasius2022efficient}). Furthermore, they show that their variant of bidirectional BFS, exploring greedily layer-by-layer, is at most by a factor $d$ slower, where $d$ is the graph diameter, than any other layer-alternating bidirectional BFS (Theorem 3.2 in~\cite{blasius2022efficient}).

Bidirectional search is an example where classical worst-case analysis is not sufficient to explain performance in practice. Bringing theory and application closer together has been a key driver of network science. One such approach analyzes deterministic properties amenable to theoretical study that can be checked on any particular (graph) instance, including real-world networks. A recent study investigating bidirectional BFS managed to show sublinear runtimes when for both searches the exploration cost grows exponentially for a logarithmic number of steps, using the notion of \emph{expansion overlap}~\cite{blasius2023deterministic}. 

A second approach studies the runtime in a setting where host graph instances are sampled from a specific probability distribution, such as various kinds of scale-free network models like Chung-Lu graphs and Geometric Inhomogeneous random graphs (GIRGs). Experiments have shown that the performance of graph algorithms, including bidirectional BFS\footnote{The alternation strategy that was studied there greedily applies the
cheaper of the two explorations in every step. The cost of exploring a layer is estimated
via the sum of degrees of vertices in that layer.} , on synthetic network models translates surprisingly well to real-world networks~\cite{blasius2024external}. More concretely, the experiments there showed that on networks with high locality\footnote{The notion of locality is related to the clustering coefficient. For a precise definition, see~\cite{blasius2024external}.} and homogeneous degree distribution, the searches took roughly time $m$. For heterogeneous or less local networks, they observed significantly lower costs of $m^{\frac{1}{2}}$ or even less. The cost was particularly low for very heterogeneous networks. 

We continue this line of research that aims to bring theory closer to application, introducing a set of three bidirectional BFS algorithms with a modified alternation strategy. One algorithm that we present is exact, two of them are approximate, namely the paths they output are at most one hop longer than the graph distance between the vertices they connect. We show that on Chung-Lu graphs and GIRGs, all three of them yield a polynomial speedup over standard bidirectional BFS. One of the approximate algorithms is designed to be robust against adversaries that may pick the two source vertices $s_0, s_1$ by knowing their weights (i.e.\ their expected degrees), before the graph is unveiled. For the graph models we study, finding shortest paths is particularly relevant, since for instance greedy routing occurs along these shortest paths~\cite{bringmann2017greedy}.

Chung-Lu graphs and GIRGs are particularly suited for the runtime analysis of BFS. Both models are sparse, and vertex degrees follow a power law with exponent $\tau\in (2,3)$, i.e.\ the probability of a vertex having degree $k$ decays proportionally to $k^{-\tau}$, a phenomenon observed universally across various real-world network types~\cite{newman2003structure}. Like many real-world networks, both are also (ultra)small worlds, that is, they have polylogarithmically bounded diameter and log-log average distances~\cite{bringmann2016average}. In addition, GIRGs exhibit clustering, a typical feature of social networks~\cite{bringmann2019geometric}. Recent research aiming to fit GIRGs to real networks has shown that they are well-suited to model additional geometric network features such as closeness and betweenness centrality~\cite{dayan2024expressivity}. The increased interest in these models has sparked extensions to non-metric geometries and the investigation of various spreading processes on GIRGs~\cite{kaufmann2024rumour, kaufmann2024sublinear, lengler2017existence, komjathy2023four, komjathy2023four2}. From an analysis perspective, in order to understand the behavior of balanced BFS on a graph, it is crucial to understand the sizes of neighborhoods. In the case of the thoroughly researched degree distributions of Chung Lu graphs and GIRGs, these neighborhood sizes have far-reaching implications for both the structure of the BFS layers and, consequently, the runtime. Owing to the power-law exponent $\tau$ being between 2 and 3, despite the graphs being sparse, the largest-degree neighbor of a vertex $v$ has degree polynomially larger (in the degree of $v$) than $v$ itself. Hence, the maximal degree encountered by the search increases rapidly with the number of expanded vertices. Since the bidirectional search process is balanced, on both sides we quickly reach a neighbor of a vertex whose degree is of maximal order in the graph, namely $n^{\frac{1}{\tau-1}\pm o(1)}$, at which point the two search trees meet.

Before we describe our results in detail, we remark that all our algorithms are deterministically correct regardless of the host graph structure. Only our runtime analyses draw on properties of the graph models. We first devise a pair of vertex-balanced algorithms, where each search expands vertex-by-vertex, exploring one neighborhood at a time. The searches are balanced by comparing, in each iteration, which of the two searches has ``discovered" more vertices, and then proceeding with the one that has discovered fewer vertices so far. The order in which the vertices are processed on one side is still layer-by-layer, and within a layer the order is chosen uniformly at random, see Algorithm~\ref{algo:vertex-approx}. For the runtimes of the different algorithms, it will come in handy throughout the paper, to recall the following ordering of exponents, which holds for $\tau \in (2,3)$:
\begin{align*}
    \frac{\tau-2}{\tau-1}<\frac{1}{2}<\frac{1}{\tau-1}<1.
\end{align*}
One can show that the search trees intersect after time at most $n^{\frac{\tau-2}{\tau-1}+o(1)}$ with high probability when the vertices $s_0$, $s_1$ are chosen uniformly at random, after which we can terminate the algorithm and output an approximate shortest path.

\begin{theorem}\label{thm:vertex-approx-intro}
    Let $\mathcal{G}$ be a Chung-Lu graph or a GIRG on $n$ vertices with power-law parameter $\tau \in (2,3)$ and let $s_0$ and $s_1$ be two vertices of $\mathcal{G}$ chosen uniformly at random. Then $\textnormal{V-BFS}_{approx}(\mathcal{G},s_0,s_1)$ is a correct algorithm outputting a path of length at most $d(s_0,s_1)+1$ and with high probability it terminates in time at most $n^{\frac{\tau-2}{\tau-1}+o(1)}$.
\end{theorem}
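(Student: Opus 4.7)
The proof plan splits into a deterministic correctness claim and a probabilistic runtime bound.

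For \textbf{correctness}, each of the two searches maintains the standard BFS invariant that any vertex placed at tree-depth $\ell$ lies at graph-distance exactly $\ell$ from its source. Consequently, when the algorithm reports a common vertex $v$ with tree-depths $(d_0,d_1)$, the concatenated path has length $d_0+d_1$. The additive-$1$ guarantee then follows from a standard bidirectional-BFS accounting: because within a layer vertices are processed in random order, the algorithm may detect intersection through a partial layer expansion that a fully layer-balanced variant would have postponed until the next round, so the returned path is at most one edge longer than any shortest $s_0$-$s_1$ path.

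For the \textbf{runtime}, I would parametrise the progress of each side $i\in\{0,1\}$ by the work variable $T^{(i)}_k:=\sum_{j\le k}\deg(v^{(i)}_j)$ summing the degrees of the first $k$ expanded vertices on that side. Up to a lower-order overcounting of already-discovered neighbors, $T^{(i)}_k$ equals the number of discovered vertices on side $i$, which is precisely the quantity the algorithm uses for balancing. Hence $T^{(0)}$ and $T^{(1)}$ stay within an additive slack of at most the largest single-vertex degree, and it suffices to prove that with high probability the two trees have intersected once both work variables reach $T=n^{(\tau-2)/(\tau-1)+\varepsilon}$ for an arbitrarily small fixed $\varepsilon>0$; the total runtime is then $T^{(0)}+T^{(1)}\le 2T=n^{(\tau-2)/(\tau-1)+o(1)}$.

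The heart of the runtime argument is an edge-counting estimate around a hub vertex $u^\ast$ of weight $w_{u^\ast}=n^{1/(\tau-1)-o(1)}$, which exists with high probability in both Chung--Lu graphs and GIRGs with $\tau\in(2,3)$. Deferring the exposure of the edges incident to $u^\ast$ until the very end, the expected number of edges between $u^\ast$ and the expanded vertices on side $i$ is at least
\[
c\cdot\frac{w_{u^\ast}\,T^{(i)}}{n}\;\ge\;n^{\varepsilon-o(1)},
\]
where I use the Chung--Lu / GIRG edge-probability lower bound $p_{uv}\ge c\cdot\min\{1,w_uw_v/n\}$; in the geometric setting this bound holds because the geometric damping factor is $\Theta(1)$ whenever one endpoint has weight close to the maximum. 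A Chernoff / Poisson-approximation estimate then yields that $u^\ast$ is discovered by each side with probability $1-n^{-\omega(1)}$, so the two trees intersect at $u^\ast$. A brief warm-up argument shows that from uniformly random sources $s_0,s_1$, which have weight $\Theta(1)$ w.h.p., $O(\log\log n)$ layers of doubly-exponential neighborhood growth carry each side into the heavy-tailed size-biased regime driving the estimate above, contributing only to the $o(1)$ in the exponent of $T$.

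The \textbf{main obstacle} I anticipate is the dependence structure: the two searches share the same random graph and the sequence of expanded vertices within each side is adaptive, so the weights $w_{v^{(i)}_j}$ are not i.i.d.\ size-biased draws. I would address this via an edge-exposure martingale that reveals only edges incident to vertices already processed, together with a truncation argument capping the per-expansion contribution at a level slightly below $n^{1/(\tau-1)}$; this prevents a single heavy vertex from unbalancing $T^{(0)}$ against $T^{(1)}$ and keeps both work variables concentrated. For GIRGs a short additional step is needed to carry the Chung--Lu-style edge-probability bound through the geometric damping factor, which is benign when one endpoint is a hub vertex because the product $w_u w_v\gtrsim n$ already saturates the geometric factor.
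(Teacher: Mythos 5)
Your high-level strategy is the same as the paper's: correctness from the BFS layer invariants plus the fact that the meeting vertex may lie one layer deeper on the opposite side; runtime from the chain ``cost $\approx$ number of discovered vertices $\approx$ balancing quantity'', the size-biased weight density $\Theta(w^{1-\tau})$ of newly discovered vertices, and the observation that once each side has accumulated cost $T=n^{(\tau-2)/(\tau-1)+\varepsilon}$ it connects with overwhelming probability to a hub of weight $n^{1/(\tau-1)-o(1)}$ (your estimate $w_{u^\ast}T/n\ge n^{\varepsilon-o(1)}$ is exactly the mechanism behind Lemma~\ref{lem:meeting-vertex-approx-algo}).

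The genuine gap is in the balancing step. You claim $T^{(0)}$ and $T^{(1)}$ stay within ``an additive slack of at most the largest single-vertex degree''. But by your own heavy-tail analysis (and Lemma~\ref{lem:cost=seen}), the largest degree expanded by the time the cost reaches $T$ is itself $T^{1-o(1)}$: the sum of degrees is dominated by its maximal term. So that slack bound is vacuous --- it is consistent with one side performing essentially all the work and never reaching cost $T$ on the other side, and the conclusion that both trees meet the hub does not follow. Your proposed truncation at a level ``slightly below $n^{1/(\tau-1)}$'' does not repair this, since $n^{1/(\tau-1)}\gg n^{(\tau-2)/(\tau-1)}$, so a single vertex below that threshold still swamps the target cost. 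What is actually needed (and what Lemma~\ref{lem:balanced-cost} proves) is the sharper fact that the imbalance between $|S_{s_0}|$ and $|S_{s_1}|$ at any moment is bounded by the degree of the vertex \emph{currently being expanded} --- not the maximum over all expanded vertices --- because the $\argmin$ rule re-equalizes after each expansion, combined with the probabilistic statement that at the checkpoint where the cost crosses the threshold this current vertex is a fresh size-biased sample and hence has weight $O(\log\log n)$ (so polylogarithmic degree) with probability $1-o(1)$. Two smaller issues: deferring the exposure of $u^\ast$'s edges ``to the very end'' is circular, since the BFS trajectory up to cost $T$ is not independent of those edges (either search may discover or expand $u^\ast$); and your additive-$1$ correctness claim silently uses the lower bound $d(s_0,s_1)\ge d+\bar d+1$ (Lemma~\ref{lem:dist-lb}), which is the actual crux and is asserted rather than derived --- the extra hop comes from the meeting vertex possibly lying in $Q^+_{\overline{s}}$ rather than $Q^-_{\overline{s}}$, not from the random order within a layer.
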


It is possible to modify the algorithm to make it exact, in the sense that the output path is indeed of length exactly $d(s_0,s_1)$. In order to do that, we run the approximate algorithm until the two search trees meet and then finish to expand all the remaining vertices in one of the two layers currently being expanded (we choose the side that has the fewest remaining vertices in the current layer), see Algorithm \ref{algo:vertex-exact}. However, this comes at the cost of a polynomial slowdown.

\begin{theorem}\label{thm:vertex-exact-intro}
    Let $\mathcal{G}$ be a Chung-Lu graph or a GIRG on $n$ vertices with power-law parameter $\tau \in (2,3)$ and let $s_0$ and $s_1$ be two vertices of $\mathcal{G}$ chosen uniformly at random. Then $\textnormal{V-BFS}_{exact}(\mathcal{G},s_0,s_1)$ is a correct algorithm outputting a shortest path and with high probability it terminates in time at most $n^{\frac{1}{2}+o(1)}$.
\end{theorem}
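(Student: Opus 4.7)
We decompose the algorithm into two phases: Phase 1 runs the approximate algorithm until the two BFS trees meet, and Phase 2 completes the smaller of the two current layers to guarantee exactness. Phase 1 terminates within time $n^{(\tau-2)/(\tau-1)+o(1)}$ with high probability by Theorem \ref{thm:vertex-approx-intro}, and since $(\tau-2)/(\tau-1) < 1/2$ for $\tau \in (2,3)$, this already falls within the target bound $n^{1/2+o(1)}$. It therefore suffices to show that Phase 2 contributes cost at most $n^{1/2+o(1)}$ with high probability.

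Let $L_A^{un}, L_B^{un}$ denote the unexpanded parts of the current layers on the two sides at meeting time, and let $C_A = \sum_{v \in L_A^{un}} \deg(v)$ and $C_B = \sum_{v \in L_B^{un}} \deg(v)$ be their respective completion costs. Since the algorithm selects the side with fewer remaining vertices, and since the vertices discovered along the BFS frontier follow the same size-biased degree distribution on both sides up to $n^{o(1)}$ factors, the chosen side's cost is of the same order as $\min(C_A, C_B)$. The central technical claim is
\[
C_A \cdot C_B \leq n^{1+o(1)} \quad \text{with high probability,}
\]
from which $\min(C_A, C_B) \leq \sqrt{C_A \cdot C_B} \leq n^{1/2+o(1)}$ follows immediately.

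To prove this claim we propose a meeting-threshold argument. In Chung-Lu, the expected number of edges between the two current layers is proportional to $C_A \cdot C_B / n$, since the connection probability between vertices $u, v$ scales like $w_u w_v / n$ and the weights on the BFS frontier are comparable to the encountered degrees; in GIRG an analogous bound holds after accounting for the relevant geometric volumes of the regions occupied by each frontier. If $C_A \cdot C_B$ were significantly larger than $n$, then with high probability an inter-tree edge would have been found strictly before the meeting point identified in Phase 1, contradicting the definition of that meeting point. The main obstacle is making this threshold argument rigorous along the random BFS process: for Chung-Lu this requires a careful first-moment computation together with concentration across the sequence of expansions, while for GIRG the geometric constraint additionally forces us to track the spatial distribution of high-weight vertices in each search tree in order to justify the analogous edge-count bound. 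A secondary subtlety is to verify that the algorithm's criterion based on $|L_X^{un}|$ is indeed equivalent, up to $n^{o(1)}$ factors, to selecting the side with smaller completion cost $C_X$, which follows from the typical degree profile seen at the BFS frontier.
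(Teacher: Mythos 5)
Your Phase 1 reduction is correct and matches the paper: until the trees meet, $\textnormal{V-BFS}_{exact}$ coincides with $\textnormal{V-BFS}_{approx}$, whose cost is $n^{(\tau-2)/(\tau-1)+o(1)} = o(n^{1/2})$. The problem is Phase 2, where your central claim and its proposed proof both fail.

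First, the claim $C_A \cdot C_B \le n^{1+o(1)}$ is false with high probability. At meeting time each current layer has up to $n^{(\tau-2)/(\tau-1)+o(1)}$ vertices whose weights follow the size-biased density $\Theta(w^{1-\tau})$, so the maximum weight in a layer of size $k$ is $k^{1/(\tau-2)\pm o(1)}$ (Lemma~\ref{lem:max-weight-shifted-distribution}); for $k$ near $n^{(\tau-2)/(\tau-1)}$ this is close to $n^{1/(\tau-1)} \gg n^{1/2}$. Typically \emph{both} frontiers contain polynomially many vertices of weight at least $n^{1/2}$, so $C_A, C_B \gg n^{1/2}$ and $C_A\cdot C_B \gg n$. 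Consequently $\min(C_A,C_B) \le \sqrt{C_A C_B}$ does not give $n^{1/2+o(1)}$, and bounding Phase 2 by the \emph{full} completion cost of the smaller layer cannot work in this regime.

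Second, the meeting-threshold argument rests on a false premise. The algorithm detects a meeting only when it \emph{expands} a vertex $v$ with $\Gamma(v)\cap S_{\overline{s}}\neq\emptyset$; an edge joining two vertices that are both discovered-but-unexpanded (one in each frontier) is invisible until one endpoint is expanded. Hence an abundance of edges between $L_A^{un}$ and $L_B^{un}$ does not contradict the definition of the meeting time --- finding exactly such an edge is the entire purpose of Phase 2. The paper instead makes a case distinction (Lemmas~\ref{lem:expansion-vertex-exact-algo}, \ref{lem:meeting-vertex-exact-algo} and Proposition~\ref{prop:small-weight-implies-small-layer}): if both current layers contain a vertex of weight $\ge n^{1/2+\eps}$, then each also contains $\Omega(\log n)$ vertices of weight $\Theta(n^{1/2})$ that are expanded \emph{before} any heavier vertex; pairs of these are adjacent with constant probability, so the emptying loop terminates early, after cost only $n^{1/2+O(\delta)}$ --- not after cost $\min(C_A,C_B)$. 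Otherwise, one layer has maximum weight $\le n^{1/2+O(\eps)}$, and the tight relation between a layer's size and its maximum weight guarantees that the algorithm's size-based choice of $p$ whp picks such a layer, whose total degree is then $n^{1/2+O(\eps)}$ by Lemma~\ref{lem:degree-sum<max-weight}. Your ``secondary subtlety'' about $|L^{un}_X|$ versus $C_X$ is thus resolved in the paper precisely through this size--max-weight coupling, but only in the second case; in the first case the early-termination mechanism, which is absent from your proposal, is essential.
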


The above exact algorithm can be sped up to $n^{\tau-2+o(1)}$ if for the ``layer emptying" phase at the end we query potential edges, i.e.\ we query pairs of vertices, one on each side, and ask for the existence of the edge between them. However, this changes the character of the algorithm, and not all data structures supporting BFS allow to efficiently query for an edge between two vertices. Therefore, we do not think it is still appropriate to consider this a BFS algorithm, and the details of this speed-up are omitted in the current paper.

If, instead of selecting $s_0$ and $s_1$ uniformly at random, an adversary is allowed to pick the starting vertices after the weights are drawn, but before the edges of the graph are unveiled, then we provide a modified algorithm that alternates between the searches edge by edge instead of vertex by vertex, yielding two searches that are perfectly balanced in terms of explored edges, see Algorithm~\ref{algo:edge-approx}. The algorithm outputs a path which is at most one hop longer than the distance between $s_0$ and $s_1$.

\begin{theorem}
\label{thm:edge-introduction}
    Let $\mathcal{G}$ be a Chung-Lu graph or a GIRG on $n$ vertices with power-law parameter $\tau \in (2,3)$ and let $s_0$ and $s_1$ be two vertices of $\mathcal{G}$ that an adversary can choose by looking at the weight sequence $(W_v)_{v\in \mathcal{G}}$. Then $\textnormal{E-BFS}_{approx}(\mathcal{G},s_0,s_1)$ is a correct algorithm outputting a path of length at most $d(s_0,s_1)+1$ and with high probability it terminates in time at most $n^{\frac{1}{2}+o(1)}$.
\end{theorem}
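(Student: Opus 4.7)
The plan is to combine two ingredients that together yield the bound. First, since the algorithm is edge-balanced, after $T$ total edge revelations each side has revealed $T/2 \pm O(1)$ edges; a lower bound on $W_i$ (the total weight of the discovered set on side $i$) of order $\Omega(T/2)$ follows because each revealed edge contributes a new vertex of at least minimum weight to the discovered set, modulo a controllable amount of duplicate discovery. Second, in Chung-Lu the probability that no edge exists between two disjoint vertex sets of weights $W_0, W_1$ is at most $\exp(-W_0 W_1 / n)$ by independence, so the algorithm terminates with high probability once $W_0 W_1 \geq n^{1+\varepsilon}$, which happens at $T = n^{1/2+\varepsilon/2}$; letting $\varepsilon \to 0$ gives $n^{1/2+o(1)}$.

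I would first formalise the lower bound on $W_i$. Let $A_i^{(T)}$ be the discovered set on side $i$ after $T/2$ edges have been revealed. Counting ``collision'' edges — revealed edges whose endpoint is already in the discovered set — one shows that as long as $W_i \cdot (T/n)$ is small, most revealed edges discover new vertices, yielding $|A_i^{(T)}| \geq T/(2C)$ and hence $W_i \geq T/(2C)$ via the minimum-weight assumption and standard concentration for sums of independent indicators. Robustness against the adversary follows because the argument depends only on the number of edges revealed and on the minimum weight, not on the weights of $s_0$ and $s_1$.

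Next I would bound the meeting probability. In Chung-Lu, edges are independent Bernoulli given the weights, so $\Pr[\text{no edge between } A_0, A_1] \leq \exp(-W_0 W_1/n)$, which is $o(1)$ once $W_0 W_1 \geq n^{1+\varepsilon}$. For GIRGs, spatial positions introduce dependencies, but I would argue via hub vertices: each side's BFS discovers a vertex of weight $\geq n^{1/(\tau-1)-o(1)}$ within $n^{1/2+o(1)}$ edge revelations (since the discovered weight grows at least linearly in the edge count), and any two such hubs are directly connected with high probability regardless of geometric position, as their weight product comfortably exceeds $n$.

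The main technical obstacle is justifying the analysis in the adaptive setting, since the algorithm's choice of which vertex and which edge to reveal depends on the graph revealed so far. The standard tool is the principle of deferred decisions: at any stopping time, edges involving vertices outside the discovered region are still independent of the history, so one can couple the two explorations to idealised independent processes up to the meeting time and control the fluctuations by standard concentration combined with a union bound. The $+1$ slack in the approximation follows from standard bidirectional-BFS accounting: at the moment the first crossing edge is found, only the currently-processed layer on each side can be partially explored. A short combinatorial argument shows that if the extracted path (of length $L_0 + L_1 + 1$, where $L_i$ are the current layer depths) were longer than $d(s_0,s_1) + 1$, then a shorter path would pass through a vertex fully explored on both sides and would therefore have been detected earlier, a contradiction.
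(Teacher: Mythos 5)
Your central termination step has a genuine gap. You argue that once the discovered sets $A_0, A_1$ satisfy $W_0 W_1 \ge n^{1+\varepsilon}$, an edge between them exists whp, and you conclude that ``the algorithm terminates.'' But the existence of a crossing edge is not the termination condition of $\textnormal{E-BFS}_{approx}$: the algorithm stops only when an \emph{explored} edge leads to a vertex already in the other side's discovered set (line 16). Almost all vertices contributing to $W_0$ and $W_1$ are discovered-but-unexpanded, sitting deep in the queues $Q^{\pm}_{s_i}$; an edge $xy$ with $x\in A_0$, $y\in A_1$ is only examined when $x$ (or $y$) is actually popped and that particular incident edge is drawn, which may happen far beyond round $n^{1/2+o(1)}$, if at all before other events intervene. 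There is also a conditioning problem in the $\exp(-W_0W_1/n)$ bound: conditioned on the algorithm not having terminated, many of the potential $A_0$--$A_1$ edges (those incident to already fully expanded vertices) have been revealed to be absent, so the pairs are not all fresh Bernoulli variables. The paper's proof circumvents both issues by tracking not the total discovered weight but the weight of the vertex \emph{currently being expanded} on each side: Proposition~\ref{prop:expansion-edge-approx-algo} shows each side reaches a vertex of weight $\ge n^{1/2-\eps}$ within $n^{1/2+o(1)}$ rounds (this needs the conditional weight density of Lemma~\ref{lem:bfs-new-weight}, not just minimum-weight counting), and Lemma~\ref{lem:meeting-edge-approx-algo} then runs a birthday-paradox argument on the \emph{common neighborhood} of the two actively expanded vertices, exploiting that each side explores its incident edges in independent uniformly random order, so that only $n^{g(\tau)}\log n$ further edges per side, with $g(\tau)<1/2$, suffice to hit a common neighbor that both sides actually discover.

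Two further points. First, your lower bound $W_i=\Omega(T)$ is plausible (collisions are a lower-order term in Chung-Lu), but it is not the quantity that drives termination, so fixing the proof is not a matter of patching constants; the argument needs to be restructured around vertices whose edges are being explored now. Second, your GIRG fallback via hubs misattributes the mechanism: linear growth of the \emph{total} discovered weight says nothing about the \emph{maximum} discovered weight, so it does not yield a hub of weight $n^{1/(\tau-1)-o(1)}$; reaching such a vertex requires the heavy-tailed conditional law of discovered weights, and one must additionally show that \emph{both} sides explore an edge into the hub within the budget (cf.\ the role of Lemma~\ref{lem:meeting-vertex-approx-algo} in the vertex-balanced analysis), which is exactly the kind of argument your sketch omits. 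The correctness and $+1$ slack part of your proposal is fine and matches Lemmas~\ref{lem:dist-lb} and~\ref{lem:edge-approx-correct}.
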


We remark here that aiming for an edge-balanced \emph{exact} algorithm will not lead to a runtime improvement compared to the vertex-balanced approach, as the adversary can pick the highest-weight vertices as source and target, both of degree $n^{\frac{1}{\tau-1}+o(1)}$, which are connected with constant (but in general not 1) probability. If it is not allowed to query directly whether there is an edge between those two vertices, then it is necessary to go through the adjacency list of one of them. This leads to a runtime of $n^{\frac{1}{\tau-1}+o(1)}$, which is much larger than $n^{\frac{1}{2}}$.

\paragraph*{Proof idea}
In Chung-Lu graphs and in GIRGs, every vertex 
$v$ draws a weight $W_v$ from a power-law distribution with parameter $\tau\in(2,3)$, so $\Pr[W_v = w] = \Theta(w^{-\tau})$. The weights correspond to the expected degrees, $\E[\deg(v) \mid W_v] = \Theta(W_v)$. Over the course of a Breadth-First Search on a Chung-Lu graph or a GIRG, the largest weight encountered so far increases rapidly. The scale-free degree distribution of these graphs implies that the degree of a vertex' neighbor follows a power-law distribution with exponent $\tau-1$. A short computation then shows that the largest weight found in the neighborhood of a vertex of weight $w$ is of order $w^{\frac{1}{\tau-2}}$ (note that $\frac{1}{\tau-2} > 1$ since we are considering power-law exponents $\tau<3$), as long as $w^{\frac{1}{\tau-2}}$ is smaller than $w_{\max} \coloneqq n^{\frac{1}{\tau-1}}$, which is the order of the largest weight present in the entire graph. Our vertex-balanced algorithm finds a path between the two sources when the two search trees meet. In particular, this happens if the two search trees discover one vertex (among the constantly many such vertices) of weight roughly $w_{\max}$. 
We expect to find such vertices in the neighborhood of a vertex of weight $n^{\frac{\tau-2}{\tau-1}}$, since then such a node will have a neighbor of weight $(n^{\frac{\tau-2}{\tau-1}})^{\frac{1}{\tau-2}} = w_{\max}$. Therefore, the algorithm $\textnormal{V-BFS}_{approx}$ finds a path once the two search trees have expanded vertices of weight $n^{\frac{\tau-2}{\tau-1}}$. Its runtime is then given by the sum of the degrees of the vertices that have been expanded on both sides, because this corresponds to the total number of edges that have been processed. Since both sides are balanced on the vertex level, the two searches contribute equally to this sum. Since the degrees are power-law distributed with exponent $\tau-1\in(1,2)$, the sum is dominated by its largest term, which yields the $n^{\frac{\tau-2}{\tau-1}}$ runtime for the $\textnormal{V-BFS}_{approx}$ algorithm (c.f.\ Figure~\ref{fig:costfrac} for simulation results). 

The \textit{exact} version of the vertex-balanced algorithm follows the same course until a path between the two sources is found. At this point, $\textnormal{V-BFS}_{approx}$ returns this path and terminates, while $\textnormal{V-BFS}_{exact}$ empties the smallest remaining layer in order to check for the existence of a shorter path. There are two possible outcomes of this additional phase of the algorithm, and which one occurs depends on the presence or absence of vertices of weight $n^{\frac{1}{2}}$ (or larger) in the layer currently being expanded on each side. On the one hand, if such vertices are present in the current layer of both sides, then there will be an edge that connects these layers directly and hence provides a shorter path. This edge connects two vertices of weight $n^{\frac{1}{2}}$, and hence is found when we expand a vertex of weight roughly $n^{\frac{1}{2}}$, which yields a runtime of the same order. On the other hand, if (at least) one of the two layers has no vertex of weight more than $n^{\frac{1}{2}}$, then emptying this layer will take time less than $n^{\frac{1}{2}}$. It turns out that choosing the smallest of the two layers guarantees that we indeed empty that fast-to-process layer in the case where one of them contains vertex of weight $\Omega(n^{\frac{1}{2}})$ and the other one does not. Finally, the edge-balanced algorithm works very similarly to the approximate vertex-balanced algorithm, except that we switch sides after each explored edge. In particular, if the two sources are chosen uniformly at random (and hence have almost constant weight with high probability), then the runtime upper bound of $n^{\frac{\tau-2}{\tau-1}}$ also applies to $\textnormal{E-BFS}_{approx}$. However, if the adversary chooses two vertices of very large weight (e.g.\ of weight $w_{\max}$), then the vertex-balanced algorithm needs time at least $w_{\max} \gg n^{\frac{1}{2}} \gg n^{\frac{\tau-2}{\tau-1}}$ just to expand one vertex. The edge-balanced algorithm avoids this by switching sides after each edge, and in particular it is not necessary that any vertex is fully expanded for the algorithm to terminate. Even if the adversary chooses two vertices of weight much larger than $n^{\frac{1}{2}}$, a path between them is found within time $n^{\frac{1}{2}}$, because two vertices of such high weight have many common neighbors, so we only need to uncover a tiny fraction of their neighborhoods to find a vertex in the intersection of their neighborhoods.

\paragraph{Applicability of runtime bounds to GIRGs} We formulate our theoretical runtime bounds for Chung Lu graphs only, since many intermediate statements rely on a conditional weight density result which is stated only for Chung-Lu graphs. Note however that our results continue to hold for GIRGs, in particular the intuitive proof ideas outlined above also apply to them, albeit at the cost of more technical proofs. We refer the reader to Subsection~\ref{subsubsec:applicability} for a more detailed discussion, and to the Experimental Results section, notably Figure~\ref{fig:log-edge-log-cost-plots}, for empirical evidence.

\paragraph*{Organization of the paper}
The remainder of the paper is organized as follows. In Section~\ref{sec:preliminaries}, we introduce important notation and tools which we use throughout the paper, including concentration inequalities. Section~\ref{sec:algorithms} contains the full pseudocode of all three algorithms which we analyze in this paper, complemented by intuitive descriptions. In Section~\ref{sec:model}, we formally introduce Chung-Lu and Geometric Inhomogeneous Random Graphs and describe their key properties that are required for our proofs, such as information about their degrees, marginal connection probabilities and the size of components. Our proofs follow in Section~\ref{sec:proofs}, where we first demonstrate correctness of our algorithms (Subsection~\ref{sec:correctness}) and then analyze their runtimes (Subsection~\ref{sec:runtime}). Section~\ref{sec:simulations} contains simulations, where we run our algorithms on synthetic graphs as well as real-world networks.

\section{Preliminaries}\label{sec:preliminaries}

\subsection{Notation}

We begin by introducing some useful notation that will be used throughout the paper. We say an event happens $\emph{with high probability (whp)}$ if it happens with probability $1-o(1)$ as $n\rightarrow\infty$. We use standard Landau notation, such as $O(.),\Theta(.),\omega(.)$, and use both $f(n)\le O(g(n))$ as well as $f(n)= O(g(n))$ to indicate that the function $g(n)$ dominates $f(n)$ (and similarly for the other Landau notations). In particular, we will often use inequalities for the sake of readability in cases where the term involving Landau notation bounds other expressions from below or above. We will always work with $n$ tending to $\infty$, and as such some of the inequalities we write are only true for large enough $n$, but for the sake of readability we will not always mention when a large $n$ is needed.

\subsection{Concentration inequalities}
In the proofs we will use the following concentration inequalities.
\begin{theorem}[Chernoff-Hoeffding bound, Theorem 1.1 in \cite{dubhashi2009concentration}] \label{thm:dubhashichernoff}
  Let $X:=\sum_{i\in [n]}X_i$ where for all $i \in [n]$, the random variables $X_i$ are independently
  distributed in $[0,1]$. Then 
\begin{enumerate}[label=(\roman*)]
\item $\Pr[ X > (1+\eps)\mathbb{E}[X]] \leq \exp\left( -\frac{\eps^2}{3}\mathbb{E}[X] \right)$ for all $0<\eps<1$,
\item $\Pr[ X < (1-\eps)\mathbb{E}[X]] \leq \exp\left( -\frac{\eps^2}{2}\mathbb{E}[X] \right)$ for all $0<\eps<1$, and
\item $\Pr[X>t]\leq 2^{-t}$ for all $t>2e\E[X]$.
\end{enumerate}  
\end{theorem}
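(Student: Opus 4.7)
The plan is to apply the classical Chernoff technique: bound the moment generating function of $X$, then invoke Markov's inequality on $e^{\lambda X}$ with an appropriately tuned parameter $\lambda$.

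First I would establish the key single-variable bound. Since each $X_i$ takes values in $[0,1]$, convexity of the exponential yields $e^{\lambda X_i} \leq 1 + X_i(e^\lambda - 1)$ for any $\lambda \in \R$. Taking expectations and using $1+u \leq e^u$ gives
\[
\E[e^{\lambda X_i}] \leq \exp\bigl(\E[X_i](e^\lambda - 1)\bigr),
\]
and independence then promotes this to $\E[e^{\lambda X}] \leq \exp(\mu(e^\lambda - 1))$ where $\mu := \E[X]$.

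For part (i), I would apply Markov's inequality with $\lambda > 0$ at threshold $a = (1+\eps)\mu$, then choose the optimal $\lambda = \ln(1+\eps)$ to obtain the familiar expression $\bigl(e^\eps/(1+\eps)^{1+\eps}\bigr)^\mu$. The remaining step is a calculus inequality showing this is bounded by $e^{-\eps^2 \mu / 3}$ on $0 < \eps < 1$; I would verify it via a second-order Taylor expansion of $\log(1+\eps)$ and checking the sign of the residual. Part (ii) proceeds symmetrically with $\lambda < 0$, yielding $\bigl(e^{-\eps}/(1-\eps)^{1-\eps}\bigr)^\mu \leq e^{-\eps^2 \mu / 2}$; the slightly sharper constant reflects the fact that the relevant deviation function is larger in the lower tail. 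For part (iii), rather than Taylor-expanding I would keep the un-expanded Chernoff bound: optimizing over $\lambda$ for the threshold $t$ yields $\Pr[X>t] \leq (e\mu/t)^t$, and under the assumption $t > 2e\mu$ the base is strictly less than $1/2$, hence the bound is at most $2^{-t}$.

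The only genuinely technical step is the calculus inequality underpinning the $\eps^2/3$ constant in (i) and its asymmetric counterpart in (ii); nothing else requires any idea beyond the standard MGF approach. Since the statement is cited verbatim from Dubhashi--Panconesi, the cleanest route in the paper is simply to appeal to their Theorem~1.1; the plan above is merely a sanity check of the constants and of the route to part (iii), which is the one most often proved separately in the literature.
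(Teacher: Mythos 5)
The paper gives no proof of this statement at all---it is imported verbatim as Theorem~1.1 of Dubhashi--Panconesi, exactly the route you yourself identify as the cleanest. Your accompanying MGF sketch is the standard Chernoff argument and is correct as outlined: the convexity bound $e^{\lambda X_i}\le 1+X_i(e^\lambda-1)$ holds for all real $\lambda$ so it covers the lower tail too, the calculus inequalities $\eps-(1+\eps)\ln(1+\eps)\le-\eps^2/3$ and $-\eps-(1-\eps)\ln(1-\eps)\le-\eps^2/2$ are the standard ones on $(0,1)$, and for (iii) the optimized bound $(e\mu/t)^t$ together with $t>2e\mu$ indeed gives $2^{-t}$.
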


The following lemma is a simplified application of the second moment method which can be used when the random variable is a sum of pairwise correlated random variables.

\begin{lemma}
\label{lem:chebyshev-negatively-correlated}
Let $(Z_n)_{n \in \mathbb{N}}$ be a sequence of random variables such that each $Z_n$ is the sum of pairwise negatively correlated Bernoulli random variables. If $\E{[Z_n]} \to \infty$, then $\Pr{[Z_n=0]} \to 0$.
\end{lemma}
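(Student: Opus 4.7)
The plan is to reduce the statement to a direct application of Chebyshev's inequality, exploiting the negative correlation to control the variance by the mean. Writing $Z_n = \sum_{i} X_i^{(n)}$ as a sum of pairwise negatively correlated Bernoulli variables, I would first observe that the event $\{Z_n = 0\}$ is contained in $\{|Z_n - \E[Z_n]| \ge \E[Z_n]\}$, so that
\[
\Pr[Z_n = 0] \;\le\; \frac{\Var(Z_n)}{\E[Z_n]^2}.
\]

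Next I would expand the variance as $\Var(Z_n) = \sum_i \Var(X_i^{(n)}) + \sum_{i \ne j} \cov(X_i^{(n)}, X_j^{(n)})$. By the pairwise negative correlation assumption, each covariance term is nonpositive, so the cross-term sum can be dropped, leaving $\Var(Z_n) \le \sum_i \Var(X_i^{(n)})$. Since each $X_i^{(n)}$ is Bernoulli, $\Var(X_i^{(n)}) = \E[X_i^{(n)}](1-\E[X_i^{(n)}]) \le \E[X_i^{(n)}]$, and summing gives $\Var(Z_n) \le \E[Z_n]$.

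Combining these two bounds yields
\[
\Pr[Z_n = 0] \;\le\; \frac{\E[Z_n]}{\E[Z_n]^2} \;=\; \frac{1}{\E[Z_n]},
\]
which tends to $0$ by the assumption $\E[Z_n] \to \infty$, as required.

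There is essentially no hard step here; the only thing to be careful about is to apply the negative correlation hypothesis correctly to the decomposition of the variance (as opposed to, say, assuming independence), and to note that the Chebyshev bound is valid without any joint-distribution assumption beyond the existence of the first two moments, which is automatic for finite sums of Bernoullis.
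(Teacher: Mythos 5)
Your proposal is correct and follows essentially the same route as the paper: Chebyshev's inequality applied to the event $\{Z_n=0\}\subseteq\{|Z_n-\E[Z_n]|\ge\E[Z_n]\}$, together with the bound $\var(Z_n)\le\E[Z_n]$ obtained by discarding the nonpositive covariance cross-terms and bounding each Bernoulli variance by its mean. The only cosmetic difference is that the paper writes the single-variable bound as $\E[X_i^2]-\E[X_i]^2\le\E[X_i^2]=\E[X_i]$ rather than via $\E[X_i](1-\E[X_i])$.
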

\begin{proof}
    Let $X_1, \dots, X_n$ be pairwise negatively correlated Bernoulli random variables. We begin by showing that their sum $Z_n\coloneqq\sum_{i=1}^n X_i$ satisfies $\var(Z_n) \le \E{[Z_n]}$. Indeed, using the definition of the variance and a reordering of the terms yields
    \begin{align*}
        \var(Z_n)= \E{[Z_n^2]}- (\E{[Z_n]})^2 = \sum_{1\le i,j \le n} (\E{[X_i X_j]} -\E{[X_i]}\E{[X_j]}).
    \end{align*}
    For $i \neq j$, since $X_i$ and $X_j$ are negatively correlated, the respective summand is at most zero. We can therefore upper-bound the variance by
    \begin{align*}
        \var(Z_n) \le \sum_{1\le i \le n} (\E{[X_i^2]} -\E{[X_i]}^2) \le  \sum_{1\le i \le n} \E{[X_i^2]} = \sum_{1\le i \le n} \E{[X_i]} = \E{[Z_n]}.
    \end{align*}
    Now, using Chebyshev's inequality, we have
    \begin{align*}
        \Pr[Z_n=0] \le \Pr[|Z_n-\E{[Z_n]}| \ge \E{[Z_n]}] \le \frac{\var(Z_n)}{\E{[Z_n]}^2} \le \frac{1}{\E{[Z_n]}}.
    \end{align*}
    As $\E{[Z_n]}\to \infty$, the right-hand side tends to 0, concluding the proof.
\end{proof}

\section{Algorithms}\label{sec:algorithms}

In this section, we present our three algorithms, providing pseudocode and intuitive descriptions. Recall that our algorithms provide correct outputs on any graph $\mathcal{G}$. The common rationale of all three algorithms is that, to find an (approximate) shortest path between two vertices $s_0$ and $s_1$, we can run two breadth-first searches in an alternating manner, one started at $s_0$ and the other at $s_1$, until the two search trees meet ``in the middle". The alternation is done with the goal of ``balancing the amount of work" that is performed between the two searches. How this balancing is achieved depends on the algorithm and will be described in detail below. We will consider two types of balancing, \textit{vertex-balanced} and \textit{edge-balanced}.
We begin by describing the vertex-balanced algorithms.

\subsection{The vertex-balanced algorithms}

We begin by remarking that until the first path connecting vertices $s_0$ and $s_1$ is found, the algorithms $\textnormal{V-BFS}_{approx}(\mathcal{G},s_0,s_1)$ and $\textnormal{V-BFS}_{exact}(\mathcal{G},s_0,s_1)$ behave exactly in the same way. In particular, this means that if $s_0$ and $s_1$ lie in different components of the graph, then the algorithms exhibit the same behavior until termination. We describe this first phase jointly. Consider vertices $s_0$ and $s_1$ in our host graph $\mathcal{G}$. We begin by initializing for each search on side $s_i$, $i \in \{0,1\}$, a queue $Q_{s_i}^-$, containing at the beginning only the respective starting vertex $s_i$ of the search, an empty queue $Q_{s_i}^+$ and a set $S_{s_i}$ (initialized with $s_i$) that records all the vertices that have been ``discovered" by the search on side $s_i$ (lines~1 and 2 in Algorithms \ref{algo:vertex-approx} and \ref{algo:vertex-exact}). The concatenation of $Q_{s_i}^-$ and $Q_{s_i}^+$ constitutes one unified queue containing the vertices $v$ that we expand, i.e.\ whose neighborhoods $\Gamma(v)$ we add to the set of discovered vertices $S_i$. The vertices in $Q_{s_i}^-$ belong to a different layer of the BFS than the vertices in $Q_{s_i}^+$, hence we need to keep the two ``subqueues" separate for the exact algorithm $\textnormal{V-BFS}_{exact}(\mathcal{G},s_0,s_1)$, see details below. The process of expansion occurs as follows. 

We first introduce some simplifying notation. Denote $s_{1-i}$ by $\overline{s}$ whenever $s=s_i$. Now, observe the following: As soon as on a side $s$ there are no more vertices to be processed and we have not yet found a path to $\overline{s}$, this means that we have exhausted the component of $s$ and the algorithm should terminate, returning $\emptyset$ (line~16 in Algorithm \ref{algo:vertex-approx} and line~29 in Algorithm \ref{algo:vertex-exact}), which we use as shorthand for ``there exists no path connecting $s_0$ and $s_1$". Therefore we run the searches only as long as \emph{both} $Q_{s_0}^-$ and $Q_{s_1}^-$ are non-empty (line~3). Note that we only remove vertices from the queue $Q_{s}^-$ (line~5) but only add vertices to $Q_{s}^+$ (line~6). Only when the current layer $Q_{s}^-$ is finished and hence empty, we replace it entirely by the next layer $Q_{s}^+$, at the same time emptying the queue $Q_{s}^+$ (lines~13-15 and lines~26-28).

As long as the condition in line~3 is fulfilled, we first select on which side to continue the search. We choose the side which has discovered fewer vertices so far by determining  $s = \argmin_{s_0,s_1}\{|S_{s_0}|, |S_{s_1}|\}$ (line~4), see Figure \ref{fig:balancing}.\footnote{Note that here we compare the cardinalities of the sets $S_i$ and not of the queues $Q_{s_i}^-$. In principle if, say on side $s_0$, only low-degree vertices were added, while $Q_{s_1}^-$ contained the neighbors of a high-degree vertex, then $Q_{s_0}^-$ would be consistently smaller. This would prevent the alternation between the two searches, and instead of a bidirectional process we would grow ``long, thin" branches on one side only.} If $|S_{s_0}|=|S_{s_1}|$, the algorithm picks a side arbitrarily. We set the side $s$ to be expanded accordingly and begin processing the next vertex $v$ by removing it from the queue ($\mathtt{pop}(Q_s^-)$, line~5) and expanding it by appending those neighbors that the search from side $s$ has not encountered yet, in uniformly random order, to $Q_s^+$ ($\mathtt{append}(Q_s^+, \Gamma(v) \setminus S_s)$, line~6) and adding them to $S_s$ (line~7). For a given run of the algorithm, we will refer to $v$ as the \textit{parent vertex} of those of its neighbors $\Gamma(v) \setminus S_s$ which had not yet been encountered.
Each time a vertex has been expanded, we check if one of its neighbors has already been discovered by the BFS on the other side, i.e.\ if $\Gamma(v)$ intersects $S_{\overline{s}}$. If so, then the algorithm has found a path between $s_0$ and $s_1$. Indeed, take any vertex $u$ in this intersection $\Gamma(v) \cap S_{\overline{s}}$ (line~9 respectively line~22). Then concatenating the path from $s$ to $u$ in $S_s$ found by the BFS on one side, with the (reversed) path from $\overline{s}$ to $u$ in $S_{\overline{s}}$ found by the BFS from the other side, yields such a path (lines~10-12 respectively lines~23-25).

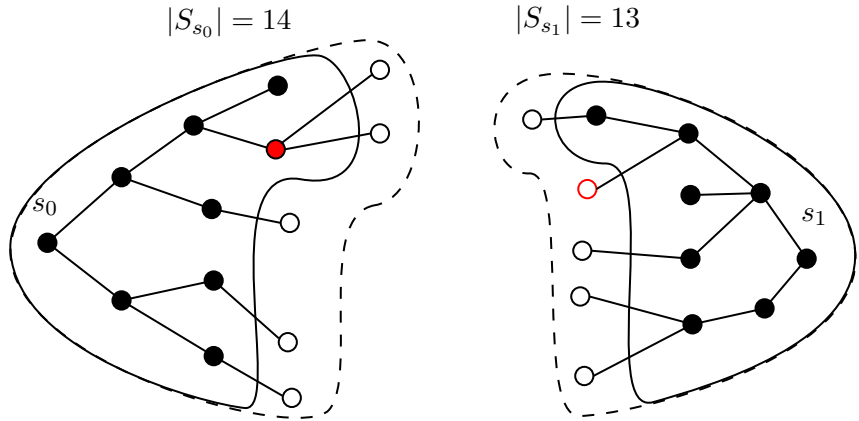
\begin{figure}
    \centering
\tikzset{every picture/.style={line width=0.75pt}} 

\begin{tikzpicture}[x=0.75pt,y=0.75pt,yscale=-1,xscale=1]

\draw   (250.77,56.25) .. controls (269.77,61.25) and (285.77,126.25) .. (240.77,118.25) .. controls (195.77,110.25) and (232.77,235.3) .. (212.77,233.3) .. controls (192.77,231.3) and (99.86,206.3) .. (95.81,154.8) .. controls (91.77,103.3) and (231.77,51.25) .. (250.77,56.25) -- cycle ;
\draw   (387.77,69.3) .. controls (356.77,72.3) and (364.29,110.3) .. (392.77,110.3) .. controls (421.24,110.3) and (388.77,238.3) .. (416.77,230.3) .. controls (444.77,222.3) and (520.03,200.62) .. (517.4,153.96) .. controls (514.77,107.3) and (418.77,66.3) .. (387.77,69.3) -- cycle ;
\draw  [dash pattern={on 4.5pt off 4.5pt}] (277.77,131.3) .. controls (247.77,136.3) and (269.77,217.3) .. (254.77,234.3) .. controls (239.77,251.3) and (95.86,211.3) .. (95.81,154.8) .. controls (95.77,98.3) and (237.77,50.3) .. (272.77,48.3) .. controls (307.77,46.3) and (307.77,126.3) .. (277.77,131.3) -- cycle ;
\draw  [dash pattern={on 4.5pt off 4.5pt}] (354.77,115.3) .. controls (375.77,114.3) and (356.77,237.3) .. (379.77,237.3) .. controls (402.77,237.3) and (524.03,209.62) .. (517.4,153.96) .. controls (510.77,98.3) and (405.77,62.3) .. (365.77,65.3) .. controls (325.77,68.3) and (333.77,116.3) .. (354.77,115.3) -- cycle ;
\draw  [fill={rgb, 255:red, 0; green, 0; blue, 0 }  ,fill opacity=1 ] (109.72,150.19) .. controls (109.72,147.65) and (111.75,145.58) .. (114.24,145.58) .. controls (116.74,145.58) and (118.77,147.65) .. (118.77,150.19) .. controls (118.77,152.74) and (116.74,154.81) .. (114.24,154.81) .. controls (111.75,154.81) and (109.72,152.74) .. (109.72,150.19) -- cycle ;
\draw  [fill={rgb, 255:red, 0; green, 0; blue, 0 }  ,fill opacity=1 ] (146.72,179.19) .. controls (146.72,176.65) and (148.75,174.58) .. (151.24,174.58) .. controls (153.74,174.58) and (155.77,176.65) .. (155.77,179.19) .. controls (155.77,181.74) and (153.74,183.81) .. (151.24,183.81) .. controls (148.75,183.81) and (146.72,181.74) .. (146.72,179.19) -- cycle ;
\draw  [fill={rgb, 255:red, 0; green, 0; blue, 0 }  ,fill opacity=1 ] (146.72,117.19) .. controls (146.72,114.65) and (148.75,112.58) .. (151.24,112.58) .. controls (153.74,112.58) and (155.77,114.65) .. (155.77,117.19) .. controls (155.77,119.74) and (153.74,121.81) .. (151.24,121.81) .. controls (148.75,121.81) and (146.72,119.74) .. (146.72,117.19) -- cycle ;
\draw  [fill={rgb, 255:red, 0; green, 0; blue, 0 }  ,fill opacity=1 ] (182.72,91.19) .. controls (182.72,88.65) and (184.75,86.58) .. (187.24,86.58) .. controls (189.74,86.58) and (191.77,88.65) .. (191.77,91.19) .. controls (191.77,93.74) and (189.74,95.81) .. (187.24,95.81) .. controls (184.75,95.81) and (182.72,93.74) .. (182.72,91.19) -- cycle ;
\draw  [fill={rgb, 255:red, 0; green, 0; blue, 0 }  ,fill opacity=1 ] (191.72,133.19) .. controls (191.72,130.65) and (193.75,128.58) .. (196.24,128.58) .. controls (198.74,128.58) and (200.77,130.65) .. (200.77,133.19) .. controls (200.77,135.74) and (198.74,137.81) .. (196.24,137.81) .. controls (193.75,137.81) and (191.72,135.74) .. (191.72,133.19) -- cycle ;
\draw  [fill={rgb, 255:red, 255; green, 0; blue, 0 }  ,fill opacity=1 ] (223.72,103.19) .. controls (223.72,100.65) and (225.75,98.58) .. (228.24,98.58) .. controls (230.74,98.58) and (232.77,100.65) .. (232.77,103.19) .. controls (232.77,105.74) and (230.74,107.81) .. (228.24,107.81) .. controls (225.75,107.81) and (223.72,105.74) .. (223.72,103.19) -- cycle ;
\draw   (229.72,200.19) .. controls (229.72,197.65) and (231.75,195.58) .. (234.24,195.58) .. controls (236.74,195.58) and (238.77,197.65) .. (238.77,200.19) .. controls (238.77,202.74) and (236.74,204.81) .. (234.24,204.81) .. controls (231.75,204.81) and (229.72,202.74) .. (229.72,200.19) -- cycle ;
\draw  [fill={rgb, 255:red, 0; green, 0; blue, 0 }  ,fill opacity=1 ] (224.72,71.19) .. controls (224.72,68.65) and (226.75,66.58) .. (229.24,66.58) .. controls (231.74,66.58) and (233.77,68.65) .. (233.77,71.19) .. controls (233.77,73.74) and (231.74,75.81) .. (229.24,75.81) .. controls (226.75,75.81) and (224.72,73.74) .. (224.72,71.19) -- cycle ;
\draw  [fill={rgb, 255:red, 0; green, 0; blue, 0 }  ,fill opacity=1 ] (192.72,169.19) .. controls (192.72,166.65) and (194.75,164.58) .. (197.24,164.58) .. controls (199.74,164.58) and (201.77,166.65) .. (201.77,169.19) .. controls (201.77,171.74) and (199.74,173.81) .. (197.24,173.81) .. controls (194.75,173.81) and (192.72,171.74) .. (192.72,169.19) -- cycle ;
\draw   (275.72,63.19) .. controls (275.72,60.65) and (277.75,58.58) .. (280.24,58.58) .. controls (282.74,58.58) and (284.77,60.65) .. (284.77,63.19) .. controls (284.77,65.74) and (282.74,67.81) .. (280.24,67.81) .. controls (277.75,67.81) and (275.72,65.74) .. (275.72,63.19) -- cycle ;
\draw  [fill={rgb, 255:red, 0; green, 0; blue, 0 }  ,fill opacity=1 ] (192.72,207.19) .. controls (192.72,204.65) and (194.75,202.58) .. (197.24,202.58) .. controls (199.74,202.58) and (201.77,204.65) .. (201.77,207.19) .. controls (201.77,209.74) and (199.74,211.81) .. (197.24,211.81) .. controls (194.75,211.81) and (192.72,209.74) .. (192.72,207.19) -- cycle ;
\draw   (230.72,140.19) .. controls (230.72,137.65) and (232.75,135.58) .. (235.24,135.58) .. controls (237.74,135.58) and (239.77,137.65) .. (239.77,140.19) .. controls (239.77,142.74) and (237.74,144.81) .. (235.24,144.81) .. controls (232.75,144.81) and (230.72,142.74) .. (230.72,140.19) -- cycle ;
\draw   (231.72,228.19) .. controls (231.72,225.65) and (233.75,223.58) .. (236.24,223.58) .. controls (238.74,223.58) and (240.77,225.65) .. (240.77,228.19) .. controls (240.77,230.74) and (238.74,232.81) .. (236.24,232.81) .. controls (233.75,232.81) and (231.72,230.74) .. (231.72,228.19) -- cycle ;
\draw   (275.72,95.19) .. controls (275.72,92.65) and (277.75,90.58) .. (280.24,90.58) .. controls (282.74,90.58) and (284.77,92.65) .. (284.77,95.19) .. controls (284.77,97.74) and (282.74,99.81) .. (280.24,99.81) .. controls (277.75,99.81) and (275.72,97.74) .. (275.72,95.19) -- cycle ;
\draw    (114.24,150.19) -- (151.24,117.19) ;
\draw    (114.24,150.19) -- (151.24,179.19) ;
\draw    (151.24,117.19) -- (187.24,91.19) ;
\draw    (151.24,117.19) -- (196.24,133.19) ;
\draw    (151.24,179.19) -- (197.24,169.19) ;
\draw    (151.24,179.19) -- (197.24,207.19) ;
\draw    (187.24,91.19) -- (229.24,71.19) ;
\draw    (187.24,91.19) -- (223.77,102.3) ;
\draw    (196.24,133.19) -- (230.72,140.19) ;
\draw    (197.24,169.19) -- (229.77,198.25) ;
\draw    (197.24,207.19) -- (231.77,227.25) ;
\draw    (232.77,103.19) -- (275.72,95.19) ;
\draw    (276.77,65.25) -- (230.77,100.3) ;
\draw  [fill={rgb, 255:red, 0; green, 0; blue, 0 }  ,fill opacity=1 ] (488.72,158.19) .. controls (488.72,155.65) and (490.75,153.58) .. (493.24,153.58) .. controls (495.74,153.58) and (497.77,155.65) .. (497.77,158.19) .. controls (497.77,160.74) and (495.74,162.81) .. (493.24,162.81) .. controls (490.75,162.81) and (488.72,160.74) .. (488.72,158.19) -- cycle ;
\draw  [fill={rgb, 255:red, 0; green, 0; blue, 0 }  ,fill opacity=1 ] (465.72,125.19) .. controls (465.72,122.65) and (467.75,120.58) .. (470.24,120.58) .. controls (472.74,120.58) and (474.77,122.65) .. (474.77,125.19) .. controls (474.77,127.74) and (472.74,129.81) .. (470.24,129.81) .. controls (467.75,129.81) and (465.72,127.74) .. (465.72,125.19) -- cycle ;
\draw  [fill={rgb, 255:red, 0; green, 0; blue, 0 }  ,fill opacity=1 ] (467.72,183.19) .. controls (467.72,180.65) and (469.75,178.58) .. (472.24,178.58) .. controls (474.74,178.58) and (476.77,180.65) .. (476.77,183.19) .. controls (476.77,185.74) and (474.74,187.81) .. (472.24,187.81) .. controls (469.75,187.81) and (467.72,185.74) .. (467.72,183.19) -- cycle ;
\draw  [fill={rgb, 255:red, 0; green, 0; blue, 0 }  ,fill opacity=1 ] (429.72,95.19) .. controls (429.72,92.65) and (431.75,90.58) .. (434.24,90.58) .. controls (436.74,90.58) and (438.77,92.65) .. (438.77,95.19) .. controls (438.77,97.74) and (436.74,99.81) .. (434.24,99.81) .. controls (431.75,99.81) and (429.72,97.74) .. (429.72,95.19) -- cycle ;
\draw  [fill={rgb, 255:red, 0; green, 0; blue, 0 }  ,fill opacity=1 ] (431.72,191.19) .. controls (431.72,188.65) and (433.75,186.58) .. (436.24,186.58) .. controls (438.74,186.58) and (440.77,188.65) .. (440.77,191.19) .. controls (440.77,193.74) and (438.74,195.81) .. (436.24,195.81) .. controls (433.75,195.81) and (431.72,193.74) .. (431.72,191.19) -- cycle ;
\draw  [fill={rgb, 255:red, 0; green, 0; blue, 0 }  ,fill opacity=1 ] (430.72,126.19) .. controls (430.72,123.65) and (432.75,121.58) .. (435.24,121.58) .. controls (437.74,121.58) and (439.77,123.65) .. (439.77,126.19) .. controls (439.77,128.74) and (437.74,130.81) .. (435.24,130.81) .. controls (432.75,130.81) and (430.72,128.74) .. (430.72,126.19) -- cycle ;
\draw  [color={rgb, 255:red, 255; green, 0; blue, 0 }  ,draw opacity=1 ] (379.24,123.19) .. controls (379.24,120.65) and (381.27,118.58) .. (383.77,118.58) .. controls (386.26,118.58) and (388.29,120.65) .. (388.29,123.19) .. controls (388.29,125.74) and (386.26,127.81) .. (383.77,127.81) .. controls (381.27,127.81) and (379.24,125.74) .. (379.24,123.19) -- cycle ;
\draw  [fill={rgb, 255:red, 0; green, 0; blue, 0 }  ,fill opacity=1 ] (383.72,86.19) .. controls (383.72,83.65) and (385.75,81.58) .. (388.24,81.58) .. controls (390.74,81.58) and (392.77,83.65) .. (392.77,86.19) .. controls (392.77,88.74) and (390.74,90.81) .. (388.24,90.81) .. controls (385.75,90.81) and (383.72,88.74) .. (383.72,86.19) -- cycle ;
\draw   (375.72,177.19) .. controls (375.72,174.65) and (377.75,172.58) .. (380.24,172.58) .. controls (382.74,172.58) and (384.77,174.65) .. (384.77,177.19) .. controls (384.77,179.74) and (382.74,181.81) .. (380.24,181.81) .. controls (377.75,181.81) and (375.72,179.74) .. (375.72,177.19) -- cycle ;
\draw    (470.24,125.19) -- (493.24,158.19) ;
\draw    (472.24,183.19) -- (493.24,158.19) ;
\draw    (470.24,125.19) -- (435.24,126.19) ;
\draw    (470.24,125.19) -- (434.24,95.19) ;
\draw    (434.24,95.19) -- (388.24,86.19) ;
\draw    (434.24,95.19) -- (388.29,123.19) ;
\draw    (384.77,177.19) -- (436.24,191.19) ;
\draw  [color={rgb, 255:red, 0; green, 0; blue, 0 }  ,draw opacity=1 ] (351.72,88.19) .. controls (351.72,85.65) and (353.75,83.58) .. (356.24,83.58) .. controls (358.74,83.58) and (360.77,85.65) .. (360.77,88.19) .. controls (360.77,90.74) and (358.74,92.81) .. (356.24,92.81) .. controls (353.75,92.81) and (351.72,90.74) .. (351.72,88.19) -- cycle ;
\draw    (360.77,88.19) -- (388.24,86.19) ;
\draw   (377.72,217.19) .. controls (377.72,214.65) and (379.75,212.58) .. (382.24,212.58) .. controls (384.74,212.58) and (386.77,214.65) .. (386.77,217.19) .. controls (386.77,219.74) and (384.74,221.81) .. (382.24,221.81) .. controls (379.75,221.81) and (377.72,219.74) .. (377.72,217.19) -- cycle ;
\draw    (386.77,217.19) -- (436.24,191.19) ;
\draw    (472.24,183.19) -- (436.24,191.19) ;
\draw  [fill={rgb, 255:red, 0; green, 0; blue, 0 }  ,fill opacity=1 ] (430.72,158.19) .. controls (430.72,155.65) and (432.75,153.58) .. (435.24,153.58) .. controls (437.74,153.58) and (439.77,155.65) .. (439.77,158.19) .. controls (439.77,160.74) and (437.74,162.81) .. (435.24,162.81) .. controls (432.75,162.81) and (430.72,160.74) .. (430.72,158.19) -- cycle ;
\draw    (470.24,125.19) -- (435.24,158.19) ;
\draw   (376.72,154.19) .. controls (376.72,151.65) and (378.75,149.58) .. (381.24,149.58) .. controls (383.74,149.58) and (385.77,151.65) .. (385.77,154.19) .. controls (385.77,156.74) and (383.74,158.81) .. (381.24,158.81) .. controls (378.75,158.81) and (376.72,156.74) .. (376.72,154.19) -- cycle ;
\draw    (385.77,154.19) -- (435.24,158.19) ;

\draw (105,126.4) node [anchor=north west][inner sep=0.75pt]    {$s_{0}$};
\draw (489,131.4) node [anchor=north west][inner sep=0.75pt]    {$s_{1}$};
\draw (172,30.4) node [anchor=north west][inner sep=0.75pt]    {$|S_{s_{0}} |=14$};
\draw (346,29.4) node [anchor=north west][inner sep=0.75pt]    {$|S_{s_{1}} |=13$};

\end{tikzpicture}
\caption{Illustration of the balancing of the BFS trees done by the vertex-balanced algorithms. Filled dots denote expanded vertices, while hollow dots mark non-expanded vertices that have already been discovered. The red filled dot is the most recently expanded vertex, which increased the size of the left BFS tree from 12 to 14. The right BFS tree is now smaller, so the hollow red vertex will be expanded next.}
\label{fig:balancing}
\end{figure}

\begin{algorithm}[H]

\caption{{\sc Vertex-balanced BFS for approximate shortest path $\textnormal{V-BFS}_{approx}(\mathcal{G},s_0,s_1)$}}

\label{algo:vertex-approx}

\begin{algorithmic}[1]
\State $Q_{s_0}^- \gets \mathtt{Queue}(s_0)$, $Q_{s_0}^+ \gets \mathtt{Queue}(\emptyset), S_{s_0} \gets \{s_0\}$
\State $Q_{s_1}^- \gets \mathtt{Queue}(s_1)$, $Q_{s_1}^+ \gets \mathtt{Queue}(\emptyset), S_{s_1} \gets \{s_1\}$ 
\While{$Q_{s_0}^- \neq\emptyset$ and $Q_{s_1}^- \neq\emptyset$}
  \State{$s \gets \argmin_{s_0,s_1}\{|S_{s_0}|, |S_{s_1}|\}$}
  \State{$v \gets \mathtt{pop}(Q_s^-)$}
  \State{$\mathtt{append}(Q_s^+, \Gamma(v) \setminus S_s)$}
  \State{$S_s \gets S_s \cup \Gamma(v)$}
  \If{$\Gamma(v) \cap S_{\overline{s}} \neq \emptyset$}
  \State{let $u\in \Gamma(v) \cap S_{\overline{s}}$}
  \State{let $\pi$ be the path from $s$ to $u$ in $S_s$ found by the BFS}
  \State{let $\overline{\pi}$ be the path from $\overline{s}$ to $u$ in $S_{\overline{s}}$ found by the BFS}
  \State\Return{$\pi \cup \overline{\pi}$}
  \EndIf
  \If{$Q_s^- = \emptyset$}
  \State{$Q_s^- \gets Q_s^+$}
  \State{$Q_s^+ \gets \mathtt{Queue}(\emptyset)$}
  \EndIf
\EndWhile
\State\Return{$\emptyset$}

\end{algorithmic}
\end{algorithm}

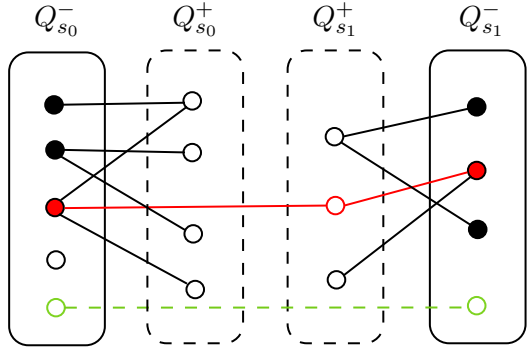
\begin{figure}
    \centering

\tikzset{every picture/.style={line width=0.75pt}} 

\begin{tikzpicture}[x=0.75pt,y=0.75pt,yscale=-1,xscale=1]

\draw   (187,72.55) .. controls (187,67.28) and (191.28,63) .. (196.55,63) -- (225.21,63) .. controls (230.49,63) and (234.77,67.28) .. (234.77,72.55) -- (234.77,200.75) .. controls (234.77,206.02) and (230.49,210.3) .. (225.21,210.3) -- (196.55,210.3) .. controls (191.28,210.3) and (187,206.02) .. (187,200.75) -- cycle;
\node[above] at (211,60) {$Q_{s_0}^-$};
\draw  [color={rgb, 255:red, 0; green, 0; blue, 0 }  ,draw opacity=1 ][fill={rgb, 255:red, 255; green, 0; blue, 0 }  ,fill opacity=1 ] (205.5,141.03) .. controls (205.5,138.61) and (207.46,136.65) .. (209.88,136.65) .. controls (212.3,136.65) and (214.27,138.61) .. (214.27,141.03) .. controls (214.27,143.45) and (212.3,145.42) .. (209.88,145.42) .. controls (207.46,145.42) and (205.5,143.45) .. (205.5,141.03) -- cycle ;
\draw  [color={rgb, 255:red, 0; green, 0; blue, 0 }  ,draw opacity=1 ][fill={rgb, 255:red, 0; green, 0; blue, 0 }  ,fill opacity=1 ] (205,89.38) .. controls (205,86.96) and (206.96,85) .. (209.38,85) .. controls (211.8,85) and (213.77,86.96) .. (213.77,89.38) .. controls (213.77,91.8) and (211.8,93.77) .. (209.38,93.77) .. controls (206.96,93.77) and (205,91.8) .. (205,89.38) -- cycle ;
\draw  [color={rgb, 255:red, 0; green, 0; blue, 0 }  ,draw opacity=1 ][fill={rgb, 255:red, 0; green, 0; blue, 0 }  ,fill opacity=1 ][line width=0.75]  (205.5,112.03) .. controls (205.5,109.61) and (207.46,107.65) .. (209.88,107.65) .. controls (212.3,107.65) and (214.27,109.61) .. (214.27,112.03) .. controls (214.27,114.45) and (212.3,116.42) .. (209.88,116.42) .. controls (207.46,116.42) and (205.5,114.45) .. (205.5,112.03) -- cycle ;
\draw  [color={rgb, 255:red, 0; green, 0; blue, 0 }  ,draw opacity=1 ][fill={rgb, 255:red, 255; green, 255; blue, 255 }  ,fill opacity=1 ] (206,167.38) .. controls (206,164.96) and (207.96,163) .. (210.38,163) .. controls (212.8,163) and (214.77,164.96) .. (214.77,167.38) .. controls (214.77,169.8) and (212.8,171.77) .. (210.38,171.77) .. controls (207.96,171.77) and (206,169.8) .. (206,167.38) -- cycle ;
\draw  [color={rgb, 255:red, 126; green, 211; blue, 33 }  ,draw opacity=1 ][fill={rgb, 255:red, 255; green, 255; blue, 255 }  ,fill opacity=1 ] (206,191.38) .. controls (206,188.96) and (207.96,187) .. (210.38,187) .. controls (212.8,187) and (214.77,188.96) .. (214.77,191.38) .. controls (214.77,193.8) and (212.8,195.77) .. (210.38,195.77) .. controls (207.96,195.77) and (206,193.8) .. (206,191.38) -- cycle ;
\draw   (397,71.55) .. controls (397,66.28) and (401.28,62) .. (406.55,62) -- (435.21,62) .. controls (440.49,62) and (444.77,66.28) .. (444.77,71.55) -- (444.77,199.75) .. controls (444.77,205.02) and (440.49,209.3) .. (435.21,209.3) -- (406.55,209.3) .. controls (401.28,209.3) and (397,205.02) .. (397,199.75) -- cycle ;
\node[above] at (423,60) {$Q_{s_1}^-$} ;
\draw  [color={rgb, 255:red, 0; green, 0; blue, 0 }  ,draw opacity=1 ][fill={rgb, 255:red, 0; green, 0; blue, 0 }  ,fill opacity=1 ] (416,90.38) .. controls (416,87.96) and (417.96,86) .. (420.38,86) .. controls (422.8,86) and (424.77,87.96) .. (424.77,90.38) .. controls (424.77,92.8) and (422.8,94.77) .. (420.38,94.77) .. controls (417.96,94.77) and (416,92.8) .. (416,90.38) -- cycle ;
\draw  [color={rgb, 255:red, 0; green, 0; blue, 0 }  ,draw opacity=1 ][fill={rgb, 255:red, 255; green, 0; blue, 0 }  ,fill opacity=1 ] (416,122.38) .. controls (416,119.96) and (417.96,118) .. (420.38,118) .. controls (422.8,118) and (424.77,119.96) .. (424.77,122.38) .. controls (424.77,124.8) and (422.8,126.77) .. (420.38,126.77) .. controls (417.96,126.77) and (416,124.8) .. (416,122.38) -- cycle ;
\draw  [color={rgb, 255:red, 0; green, 0; blue, 0 }  ,draw opacity=1 ][fill={rgb, 255:red, 0; green, 0; blue, 0 }  ,fill opacity=1 ] (416.5,152.03) .. controls (416.5,149.61) and (418.46,147.65) .. (420.88,147.65) .. controls (423.3,147.65) and (425.27,149.61) .. (425.27,152.03) .. controls (425.27,154.45) and (423.3,156.42) .. (420.88,156.42) .. controls (418.46,156.42) and (416.5,154.45) .. (416.5,152.03) -- cycle ;
\draw  [color={rgb, 255:red, 126; green, 211; blue, 33 }  ,draw opacity=1 ][fill={rgb, 255:red, 255; green, 255; blue, 255 }  ,fill opacity=1 ] (416,190.38) .. controls (416,187.96) and (417.96,186) .. (420.38,186) .. controls (422.8,186) and (424.77,187.96) .. (424.77,190.38) .. controls (424.77,192.8) and (422.8,194.77) .. (420.38,194.77) .. controls (417.96,194.77) and (416,192.8) .. (416,190.38) -- cycle ;
\draw  [dash pattern={on 4.5pt off 4.5pt}] (326,71.55) .. controls (326,66.28) and (330.28,62) .. (335.55,62) -- (364.21,62) .. controls (369.49,62) and (373.77,66.28) .. (373.77,71.55) -- (373.77,199.75) .. controls (373.77,205.02) and (369.49,209.3) .. (364.21,209.3) -- (335.55,209.3) .. controls (330.28,209.3) and (326,205.02) .. (326,199.75) -- cycle ;
\node[above] at (350,60) {$Q_{s_1}^+$};
\draw  [color={rgb, 255:red, 0; green, 0; blue, 0 }  ,draw opacity=1 ][fill={rgb, 255:red, 255; green, 255; blue, 255 }  ,fill opacity=1 ] (345,105.38) .. controls (345,102.96) and (346.96,101) .. (349.38,101) .. controls (351.8,101) and (353.77,102.96) .. (353.77,105.38) .. controls (353.77,107.8) and (351.8,109.77) .. (349.38,109.77) .. controls (346.96,109.77) and (345,107.8) .. (345,105.38) -- cycle ;
\draw  [color={rgb, 255:red, 255; green, 0; blue, 0 }  ,draw opacity=1 ][fill={rgb, 255:red, 255; green, 255; blue, 255 }  ,fill opacity=1 ] (345.5,140.03) .. controls (345.5,137.61) and (347.46,135.65) .. (349.88,135.65) .. controls (352.3,135.65) and (354.27,137.61) .. (354.27,140.03) .. controls (354.27,142.45) and (352.3,144.42) .. (349.88,144.42) .. controls (347.46,144.42) and (345.5,142.45) .. (345.5,140.03) -- cycle ;
\draw  [color={rgb, 255:red, 0; green, 0; blue, 0 }  ,draw opacity=1 ][fill={rgb, 255:red, 255; green, 255; blue, 255 }  ,fill opacity=1 ] (346,177.38) .. controls (346,174.96) and (347.96,173) .. (350.38,173) .. controls (352.8,173) and (354.77,174.96) .. (354.77,177.38) .. controls (354.77,179.8) and (352.8,181.77) .. (350.38,181.77) .. controls (347.96,181.77) and (346,179.8) .. (346,177.38) -- cycle ;
\draw    (353.77,105.3) -- (420.38,90.38) ;
\draw [color={rgb, 255:red, 255; green, 0; blue, 0 }  ,draw opacity=1 ]   (354.27,140.03) -- (415.77,123.3) ;
\draw    (353.77,175.3) -- (417.77,125.3) ;
\draw    (352.77,108.3) -- (420.88,152.03) ;
\draw [color={rgb, 255:red, 116; green, 203; blue, 24 }  ,draw opacity=1 ] [dash pattern={on 4.5pt off 4.5pt}]  (214.77,191.38) -- (418.77,191.3) ;
\draw [color={rgb, 255:red, 255; green, 0; blue, 0 }  ,draw opacity=1 ]   (214.77,141.3) -- (345.5,140.03) ;
\draw  [dash pattern={on 4.5pt off 4.5pt}] (256,71.55) .. controls (256,66.28) and (260.28,62) .. (265.55,62) -- (294.21,62) .. controls (299.49,62) and (303.77,66.28) .. (303.77,71.55) -- (303.77,199.75) .. controls (303.77,205.02) and (299.49,209.3) .. (294.21,209.3) -- (265.55,209.3) .. controls (260.28,209.3) and (256,205.02) .. (256,199.75) -- cycle ;
\node[above] at (280,60) {$Q_{s_0}^+$};
\draw  [color={rgb, 255:red, 0; green, 0; blue, 0 }  ,draw opacity=1 ][fill={rgb, 255:red, 255; green, 255; blue, 255 }  ,fill opacity=1 ] (274,87.38) .. controls (274,84.96) and (275.96,83) .. (278.38,83) .. controls (280.8,83) and (282.77,84.96) .. (282.77,87.38) .. controls (282.77,89.8) and (280.8,91.77) .. (278.38,91.77) .. controls (275.96,91.77) and (274,89.8) .. (274,87.38) -- cycle ;
\draw  [color={rgb, 255:red, 0; green, 0; blue, 0 }  ,draw opacity=1 ][fill={rgb, 255:red, 255; green, 255; blue, 255 }  ,fill opacity=1 ] (274,113.38) .. controls (274,110.96) and (275.96,109) .. (278.38,109) .. controls (280.8,109) and (282.77,110.96) .. (282.77,113.38) .. controls (282.77,115.8) and (280.8,117.77) .. (278.38,117.77) .. controls (275.96,117.77) and (274,115.8) .. (274,113.38) -- cycle ;
\draw    (214.27,112.03) -- (274.77,113.3) ;
\draw  [color={rgb, 255:red, 0; green, 0; blue, 0 }  ,draw opacity=1 ][fill={rgb, 255:red, 255; green, 255; blue, 255 }  ,fill opacity=1 ] (274.5,154.27) .. controls (274.5,151.85) and (276.46,149.88) .. (278.88,149.88) .. controls (281.3,149.88) and (283.27,151.85) .. (283.27,154.27) .. controls (283.27,156.69) and (281.3,158.65) .. (278.88,158.65) .. controls (276.46,158.65) and (274.5,156.69) .. (274.5,154.27) -- cycle ;
\draw  [color={rgb, 255:red, 0; green, 0; blue, 0 }  ,draw opacity=1 ][fill={rgb, 255:red, 255; green, 255; blue, 255 }  ,fill opacity=1 ] (275.5,182.27) .. controls (275.5,179.85) and (277.46,177.88) .. (279.88,177.88) .. controls (282.3,177.88) and (284.27,179.85) .. (284.27,182.27) .. controls (284.27,184.69) and (282.3,186.65) .. (279.88,186.65) .. controls (277.46,186.65) and (275.5,184.69) .. (275.5,182.27) -- cycle ;
\draw    (213.77,89.3) -- (274.77,88.3) ;
\draw    (212.77,137.3) -- (275.77,90.3) ;
\draw    (212.77,115.3) -- (274.77,152.3) ;
\draw    (211.77,144.3) -- (275.77,179.3) ;

\end{tikzpicture}
\caption{Illustration of the $\textnormal{V-BFS}_{approx}$ algorithm finding a path of length $d(s_0,s_1)+1$. Expanded vertices are solid disks, non-expanded vertices are hollow. The connection found by the algorithm between the two search trees is shown in red. The connection which yields the actual shortest path is in green. To find the actual shortest path and make the algorithm exact, it suffices to expand all vertices in $Q_{s_0}^-$ or in $Q_{s_1}^-$ (the $\textnormal{V-BFS}_{exact}$ algorithm picks the queue containing fewest vertices, in the above example $Q_{s_1}^-$) after finding the red connection.}
\label{fig:approximate-shortest-path}
\end{figure}

For the approximate-shortest-path algorithm $\textnormal{V-BFS}_{approx}(\mathcal{G},s_0,s_1)$, the detection of such a path marks the end of its execution, as the path is then immediately returned. And indeed, for vertices $s_0, s_1$ at distance $d(s_0,s_1)$, the returned path has length at most $d(s_0,s_1) + 1$ (see Lemma~\ref{lem:vertex-approx-correct}) and hence is at least an ``approximate" shortest path up to an additive increment of 1. This increment stems from the possibility that the vertex $u$ which is in $S_{\overline{s}}$ lies in $Q_{\overline{s}}^+$ instead of $Q_{\overline{s}}^-$, see Figure \ref{fig:approximate-shortest-path}. The exact algorithm hence checks whether $\Gamma(v)$ also intersects $Q^-_{\overline{s}}$, in which case we can return the corresponding path (lines~9-13 of Algorithm \ref{algo:vertex-exact}), which is guaranteed to be shortest. If this intersection is empty, one needs to check whether there is an edge that directly connects a vertex in $Q_{s}^-$ to a vertex in $Q_{\overline{s}}^-$, which would yield a path shorter by one hop compared to the path that was found in the first place. This is exactly what $\textnormal{V-BFS}_{exact}(\mathcal{G},s_0,s_1)$ does in lines~14-21. More precisely, it empties the smaller (line~14) of the two layers $Q_{s_i}^-$ that are currently being expanded. Note that the queue sizes - and not the sizes of the discovered sets $S_{s_i}$ - are relevant here, because the queues $Q_{s_i}^-$ contain the only vertices which might yield such a shorter path. If a shorter path is found, then this path is returned (line~21), otherwise the first found path is returned (line~25).

\begin{algorithm}[H]

\caption{{\sc Vertex-balanced BFS for exact shortest path $\textnormal{V-BFS}_{exact}(\mathcal{G},s_0,s_1)$}}

\label{algo:vertex-exact}

\begin{algorithmic}[1]
\State $Q_{s_0}^- \gets \mathtt{Queue}(s_0)$, $Q_{s_0}^+ \gets \mathtt{Queue}(\emptyset), S_{s_0} \gets \{s_0\}$
\State $Q_{s_1}^- \gets \mathtt{Queue}(s_1)$, $Q_{s_1}^+ \gets \mathtt{Queue}(\emptyset), S_{s_1} \gets \{s_1\}$ 
\While{$Q_{s_0}^- \neq\emptyset$ and $Q_{s_1}^- \neq\emptyset$}
  \State{$s \gets \argmin_{s_0,s_1}\{|S_{s_0}|, |S_{s_1}|\}$}
  \State{$v \gets \mathtt{pop}(Q_s^-)$}
  \State{$\mathtt{append}(Q_s^+, \Gamma(v) \setminus S_s)$}
  \State{$S_s \gets S_s \cup \Gamma(v)$}
  \If{$\Gamma(v) \cap S_{\overline{s}} \neq \emptyset$}
  \If{$\Gamma(v) \cap Q^-_{\overline{s}} \neq \emptyset$}
  \State{let $u'\in \Gamma(v) \cap Q^-_{\overline{s}}$}
  \State{let $\pi'$ be the path from $s$ to $u'$ in $S_s$ found by the BFS}
  \State{let $\overline{\pi}'$ be the path from $\overline{s}$ to $u'$ in $S_{\overline{s}}$ found by the BFS}
  \State\Return{$\pi' \cup \overline{\pi}'$}
  \EndIf
  \State{$p \gets \argmin_{s_0,s_1}\{|Q_{s_0}^-|, |Q_{s_1}^-|\}$}
  \While{$Q_p^- \neq\emptyset$}
  \State{$v' \gets \mathtt{pop}(Q_p^-)$}
  \If{$\Gamma(v') \cap Q_{\overline{p}}^- \neq \emptyset$}
  \State{let $u'\in \Gamma(v') \cap Q_{\overline{p}}^-$}
  \State{let $\pi'$ be the path from $p$ to $u'$ in $S_p$ found by the BFS}
  \State{let $\overline{\pi}'$ be the path from $\overline{p}$ to $u'$ in $S_{\overline{p}}$ found by the BFS}
  \State\Return{$\pi' \cup \overline{\pi}'$}
  \EndIf
  \EndWhile
  \State{let $u\in \Gamma(v) \cap S_{\overline{s}}$}
  \State{let $\pi$ be the path from $s$ to $u$ in $S_s$ found by the BFS}
  \State{let $\overline{\pi}$ be the path from $\overline{s}$ to $u$ in $S_{\overline{s}}$ found by the BFS}
  \State\Return{$\pi \cup \overline{\pi}$}
  \EndIf
  \If{$Q_s^- = \emptyset$}
  \State{$Q_s^- \gets Q_s^+$}
  \State{$Q_s^+ \gets \mathtt{Queue}(\emptyset)$}
  \EndIf
\EndWhile
\State\Return{$\emptyset$}
\end{algorithmic}
\end{algorithm}

\subsection{The edge-balanced algorithm}

\begin{algorithm}[H]

\caption{{\sc Edge-balanced BFS for approximate shortest path $\textnormal{E-BFS}_{approx}(\mathcal{G},s_0,s_1)$}}

\label{algo:edge-approx}

\begin{algorithmic}[1]

\State $Q_{s_0}^- \gets \mathtt{Queue}(s_0)$, $Q_{s_0}^+ \gets \mathtt{Queue}(\emptyset), S_{s_0} \gets \{s_0\}$
\State $Q_{s_1}^- \gets \mathtt{Queue}(s_1)$, $Q_{s_1}^+ \gets \mathtt{Queue}(\emptyset), S_{s_1} \gets \{s_1\}$

\State{$s\gets s_0$}

\While{$Q_{s_0}^- \neq\emptyset$ and $Q_{s_1}^-  \neq\emptyset$}
  \If{$|S_s|=1$}
  \State{$v_s \gets \mathtt{pop}(Q_s^-)$}
  \State{$E_{v_s} \gets \emptyset$} 
  \EndIf
  \If{$E(v_s) \setminus E_{v_s} = \emptyset$}
  \State{$v_s \gets \mathtt{pop}(Q_s^-)$}
  \State{$E_{v_s} \gets \emptyset$}  
  \EndIf
  \State{let $v_s u$ be an edge chosen u.a.r.\ from $E(v_s) \setminus E_{v_s}$}
  \State{$E_{v_s} \gets E_{v_s} \cup \{v_s u\}$}
  \If{$u \notin S_s$}
  \State{$\mathtt{append}(Q_s^+, \{u\})$}
  \State{$S_s \gets S_s \cup \{u\}$}
  \EndIf
  \If{$u\in S_{\overline{s}}$}
  \State{let $\pi$ be the path from $s$ to $u$ in $S_s$ found by the BFS}
  \State{let $\overline{\pi}$ be the path from $\overline{s}$ to $u$ in $S_{\overline{s}}$ found by the BFS}
  \State\Return{$\pi \cup \overline{\pi}$}
  \EndIf
  \If{$Q_s^- = \emptyset$ and $E(v_s) \setminus E_{v_s} = \emptyset$}
  \State{$Q_s^- \gets Q_s^+$}
  \State{$Q_s^+ \gets \mathtt{Queue}(\emptyset)$}
  \EndIf
  \State{$s \gets \overline{s}$}
\EndWhile
\State\Return{$\emptyset$}
\end{algorithmic}
\end{algorithm}

We now describe our edge-balanced algorithm. Consider again a graph $\mathcal{G}$ and the two starting vertices $s_0, s_1$ on which the algorithm is executed. First observe that the initialization and the stopping conditions (lines~1-2, 4 of Algorithm \ref{algo:edge-approx}) are essentially the same as in the vertex-balanced algorithms. Before executing the while-loop, the algorithm checks if $s_0$ or $s_1$ is isolated and returns $\emptyset$ if this is the case.\footnote{Suppressed in the pseudocode for simplicity.} Then the algorithm initializes, for each side $s_i$, the first vertex $v_s$ from which it will begin the search and the, still empty, set $E_{v_s}$ of edges incident to $v_s$ that have already been explored (lines~5-7)\footnote{Note that the if-statement on line~5 will be true exactly once for each side $s$, namely during the first two iterations of the while-loop}. The sets $E_{v_s}$, indexed by the corresponding vertex $v_s$ currently being expanded, will be used to keep track of the already explored edges incident to $v_s$ throughout the course of the algorithm. We denote the set of edges incident to a vertex $v$ by $E(v)$. Now, whenever a vertex $v_s$ has no more \emph{unexplored} incident edges, i.e.\ when $E(v_s) \setminus E_{v_s} = \emptyset$ (line~8), we select the next vertex to be expanded (line~9) and reinitialize the set of already explored edges to the empty set (line~10), since now a new vertex will be expanded. As long as a vertex being expanded still has edges to be explored, we select one of them uniformly at random (line~11) and add it to the set $E_{v_s}$ explored edges (line~12). If the vertex $u$ incident to the edge on the other end has not yet been discovered by the search on the same side, it is appended to the queue $Q_s^+$ and added to the ``discovered" set $S_s$ (lines~13-15). For a given execution of the algorithm, we henceforth refer to $v_s$ as the \textit{parent vertex} of $u$. If it has been discovered by the search on the other side, that is, is already contained in $S_{\overline{s}}$, then the two searches have met and the algorithm has found a path (line~16). The path can again be obtained by concatenating the path from $s_0$ to $u$ with the path from $u$ to $s_1$ (lines~17-19). This is precisely the path that the algorithm returns. If the search trees do not yet intersect, the algorithm checks if the vertex $v_s$ still has unexplored incident edges, and if not, it checks if there are remaining vertices in the queue $Q_s^-$ which can be expanded in the next iteration of the while-loop (line~20). Otherwise, it replenishes $Q_s^-$ by setting $Q_s^- \gets Q_s^+$ and resetting $Q_s^+ \gets \texttt{Queue}(\emptyset)$ (lines~21-22). After each iteration of the while-loop, the algorithm automatically switches between the searches (line~23), ensuring that the searches are balanced edge-by-edge.

\subsection{Notations and definitions}

During a run of the vertex-balanced algorithms (Algorithms \ref{algo:vertex-approx} and \ref{algo:vertex-exact}), we will denote by $t_i$ the number of \emph{iterations} performed (i.e.\ the number of vertices expanded) on the $s_i$-side, $i\in \{0,1\}$, and by $t$ the total number of iterations performed (note that $t=t_0+t_1$ by definition). We use the notations $Q^-_{s_i}(t_i)$, $Q^+_{s_i}(t_i)$, and $S_{s_i}(t_i)$ to denote the sets $Q^-_{s_i}$, $Q^+_{s_i}$, and $S_{s_i}$ after $t_i$ vertices have been expanded on side $s_i$, and we denote by $v^i_{t_i}$ the $t_i$-th vertex that is expanded on the $s_i$-side. Similarly, we write $Q^-_{s_i}(t)$, $Q^+_{s_i}(t)$, and $S_{s_i}(t)$ when we want to refer to those sets after $t$ vertices have been expanded in total. We now give the formal definition of the runtime of the algorithms.

\begin{definition}\label{def:cost}
Consider $\textnormal{V-BFS}_{approx}(\mathcal{G},s_0,s_1)$ or $\textnormal{V-BFS}_{exact}(\mathcal{G},s_0,s_1)$. For $i\in\{0,1\}$, consider the set $Q_{s_i}(t_i)$ of vertices that were expanded on the $s_i$-side up to (and including) iteration $t_i$, namely
\[
    Q_{s_i}(t_i) \coloneqq \Big(\bigcup_{t'=0}^{t_i-1} Q^-_{s_i}(t')\Big) \setminus Q^-_{s_i}(t_i).
\]
Note that by definition $|Q_{s_i}(t_i)| = t_i$. We define the $s_i$-\emph{side cost} of the algorithm after iteration $t_i$ as 
\[
    \mathcal{C}_{s_i}(t_i) \coloneqq \sum_{v\in Q_{s_i}(t_i)} \deg(v),
\]
and the \emph{cost} of the algorithm after iteration $t$ is then simply defined as $\mathcal{C}(t) \coloneqq \mathcal{C}_{s_0}(t_0) + \mathcal{C}_{s_1}(t_1)$.

\end{definition}

Note that the cost corresponds to the number of edges that were expanded up to and including iteration $t$. In particular, the total runtime of the algorithm is its cost once it has terminated.

Since in the $\textnormal{E-BFS}_{approx}(\mathcal{G},s_0,s_1)$ algorithm we alternate between the two sides \emph{while} expanding vertices, the concept of iteration is less relevant in that case. Instead, we simply refer to \emph{round} $k$ to describe the point at which we are exploring the $k$-th edge (in total) in the algorithm. Note that, for every even $k$, the number of edges expanded on each side is equal to exactly $k/2$ after round $k$ (since we are switching sides after each expansion). Similarly as for the vertex-balanced algorithms, we use the notations $Q^-_{s_i}(k)$, $Q^+_{s_i}(k)$, $S_{s_i}(k)$, and $v_{s_i}(k)$ to denote the sets $Q^-_{s_i}$, $Q^+_{s_i}$, $S_{s_i}$, and the vertex $v_{s_i}$ after round $k$.

\section{Graph Models}\label{sec:model}

In this section, we formally introduce Chung-Lu graphs and GIRGs. These will serve as host graphs for our algorithms, on which we prove our runtime bounds in Section~\ref{sec:proofs}. Throughout, we will consider undirected graphs with vertex set $\mathcal{V}=[n]$ and edge set denoted by $\mathcal{E}$. In Chung-Lu graphs and GIRGs, the weights and the degrees are distributed according to a power-law, which is defined as follows.

\begin{definition}\label{def:power-law}
    Let $\tau>1$. A discrete random variable $X$ is said to follow a \emph{power-law with exponent} $\tau$ if $\pr(X = x) = \Theta(x^{-\tau})$ for $x\in \N$. A continuous random variable $X$ is said to follow a \emph{power-law with exponent} $\tau$ if it has a density function $f_X$ satisfying $f_X(x) = \Theta(x^{-\tau})$ for $x\ge 1$.
\end{definition}

We start with the definition of the Chung-Lu model.

\begin{definition}[Chung-Lu graph~\cite{chung2002average}]\label{def:simple-chung-lu}
    Let $\tau>2$ and let $\mathcal{D}$ be a power-law distribution on $[1,\infty)$ with exponent $\tau$. A \emph{Chung-Lu graph} is obtained by the following two-step procedure:
    \begin{enumerate}[label=(\arabic*), leftmargin=1cm]
        \item Every vertex $v\in\mathcal{V}$ draws i.i.d.\ a \emph{weight} $W_v \sim \mathcal{D}$.

        \item For every two distinct vertices $u,v \in\mathcal{V}$, add the edge $uv\in\mathcal{E}$ independently with probability
        \begin{align*}
            \pr(uv\in\mathcal{E} \mid W_u, W_v) = \Theta\Big(\min\Big\{\frac{W_uW_v}{n}, 1\Big\}\Big),
        \end{align*}
        where the hidden constants are uniform over all $u,v$. 
    \end{enumerate}
\end{definition}

Classically, the $\Theta(\cdot)$ simply hides a factor 1, but introducing the $\Theta$ in the model also captures similar random graphs, like the Norros-Reittu model \cite{norros2006conditionally}, while important properties stay asymptotically invariant~\cite{janson2010asymptotic}.

Geometric Inhomogeneous Random Graphs (GIRGs) combine the degree inhomogeneity of Chung-Lu graphs with an underlying geometric space. The vertices are assigned both a weight and a position in the $d$-dimensional unit hypercube $[0,1]^d$ equipped with the torus topology.

\begin{definition}[GIRG~\cite{bringmann2019geometric}]\label{def:simple-girg}
    Let $\tau>2$, $\alpha>1$ and $d\in\mathbb{N}$ and let $\mathcal{D}$ be a power-law distribution on $[1,\infty)$ with exponent $\tau$. A \emph{Geometric Inhomogeneous Random Graph (GIRG)} is obtained by the following three-step procedure:
    \begin{enumerate}[label=(\arabic*), leftmargin=1cm]
        \item Every vertex $v\in\mathcal{V}$ draws i.i.d.\ a \emph{weight} $W_v \sim \mathcal{D}$.

        \item Every vertex $v\in\mathcal{V}$ draws independently and u.a.r.\ a position $x_v$ in the hypercube $[0,1]^d$.

        \item For every two distinct vertices $u,v \in\mathcal{V}$, add the edge $uv\in\mathcal{E}$ independently with probability
        \begin{align*}
            \pr(uv\in\mathcal{E} \mid W_u, W_v, x_u, x_v) = \Theta\Big(\min\Big\{\frac{W_uW_v}{n \|x_u-x_v\|^d}, 1\Big\}^{\alpha}\Big).
        \end{align*}
    \end{enumerate}
\end{definition}

\subsection{Basic Properties of Chung-Lu graphs and GIRGs}\label{sec:basic-properties}

In this subsection we describe some basic properties of GIRGs and Chung-Lu graphs. They hold in fact for a general class of graph models described in~\cite{bringmann2016average}. In particular, we give the expected degree of a vertex and the marginal probability that an edge between two vertices with given weights is present.

\begin{lemma}[Lemma~4.2 and Theorem~7.3 in~\cite{bringmann2016average}]\label{lem:marginal}
    Let $\mathcal{G}=(\mathcal{V},\mathcal{E})$ be a GIRG and $v \in\mathcal{V}$ be a vertex with fixed position $x_v \in [0,1]^d$. Then each edge $uv$ for $u \ne v$ is independently present with probability
\begin{align*}
    \Pr(uv\in\mathcal{E} \mid x_u, W_u = w_u, W_v = w_v) = \Theta\Big(\min\Big\{\frac{w_uw_v}{n}, 1\Big\}\Big). 
\end{align*}
In particular, we can remove the conditioning on $x_u$ (by integrating over $[0,1]^d$) and obtain
\begin{align*}
    \Pr(uv\in\mathcal{E} \mid W_u = w_u, W_v = w_v) = \Theta\Big(\min\Big\{\frac{W_uW_v}{n}, 1\Big\}\Big). 
\end{align*}
\end{lemma}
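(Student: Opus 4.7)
For Chung-Lu graphs the statement is immediate from Definition~\ref{def:simple-chung-lu}, so all the content lies in the GIRG case. By translation invariance on the torus I may fix $x_v = 0$ and compute
\[
\Pr(uv \in \mathcal{E} \mid W_u = w_u, W_v = w_v, x_v) \;=\; \int_{[0,1]^d} \Theta\Big(\min\Big\{\tfrac{w_u w_v}{n\|x\|^d},\, 1\Big\}^{\alpha}\Big)\,dx,
\]
where $x = x_u$. Because the answer will not depend on $x_v$, additionally integrating over $x_v$ removes that conditioning for free. The uniformity of the hidden constants in Definition~\ref{def:simple-girg} lets me pull the $\Theta(\cdot)$ outside the integral throughout.

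\textbf{Key steps.} Set the crossover radius $r^* := \min\{1/2,\,(w_u w_v/n)^{1/d}\}$, at which $w_u w_v/(n\|x\|^d)$ equals $1$ (the factor $1/2$ reflects the effective diameter of the unit torus). First, on the ball $\{x : \|x\| \le r^*\}$ the integrand is $\Theta(1)$, contributing $\Theta((r^*)^d) = \Theta(\min\{w_u w_v/n,\, 1\})$. Second, on the annular region outside this ball, the $\min$ equals $w_u w_v/(n\|x\|^d) < 1$, and passing to polar coordinates (with an outer cutoff $r = \Theta(1)$ coming from the torus) gives the contribution
\[
\Theta\Big(\Big(\tfrac{w_u w_v}{n}\Big)^{\alpha}\Big) \int_{r^*}^{\Theta(1)} r^{\,d-1-d\alpha}\,dr.
\]
Since $\alpha > 1$, the exponent $d-1-d\alpha = -1 - d(\alpha-1)$ is strictly less than $-1$, so the integral converges and is dominated by its lower limit. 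Evaluating yields $\Theta((r^*)^{d-d\alpha}) \cdot (w_u w_v/n)^{\alpha} = \Theta((r^*)^d) = \Theta(\min\{w_u w_v/n,\, 1\})$ after substituting the value of $r^*$. Summing the two contributions completes the bound.

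\textbf{Main obstacle.} There is no deep difficulty; the argument is a routine polar-coordinate computation, but it must be split into the two cases $w_u w_v \ge n$ and $w_u w_v < n$. In the former one has $r^* = 1/2$, the annular region is essentially empty, and only the short-distance contribution (which already equals $\Theta(1) = \Theta(\min\{w_u w_v/n, 1\})$) survives; in the latter both contributions are genuinely present and both evaluate to $\Theta(w_u w_v/n)$, crucially relying on $\alpha>1$ so that the tail integral converges at infinity. The only subtle point to invoke is that the $\Theta$-constants in the connection probability are uniform in $u, v$ (and in the positions), as built into Definition~\ref{def:simple-girg}; without this uniformity the $\Theta$'s could not be pulled outside the integral. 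Once this bookkeeping is in place, the rest is a standard computation.
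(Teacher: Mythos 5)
Your computation is correct, but note that the paper does not prove this lemma at all: it is imported as a black box from Lemma~4.2 and Theorem~7.3 of~\cite{bringmann2016average}, so there is no internal argument to compare against. What you supply is the standard self-contained derivation that the citation hides: fix $x_v$ by translation invariance on the torus, split the integral over $x_u$ at the crossover radius $r^*=\min\{1/2,(w_uw_v/n)^{1/d}\}$, and check that both the saturated ball and the polynomially decaying annulus contribute $\Theta(\min\{w_uw_v/n,1\})$, the latter precisely because $\alpha>1$ forces the radial integral $\int_{r^*}^{\Theta(1)}r^{\,d-1-d\alpha}\,dr$ to be dominated by its lower endpoint. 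The citation buys generality (the reference proves this for a broader axiomatic class of geometric models) and avoids redoing standard work; your version buys transparency and makes explicit exactly where $\alpha>1$ and the uniformity of the $\Theta$-constants are used. Two small points of hygiene. First, ``converges at infinity'' is not quite the right justification, since the torus is bounded; the relevant fact is only that the lower limit of the radial integral dominates the upper one. Second, the lemma's first display, read literally, conditions on a \emph{fixed} $x_u$, for which the claim would be false (for $\|x_u-x_v\|$ of constant order and $w_uw_v\ll n$ the probability is $\Theta((w_uw_v/n)^{\alpha})$, not $\Theta(w_uw_v/n)$); your reading --- marginalize over $x_u$ with $x_v$ held fixed, then over $x_v$ --- is the version that is actually true and is the one used later in the paper, so your proof establishes the statement in the form it is needed.
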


The next lemma says that the expected degree of a vertex is of the same order as its weight, thus allowing us to treat a given weight sequence $(W_v)_{v\in\mathcal{V}}$ as a sequence of expected degrees.

\begin{lemma}[Lemma~4.3 in~\cite{bringmann2016average}]\label{lem:expected-degree}
    Let $\mathcal{G}=(\mathcal{V},\mathcal{E})$ be a Chung-Lu graph or a GIRG and $v \in\mathcal{V}$ be a vertex with weight $w_v \in [1,\infty)$. Then we have $\E[\deg(v) \mid W_v = w_v]=\Theta(\min\{w_v,n\})$ and $\pr(uv\in\mathcal{E} \mid W_v = w_v) =\Theta(\min\{\tfrac{w_v}{n},1\})$ for a vertex $u$ chosen uniformly at random in $\mathcal{V}$.
\end{lemma}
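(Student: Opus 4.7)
The plan is to prove both parts simultaneously by first unifying the two models, then performing a direct integration against the power-law. By the second statement of Lemma~\ref{lem:marginal} (which is the previous lemma, proved by integrating the GIRG connection probability over $x_u \in [0,1]^d$), the GIRG case satisfies the same marginal formula as Chung-Lu, namely
\[
    \pr(uv\in\mathcal{E} \mid W_u = w_u, W_v = w_v) = \Theta\bigl(\min\{w_u w_v/n,\,1\}\bigr),
\]
with constants uniform in $u,v$. So I can treat both models uniformly from this point on.

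Next I would compute the marginal probability for $u$ uniform in $\mathcal{V}$ by integrating over $W_u \sim \mathcal{D}$. Writing $f(w) = \Theta(w^{-\tau})$ for the power-law density on $[1,\infty)$ (the discrete case is analogous), I split the integral at the threshold $w^\star := n/w_v$:
\[
    \pr(uv\in\mathcal{E} \mid W_v = w_v) = \Theta\!\left( \int_1^{w^\star} \frac{w\, w_v}{n}\, w^{-\tau}\, dw \;+\; \int_{w^\star}^{\infty} w^{-\tau}\, dw \right)
\]
whenever $w_v \le n$. Using $\tau>2$, the first integral evaluates to $\Theta((w_v/n)\cdot(1-(n/w_v)^{2-\tau})/(\tau-2)) = \Theta(w_v/n)$ (since $(n/w_v)^{2-\tau} \to 0$), while the second yields $\Theta((w_v/n)^{\tau-1})$, which is subdominant since $\tau-1>1$. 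Hence the total is $\Theta(w_v/n) = \Theta(\min\{w_v/n,1\})$. For $w_v > n$ we have $w^\star < 1$, so $W_u w_v/n \ge 1$ for every $W_u \ge 1$, making the minimum identically $1$ and the marginal equal to $\Theta(1) = \Theta(\min\{w_v/n,1\})$. This establishes the second claim.

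The first claim then follows immediately by linearity of expectation over the other $n-1$ vertices: conditioning on $W_v = w_v$, the edges from $v$ to distinct vertices $u$ are independent (using Lemma~\ref{lem:marginal} to handle the positional conditioning for GIRGs), so
\[
    \E[\deg(v) \mid W_v = w_v] = (n-1)\cdot\pr(uv\in\mathcal{E} \mid W_v = w_v) = \Theta\bigl((n-1)\min\{w_v/n,1\}\bigr) = \Theta(\min\{w_v,n\}).
\]

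The only subtle point I anticipate is ensuring the hidden constants remain uniform in $w_v \in [1,\infty)$. This boils down to verifying that the case split at $w^\star$ does not introduce $w_v$-dependent blow-ups: in Case 1 the prefactor $1/(\tau-2)$ and the vanishing of $(n/w_v)^{2-\tau}$ are controlled uniformly for $w_v \le n$, and in Case 2 the bound is just $\Theta(1)$. Since the constants in both the density of $\mathcal{D}$ and the edge-probability formula are universal, this uniformity is automatic, and no further work is required.
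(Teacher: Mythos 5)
The paper does not prove this lemma at all: it is imported verbatim as Lemma~4.3 of~\cite{bringmann2016average}, so there is no in-paper proof to compare against. Your blind proof is the standard direct computation and is essentially correct: reduce the GIRG case to the Chung-Lu marginal via Lemma~\ref{lem:marginal}, integrate $\min\{w w_v/n,1\}$ against the power-law density $\Theta(w^{-\tau})$ with a split at $w^\star = n/w_v$, and then sum over the $n-1$ other vertices (for which plain linearity of expectation suffices; the independence you invoke is not needed for the first moment).

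One imprecision is worth flagging. In the regime $w_v\le n$ you claim the first integral is $\Theta(w_v/n)$ ``since $(n/w_v)^{2-\tau}\to 0$'' and that the second integral is subdominant. That justification only works when $w_v = o(n)$: writing $x=w_v/n$, the first integral is $\Theta\big(x(1-x^{\tau-2})\big)$, which degenerates to $0$ as $x\to 1$, and the second is $\Theta(x^{\tau-1})$, which is \emph{not} subdominant there. The conclusion nevertheless holds uniformly because the two pieces combine,
\begin{align*}
\frac{x-x^{\tau-1}}{\tau-2}+\frac{x^{\tau-1}}{\tau-1}\in\Big[\frac{x}{\tau-1},\frac{x}{\tau-2}\Big]=\Theta(x),
\end{align*}
so the marginal is $\Theta(w_v/n)$ for all $w_v\in[1,n]$, not just for $w_v=o(n)$. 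Since you explicitly raise uniformity in $w_v$ as the one subtle point and then declare it ``automatic,'' you should replace that remark by this two-term cancellation; otherwise the argument as written has a (repairable) hole exactly at the boundary $w_v=\Theta(n)$. The case $w_v>n$ and the derivation of $\E[\deg(v)\mid W_v=w_v]=\Theta(\min\{w_v,n\})$ are fine.
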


The next lemma guarantees that GIRGs and Chung-Lu graphs are sparse.

\begin{lemma}\label{lem:constant-average-degree}
    Let $\mathcal{G}=(\mathcal{V},\mathcal{E})$ be a Chung-Lu graph or a GIRG and $u,v \in\mathcal{V}$ be vertices chosen uniformly at random. Then $\pr(uv\in\mathcal{E}) =\Theta(1/n)$.
\end{lemma}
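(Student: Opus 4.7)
\textbf{Proof plan for Lemma~\ref{lem:constant-average-degree}.} The plan is to derive the claim directly from Lemma~\ref{lem:expected-degree}, which already handles the conditional statement $\Pr(uv\in\mathcal{E}\mid W_v=w_v) = \Theta(\min\{w_v/n,1\})$ when $u$ is drawn uniformly at random. First I would simply take expectation over the weight $W_v$ (which is itself drawn from $\mathcal{D}$, since $v$ is uniformly random): by the tower property,
\[
    \Pr(uv\in\mathcal{E}) \;=\; \E\bigl[\Pr(uv\in\mathcal{E}\mid W_v)\bigr] \;=\; \Theta\!\left(\E\bigl[\min\{W_v/n,\,1\}\bigr]\right),
\]
where the $\Theta(\cdot)$ can be pulled outside because the hidden constants in Lemma~\ref{lem:expected-degree} are uniform in $w_v$.

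Next I would handle the upper and lower bound separately. For the upper bound, use $\min\{x,1\}\le x$ to obtain $\E[\min\{W_v/n,1\}]\le \E[W_v]/n$, which is $O(1/n)$ because $\mathcal{D}$ is a power law with exponent $\tau>2$ (as $\tau>2$ ensures a finite first moment of the weight distribution; this is a standard computation that I would either quote or do with a one-line integral of $x\cdot x^{-\tau}$ from $1$ to $\infty$). For the lower bound, the key observation is that weights are supported on $[1,\infty)$, so $W_v\ge 1$ deterministically. Hence for $n\ge 1$ we have $\min\{W_v/n,1\}\ge 1/n$ pointwise, which gives $\E[\min\{W_v/n,1\}]\ge 1/n$.

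Combining the two bounds yields $\Pr(uv\in\mathcal{E})=\Theta(1/n)$, uniformly in $n$. There is no real obstacle here: the lemma is essentially a direct corollary of Lemma~\ref{lem:expected-degree} once one notices that the support $[1,\infty)$ of $\mathcal{D}$ forces $\min\{W_v/n,1\}$ to be at least $1/n$, so the only mild thing to verify is the finiteness of $\E[W_v]$, which follows from $\tau>2$. An alternative would be to start from Lemma~\ref{lem:marginal} and compute $\E[\min\{W_uW_v/n,1\}]$ directly using independence of $W_u,W_v$ and $\E[W_u]\E[W_v]<\infty$, but this route is slightly more work for the lower bound (requiring a Markov-type truncation at $\sqrt{n}$) and seems unnecessary when Lemma~\ref{lem:expected-degree} is already available.
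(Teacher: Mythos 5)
Your proposal is correct and follows essentially the same route as the paper: both proofs invoke Lemma~\ref{lem:expected-degree}, obtain the lower bound $\Omega(1/n)$ from the deterministic fact $W_v\ge 1$, and the upper bound $O(1/n)$ from $\min\{w/n,1\}\le w/n$ together with $\E[W_v]=O(1)$ (finite first moment since $\tau>2$). No gaps.
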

\begin{proof}
By Lemma~\ref{lem:expected-degree} we have $\pr(uv\in\mathcal{E} \mid W_v = w) = \Theta(\min\{\frac{w}{n},1\})$. Since $W_v \ge 1$ deterministically by Definition \ref{def:power-law}, we directly get that  $\pr(uv\in\mathcal{E}) \ge \Omega(\frac{1}{n})$. On the other hand, $\pr(uv\in\mathcal{E} \mid W_v = w) \le O(\frac{w}{n})$, and hence $\pr(uv\in\mathcal{E}) \le O(\frac{\E[W_v]}{n}) = O(\frac{1}{n})$. 

\end{proof}

The next lemma relates the degree and weight of a vertex and shows concentration of a node's degree around its expectation.

\begin{lemma}[Lemma~4.3 in~\cite{bringmann2016average}]\label{lem:degree-concentration}
    Let $\mathcal{G}=(\mathcal{V},\mathcal{E})$ be a Chung-Lu graph or a GIRG. Then the following properties hold with high probability:
    \begin{enumerate}[label=(\roman*)]
    \item $\deg(v) = O(W_v + \log^2 n)$ for all $v \in\mathcal{V}$.
    \item $\deg(v)= (1+o(1))\E[\deg(v)]= \Theta(W_v)$ for all $v \in\mathcal{V}$ with $W_v = \omega(\log^2 n)$.
    \end{enumerate}
\end{lemma}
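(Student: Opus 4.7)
The plan is to apply the Chernoff--Hoeffding bound of Theorem~\ref{thm:dubhashichernoff} vertex by vertex and then take a union bound over $\mathcal{V}$. The key observation is that, fixing $v\in\mathcal{V}$ and conditioning on $W_v = w_v$ (in the Chung--Lu case) or on the pair $(W_v = w_v, x_v)$ (in the GIRG case), the edge indicators $\ind{uv\in\mathcal{E}}$ for $u\neq v$ become \emph{independent} $\{0,1\}$-valued random variables: each depends only on the mutually independent quantities $W_u$, $x_u$ and the private coin flip generating the pair $uv$. Under this conditioning, $\deg(v) = \sum_{u\neq v}\ind{uv\in\mathcal{E}}$ is therefore a sum of independent Bernoullis with mean $\mu_v := \E[\deg(v)\mid W_v=w_v,x_v] = \Theta(\min\{w_v,n\})$, the second equality coming from Lemma~\ref{lem:expected-degree} together with the translation-invariance of the GIRG construction (which implies that the conditional mean does not depend on $x_v$).

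For part~(i), split on the size of $\mu_v$. When $\mu_v \le \log^2 n$, part~(iii) of Theorem~\ref{thm:dubhashichernoff} applied with $t = C\log^2 n$, for a large enough constant $C$, gives $\Pr[\deg(v) > C\log^2 n \mid W_v, x_v] \le 2^{-C\log^2 n}$. When $\mu_v > \log^2 n$, part~(i) with $\eps = 1/2$ gives $\Pr[\deg(v) > \tfrac{3}{2}\mu_v \mid W_v, x_v] \le \exp(-\mu_v/12) \le \exp\bigl(-\tfrac{1}{12}\log^2 n\bigr)$. Both bounds are $o(n^{-2})$, so combining them yields $\deg(v) = O(w_v + \log^2 n)$ with conditional probability $1-o(n^{-2})$. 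Integrating over $(W_v,x_v)$ and union-bounding over all $n$ vertices then proves~(i) with high probability.

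For part~(ii), assume that $w_v = \omega(\log^2 n)$, so that $\mu_v = \Theta(w_v) = \omega(\log^2 n)$ as well. Setting $\eps_n := \sqrt{6\log n/\mu_v} = o(1)$, parts~(i) and~(ii) of Theorem~\ref{thm:dubhashichernoff} give
\begin{equation*}
\Pr\bigl[\,|\deg(v) - \mu_v| > \eps_n\mu_v \,\bigm|\, W_v = w_v,\, x_v\,\bigr] \;\le\; 2\exp(-\eps_n^2\mu_v/3) \;=\; 2\exp(-2\log n) \;=\; o(n^{-1}),
\end{equation*}
so that $\deg(v) = (1+o(1))\mu_v = \Theta(w_v)$ holds conditionally with probability $1-o(n^{-1})$. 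A union bound over the at most $n$ vertices with $W_v = \omega(\log^2 n)$ concludes the proof.

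Beyond the arithmetic, the only place where care is required is in the choice of conditioning. In particular, for GIRGs one must condition on $x_v$ in addition to $W_v$: without fixing $x_v$, the distances $\|x_u - x_v\|$ would share the random $x_v$ across different $u$, correlating the edge indicators $\ind{uv\in\mathcal{E}}$ and breaking the application of Chernoff. Once this is taken care of, the rest of the argument is a routine case split on the magnitude of $\mu_v$.
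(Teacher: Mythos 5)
The paper does not prove this lemma at all: it is imported verbatim as Lemma~4.3 of~\cite{bringmann2016average}, so there is no in-paper proof to compare against. Your argument is the standard one (and essentially the one used in the cited reference): conditioning on $W_v$ (and on $x_v$ for GIRGs, which you correctly identify as necessary to decouple the indicators) makes $\deg(v)$ a sum of independent Bernoullis with mean $\Theta(\min\{w_v,n\})$, after which the case split on $\mu_v$ versus $\log^2 n$, the Chernoff bounds of Theorem~\ref{thm:dubhashichernoff}, and a union bound over $\mathcal{V}$ go through as you write them. The only loose end is in part~(ii), where you pass from $\mu_v=\Theta(\min\{w_v,n\})$ to $\mu_v=\Theta(w_v)$; this requires $w_v=O(n)$, which is not assumed but holds with high probability for every vertex simultaneously (e.g.\ by Lemma~\ref{lem:maxweight}, no weight exceeds $n^{1/(\tau-1)}\log n$), so a one-line remark restricting to that event closes the gap. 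With that caveat the proof is correct and complete.
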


A corollary of the above lemma is the following results, which will be used many times in the proofs.

\begin{corollary}\label{cor:weight-concentration}
    Let $\mathcal{G}=(\mathcal{V},\mathcal{E})$ be a Chung-Lu graph or a GIRG. Then, with high probability, $W_v = O(\deg(v)+\log^2 n)$ for all $v \in\mathcal{V}$.
\end{corollary}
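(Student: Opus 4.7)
The plan is to split vertices into two classes based on whether $W_v$ exceeds a threshold of order $\log^2 n$, and invoke the appropriate part of Lemma~\ref{lem:degree-concentration} in each class. Concretely, fix a sufficiently large constant $C$ so that Lemma~\ref{lem:degree-concentration}(ii) applies whp to every vertex with $W_v \ge C \log^2 n$; the asymptotic formulation ``$W_v = \omega(\log^2 n)$'' in the statement is meant to allow any such uniform threshold, which follows from a Chernoff bound (Theorem~\ref{thm:dubhashichernoff}) on $\deg(v) = \sum_{u \ne v} \mathbbm{1}[uv \in \mathcal{E}]$ combined with a union bound over at most $n$ vertices, since $\E[\deg(v) \mid W_v] = \Theta(W_v)$ by Lemma~\ref{lem:expected-degree}.

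First, for every $v$ with $W_v < C \log^2 n$, the bound $W_v = O(\log^2 n)$ is immediate, hence $W_v \le O(\deg(v) + \log^2 n)$ trivially. Second, for every $v$ with $W_v \ge C \log^2 n$, Lemma~\ref{lem:degree-concentration}(ii) gives $\deg(v) = \Theta(W_v)$ whp, so $W_v = O(\deg(v)) \le O(\deg(v) + \log^2 n)$. Combining the two cases over all $v \in \mathcal{V}$ yields the claimed bound, and the only source of failure probability comes from the high-weight case, which is already $o(1)$ by Lemma~\ref{lem:degree-concentration}.

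Since this is a direct corollary of Lemma~\ref{lem:degree-concentration}, there is no real obstacle; the only subtle point is to ensure the concentration in part (ii) is uniform in $v$, which is why a union bound over the (at most $n$) vertices with $W_v \ge C \log^2 n$ is required. The proof is essentially a one-paragraph case split, and no additional randomness arguments beyond those already encapsulated in Lemma~\ref{lem:degree-concentration} are needed.
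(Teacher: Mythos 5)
Your proof is correct and follows essentially the same route as the paper: a case split on whether $W_v$ is $O(\log^2 n)$ (trivial case) or larger (where Lemma~\ref{lem:degree-concentration}(ii) gives $\deg(v)=\Theta(W_v)$, hence $W_v=O(\deg(v))$). The extra remarks about fixing a uniform threshold and union-bounding are a reasonable elaboration but add nothing beyond what Lemma~\ref{lem:degree-concentration} already provides, since that lemma is stated to hold whp simultaneously for all vertices.
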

\begin{proof}
Let $v\in\mathcal{V}$ be an arbitrary vertex. If $W_v = O(\log^2 n)$, then we are trivially done. On the other hand, by Lemma \ref{lem:degree-concentration} whp for all vertices $v$ with $W_v = \omega(\log^2 n)$ we have $\deg(v)=\Theta(W_v)$, and in particular $W_v = O(\deg(v))$.
\end{proof}

The next lemma guarantees that there is at least one vertex with weight roughly $n^{\frac{1}{\tau-1}}$, but no vertices with weights much larger than this.

\begin{lemma}\label{lem:maxweight}
Let $\mathcal{G}=(\mathcal{V},\mathcal{E})$ be a GIRG or a Chung-Lu graph. Then with high probability it contains a vertex of weight at least $n^{\frac{1}{\tau-1}}/\log n$ but no vertex of weight $n^{\frac{1}{\tau-1}}\log n$ or larger.

\end{lemma}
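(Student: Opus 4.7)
The plan is a direct computation via the tail of the power-law weight distribution combined with independence of the $W_v$'s and a union bound on one side, a standard product estimate on the other. Recall that by Definition~\ref{def:power-law}, the density of $W_v$ satisfies $f_W(w) = \Theta(w^{-\tau})$ on $[1,\infty)$, so integrating gives the tail bound
\begin{equation*}
    \Pr(W_v \geq x) = \Theta(x^{-(\tau-1)})
\end{equation*}
uniformly for $x \geq 1$. Since we have $n$ i.i.d.\ weights, everything reduces to estimating the probability of the extreme order statistic crossing the thresholds $n^{1/(\tau-1)}/\log n$ and $n^{1/(\tau-1)} \log n$.

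For the upper tail (no vertex of weight at least $n^{1/(\tau-1)} \log n$), I would apply the union bound: the probability that some $v \in \mathcal{V}$ has $W_v \geq n^{1/(\tau-1)} \log n$ is at most
\begin{equation*}
    n \cdot \Theta\big((n^{1/(\tau-1)} \log n)^{-(\tau-1)}\big) = \Theta\big((\log n)^{-(\tau-1)}\big) = o(1),
\end{equation*}
using that $\tau > 2 > 1$. For the lower tail (existence of a vertex of weight at least $n^{1/(\tau-1)}/\log n$), I would use independence directly: the probability that all $n$ vertices have weight below this threshold is
\begin{equation*}
    \big(1 - \Theta(n^{-1} (\log n)^{\tau-1})\big)^n \leq \exp\big(-\Theta((\log n)^{\tau-1})\big) = o(1).
\end{equation*}
Taking a final union bound over these two rare events gives that with high probability both conclusions hold simultaneously.

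No step should pose a real obstacle here; the only mild care is making sure the $\Theta$-constants from the power-law definition propagate correctly through the tail estimate (which only requires $\tau > 1$, comfortably satisfied). Note also that since weights are assigned identically in the Chung-Lu and GIRG models (only the connection rule differs), the same argument covers both cases without modification, so the proof will not need to distinguish between the two models.
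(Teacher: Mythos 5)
Your proposal is correct and follows essentially the same route as the paper's proof: a union bound (equivalently, Markov's inequality on the sum of indicators) for the upper threshold, and independence of the i.i.d.\ weights together with $1-x\le e^{-x}$ for the lower threshold. The tail estimate $\Pr(W_v\ge x)=\Theta(x^{-(\tau-1)})$ and the resulting bounds match the paper's computation exactly.
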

\begin{proof}
   
    Let $X_i$ be the indicator variable that $v_i$ has weight at least $k$, for some positive integer $k$. Then, since the weights are i.i.d., we obtain
\begin{align*}
    \mathbb{P}(\sum_{i=1}^nX_i\geq 1)&=1-\mathbb{P}(\sum_{i=1}^nX_i=0)=1-\mathbb{P}(X_1=0)^n\\
    &=1-(1-\mathbb{P}(X_1\geq 1))^n\geq 1-e^{-n\mathbb{P}(X_1\geq 1)},
\end{align*}
where for the inequality we have used the classical bound $1+x\leq e^x$ (which is valid for every real number $x$). But $\mathbb{P}(X_1\geq 1)=\mathbb{P}(W_1\geq k)\geq ck^{-(\tau-1)}$ for some constant $c>0$ and hence 
\[1-e^{-n\mathbb{P}(X_1\geq 1)}\geq 1-e^{-nck^{-(\tau-1)}}.\]
Setting $k=n^{1/(\tau-1)}/\log n$ we finally obtain
\[\mathbb{P}(\sum_{i=1}^nX_i\geq 1)\geq 1-e^{-c(\log n)^{\tau-1}},\]
whence $\sum_{i=1}^nX_i\geq 1$ whp; that is, whp there is a vertex in the graph with weight at least $n^{1/(\tau-1)}/\log n$. Similarly, if we denote by $Y_i$ the indicator that node $v_i$ has weight larger than $k$, then by Markov's inequality we obtain 
\[
\pr(\sum_{i=1}^n Y_i\geq 1)\leq nk^{-(\tau-1)},
\]
and setting $k=n^{\frac{1}{\tau-1}}\log n$ yields that the above probability is at most $(\log n)^{-(\tau-1)}=o(1)$.
\end{proof}

The next theorem guarantees the existence of a unique giant component in Chung-Lu graphs and GIRGs, and bounds the size of the other components.

\begin{theorem}{(Theorem 5.9 in~\cite{bringmann2016average})}\label{thm:component-sizes}
    Let $\mathcal{G}=(\mathcal{V}, \mathcal{E})$ be a Chung-Lu graph or a GIRG on $n$ vertices. Then with high probability, there is a connected component which contains $\Omega(n)$ vertices. Furthermore, with high probability all other components have at most polylogarithmic size.
\end{theorem}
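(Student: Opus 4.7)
The plan is to follow the classical two-step strategy for supercritical inhomogeneous random graphs: first produce many ``large'' components by a local exploration/branching-process argument, then use a sprinkling step to merge them into a unique giant component, and finally bound the size of every other component by $\operatorname{polylog}(n)$.

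First I would analyze the local neighborhood of a typical vertex $v$ by running a BFS and coupling it to a two-stage Galton--Watson branching process. The offspring distribution in the first generation is essentially $\deg(v)$ with $W_v$ power-law, and in later generations it is the size-biased weight distribution shifted by $-1$ (the standard ``exploration from a neighbor'' law). Using Lemma~\ref{lem:marginal} and Lemma~\ref{lem:expected-degree}, one checks that this size-biased distribution has expectation $\E[W^2]/\E[W]$, which is infinite for $\tau\in(2,3)$, so the associated branching process is supercritical with strictly positive survival probability $\rho>0$. For GIRGs one has to argue that the geometric dependencies do not spoil this: conditional on the weights of already-explored vertices, the remaining neighbors are essentially independent, and edges to far-away vertices of a given weight behave, after integrating over positions, as in the Chung-Lu case (Lemma~\ref{lem:marginal}). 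A standard coupling with domination by Chung-Lu in distribution then transfers the branching-process analysis. The output of this step is that with high probability a $\Theta(1)$-fraction of vertices lie in components of size $\ge\log^{C} n$ for a sufficiently large constant $C$.

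For the merging step I would use sprinkling: expose the edge set in two independent sub-samples by a standard thinning (each edge kept with probability $1-\eps$ in the first round and the remainder revealed in the second). After the first round the analysis above already produces $\Theta(n)$ vertices in components of size $\ge \log^C n$; call these clumps. A pair of distinct clumps $C_1,C_2$ fails to be joined by a sprinkled edge with probability at most $\prod_{u\in C_1,v\in C_2}(1 - \Omega(W_uW_v/n))$. Using Lemma~\ref{lem:expected-degree} and Lemma~\ref{lem:constant-average-degree} together with $|C_1|,|C_2|\ge \log^C n$, this probability is $\exp(-\Omega(\log^{2C}n/n))$ if typical weights were $\Theta(1)$, which is too weak by itself, so I would instead argue that each clump contains, whp, at least one vertex of weight $\Omega(\log^{C'} n)$ (by a first-moment estimate using the power-law tail), so two clumps share an edge with probability $1 - \exp(-\Omega(\log^{2C'} n))$. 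A union bound over pairs of clumps then merges all clumps into one giant component of size $\Omega(n)$. For GIRGs the same argument applies because, conditional on positions, pairs of vertices with product of weights large enough are still connected with probability $\Omega(1)$ when at typical distance, after integrating out geometry.

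For the second statement, every component that is not the giant component must, at the vertex where exploration started, correspond to a BFS whose coupled branching process either went extinct early or survived but produced only $O(\operatorname{polylog} n)$ total progeny. By the large-deviation estimates for supercritical branching processes of power-law type (and the concentration bounds in Lemma~\ref{lem:degree-concentration}), the probability that a given vertex lies in a component of size between $\log^{C} n$ and $\alpha n$ for any $\alpha<\rho$ is $n^{-\omega(1)}$, so a union bound over vertices rules out any non-giant component of size $\ge \log^{C+1} n$ whp. Combined with the previous step, this yields both conclusions. The main obstacle is the GIRG case: the spatial dependence between neighbors breaks the exact independence of a branching process, so the key technical work is to set up a stochastic domination (both from above, to rule out large non-giant components, and from below, to get the supercritical survival) that is tight enough on polylogarithmic scales. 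This is done in~\cite{bringmann2016average} via a careful two-round edge exposure that first ignores the geometry to get Chung-Lu-type bounds and then restores it through the sprinkled edges.
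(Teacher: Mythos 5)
The paper does not prove this statement at all: it is imported verbatim as Theorem~5.9 of~\cite{bringmann2016average}, so there is no in-paper proof to compare against. Your overall strategy (branching-process coupling for local survival, sprinkling to merge, duality/large-deviation bounds to exclude intermediate components) is the standard route and is broadly the one taken in the cited reference, so the architecture of your sketch is reasonable.

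However, your merging step contains a concrete quantitative error that would break the argument as written. You claim that two clumps $C_1,C_2$, each containing a vertex of weight $\Omega(\log^{C'}n)$, share a sprinkled edge with probability $1-\exp(-\Omega(\log^{2C'}n))$. The correct failure probability is $\prod_{u\in C_1,v\in C_2}\bigl(1-\Omega(W_uW_v/n)\bigr)\approx\exp\bigl(-\Omega(\mathrm{Wt}(C_1)\mathrm{Wt}(C_2)/n)\bigr)$, and since both clumps have only polylogarithmic size and (whp) polylogarithmic maximum weight, the exponent is $\mathrm{polylog}(n)/n=o(1)$; you have effectively dropped the factor $1/n$. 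So two \emph{fixed} polylog-sized clumps are typically \emph{not} joined by sprinkling, and the subsequent union bound over pairs of clumps cannot merge them. The standard fix is to join each clump not to another single clump but to the \emph{union} of all other clumps (total weight $\Theta(n)$, giving failure probability $\exp(-\Omega(\log^C n))$ per clump), followed by an additional argument (iterative merging, or connecting every clump to a whp-connected heavy core) to conclude uniqueness of the linear-order component; without this the sketch does not close. The remaining parts of your sketch (supercriticality from $\E[W^2]=\infty$ for $\tau\in(2,3)$, duality giving exponential tails for the conditioned-on-extinction progeny, and the caveat about spatial dependencies in GIRGs) are essentially sound at the level of a sketch.
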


The next proposition\footnote{The proposition is contained in upcoming work by J.L., M.K., U.S.\ and a fourth co-author. Its proof is available on request.} gives the conditional weight density of vertices that are endpoints of sampled edges.

\begin{proposition}\label{prop:shifted-conditional-weight-distribution}
Let $\mathcal{G} = (\mathcal{V}, \mathcal{E})$ be a Chung-Lu graph or a GIRG. Denote by $f_{W_v}(w \mid W_u = w_u, uv\in\mathcal{E})$ the density of the weight of the endpoint of an edge $uv$ chosen uniformly at random in $\mathcal{E}$, conditioned on the weight of the other endpoint. Then this conditional density satisfies
\begin{align}\label{eq:shifted-conditional-weight-distribution}
    f_{W_v}(w \mid W_u = w_u, uv\in\mathcal{E})=
    \begin{dcases*}
    \Theta(w^{1-\tau}) & if $w_u \le  \frac{n}{w}$, \\
    \Theta(\tfrac{w^{-\tau} n}{w_u}) & if  $\frac{n}{w} < w_u \le n$, \\
    \Theta(w^{-\tau}) & if $w_u > n$.
    \end{dcases*}
\end{align}
\end{proposition}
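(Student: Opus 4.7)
The plan is to apply Bayes' rule so as to rewrite the conditional density of $W_v$ in terms of quantities already provided by the basic properties in Section~\ref{sec:basic-properties}. Concretely, for any fixed pair of distinct vertices $u,v$,
$$f_{W_v}(w \mid W_u = w_u, uv \in \mathcal{E}) = \frac{f_{W_v}(w) \cdot \pr(uv \in \mathcal{E} \mid W_u = w_u, W_v = w)}{\pr(uv \in \mathcal{E} \mid W_u = w_u)},$$
and sampling an edge $uv$ uniformly at random from $\mathcal{E}$ and then revealing one endpoint's weight induces the same conditional density by vertex-exchangeability of the model.

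Three ingredients then need to be identified. First, the prior density is $f_{W_v}(w) = \Theta(w^{-\tau})$ by Definition~\ref{def:power-law}. Second, the full conditional edge probability is $\Theta(\min\{w w_u/n,1\})$ by Lemma~\ref{lem:marginal}, which in the GIRG case already marginalizes out the latent position of $u$. Third, the denominator is $\pr(uv \in \mathcal{E} \mid W_u = w_u) = \Theta(\min\{w_u/n,1\})$, given directly by Lemma~\ref{lem:expected-degree} (applied to the second endpoint $v$, whose weight is unconditioned and hence distributed as the prior).

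Plugging these three pieces in and doing the elementary case analysis then yields the three regimes. When $w_u \le n/w$, the minimum in the numerator equals $w w_u/n$ and $w_u \le n$ so the denominator equals $w_u/n$; the ratio is $\Theta(w^{1-\tau})$. When $n/w < w_u \le n$, the numerator saturates to $\Theta(w^{-\tau})$ while the denominator is still $\Theta(w_u/n)$, giving $\Theta(w^{-\tau} n/w_u)$. When $w_u > n$, both the numerator and denominator saturate at $\Theta(w^{-\tau})$ and $\Theta(1)$ respectively, giving $\Theta(w^{-\tau})$. As a sanity check, one can verify that integrating the claimed density over $w \in [1,\infty)$ yields $\Theta(1)$ in each regime, as it should for a probability density.

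The main obstacle is justifying cleanly the first step, namely that sampling an edge uniformly from $\mathcal{E}$ and then revealing one endpoint's weight is equivalent to conditioning on $uv \in \mathcal{E}$ for a fixed pair $(u,v)$. I would do this via a standard size-biasing argument: the probability that the uniformly sampled edge is a specific ordered pair $(i,j)$ is proportional to $\mathbbm{1}\{ij \in \mathcal{E}\}$, so averaging over the $n(n-1)$ ordered pairs together with vertex-exchangeability reduces the question to the joint law of $(W_u, W_v, \mathbbm{1}\{uv \in \mathcal{E}\})$ for any single fixed pair. Beyond this, the remaining calculations are routine, but care is needed to ensure that the $\Theta$-constants produced along the way are uniform over $w$ and $w_u$ within each of the three regimes; the uniformity stated in Definition~\ref{def:simple-chung-lu} and in Lemmas~\ref{lem:marginal} and~\ref{lem:expected-degree} supplies exactly this.
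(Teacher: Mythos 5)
The paper gives no proof of this proposition: the footnote defers it to upcoming work, so there is nothing to compare your argument against line by line. On its own merits, your proof is correct and essentially complete. The Bayes decomposition is the right skeleton, the three ingredients are correctly sourced (the power-law prior, the marginal connection probability of Lemma~\ref{lem:marginal} with positions integrated out in the GIRG case, and the denominator $\Theta(\min\{w_u/n,1\})$, which one can also verify directly by integrating the numerator over the prior: the $\int_1^{n/w_u} w^{1-\tau}\,dw$ piece converges because $\tau>2$, and the tail $\Theta((n/w_u)^{1-\tau})$ is dominated by $w_u/n$ because $\tau>2$), and the case analysis reproduces the three regimes exactly. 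The only point I would tighten is the reduction from ``edge chosen uniformly at random in $\mathcal{E}$'' to the fixed-pair conditional law: the probability that the sampled edge equals a given pair is $\mathbbm{1}\{uv\in\mathcal{E}\}/|\mathcal{E}|$ with a \emph{random} normalization, so exchangeability alone does not literally collapse the size-biased law onto the fixed-pair law without a concentration argument for $|\mathcal{E}|$. In practice this is harmless here because the proposition is only ever invoked (in Lemma~\ref{lem:bfs-new-weight}) as the fixed-pair conditional density $f_{W_v}(w \mid W_u=w_u, uv\in\mathcal{E})$, which is exactly the quantity your Bayes computation delivers; you could simply state that this is the object being computed and drop the size-biasing detour.
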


\section{Proofs}\label{sec:proofs}

This section contain the proofs of our main theorems, demonstrating first correctness of our algorithms - valid for all graphs - followed by the runtime analyses, which hold for specific graph models (see Section \ref{sec:model}).

We begin with a definition which stratifies the vertex set of a graph by distance from a given starting vertex $s$.

\begin{definition}\label{def:layer}
    Let $\mathcal{G}=(\mathcal{V}, \mathcal{E})$ be a graph, let $s\in\mathcal{V}$ and let $j\in\N$. The set of vertices at distance $j$ from $s$ in the graph is called the \emph{$j$-th layer from $s$}, or simply the \emph{$j$-th layer} when $s$ is clear from context, and we denote it by $L_s^j$.
\end{definition}

\subsection{Correctness}\label{sec:correctness}

In this subsection, we will prove the correctness of our three algorithms. Let us first make a straightforward but useful observation, which consists of noticing that if we look at these algorithms only from one side (e.g.\ from the $s_0$-side) they are simply performing a standard Breadth-First Search. 

\begin{observation}
\label{obs:layers}
At any point in the algorithms $\textnormal{V-BFS}_{approx}$, $\textnormal{V-BFS}_{exact}$, and $\textnormal{E-BFS}_{approx}$, and for any source $s_i$, $i\in\{0,1\}$, the nodes in $Q_{s_i}^-$ are equidistant from the source $s_i$, and the same is true for the nodes in $Q_{s_i}^+$, whereas $d(s_i, Q_{s_i}^+) = d(s_i, Q_{s_i}^-)+1$. Hence, for all $t_i$ in the interval $\big(\sum_{k = 0}^{j-1} \vert L_{s_i}^k \vert  \sum_{k = 0}^{j} \vert L_{s_i}^k \vert\big]$, we have $Q_{s_i}^-(t_i) = L_{s_i}^j$, and $Q_{s_i}^+(t_i) \subseteq L_{s_i}^{j+1}.$ In particular, if a vertex in the $j+1$-th layer is expanded, then all vertices in the $j$-th layer have been expanded.

\end{observation}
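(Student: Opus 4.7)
The plan is to fix a source $s_i$ and show that the BFS restricted to side $s_i$ behaves exactly like a standard single-source BFS from $s_i$, irrespective of the alternation strategy. The key point is that the balancing rule determines only \emph{when} an expansion on side $s_i$ occurs; whenever it does, it pops the next vertex from $Q_{s_i}^-$ and appends its undiscovered neighbors to $Q_{s_i}^+$ exactly as in classical BFS. Thus it suffices to verify the standard BFS layer invariant by induction on the number $t_i$ of vertices expanded on side $s_i$, treating each side in complete isolation.

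The invariant I would maintain is that after $t_i$ expansions on side $s_i$, there is a unique $j \ge 0$ such that (a) $Q_{s_i}^-(t_i) \subseteq L_{s_i}^j$, consisting precisely of the vertices of $L_{s_i}^j$ that have not yet been popped; (b) $Q_{s_i}^+(t_i) \subseteq L_{s_i}^{j+1}$; and (c) $S_{s_i}(t_i) = L_{s_i}^0 \cup \dots \cup L_{s_i}^{j} \cup Q_{s_i}^+(t_i)$. The base case $t_i = 0$ is immediate, as $Q_{s_i}^-(0) = \{s_i\} = L_{s_i}^0$, $Q_{s_i}^+(0) = \emptyset$, and $S_{s_i}(0) = \{s_i\}$.

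For the inductive step, at the $(t_i{+}1)$-th expansion on side $s_i$ we pop $v \in Q_{s_i}^- \subseteq L_{s_i}^j$. Any neighbor $u$ of $v$ satisfies $d(s_i, u) \in \{j-1, j, j+1\}$; by invariant (c), those at distance at most $j$ already lie in $S_{s_i}$ and so are not re-added, while neighbors at distance $j+1$ are appended to $Q_{s_i}^+ \subseteq L_{s_i}^{j+1}$, preserving (a)--(c). Once the last vertex of $L_{s_i}^j$ is popped, every vertex $u \in L_{s_i}^{j+1}$ has some neighbor in $L_{s_i}^j$ (by definition of BFS layers), and that neighbor has now been popped, so $u$ has been added to $Q_{s_i}^+$; hence $Q_{s_i}^+ = L_{s_i}^{j+1}$. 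The queue-swap step (lines 13--15 of Algorithm~\ref{algo:vertex-approx}) then sets $Q_{s_i}^- \gets L_{s_i}^{j+1}$ and $Q_{s_i}^+ \gets \emptyset$, so the invariant holds with $j{+}1$ in place of $j$. Summing $|L_{s_i}^k|$ over $k = 0, \dots, j$ gives the stated correspondence between $t_i$ and the current layer.

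The same argument covers $\textnormal{E-BFS}_{approx}$: here the currently expanding vertex $v_{s}$ is kept while its incident edges are examined one at a time, but $v_s$ is popped from $Q_{s_i}^-$ exactly once, and the cumulative effect on $(Q_{s_i}^-, Q_{s_i}^+, S_{s_i})$ after all its edges have been processed is identical to one vertex expansion in the vertex-balanced case. The main (minor) subtlety will be pinning down the exact moment, within a single iteration, at which $Q_{s_i}^-$ becomes empty and is replaced by $Q_{s_i}^+$, which is what determines whether $Q_{s_i}^-(t_i)$ is $L_{s_i}^{j}$ or $L_{s_i}^{j+1}$ at the boundary indices $t_i = \sum_{k=0}^{j} |L_{s_i}^k|$; once the invariant is stated precisely as above, these boundary cases follow directly from the queue-swap step.
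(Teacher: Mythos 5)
Your proof is correct. The paper gives no proof of this observation at all --- it is asserted as the standard single-source BFS layer invariant, which is exactly what your induction on $t_i$ establishes, including the correct handling of the queue-swap boundary and of the round-by-round expansion in $\textnormal{E-BFS}_{approx}$. The only remark worth making is that your invariant (a) (``$Q_{s_i}^-$ is the set of not-yet-popped vertices of $L_{s_i}^j$'') is in fact the more precise statement: the observation's literal equality $Q_{s_i}^-(t_i) = L_{s_i}^j$ should be read as ``the layer currently being processed is $L_{s_i}^j$'', since mid-layer the queue contains only the unexpanded portion of $L_{s_i}^j$; this is consistent with how the observation is used later (only distances and the ``fully expanded'' conclusion matter) and with Definition~\ref{def:cost}.
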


We now prove a preliminary result that will be used to show the correctness of all three algorithms.
In what follows, we denote by $t^*$ the first iteration (for vertex-balanced algorithms) or round (for the edge-balanced algorithm) in which the stopping condition is met (i.e.\ $\Gamma(v) \cap S_{\Bar{s}} \neq \emptyset$ on line 8 for the vertex-balanced algorithms, and $u \in S_{\bar{s}}$ on line 16 for the edge-balanced algorithm).
\begin{lemma} \label{lem:dist-lb}
     Consider a run of either $\textnormal{V-BFS}_{approx}$, $\textnormal{V-BFS}_{exact}$, or $\textnormal{E-BFS}_{approx}$ on any graph $\mathcal{G}$ and any choice of $s_0$ and $s_1$. Let $d$ be the distance between $Q^-_{s}(t^*)$ and $s$ and let and $\bar{d}$ be the distance between $Q^-_{\bar{s}}(t^*)$ and $\bar{s}$. Then we have :
     $$
     d(s_0,s_1) \ge d + \Bar{d} + 1.
     $$
\end{lemma}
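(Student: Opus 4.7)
My plan is to argue by contradiction via a disjointness invariant for the two discovered sets. First, I would establish the key invariant that for every iteration or round $t < t^*$ one has $S_s(t) \cap S_{\bar s}(t) = \emptyset$. This follows by induction on $t$: at each step only one side, say $s$, is updated, and any vertex newly appended to $S_s$ (namely a neighbor of the currently expanded vertex) is immediately tested against $S_{\bar s}$ by the stopping check. Hence any new intersection between $S_s$ and $S_{\bar s}$ would force the algorithm to halt strictly before $t^*$, contradicting the minimality of $t^*$. At $t = t^*-1$ this gives the disjointness we will exploit.

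Second, I would identify which BFS layers the discovered sets are forced to contain at time $t^*-1$. By Observation~\ref{obs:layers}, the (assumed non-empty) queue $Q^-_s(t^*)$ lies entirely in $L^d_s$, which forces all of $L^{d-1}_s$ to have been expanded on the $s$-side by time $t^*-1$; since expanding each vertex of $L^{d-1}_s$ discovers all of its successors in $L^d_s$, this gives $\bigcup_{i \le d} L^i_s \subseteq S_s(t^*-1)$. An identical argument on the $\bar s$-side (using that its state is unchanged during the $t^*$-th update) yields $\bigcup_{j \le \bar d} L^j_{\bar s} \subseteq S_{\bar s}(t^*-1)$.

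Finally, I derive a contradiction under the assumption $d(s_0,s_1) = k \le d + \bar d$. Taking a shortest path $s = w_0, w_1, \dots, w_k = \bar s$, each $w_i$ lies in $L^i_s \cap L^{k-i}_{\bar s}$. Setting $i^* := \min(d,k)$ gives both $i^* \le d$ and $k - i^* \le \bar d$, so $w_{i^*}$ belongs simultaneously to $S_s(t^*-1)$ and $S_{\bar s}(t^*-1)$, contradicting the disjointness invariant. The main subtlety is treating $\textnormal{E-BFS}_{approx}$ uniformly with the two vertex-balanced algorithms, since its queues and discovered sets are updated at the edge level rather than the vertex level; the invariant nevertheless goes through because at each round at most one vertex is newly added to a discovered set, and the overlap check is performed immediately afterwards. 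The degenerate case in which $Q^-_s(t^*)$ is empty can be handled by the convention $d = \infty$, in which case the bound is vacuous.
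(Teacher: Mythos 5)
Your proof is correct and follows essentially the same route as the paper's: both argue by contradiction from a hypothetical shortest path of length at most $d+\bar d$, use Observation~\ref{obs:layers} to conclude that all layers up to $L_s^{d}$ and $L_{\bar s}^{\bar d}$ have already been discovered at iteration/round $t^*$, and then locate a vertex of that path in both search trees. Your explicit disjointness invariant for $S_s$ and $S_{\bar s}$ is just a repackaging of the paper's observation that the stopping condition would otherwise have fired strictly before $t^*$, and your single index $i^*=\min(d,k)$ replaces the paper's two-case split $D\le d$ versus $D>d$.
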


\begin{proof}
    Suppose by contradiction that there is an $s_0 - s_1$ path of length $D \le d + \Bar{d}$, and consider among all such paths the shortest one that we denote by $\Pi \coloneqq (s=u_0, u_1,\dots u_D=\Bar{s})$ (picking an arbitrary one in case there are several of equal minimal length). Because $\Pi$ is a shortest path, any subpath of $\Pi$ is also a shortest path (between its two own endpoints), and hence for $0 \le j \le D$, $u_j \in L_s^j \cap L_{\Bar{s}}^{D-j}$. Now, note that at iteration/round $t^*$, all layers up to $L_s^{d-1}$ on the $s$-side and $L_{\Bar{s}}^{\Bar{d} - 1}$ on the $\Bar{s}$-side have been fully expanded by Observation \ref{obs:layers} (that is, all endpoints of all edges that intersect with these layers have been discovered). If $D \le d$, then we have already expanded $u_{D-1}$, and $u_D = \Bar{s} \in S_{\Bar{s}} \cap \Gamma(u_{D-1})$, so the stopping condition is met and the algorithm should already have stopped at that point. Similarly, if $D > d$, then $u_{d-1}$ has been fully expanded, but $u_d \in \Gamma(u_{d-1}) \cap L_{\Bar{s}}^{D-d}$, and since $D-d \le \Bar{d}$ (because $D\le d+\Bar{d}$), it holds that $L_{\Bar{s}}^{D-d} \subseteq S_{\Bar{s}}$, so in particular $u_{d} \in \Gamma(u_{d-1}) \cap S_{\Bar{s}}$, and again, the algorithm should have stopped here. Hence we have reached a contradiction, which concludes the proof.
\end{proof}
The next lemma states that the algorithm $\textnormal{V-BFS}_{approx}(\mathcal{G},s_0,s_1)$ finds a path between $s_0$ and $s_1$ which is at most one hop longer than their distance.

\begin{lemma}\label{lem:vertex-approx-correct}
    For any graph $\mathcal{G}$ and any choice of $s_0$ and $s_1$, the length of the $s_0 - s_1$-path found by $\textnormal{V-BFS}_{approx}(\mathcal{G},s_0,s_1)$ is at most $d(s_0,s_1)+1$.
\end{lemma}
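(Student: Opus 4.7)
The plan is to combine the lower bound from Lemma~\ref{lem:dist-lb} with a matching upper bound on the length of the returned path. If $s_0$ and $s_1$ lie in different connected components, the algorithm returns $\emptyset$ and $d(s_0,s_1) = \infty$, so the statement is vacuous; I therefore assume the two vertices are connected and that the stopping condition on line~8 eventually triggers at some iteration $t^*$. Let $s$ denote the side on which it triggers, $v$ the vertex popped at iteration $t^*$, and $u \in \Gamma(v) \cap S_{\bar{s}}$ the witness neighbor. The returned path $\pi \cup \bar\pi$ is the concatenation of the $s$-to-$u$ BFS path $\pi$ in $S_s$ and the $\bar{s}$-to-$u$ BFS path $\bar\pi$ in $S_{\bar{s}}$, and its length is the sum of the two lengths.

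I would then bound each piece separately. For $\pi$, Observation~\ref{obs:layers} places $v$ in layer $L_s^d$, where $d$ is as in Lemma~\ref{lem:dist-lb}. Since $u \in \Gamma(v)$, the triangle inequality yields $d(s,u) \le d+1$, and because BFS recovers shortest paths inside $S_s$, the length of $\pi$ equals $d(s,u)$ and is therefore at most $d+1$. For $\bar\pi$, I would argue, again via Observation~\ref{obs:layers}, that at iteration $t^*$ the set $S_{\bar{s}}$ contains every vertex of $L_{\bar{s}}^0, \ldots, L_{\bar{s}}^{\bar{d}}$: indeed, layer $L_{\bar{s}}^{\bar{d}}$ can only start being expanded once $L_{\bar{s}}^{\bar{d}-1}$ has been fully expanded, and expanding all of $L_{\bar{s}}^{\bar{d}-1}$ adds the entirety of $L_{\bar{s}}^{\bar{d}}$ to $S_{\bar{s}}$. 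On top of that, $S_{\bar{s}}$ may contain vertices of $L_{\bar{s}}^{\bar{d}+1}$ that were discovered while partially processing layer $\bar{d}$. Since $u \in S_{\bar{s}}$, this gives $d(\bar{s},u) \le \bar{d}+1$, so the length of $\bar\pi$ is at most $\bar{d}+1$.

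Combining the two bounds, the length of $\pi \cup \bar\pi$ is at most $d + \bar{d} + 2$, while Lemma~\ref{lem:dist-lb} yields $d + \bar{d} + 1 \le d(s_0,s_1)$. Chaining the inequalities gives the desired conclusion that the returned path has length at most $d(s_0,s_1) + 1$.

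The main subtlety will be pinning down the contents of $S_{\bar{s}}$ at termination, and in particular verifying that every vertex in $S_{\bar{s}}$ lies in one of the layers $L_{\bar{s}}^0, \ldots, L_{\bar{s}}^{\bar{d}+1}$; this is exactly the role played by the layer-by-layer invariant of Observation~\ref{obs:layers} together with the fact that a layer is always entirely drained into $S_{\bar{s}}$ before the next layer is touched, and it is the only point in the argument that deserves a careful, explicit verification.
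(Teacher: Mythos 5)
Your proposal is correct and follows essentially the same route as the paper: bound the $s$-to-$u$ piece by $d+1$ via $u\in\Gamma(v)$ with $v$ in layer $d$, bound the $\bar{s}$-to-$u$ piece by $\bar{d}+1$ via $u\in S_{\bar{s}}\subseteq L_{\bar{s}}^0\cup\cdots\cup L_{\bar{s}}^{\bar{d}+1}$, and combine with the lower bound $d(s_0,s_1)\ge d+\bar{d}+1$ from Lemma~\ref{lem:dist-lb}. The paper phrases the middle step more tersely (any vertex of $S_{\bar{s}}(t^*)$ is or has been in $Q_{\bar{s}}^-$ or $Q_{\bar{s}}^+$, hence at distance at most $\bar{d}+1$), but the content is identical to your layer-draining argument.
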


\begin{proof}
    Define $d$ and $\Bar{d}$ as in Lemma \ref{lem:dist-lb}. Any point in $S_{\Bar{s}}(t^*)$ is - or has been at some point - either in $Q_{\Bar{s}}^-(t^*)$ or in $Q_{\Bar{s}}^+(t^*)$, and consequently is at distance at most $\Bar{d} + 1$ from $\Bar{s}$. Hence, if we consider $u \in \Gamma(v) \cap S_{\Bar{s}(t^*)}$, then the distance from $s$ to $u$ is at most $d+1$ and the distance from $\Bar{s}$ to $u$ is at most $\Bar{d} + 1$, which implies that $d(s_0,s_1) = d(s,\Bar{s}) \le d + \Bar{d} + 2$. 
    By Lemma \ref{lem:dist-lb} we have $d(s_0,s_1) \ge d + \Bar{d} + 1$, which concludes the proof.
\end{proof}

The following lemma guarantees that the lengths of paths found by $\textnormal{V-BFS}_{exact}(\mathcal{G},s_0,s_1)$ indeed match the distance between vertices precisely.

\begin{lemma}\label{lem:vertex-exact-correct}
    For any graph $\mathcal{G}$ and any choice of $s_0$ and $s_1$, the length of the $s_0 - s_1$-path found by $\textnormal{V-BFS}_{exact}(\mathcal{G},s_0,s_1)$ is at most (and hence exactly) $d(s_0,s_1)$.
\end{lemma}
\begin{proof}
    Define $d$ and $\Bar{d}$ as in the two above lemmas. 
    Since the path found by $\textnormal{V-BFS}_{exact}(\mathcal{G},s_0,s_1)$ will be shorter or of the same length as the path found by $\textnormal{V-BFS}_{approx}(\mathcal{G},s_0,s_1)$, by Lemma \ref{lem:vertex-approx-correct} we know the length of the former is also upper bounded by $d + \Bar{d} + 2$.
    Moreover, by Lemma \ref{lem:dist-lb} we have $d(s_0,s_1) \ge d + \Bar{d} + 1$. Hence, we only need to distinguish two cases :
    \begin{itemize}
        \item There exists a shortest $s-\Bar{s}$ path of length exactly $d + \Bar{d} + 1$. Then, it must be of the form $P_{su} \cup uv \cup P_{v\Bar{s}}$, where $u \in L_s^d$, $v \in L_{\Bar{s}}^{\Bar{d}}$, and $P_{u'v'}$ denotes any shortest path between two vertices $u',v'$. Indeed, consider a shortest $s-\Bar{s}$ path $\Pi$, and consider the first node $v$ which is not in $S_s(t^*)$. In particular, its distance from $s$ is larger than or equal to $d + 1$, since $S_s$ contains all nodes at a distance less than $d$. If $v$ is not in $S_{\Bar{s}}$, then its distance to $\Bar{s}$ is larger than or equal to $\Bar{d} + 1$, so the total length of $\Pi$ is at least $d + \Bar{d} + 2$, which contradicts our assumption. If $v$ is not in $L_{\Bar{s}}^{\Bar{d}}$, it is in one of the previous layers of $S_{\Bar{s}}$, so its distance to $\Bar{s}$ is smaller than or equal to $\Bar{d}-1$, and the total length of $\Pi$ is at most $d+\Bar{d}$, which we already showed is not possible. This means that $\Pi$ must have the claimed form. Now, note that between the lines~9 and 21, the algorithm simply finishes to expand one of the sets $L_s^d$ or $L_{\Bar{s}}^{\Bar{d}}$, and stops if it finds an edge between the two. Hence, a path of length $d + \Bar{d} + 1$ will be found.
        \item There exists no $s_0 - s_1$-path of length $d + \Bar{d} + 1$. Then, $d(s_0,s_1) \ge d + \Bar{d} + 2$, and neither the condition on line~9 nor the condition on line~17 is ever met, so the algorithm returns the path $\pi \cup \Bar{\pi}$, which has length $d + \Bar{d} + 2 = d(s_0,s_1)$.
    \end{itemize}
\end{proof}

The final lemma in this subsection gives the analogue of Lemma~\ref{lem:vertex-approx-correct} for the algorithm $\textnormal{E-BFS}_{approx}(\mathcal{G},s_0,s_1)$, demonstrating that found paths are at most by an additive 1 longer than the distance between vertices.

\begin{lemma}\label{lem:edge-approx-correct}
    For any graph $\mathcal{G}$ and any choice of $s_0$ and $s_1$, the length of the $s_0 - s_1$-path found by $\textnormal{E-BFS}_{approx}(\mathcal{G},s_0,s_1)$ is at most $d(s_0,s_1)+1$.
\end{lemma}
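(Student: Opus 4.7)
The plan is to mirror the proof of Lemma~\ref{lem:vertex-approx-correct} essentially verbatim, since the argument there relied only on Observation~\ref{obs:layers} and Lemma~\ref{lem:dist-lb}, both of which are stated to apply to all three algorithms, including $\textnormal{E-BFS}_{approx}$. Let $t^*$ denote the first round at which the stopping condition $u \in S_{\bar{s}}$ on line~16 is satisfied, and let $d$ and $\bar{d}$ be the distances from $Q^-_s(t^*)$ to $s$ and from $Q^-_{\bar{s}}(t^*)$ to $\bar{s}$ respectively, as in Lemma~\ref{lem:dist-lb}.

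The returned path is $\pi \cup \bar{\pi}$, where $\pi$ is a shortest path in $S_s$ from $s$ to $u$ and $\bar{\pi}$ is a shortest path in $S_{\bar{s}}$ from $\bar{s}$ to $u$. By Observation~\ref{obs:layers}, every vertex of $S_s(t^*)$ has either been in $Q^-_s$ at some earlier round (at distance at most $d$ from $s$) or is currently in $Q^+_s(t^*)$ (at distance exactly $d+1$ from $s$), so $S_s(t^*) \subseteq \bigcup_{j=0}^{d+1} L_s^j$; analogously $S_{\bar{s}}(t^*) \subseteq \bigcup_{j=0}^{\bar{d}+1} L_{\bar{s}}^j$. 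Now $u$ lies in both sets: it is in $S_{\bar{s}}$ by the stopping condition, and it is in $S_s$ because it is either already there or added on line~15 of round $t^*$. Consequently $d(s,u) \le d+1$ and $d(\bar{s},u) \le \bar{d}+1$, and the length of the returned path is at most $(d+1)+(\bar{d}+1) = d+\bar{d}+2$. Combining this with Lemma~\ref{lem:dist-lb}, which gives $d(s_0,s_1) \ge d+\bar{d}+1$, we obtain that the returned path has length at most $d(s_0,s_1)+1$.

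The main subtlety I expect to work through carefully is confirming that Observation~\ref{obs:layers} really does describe $Q^\pm_s$ and $S_s$ under the edge-by-edge dynamics of $\textnormal{E-BFS}_{approx}$, where the currently-expanded vertex $v_s$ may be only partially explored when we land in round $t^*$. The key point is that the algorithm pops a new vertex from $Q^-_s$ only when $E(v_s) \setminus E_{v_s} = \emptyset$ (lines~8--10), so vertices are still popped in strict BFS-layer order, and the only vertices of a deeper layer that can appear in $S_s$ before the current layer has been fully drained into $Q^-_s$ are those added to $Q^+_s$. This preserves the layer decomposition underlying both Observation~\ref{obs:layers} and the contradiction argument of Lemma~\ref{lem:dist-lb}, so the argument carries over without modification and the above bound on the returned path's length follows.
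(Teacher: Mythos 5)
Your proposal is correct and follows exactly the paper's route: the paper proves this lemma by stating that the argument of Lemma~\ref{lem:vertex-approx-correct} carries over verbatim with the stopping condition replaced by $u \in S_{\bar{s}}$, which is precisely what you do via Lemma~\ref{lem:dist-lb} and Observation~\ref{obs:layers}. Your additional check that the edge-by-edge dynamics still pop vertices in strict layer order is a reasonable extra verification of a point the paper leaves implicit, but it does not change the approach.
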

\begin{proof}
   The proof is the exact same as for Lemma \ref{lem:vertex-approx-correct}, replacing the definition of the stopping condition by $u \in S_{\bar{s}}$.
\end{proof}

\subsection{Runtime}\label{sec:runtime}

In the following, we prove the claimed runtime bounds for our algorithms. We will first treat the case where both starting vertices lie in the giant component. We begin with a key lemma on the density of weights of vertices discovered by the algorithms, which will be used ubiquitously throughout the article.

\begin{lemma}\label{lem:bfs-new-weight}
Let $\mathcal{G} = (\mathcal{V}, \mathcal{E})$ be a Chung-Lu graph and let $s\in\mathcal{V}$ be any vertex. Let $\textnormal{BFS}(\mathcal{G},s)$ denote a (unidirectional) Breadth-First-Search in $\mathcal{G}$ started at $s$. We additionally denote the sum of degrees of the first $t$ expanded vertices by $C(t)\coloneqq \sum_{i=1}^t \deg(v_i)$, where $v_i$ is the $i$-th vertex expanded by $\textnormal{BFS}(\mathcal{G},s)$. Consider some iteration $t+1$ of the process during which we are expanding vertex $u\coloneqq v_{t+1}\in\mathcal{V}$ of weight $W_u = w_u$, uncovering the neighbors of $u$ in some random order (where by \emph{uncovering} we mean adding them to the $Q^+_s$ queue, line 6 of the $\textnormal{V-BFS}_{approx}(\mathcal{G},s_0,s_1)$ and $\textnormal{V-BFS}_{exact}(\mathcal{G},s_0,s_1)$ algorithms and line 14 of the $\textnormal{E-BFS}_{approx}(\mathcal{G},s_0,s_1)$ algorithm). Whenever we are uncovering some $v\in\Gamma(u)$, denote the density of the weight $W_v$ of $v$, conditioned on this vertex $v$ being discovered for the first time in the BFS\footnote{In other words, conditioned on $u$ being the parent vertex of $v$.}, by $f_{W_v}(w \mid W_u = w_u, \textnormal{BFS}^v(\mathcal{G},s))$. Let $\eps>0$ be an arbitrarily small constant. Then, as long as $C(t)=O(n^{1-\eps})$, we have
\begin{align}\label{eq:bfs-new-weight-upper-bound}
    f_{W_v}(w \mid W_u = w_u, \textnormal{BFS}^v(\mathcal{G},s)) = O(w^{1-\tau}).
\end{align}
Moreover, for all $w$ satisfying $w\cdot C(t) = O(n^{1-\eps})$, we additionally have
\begin{align}\label{eq:bfs-new-weight}
    f_{W_v}(w \mid W_u = w_u, \textnormal{BFS}^v(\mathcal{G},s)) =
    \begin{dcases*}
    \Theta(w^{1-\tau}) & if $w_u \le  \frac{n}{w}$, \\
    \Theta(\tfrac{w^{-\tau} n}{w_u}) & if  $\frac{n}{w} < w_u \le n$, \\
    \Theta(w^{-\tau}) & if $w_u > n$.
    \end{dcases*}
\end{align}
\end{lemma}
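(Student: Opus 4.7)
The plan is to reduce the claim to Proposition~\ref{prop:shifted-conditional-weight-distribution} via a Bayes-type decomposition. Let $v_1,\dots,v_t$ denote the previously expanded vertices on the $s$-side and set $g(w) \coloneqq \pr(\text{$v$ is not adjacent to any of $v_1,\dots,v_t$} \mid W_v = w)$. Since edges are independent in the Chung--Lu model, conditioning on the entire BFS history (weights and revealed edges among the expanded vertices, together with $uv \in \mathcal{E}$) factors the target density as
\[
    f_{W_v}(w \mid W_u = w_u, \textnormal{BFS}^v(\mathcal{G},s)) \;=\; f_{W_v}(w \mid W_u = w_u, uv\in\mathcal{E}) \cdot \frac{g(w)}{\E[g(W_v) \mid W_u = w_u, uv\in\mathcal{E}]}.
\]
The first factor on the right is exactly the content of Proposition~\ref{prop:shifted-conditional-weight-distribution}, so the lemma reduces to showing that the ratio is $O(1)$ under the hypothesis $C(t) = O(n^{1-\eps})$ alone, and is $\Theta(1)$ under the stronger hypothesis $w\cdot C(t) = O(n^{1-\eps})$.

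By edge independence and the bounds $1-x \le e^{-x}$ and $1-x \ge e^{-2x}$ (for small $x$), we have $g(w) = \exp(-\Theta(\sum_{i=1}^t \min\{1,w W_{v_i}/n\}))$ whenever the individual terms in the sum are $o(1)$. Lemma~\ref{lem:degree-concentration} and Corollary~\ref{cor:weight-concentration} give $\sum_{i=1}^t W_{v_i} = O(C(t) + t\log^2 n) = O(C(t)\log^2 n)$ whp, so $\sum_i W_{v_i} = O(n^{1-\eps'})$ for any $\eps' < \eps$. Under the stronger assumption $w C(t) = O(n^{1-\eps})$, we therefore get $\sum_i \min\{1,wW_{v_i}/n\} \le w\sum_i W_{v_i}/n = o(1)$, hence $g(w) = 1-o(1) = \Theta(1)$. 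For any bounded $w \le W_0$, the same estimate using only $C(t)=O(n^{1-\eps})$ and $W_{v_i} \le w_{\max} = n^{1/(\tau-1)}$ yields $g(w)=1-o(1)$.

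To bound $\E[g(W_v) \mid W_u=w_u, uv\in\mathcal{E}]$ from below, one checks by direct integration that in each of the three regimes of Proposition~\ref{prop:shifted-conditional-weight-distribution} the mass of $f_{W_v}(\cdot \mid W_u=w_u, uv\in\mathcal{E})$ on some constant range $[1,W_0]$ is $\Theta(1)$ (the integrands $w^{1-\tau}$ and $w^{-\tau}$ are integrable at infinity since $\tau>2$). Combining with the previous paragraph, this gives $\E[g(W_v)] = \Theta(1)$ uniformly in $w_u$. The upper bound \eqref{eq:bfs-new-weight-upper-bound} then follows from $g(w)\le 1$ together with the observation that each case of Proposition~\ref{prop:shifted-conditional-weight-distribution} is at most $O(w^{1-\tau})$ (using $n/w_u < w$ in Case~2 and $w\ge 1$ in Case~3). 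The tight bound \eqref{eq:bfs-new-weight} follows from $g(w) = \Theta(1)$ combined with $\E[g(W_v)] = \Theta(1)$, which preserves the three-case form of Proposition~\ref{prop:shifted-conditional-weight-distribution} up to constants.

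The main obstacle I anticipate is the bookkeeping around $\sum_i W_{v_i}$: BFS is biased towards expanding high-weight vertices, so relating the sum of weights of expanded vertices back to the cost $C(t)$ requires careful use of the degree-concentration results, at the cost of a polylogarithmic slack in the hypothesis (which is harmless since $\eps$ is constant). A secondary technicality is obtaining uniformity in $w_u$ when bounding the low-$w$ mass of $f_{W_v}(\cdot \mid uv\in\mathcal{E})$ from below, which forces splitting into the three regimes of Proposition~\ref{prop:shifted-conditional-weight-distribution} and verifying the transition cases $w_u \approx n/W_0$ and $w_u \approx n$ separately.
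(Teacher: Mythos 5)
Your proposal is correct and follows essentially the same route as the paper: the same Bayes decomposition reducing the claim to Proposition~\ref{prop:shifted-conditional-weight-distribution}, with the correction factor controlled by a first-moment bound on the number of edges from $v$ back to the explored set, using $\sum_i W_{v_i} = O(C(t)\log^2 n)$ via Lemma~\ref{lem:degree-concentration} and Corollary~\ref{cor:weight-concentration}. The only (harmless) deviation is in the denominator: the paper bounds $\pr(\forall u'\in T^u: u'v\notin\mathcal{E}\mid W_u=w_u, uv\in\mathcal{E})$ directly by Markov's inequality, whereas you integrate $g(w)$ against the conditional density restricted to a constant weight range $[1,W_0]$ — an equally valid (arguably slightly more careful) way to get the same $\Theta(1)$ bound.
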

\begin{proof}
Let $T^u \coloneqq \{v_1, \ldots, v_t\}$ denote the set of vertices expanded by $\textnormal{BFS}(\mathcal{G},s)$ up to (but excluding) $u = v_{t+1}$. The condition that $v$ is discovered for the first time from vertex $u$ by the BFS is equivalent to the event $\{uv \in \mathcal{E}, u'v \notin \mathcal{E} \, \forall u'\in T^u\}$. Hence, using Bayes' Theorem, we have
\begin{align}
\begin{split}\label{eq:bfs-new-weight-bayes}
    &f_{W_v}(w \mid W_u = w_u, \textnormal{BFS}^v(\mathcal{G},s))
    = f_{W_v}(w \mid W_u = w_u, uv \in \mathcal{E}, \forall u'\in T^u: u'v \notin \mathcal{E}) \\
    &\qquad= \frac{\pr(\forall u'\in T^u: u'v \notin \mathcal{E} \mid W_u=w_u, W_v=w, uv\in\mathcal{E}) \cdot f_{W_v}(w \mid W_u = w_u, uv\in\mathcal{E})}{\pr(\forall u'\in T^u: u'v \notin \mathcal{E}  \mid  W_u = w_u, uv\in\mathcal{E})}.
\end{split}
\end{align}

Note that by Markov's inequality
\begin{align}
\begin{split}\label{eq:prob-not-connected-tree}
    &\pr(\forall u'\in T^u: u'v \notin \mathcal{E} \mid  W_u = w_u, uv\in\mathcal{E})
    = 1 - \pr\Big(\bigcup_{u'\in T^u} \{u'v\in\mathcal{E}\} \mid  W_u = w_u, uv\in\mathcal{E}\Big) \\
    &\qquad=1 - \pr\Big(\sum_{u'\in T^u}^{}\mathbbm{1}_{\{u'v\in\mathcal{E}\}} \geq 1\mid  W_u = w_u, uv\in\mathcal{E}\Big)\\
    &\qquad\ge 1 - \E[|\Gamma(v) \cap T^u| \mid W_u = w_u, uv\in\mathcal{E}] 
    =1 - \E[|\Gamma(v) \cap T^u|],
\end{split}
\end{align}
where the last equality is due to the fact that the number of neighbors of $v$ connecting to vertices in $T^u$ does not depend neither on the weight of $u\notin T^u$, nor on the fact that $uv$ is present in the graph.
Similarly,
\begin{align}\label{eq:cond-prob-not-connected-tree}
    &\pr(\forall u'\in T^u : u'v \notin \mathcal{E} \mid  W_u = w_u, W_v=w, uv\in\mathcal{E}) \ge 1 - \E[|\Gamma(v) \cap T^u| \mid W_v=w].
\end{align}
Using Lemma \ref{lem:expected-degree} and Corollary \ref{cor:weight-concentration}, whp we have
\begin{align*}
    \E[|\Gamma(v) \cap T^u|] &= \sum_{i=1}^t\pr(v_iv\in\mathcal{E})
    \le \sum_{i=1}^t O\Big(\tfrac{W_{v_i}}{n}\Big) \\
    &\le \frac{1}{n}\sum_{i=1}^t O(\deg(v_i)+\log^2 n)
    \le O\Big(\frac{\log^2 n}{n}\sum_{i=1}^t\deg(v_i)\Big).
\end{align*}
By assumption we know that $\sum_{i=1}^t\deg(v_i)=C(t)=O(n^{1-\eps})$, and hence
\[
\E[|\Gamma(v) \cap T^u|] \le O(n^{-\eps}\log^2 n) = o(1).
\]
Thus, using equation \eqref{eq:prob-not-connected-tree} we get 
\begin{align}\label{eq:prob-first-encountered-constant}
    \pr(\forall u'\in T^u: u'v \notin \mathcal{E}  \mid  W_u = w_u, uv\in\mathcal{E}) =\Theta(1),
\end{align}
and in particular we can bound the density using equation \eqref{eq:bfs-new-weight-bayes} by
\begin{align*}
    &f_{W_v}(w \mid W_u = w_u, \textnormal{BFS}^v(\mathcal{G},s)) \\
    &\qquad= \Theta\big(\pr(\forall u'\in T^u: u'v \notin \mathcal{E} \mid W_u=w_u, W_v=w, uv\in\mathcal{E}) \cdot f_{W_v}(w \mid W_u = w_u, uv\in\mathcal{E})\big) \\
    &\qquad\le O\big(f_{W_v}(w \mid W_u = w_u, uv\in\mathcal{E})\big)
    \le O(w^{1-\tau}),
\end{align*}
where the last inequality follows from Proposition \ref{prop:shifted-conditional-weight-distribution} since the expressions on the right-hand side of equation \eqref{eq:shifted-conditional-weight-distribution} are all upper bounded by $O(w^{1-\tau})$. This shows that \eqref{eq:bfs-new-weight-upper-bound} holds, and we now turn to proving \eqref{eq:bfs-new-weight}, assuming that $w\cdot C(t) = O(n^{1-\eps})$. Since this is a stronger assumption than $C(t)=O(n^{1-\eps})$ (as $w\ge 1$), equation \eqref{eq:prob-first-encountered-constant} still holds. Moreover, using Lemma \ref{lem:marginal} and Corollary \ref{cor:weight-concentration}, whp we have 
\begin{align*}
    \E[|\Gamma(v) \cap T^u| \mid W_v = w] &= \sum_{i=1}^t\pr(v_iv\in\mathcal{E} \mid W_v = w)
    \le \sum_{i=1}^t O\Big(\tfrac{wW_{v_i}}{n}\Big) \\
    &\le \frac{w}{n}\sum_{i=1}^t O(\deg(v_i)+\log^2 n)
    \le O\Big(\frac{w\log^2 n}{n}\sum_{i=1}^t\deg(v_i)\Big).
\end{align*}
Since in this case we have $w \cdot \sum_{i=1}^t\deg(v_i)=w\cdot C(t)=O(n^{1-\eps})$, we deduce that
\[
\E[|\Gamma(v) \cap T^u| \mid W_v = w] \le O(n^{-\eps}\log^2 n) = o(1),
\]
and hence by equation \eqref{eq:cond-prob-not-connected-tree} we get 
\begin{align*}
    \pr(\forall u'\in T^u: u'v \notin \mathcal{E} \mid  W_u = w_u, W_v = w, uv\in\mathcal{E}) =\Theta(1).
\end{align*}
Plugging equation \eqref{eq:prob-not-connected-tree} together with the above equation in \eqref{eq:bfs-new-weight-bayes} then yields
\begin{align*}
    f_{W_v}(w \mid W_u = w_u, \textnormal{BFS}^v(\mathcal{G},s)) = \Theta(f_{W_v}(w \mid W_u = w_u, uv\in\mathcal{E})),
\end{align*}
and Proposition \ref{prop:shifted-conditional-weight-distribution} then concludes the proof of \eqref{eq:bfs-new-weight}.
\end{proof}

We now state a lemma that gives us concentration of the largest weight in a sequence of weights associated to the vertices we encounter in a BFS. Both parts of the lemma are consequences of Lemma~\ref{lem:bfs-new-weight}. The upper bound follows immediately by integrating the upper bound on the conditional density given in equation~\eqref{eq:bfs-new-weight-upper-bound}. The lower bound is again derived by integration but requires additional arguments bounding the vertex weights in the graph.

\begin{lemma}\label{lem:max-weight-shifted-distribution}
    Let $\mathcal{G}$ be a Chung-Lu graph on $n$ vertices. Consider a random set $Q$ of vertices whose weights follow the conditional distribution given by equation \eqref{eq:bfs-new-weight-upper-bound} and such that $|Q|\rightarrow \infty$ with high probability. Then we have
     \begin{align}\label{eq:max-weight-shifted-upper-bound}
        \max_{v\in Q} W_{v} \le |Q|^{\frac{1}{\tau-2}}\log|Q|
    \end{align}
    with high probability. Let $\eps>0$ be arbitrarily small and suppose additionally that the cost of the BFS, up to the iteration where $Q$ is sampled, is at most $n^{\frac{\tau-2}{\tau-1}+\eps}$, and that $\max_{v\in Q} W_v \le n^{\frac{1}{2}+\eps}$. Then, with high probability we have
    \begin{align}\label{eq:max-weight-shifted-lower-bound}
        \max_{v\in Q} W_{v}  \ge  |Q|^{\frac{1}{\tau-2}}/\log|Q|.
    \end{align}
\end{lemma}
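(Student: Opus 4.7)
For the upper bound \eqref{eq:max-weight-shifted-upper-bound}, my plan is to combine the universal density estimate \eqref{eq:bfs-new-weight-upper-bound} with a union bound. Denote by $\mathcal{F}_v$ the history revealed before $W_v$. Since $f_{W_v}(w \mid \mathcal{F}_v) = O(w^{1-\tau})$ holds uniformly, integrating gives the tail bound $\Pr[W_v \ge y \mid \mathcal{F}_v] \le O(y^{2-\tau})$, with the integral converging because $\tau > 2$. Conditioning on $|Q| = N$, the tower property and a union bound yield $\Pr[\max_{v \in Q} W_v \ge y \mid |Q| = N] \le N \cdot O(y^{2-\tau})$. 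Setting $y = N^{1/(\tau-2)} \log N$ reduces this to $O((\log N)^{2-\tau}) = o(1)$ as $N \to \infty$, since $2-\tau < 0$. To remove the conditioning on $|Q|$, I pick a slowly diverging threshold $N_0 = N_0(n) \to \infty$ and use $\Pr[|Q| < N_0] = o(1)$, which holds because $|Q| \to \infty$ whp.

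For the lower bound \eqref{eq:max-weight-shifted-lower-bound}, I need the matching two-sided estimate $f_{W_v}(w \mid \mathcal{F}_v) = \Theta(w^{1-\tau})$ from \eqref{eq:bfs-new-weight}, which holds only when $w_u \le n/w$ and under the precondition $w \cdot C(t) = O(n^{1-\eps'})$ of Lemma~\ref{lem:bfs-new-weight}. Setting $x \coloneqq N^{1/(\tau-2)}/\log N$, I first verify these preconditions. Since the parent $u$ is an expanded vertex, the cost assumption combined with Corollary~\ref{cor:weight-concentration} yields $w_u \le O(n^{(\tau-2)/(\tau-1)+\eps})$, so $n/w_u \ge n^{1/(\tau-1) - \eps}$. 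The conclusion is non-vacuous only when $x \le n^{1/2+\eps}$ (otherwise it contradicts the assumption $\max_{v\in Q} W_v \le n^{1/2+\eps}$); granting this, the strict ordering $(\tau-2)/(\tau-1) < 1/2 < 1/(\tau-1)$ valid for $\tau \in (2,3)$ ensures, for sufficiently small $\eps$, both $x \ll n/w_u$ and $x \cdot C(t) \le n^{3/2 - 1/(\tau-1) + O(\eps)} = O(n^{1-\eps'})$, so the matching density estimate applies throughout the integration range.

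Integrating, I then obtain
\[
\Pr[W_v \ge x \mid \mathcal{F}_v] \ge \int_x^{n/w_u} \Theta(w^{1-\tau}) \, dw = \Theta\bigl(x^{2-\tau} - (n/w_u)^{2-\tau}\bigr) = \Theta(x^{2-\tau}) = \Omega\bigl(N^{-1}(\log N)^{\tau-2}\bigr),
\]
where the second term is absorbed using $x \ll n/w_u$ together with $2-\tau < 0$. Applying the tower property (no joint independence among the revealed weights is needed) together with $1-t \le e^{-t}$ then yields
\[
\Pr\bigl[\max_{v \in Q} W_v < x \mid |Q| = N\bigr] = \E\Bigl[\prod_{v \in Q} \Pr[W_v < x \mid \mathcal{F}_v] \,\Big|\, |Q| = N\Bigr] \le \exp\bigl(-\Omega((\log N)^{\tau-2})\bigr) = o(1),
\]
and the randomness of $|Q|$ is removed exactly as in the upper bound. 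The main technical hurdle is the regime verification in the second paragraph, which rests on carefully juggling the three exponents $(\tau-2)/(\tau-1)$, $1/2$, and $1/(\tau-1)$; once the matching density bound is activated, the concentration step is a straightforward tower-property computation that cleanly sidesteps the absence of joint independence.
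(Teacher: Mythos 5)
Your proposal is correct and follows essentially the same route as the paper's proof: the upper bound via integrating the $O(w^{1-\tau})$ tail and a union bound, and the lower bound by verifying that the cost and weight assumptions place you in the $\Theta(w^{1-\tau})$ regime of Lemma~\ref{lem:bfs-new-weight}, then integrating and applying the sequential conditioning argument with $1-t\le e^{-t}$. The only cosmetic difference is that you restrict the integration to $[x, n/w_u]$ whereas the paper argues (via Lemma~\ref{lem:maxweight} and the bound $W_uW_v = o(n)$) that the first case of the density holds throughout; both yield the same $\Theta(x^{2-\tau})$ estimate.
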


\begin{proof}
For any $v\in Q$, using equation \eqref{eq:bfs-new-weight-upper-bound} we have that
\begin{align*}
    \pr(W_v > |Q|^{\frac{1}{\tau-2}}\log|Q|\mid Q)
    = \int_{|Q|^{\frac{1}{\tau-2}}\log|Q|}^{\infty} O(w^{1-\tau}) dw
    = O(|Q|^{-1}(\log|Q|)^{-(\tau-2)}),
\end{align*}
where for the second step we used that $\tau>2$. A simple union bound then yields
\begin{align*}
    \pr\big(\max_{v\in Q} W_{v}& > |Q|^{\frac{1}{\tau-2}}\log|Q|\big)\\
    &\leq  \mathbb{E}\big[\mathbbm{1}_{\{|Q|=\omega(1)\}}\pr(\exists v\in Q: W_v > |Q|^{\frac{1}{\tau-2}}\log|Q|\mid Q)\big]+o(1)\\
    &\leq \mathbb{E}\big[\mathbbm{1}_{\{|Q|=\omega(1)\}}\sum_{v\in Q}\pr(W_v > |Q|^{\frac{1}{\tau-2}}\log|Q|\mid Q)\big]+o(1)\\
    &\leq \mathbb{E}\big[\mathbbm{1}_{\{|Q|=\omega(1)\}}O((\log|Q|)^{-(\tau-2)})\big]+o(1)=o(1).
\end{align*}
Now assume additionally that the cost of the BFS is upper bounded by $n^{\frac{\tau-2}{\tau-1}+\eps}$, and that $\max_{v\in Q} W_v \le n^{\frac{1}{2}+\eps}$. Since $\frac{\tau-2}{\tau-1}+\frac{1}{2} < \frac{1}{2}+\frac{1}{2}=1$ (as $\tau<3$) and $\eps$ is small, the additional assumptions of Lemma \ref{lem:bfs-new-weight} are satisfied, and in particular the weight of the vertices in $Q$ follow the conditional distribution given by \eqref{eq:bfs-new-weight}. Thanks to Lemma \ref{lem:maxweight} we know that whp (as $n\rightarrow\infty$) no vertex in the graph has weight larger than $n^{1/(\tau-1)+o(1)}=o(n)$. For the rest of the proof, we condition on this event, which implies in particular that the last case of equation \eqref{eq:bfs-new-weight} never occurs, and thus for all $v\in Q$ the conditional distribution of $W_v$ is given by $\Theta(w^{-\tau}\cdot\min\{w,\frac{n}{W_u}\})$. Using Corollary \ref{cor:weight-concentration} and our upper bound on the cost of the BFS, we get that whp $W_u \le O(\deg(u) + \log^2 n) \le O(n^{\frac{\tau-2}{\tau-1}+\eps})$. Hence, for all $v\in Q$, we have $W_v W_u \le O(n^{\frac{1}{2}+\frac{\tau-2}{\tau-1}+2\eps}) = o(n)$, so that the conditional distribution of $W_v$ is given by $\Theta(w^{1-\tau})$.

Therefore,
\begin{align*}
    \pr\big(\max_{v\in Q} W_{v}  \le |Q|^{\frac{1}{\tau-2}}/\log|Q|\big)
    &\le \mathbb{E}\Big[\mathbbm{1}_{\{|Q|=\omega(1)\}}\Big(1- \int_{|Q|^{\frac{1}{\tau-2}}/\log|Q|}^{\infty} \Theta(w^{1-\tau}) \Big)^{|Q|}\Big] +o(1)\\
    &= \mathbb{E}\Big[\mathbbm{1}_{\{|Q|=\omega(1)\}}\big(1-\Omega(|Q|^{-1}(\log|Q|)^{\tau-2})\big)^{|Q|}\Big]+o(1)\\
    & \le \mathbb{E}\Big[\mathbbm{1}_{\{|Q|=\omega(1)\}}\exp(-\Omega((\log|Q|)^{\tau-2}))\Big]+o(1),
\end{align*}
which shows that \eqref{eq:max-weight-shifted-lower-bound} holds whp.
\end{proof}

\subsubsection{Approximate Vertex-Balanced BBFS}
In this subsection, we delve into the runtime analysis of the vertex-balanced approximate algorithm. Note that we will reuse some of these results when treating the vertex-balanced \textit{exact} algorithm in the following subsection. We begin by observing that the cost incurred on side $s_i$ at iteration $t_i$ can be lower-bounded by $|S_{s_i}(t_i)|-1$.

\begin{observation}\label{obs:cost>=seen}
     Let $\mathcal{G}$ be a graph and let $s_0$ and $s_1$ be two vertices of $\mathcal{G}$. Let $i\in\{0, 1\}$, let $t_i\in\N$, and consider the algorithm $\textnormal{V-BFS}_{approx}(\mathcal{G},s_0,s_1)$ after expanding $t_i$ vertices on the $s_i$-side. Then deterministically 
     \begin{align*}
         \mathcal{C}_{s_i}(t_i) \ge  |S_{s_i}(t_i)|-1.
     \end{align*}
     Indeed, the set $S_{s_i}$ is initialized with one vertex, and then grown by adding the (undiscovered) neighbors of vertices being expanded, and each such neighbor contributes a term of 1 in the sum $\sum_{v\in Q_{s_i}(t_i)} \deg(v) \eqqcolon \mathcal{C}_{s_i}(t_i)$. 
   
\end{observation}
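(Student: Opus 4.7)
The plan is to argue by a simple induction on the iteration count $t_i$, tracking simultaneously the growth of the discovered set $S_{s_i}$ and the incurred cost $\mathcal{C}_{s_i}$. At initialization ($t_i = 0$), we have $S_{s_i}(0) = \{s_i\}$ and $Q_{s_i}(0) = \emptyset$, so that $|S_{s_i}(0)| - 1 = 0 = \mathcal{C}_{s_i}(0)$, and the claimed inequality holds with equality as a base case.

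For the inductive step, I would examine a single expansion step: when the $(t_i+1)$-th vertex $v$ is popped from $Q_{s_i}^-$ (line~5 of Algorithm~\ref{algo:vertex-approx}), the algorithm executes $\mathtt{append}(Q_{s_i}^+, \Gamma(v) \setminus S_{s_i})$ followed by $S_{s_i} \gets S_{s_i} \cup \Gamma(v)$ (lines~6--7). By these updates, the set $S_{s_i}$ grows by exactly $|\Gamma(v) \setminus S_{s_i}| \le |\Gamma(v)| = \deg(v)$, while simultaneously $v$ is added to $Q_{s_i}(t_i+1)$ and contributes a term $\deg(v)$ to the sum in Definition~\ref{def:cost}, so that $\mathcal{C}_{s_i}$ increases by exactly $\deg(v)$. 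Thus the per-step increment to $\mathcal{C}_{s_i}$ dominates the per-step increment to $|S_{s_i}|$, and the inequality $|S_{s_i}(t_i)| - 1 \le \mathcal{C}_{s_i}(t_i)$ is preserved across iterations.

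There is essentially no technical obstacle here; the statement is a direct structural consequence of how BFS populates its visited set via neighborhood expansions, since every newly discovered vertex is accounted for by at least one edge traversal included in the cost. The bound is deterministic and relies on no graph-model assumptions, so no probabilistic or distributional machinery is required. The same argument applies verbatim to $\textnormal{V-BFS}_{exact}$ during the first phase (before the stopping condition is triggered), since its vertex-expansion logic is identical to that of $\textnormal{V-BFS}_{approx}$.
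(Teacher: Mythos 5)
Your induction is correct and formalizes exactly the accounting argument the paper gives in the observation itself: each expansion step increases $|S_{s_i}|$ by $|\Gamma(v)\setminus S_{s_i}|\le\deg(v)$ while adding $\deg(v)$ to the cost, so the inequality is preserved from the base case $|S_{s_i}(0)|-1=0=\mathcal{C}_{s_i}(0)$. This is essentially the same approach as the paper, just spelled out as an explicit induction.
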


We now prove a result that bounds the cost $\mathcal{C}_{s_i}(t_i)$ in terms of the number of iterations $t_i$. This is done by relating $\mathcal{C}_{s_i}(t_i)$ to the maximum weight among expanded vertices, and using Lemma \ref{lem:max-weight-shifted-distribution} to bound this weight.

\begin{lemma}\label{lem:cost-expanded-relation}
    Let $\mathcal{G}$ be a Chung-Lu graph on $n$ vertices, let $s_0$ and $s_1$ be two vertices of $\mathcal{G}$ chosen uniformly at random and fix $i\in\{0,1\}$. Let $t_i = \omega(1)$ be such that $\mathcal{C}_{s_i}(t_i) \le n^{(\tau-2)/(\tau-1)+o(1)}$. Then, with high probability, we have 
    \begin{align*}
        \mathcal{C}_{s_i}(t_i) \le \Big(t_i^{\tfrac{1}{\tau-2}} + t_i\log^2 n\Big)\cdot t_i^{o(1)}.
    \end{align*}
    Moreover, if $s_i$ is not an isolated vertex, then we also have that $\mathcal{C}_{s_i}(t_i) \ge t_i$ deterministically.
\end{lemma}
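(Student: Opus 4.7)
The deterministic lower bound is immediate. If $s_i$ is not isolated, then the starting vertex has degree at least $1$, and every subsequent expanded vertex was first added to the queue as a neighbor of a previously expanded vertex and so also has degree at least $1$. Summing these contributions over the $t_i$ expanded vertices yields $\mathcal{C}_{s_i}(t_i) \ge t_i$.

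For the upper bound, the plan is to decouple degree from weight and then show that the sum of weights is essentially dominated by its maximum. Lemma~\ref{lem:degree-concentration} gives $\deg(v) \le O(W_v + \log^2 n)$ for all $v \in \mathcal{V}$ whp, so
\[
    \mathcal{C}_{s_i}(t_i) = \sum_{v \in Q_{s_i}(t_i)} \deg(v) \le O\Big(\sum_{v \in Q_{s_i}(t_i)} W_v + t_i \log^2 n\Big),
\]
and it is enough to prove $\sum_v W_v \le t_i^{1/(\tau-2)} \cdot t_i^{o(1)}$ whp. Since the monotone cost satisfies $\mathcal{C}_{s_i}(t') \le \mathcal{C}_{s_i}(t_i) \le n^{(\tau-2)/(\tau-1)+o(1)} \le n^{1-\eps}$ throughout the BFS for some small fixed $\eps > 0$ (as $(\tau-2)/(\tau-1) < 1$), Lemma~\ref{lem:bfs-new-weight} applies at every step, so the conditional density of each newly discovered weight is bounded above by $O(w^{1-\tau})$, and Lemma~\ref{lem:max-weight-shifted-distribution} yields $\max_v W_v \le M := t_i^{1/(\tau-2)} \log t_i$ whp.

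I bound the sum via a dyadic decomposition. For $k = 0, 1, \dots, \lceil \log_2 M \rceil$, let $N_k$ be the number of expanded vertices whose weight lies in $[2^k, 2^{k+1})$. Integrating the density bound gives a conditional probability of at most $c \cdot 2^{k(2-\tau)}$ of falling in the $k$-th band, uniformly in the BFS history, so by a step-by-step coupling to independent Bernoullis with this uniform upper bound, $N_k$ is stochastically dominated by $\bin(t_i, c \cdot 2^{k(2-\tau)})$. Parts (i) and (iii) of Theorem~\ref{thm:dubhashichernoff} applied with deviation $\log^2 t_i$ then give $N_k \le O(\max\{\E[N_k], \log^2 t_i\})$ with failure probability $t_i^{-\omega(1)}$, which survives a union bound over the $O(\log t_i)$ relevant bands. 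Summing,
\[
    \sum_v W_v \le \sum_{k=0}^{\lceil \log_2 M \rceil} 2^{k+1} N_k \le O\big(t_i \cdot M^{3-\tau} + M \log^2 t_i\big),
\]
and using the identity $1 + (3-\tau)/(\tau-2) = 1/(\tau-2)$ gives $t_i \cdot M^{3-\tau} = t_i^{1/(\tau-2)} \cdot \mathrm{polylog}(t_i)$ and $M \log^2 t_i = t_i^{1/(\tau-2)} \cdot \mathrm{polylog}(t_i)$, both absorbed in $t_i^{1/(\tau-2)} \cdot t_i^{o(1)}$ since $t_i = \omega(1)$.

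The main subtlety I foresee is making the Chernoff step rigorous, since the indicators $\ind{W_{v_j} \in [2^k, 2^{k+1})}$ are not independent across expansion steps $j$: their conditional probability depends on the history through the current parent's weight. I handle this by the step-by-step coupling mentioned above, which at each expansion generates an independent auxiliary Bernoulli of parameter $c \cdot 2^{k(2-\tau)}$ (the uniform conditional upper bound from Lemma~\ref{lem:bfs-new-weight}) dominating the true band indicator; the resulting coupled sum is a genuine $\bin(t_i, c \cdot 2^{k(2-\tau)})$ to which the classical Chernoff bound applies directly. A minor additional point is that the weight of $s_i$ itself follows the unconditional power-law (tail $O(w^{1-\tau})$) rather than the BFS-shifted one, but this has a strictly lighter tail and a single vertex contributes at most $O(\log^2 n)$ whp, easily absorbed into the $t_i \log^2 n$ summand.
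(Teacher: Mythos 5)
Your proof is correct, and while it shares the paper's skeleton (decouple degree from weight via Lemma~\ref{lem:degree-concentration}, invoke the shifted weight density of Lemma~\ref{lem:bfs-new-weight}, and cap the maximum weight via Lemma~\ref{lem:max-weight-shifted-distribution}), it handles the central step --- concentration of $\sum_{v} W_v$ --- quite differently. The paper argues that, conditioned on all expanded weights being at most $n^{(\tau-2)/(\tau-1)+o(1)}$ (via Corollary~\ref{cor:weight-concentration}), the weights are i.i.d.\ power-law with exponent $\tau-1$, and then cites an external concentration result (Theorem~1 of Omelchenko--Bucklew) to get $\sum_{t'} W_{t'} \le t_i^{1/(\tau-2)+o(1)}$, additionally recording that the sum is within $t_i^{o(1)}$ of the maximum (a fact reused later in Lemma~\ref{lem:cost=seen}). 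You instead run a dyadic decomposition of the weight range, dominate each band count by a $\bin(t_i, c\,2^{k(2-\tau)})$ via a step-by-step coupling against the uniform conditional bound from \eqref{eq:bfs-new-weight-upper-bound}, and apply the Chernoff bounds already in the preliminaries; summing $2^{k+1}N_k$ and using $1+(3-\tau)/(\tau-2)=1/(\tau-2)$ recovers the same bound. Your route is more self-contained (no external citation), needs only the one-sided density bound \eqref{eq:bfs-new-weight-upper-bound} rather than the two-sided \eqref{eq:bfs-new-weight} (so you can skip the paper's preliminary reduction to the first case of that equation), and treats the dependence between successive weights more carefully than the paper's informal assertion of independence; what it does not give you for free is the ``sum $\le t_i^{o(1)}\cdot\max$'' comparison that the paper extracts along the way, though that is not needed for the present lemma. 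Your handling of the lower bound and of the exceptional starting vertex $s_i$ matches the paper and is fine.
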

\begin{proof}
For the second half of the lemma, notice that if $s_i$ is not an isolated vertex, then any $v\in Q_{s_i}(t_i)$ also has at least one neighbor, and therefore
\begin{align*}
    \mathcal{C}_{s_i}(t_i) = \sum_{v\in Q_{s_i}(t_i)}\deg(v) \ge \sum_{v\in Q_{s_i}(t_i)} 1 = |Q_{s_i}(t_i)| = t_i,
\end{align*}
where the last equality is a direct consequence of the definition of $Q_{s_i}(t_i)$, see Definition \ref{def:cost}.

We now turn to the main part of the lemma, namely the upper bound of $\mathcal{C}_{s_i}(t_i)$. Since $\mathcal{C}_{s_i}(t_i) \le n^{(\tau-2)/(\tau-1)+o(1)}$, we have in particular that $\deg(v) \le n^{(\tau-2)/(\tau-1)+o(1)}$ for all $v\in Q_{s_i}(t_i)$. Hence, by Corollary \ref{cor:weight-concentration} we know that whp $W_v \le n^{(\tau-2)/(\tau-1)+o(1)}$ for all $v\in Q_{s_i}(t_i)$, and we will assume this for the rest of the proof. In particular by Lemma \ref{lem:bfs-new-weight}, except for the very first vertex that is put in $Q^-_{s_i}$ (namely $s_i$ itself), the weight of each vertex in $Q_{s_i}(t_i)$ follows the (conditional) distribution given by equation \eqref{eq:bfs-new-weight}, and moreover the additional assumptions of Lemma \ref{lem:max-weight-shifted-distribution} are satisfied by the set $Q_{s_i}(t_i)\setminus \{s_i\}$.
The additional assumptions of Lemma \ref{lem:bfs-new-weight} are satisfied since $\big(n^{(\tau-2)/(\tau-1)+o(1)}\big)^2 = O(n^{1-\eps})$ (as $\tau<3$), and this also implies that we are always in the first case of equation \eqref{eq:bfs-new-weight}, i.e., conditioned on $W_v \le n^{(\tau-2)/(\tau-1)+o(1)}$ for all $v\in Q_{s_i}(t_i)$, the weights $(W_{t'})_{t'=2}^{t_i}$ of the vertices in $Q_{s_i}(t_i) \setminus \{s_i\}$ are distributed independently with density $f_{W_{t'}}(w) = \Theta(w^{1-\tau})$.

Therefore, the sum $\sum_{t'=2}^{t_i}W_{t'}$ of the weights of all vertices in $Q_{s_i}(t_i) \setminus \{s_i\}$ is a sum of $t_i-1$ independent random variables following a power-law with exponent $\tau-1$. On the one hand, by Theorem 1 in \cite{omelchenko2019concentration}, we have for any $\eps>0$ that 
\begin{align*}
    \pr\Big(\sum_{t'=2}^{t_i}W_{t'} \ge t_i^{\tfrac{1}{\tau-2}+\eps}\Big) = O(t_i^{-(\tau-2)\eps}),
\end{align*}
consequently $\sum_{t'=2}^{t_i}W_{t'} \le t_i^{1/(\tau-2)+o(1)}$ whp (as $t_i \rightarrow \infty$). On the other hand, by Lemma \ref{lem:max-weight-shifted-distribution}, we know that whp (as $t_i \rightarrow \infty$)
\begin{align}\label{eq:max-weight-t}
    \max_{2 \le t' \le t_i}W_{t'} = t_i^{\tfrac{1}{\tau-2} \pm o(1)}.
\end{align}
Hence whp we have
\begin{align}\label{eq:sum-weight<=max-weight}
    \sum_{t'=2}^{t_i}W_{t'} \le t_i^{1/(\tau-2)+o(1)}
    \leq  t_i^{o(1)}\cdot \max_{2 \le t' \le t_i}W_{t'}.
\end{align}
Using Lemma \ref{lem:degree-concentration} together with equations \eqref{eq:max-weight-t}-\eqref{eq:sum-weight<=max-weight}, we deduce that whp (as $t_i \rightarrow \infty$)
\begin{align*}
     \mathcal{C}_{s_i}(t_i) &= \sum_{v\in Q_{s_i}(t_i)}\deg(v)
     \le \sum_{v\in Q_{s_i}(t_i)}O(W_v + \log^2 n) \\
     &\le t_i^{o(1)}\cdot \max_{2 \le t' \le t_i}W_{t'} + O(t_i \log^2 n)
     \le \Big(t_i^{\tfrac{1}{\tau-2}} + t_i\log^2 n\Big)\cdot t_i^{o(1)},
\end{align*}
which concludes the proof.
\end{proof}

The subsequent lemma relates the cost on side $s_i$ at time $t_i$, the maximum degree in $Q_{s_i}(t_i)$ and the size of $S_{s_i}(t_i)$ after a large enough polylogarithmic number of iterations $t_i$, showing that these three quantities are basically of the same order.

\begin{lemma}\label{lem:cost=seen}
    Let $\mathcal{G}$ be a Chung-Lu graph on $n$ vertices and let $s_0$ and $s_1$ be two vertices of $\mathcal{G}$ chosen uniformly at random. Let $i\in\{0, 1\}$, let $t_i\ge (\log n)^{2(\tau-2)/(3-\tau)}$, and consider the algorithm $\textnormal{V-BFS}_{approx}(\mathcal{G},s_0,s_1)$ after expanding $t_i$ vertices on the $s_i$-side such that $\mathcal{C}_{s_i}(t_i) \le n^{(\tau-2)/(\tau-1)+o(1)}$ holds. Then with high probability we have 
    \begin{align*}
        |S_{s_i}(t_i)|-1 \le \mathcal{C}_{s_i}(t_i) \le \max_{v\in Q_{s_i}(t_i)}\deg(v) \cdot t_i^{o(1)} \le |S_{s_i}(t_i)| \cdot t_i^{o(1)}.
    \end{align*} 
    
\end{lemma}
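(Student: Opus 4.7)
The plan is to establish the four quantities in the claimed chain one at a time, leveraging the lemmas that were just proved. The first inequality $|S_{s_i}(t_i)|-1 \le \mathcal{C}_{s_i}(t_i)$ is exactly Observation~\ref{obs:cost>=seen}. The last inequality $\max_{v\in Q_{s_i}(t_i)}\deg(v) \le |S_{s_i}(t_i)|$ is essentially a structural observation about the BFS: whenever a vertex $v$ is expanded, line~7 of the algorithm adds $\Gamma(v)$ to $S_{s_i}$, so $\Gamma(v)\subseteq S_{s_i}(t_i)$ and thus $\deg(v)\le|S_{s_i}(t_i)|$ for every $v\in Q_{s_i}(t_i)$.

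The nontrivial part is the middle inequality. First I would note that the assumed lower bound $t_i \ge (\log n)^{2(\tau-2)/(3-\tau)}$ is exactly what is needed so that the first term $t_i^{1/(\tau-2)}$ dominates $t_i\log^2 n$ in Lemma~\ref{lem:cost-expanded-relation}; indeed this is equivalent to $t_i^{(3-\tau)/(\tau-2)} \ge \log^2 n$. Hence Lemma~\ref{lem:cost-expanded-relation} immediately gives, whp,
\[
   \mathcal{C}_{s_i}(t_i) \le t_i^{\tfrac{1}{\tau-2} + o(1)}.
\]
It therefore suffices to prove the matching lower bound $\max_{v\in Q_{s_i}(t_i)}\deg(v) \ge t_i^{1/(\tau-2) - o(1)}$ (whp).

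For the max-degree lower bound I would apply the lower bound part of Lemma~\ref{lem:max-weight-shifted-distribution} to the set $Q_{s_i}(t_i)\setminus\{s_i\}$, whose weights satisfy the conditional density~\eqref{eq:bfs-new-weight} by Lemma~\ref{lem:bfs-new-weight}. The two extra hypotheses of Lemma~\ref{lem:max-weight-shifted-distribution} are verified as follows: the cost bound $\mathcal{C}_{s_i}(t_i) \le n^{(\tau-2)/(\tau-1)+o(1)}$ is the given assumption, and since $\deg(v)\le \mathcal{C}_{s_i}(t_i)$ for every $v\in Q_{s_i}(t_i)$, Corollary~\ref{cor:weight-concentration} yields $W_v \le O(\deg(v)+\log^2 n) \le n^{(\tau-2)/(\tau-1)+o(1)} \le n^{1/2+\eps}$ for small enough $\eps$, using $(\tau-2)/(\tau-1)<1/2$. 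Lemma~\ref{lem:max-weight-shifted-distribution} then yields $\max_{v\in Q_{s_i}(t_i)\setminus\{s_i\}} W_v \ge t_i^{1/(\tau-2)}/\log t_i$ whp. Because $t_i \ge (\log n)^{2(\tau-2)/(3-\tau)}$, this maximum is $\omega(\log^2 n)$, so Lemma~\ref{lem:degree-concentration}(ii) upgrades the weight bound to a degree bound, giving $\max_{v} \deg(v) \ge t_i^{1/(\tau-2) - o(1)}$.

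Combining these two displays gives $\mathcal{C}_{s_i}(t_i) \le \max_{v\in Q_{s_i}(t_i)} \deg(v)\cdot t_i^{o(1)}$, as required. The main technical obstacle, which is already handled by the invoked lemmas, is arguing that the conditional weights of successively expanded vertices are well-behaved despite the dependencies introduced by the BFS exploration; the verification reduces to a bookkeeping check that the cost and maximum weight stay within the regimes where Lemmas~\ref{lem:bfs-new-weight} and~\ref{lem:max-weight-shifted-distribution} apply, which we have just done.
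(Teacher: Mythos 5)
Your proof is correct and follows essentially the same route as the paper's: Observation~\ref{obs:cost>=seen} for the first inequality, Lemma~\ref{lem:cost-expanded-relation} with the threshold $t_i\ge(\log n)^{2(\tau-2)/(3-\tau)}$ to get $\mathcal{C}_{s_i}(t_i)\le t_i^{1/(\tau-2)+o(1)}$, the lower bound of Lemma~\ref{lem:max-weight-shifted-distribution} (which the paper cites via equation~\eqref{eq:max-weight-t} in the proof of Lemma~\ref{lem:cost-expanded-relation}) combined with Lemma~\ref{lem:degree-concentration} for the matching max-degree bound, and the inclusion $\Gamma(v)\subseteq S_{s_i}(t_i)$ for the last inequality. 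No substantive differences.
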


\begin{proof}
The lower bound holds deterministically by Observation \ref{obs:cost>=seen}, and hence we only need to show that $\mathcal{C}_{s_i}(t_i) \le \max_{v\in Q_{s_i}(t_i)}\deg(v) \cdot t_i^{o(1)} \le |S_{s_i}(t_i)| \cdot t_i^{o(1)}$ whp. By Lemma \ref{lem:cost-expanded-relation}, whp we have
\begin{align}\label{eq:cost-expanded-relation}
    \mathcal{C}_{s_i}(t_i) \le \Big(t_i^{\tfrac{1}{\tau-2}} + t_i\log^2 n\Big)\cdot t_i^{o(1)}.
\end{align}
Moreover, equation \eqref{eq:max-weight-t} also holds whp (as $t_i\rightarrow\infty$), namely we have 
\begin{align}\label{eq:max-weight-t-2}
\max_{2 \le t' \le t_i}W_{t'} = t_i^{1/(\tau-2) \pm o(1)}.
\end{align}
Together with the assumption $t_i \ge (\log n)^{2(\tau-2)/(3-\tau)}$, this guarantees that $\max_{2 \le t' \le t_i}W_{t'} = \omega(\log^2 n)$ since $\tau\in (2,3)$. Therefore we can use Lemma \ref{lem:degree-concentration} to deduce that whp
\begin{align}\label{eq:max-weight<max-degree}
    \max_{v\in Q_{s_i}(t_i)}\deg(v) = \Omega\Big(\max_{2 \le t' \le t_i}W_{t'}\Big).
\end{align}
Moreover, using the assumption $t_i \ge (\log n)^{2(\tau-2)/(3-\tau)}$ together with equation \eqref{eq:cost-expanded-relation} yields (as $t_i\log^2(n)\leq t_i^{\tfrac{1}{\tau-2}}$ iff $t_i\geq (\log n)^{2(\tau-2)/(3-\tau)}$) 
\begin{align*}
    \mathcal{C}_{s_i}(t_i) \le t_i^{\tfrac{1}{\tau-2}+o(1)},
\end{align*}
and combining this with equations \eqref{eq:max-weight-t-2}-\eqref{eq:max-weight<max-degree} gives
\[
\mathcal{C}_{s_i}(t_i) \le t_i^{\tfrac{1}{\tau-2}+o(1)}\leq t_i^{o(1)} \cdot \max_{2 \le t' \le t_i}W_{t'}\leq t_i^{o(1)} \cdot \max_{v\in Q_{s_i}(t_i)}\deg(v).
\]
Finally, we clearly have $\Gamma(v) \subseteq S_{s_i}(t_i)$ for all $v\in Q_{s_i}(t_i)$ by definition, which implies that $\max_{v\in Q_{s_i}(t_i)}\deg(v) \le |S_{s_i}(t_i)|$. Therefore, we see that whp
\begin{align*}
    \mathcal{C}_{s_i}(t_i) \le t_i^{o(1)} \cdot \max_{v\in Q_{s_i}(t_i)}\deg(v) \le |S_{s_i}(t_i)| \cdot t_i^{o(1)},
\end{align*}
which concludes the proof.
\end{proof}

Next, we state and prove a crucial balancing lemma. It guarantees that, after a polylogarithmic ``burn-in" phase and until late in the execution of the algorithm, the cost on both sides of the algorithm differ at most by a subpolynomial factor. The reason is that the balancing is done according to $|S_{s_i}(t_i)|$, and we have seen in the previous lemma that this size is of the same order as the cost $\mathcal{C}_{s_i}(t_i)$.

\begin{lemma}\label{lem:balanced-cost}
Let $\mathcal{G}$ be a Chung-Lu graph on $n$ vertices and let $s_0$ and $s_1$ be two vertices of $\mathcal{G}$ chosen uniformly at random. Consider the algorithm $\textnormal{V-BFS}_{approx}(\mathcal{G},s_0,s_1)$. Let $\eps>0$ be arbitrarily small, let $2(\log n)^{2(1+\eps/(\tau-2-\eps))/(3-\tau)} \le k \le n^{(\tau-2)/(\tau-1)+O(\eps)}$ and let $t\in\N$ be such that the cost of the algorithm after iteration $t=t_0+t_1$ satisfies $\mathcal{C}(t)=k$. Then, with high probability, we have that $\mathcal{C}_{s_i}(t_i) \le t^{\eps} \cdot \mathcal{C}_{s_{1-i}}(t_{1-i})$, $i\in\{0,1\}$.
\end{lemma}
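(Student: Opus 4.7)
The plan is to combine Lemma~\ref{lem:cost=seen}, which pins each side's cost $\mathcal{C}_{s_i}(t_i)$ to the size $|S_{s_i}(t_i)|$ of its discovered set up to a $t_i^{o(1)}$ slack, with the algorithm's rule of always expanding on the side with the smaller $|S_{s_i}|$. The latter controls the difference $||S_{s_0}(t_0)|-|S_{s_1}(t_1)||$ at any iteration, and once costs are pinned to sizes the desired balance of costs will follow.

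First, I would verify the preconditions of Lemma~\ref{lem:cost=seen} on both sides. The bound $\mathcal{C}_{s_i}(t_i) \le k \le n^{(\tau-2)/(\tau-1)+o(1)}$ is immediate from the monotonicity $\mathcal{C}_{s_i}(t_i) \le \mathcal{C}(t) = k$. The harder part is to show $t_i \ge (\log n)^{2(\tau-2)/(3-\tau)}$ on both sides, and this is precisely what the slightly inflated lower bound $k \ge 2(\log n)^{2(1+\eps/(\tau-2-\eps))/(3-\tau)}$ is designed to enforce: inverting the growth estimate $\mathcal{C}_{s_i}(t_i) \le t_i^{1/(\tau-2)+o(1)}$ from Lemma~\ref{lem:cost-expanded-relation} shows that whenever $\mathcal{C}_{s_i}(t_i)$ is at least a constant fraction of $k$, $t_i$ exceeds the threshold, the inflation of the exponent absorbing the $o(1)$ error. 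Bootstrapping through the balancing then transfers the bound to both sides.

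Second, invoking Lemma~\ref{lem:cost=seen} yields $|S_{s_i}(t_i)| - 1 \le \mathcal{C}_{s_i}(t_i) \le |S_{s_i}(t_i)| \cdot t^{o(1)}$ on each side. WLOG assume $|S_{s_0}(t_0)| \ge |S_{s_1}(t_1)|$ and apply the picking rule at the last iteration performed on side $0$: before that iteration, $|S_{s_0}(t_0-1)| \le |S_{s_1}(t_1)|$, and so $|S_{s_0}(t_0)| \le |S_{s_1}(t_1)| + D$ with $D = \deg(v^0_{t_0})$. If $v^0_{t_0}$ is not the maximum-degree vertex of $Q_{s_0}(t_0)$, then $D$ is bounded by the previous maximum, which via Lemma~\ref{lem:cost=seen} at iteration $t_0-1$ is at most $\mathcal{C}_{s_0}(t_0-1) \le |S_{s_0}(t_0-1)| \cdot t^{o(1)} \le |S_{s_1}(t_1)| \cdot t^{o(1)}$, closing the loop with $|S_{s_0}(t_0)| \le |S_{s_1}(t_1)| \cdot t^{o(1)}$, and therefore $\mathcal{C}_{s_0}(t_0) \le \mathcal{C}_{s_1}(t_1) \cdot t^{o(1)} \le \mathcal{C}_{s_1}(t_1) \cdot t^\eps$ for $n$ sufficiently large.

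The hard part will be the case where $v^0_{t_0}$ is itself the new maximum-degree vertex of $Q_{s_0}(t_0)$: then $D$ can be as large as $|S_{s_0}(t_0)|$ and the step above collapses. To rule this out, I plan to use Lemma~\ref{lem:bfs-new-weight}, whose shifted conditional weight density forces a newly uncovered vertex to have weight exceeding the previously-seen maximum by at most a subpolynomial factor with high probability, throughout the entire range of $k$ considered. Combined with Lemma~\ref{lem:max-weight-shifted-distribution} and the growth estimate $M_i \approx t_i^{1/(\tau-2)}$ (applied on both sides), this will show that any genuine ``new maximum'' jump on side $0$ is preceded by side $1$ having reached a comparable size by balancing, so that $D \le |S_{s_1}(t_1)| \cdot t^{o(1)}$ even in this case. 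Summing all subpolynomial slacks into $t^\eps$ concludes the proof.
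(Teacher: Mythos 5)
Your plan shares its skeleton with the paper's proof: both rest on Lemma~\ref{lem:cost=seen} to pin $\mathcal{C}_{s_i}(t_i)$ to $|S_{s_i}(t_i)|$ up to $t^{o(1)}$, and both reduce everything to the balancing inequality $|S_{s_0}(t_0)| \le |S_{s_1}(t_1)| + \deg(v^0_{t_0})$ coming from line~4 of the algorithm. (The paper organizes this as a proof by contradiction, assuming $\mathcal{C}_{s_0}(t_0) > t^{\eps}\mathcal{C}_{s_1}(t_1)$, which lets it apply Lemma~\ref{lem:cost=seen} on the heavy side only and use the deterministic Observation~\ref{obs:cost>=seen} on the light side; this sidesteps your need to verify $t_1 \ge (\log n)^{2(\tau-2)/(3-\tau)}$ on both sides, which in your direct formulation becomes circular since it requires knowing $D$ is small first.) The real divergence is in how $D = \deg(v^0_{t_0})$ is controlled, and this is where your proposal has a gap. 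The paper simply observes that $v^0_{t_0}$ is a single vertex whose weight, by Lemma~\ref{lem:bfs-new-weight}, has tail $\Pr(W^* \ge w) = O(w^{-(\tau-2)})$, so $\Pr(\deg(v^0_{t_0}) \ge (\log n)^{\Omega(\eps)}) = o(1)$ via $\Pr(W^* \ge \log\log n) = o(1)$ plus a Chernoff bound. Combined with $k \ge \mathrm{polylog}(n)$, this already contradicts $\deg(v^0_{t_0}) \ge k^{\Omega(\eps)}$, with no case distinction on whether $v^0_{t_0}$ is a new record.

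Your hard case instead asserts that \emph{every} newly uncovered vertex exceeds the running maximum by at most a subpolynomial factor, ``throughout the entire range of $k$ considered.'' As a simultaneous statement over all iterations this does not follow from the cited lemmas: the natural union bound $\sum_j \Pr(W_{v^0_{j+1}} > M_j x \mid \mathcal{F}_j) = \sum_j O((M_j x)^{-(\tau-2)})$ needs the running-max lower bound $M_j \ge j^{1/(\tau-2)}/\log j$ to hold for all $j$ simultaneously, but Lemma~\ref{lem:max-weight-shifted-distribution} only gives failure probability $\exp(-\Omega((\log j)^{\tau-2}))$ at each scale, and since $\tau-2<1$ these terms do not sum to $o(1)$ over $j$. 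The saving grace is that you never need the simultaneous statement: in your own case analysis only the single final vertex $v^0_{t_0}$ matters, and for that one vertex the bare tail bound $\Pr(W_{v^0_{t_0}} \ge \mathrm{polylog}) = o(1)$ from equation~\eqref{eq:bfs-new-weight-upper-bound} suffices and makes the record/non-record distinction unnecessary --- at which point your argument collapses exactly onto the paper's. So the plan is salvageable, but as written the key probabilistic claim is both stronger than what you need and not provable by the route you indicate.
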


\begin{proof}
Since $\eps>0$ is arbitrarily small, we can and do assume in particular that $\eps<\tau-2$. By symmetry and a simple union bound, it is enough to show that $\mathcal{C}_{s_0}(t_0)\le t^{\eps}\cdot\mathcal{C}_{s_{1}}(t_{1})$ whp. Suppose that $\mathcal{C}_{s_0}(t_0)> t^{\eps}\cdot\mathcal{C}_{s_{1}}(t_{1})$. In particular this implies that $\mathcal{C}_{s_0}(t_0)\ge \mathcal{C}_{s_{1}}(t_{1})$ and hence 
$\mathcal{C}_{s_0}(t_0)\ge k/2$ since $\mathcal{C}_{s_0}(t_0) + \mathcal{C}_{s_{1}}(t_{1}) = \mathcal{C}(t)=k$.

We claim that $t_0 \ge \mathcal{C}_{s_0}(t_0)^{\tau-2-\eps}$ whp. Indeed, if $t_0 < \mathcal{C}_{s_0}(t_0)^{\tau-2-\eps}$, then 
\begin{align*}
\Big(t_0^{\tfrac{1}{\tau-2}} + t_0\log^2 n\Big)\cdot t_0^{\eps} 
< \mathcal{C}_{s_0}(t_0)^{1-\eps/(\tau-2)+\eps(\tau-2-\eps)} + \mathcal{C}_{s_0}(t_0)^{(1+\eps)(\tau-2-\eps)}\log^2 n.  
\end{align*}
Using the inequality $\tau-2<1$, together with the fact that $(1+\eps)(\tau-2-\eps)$ is decreasing in $\varepsilon$ (whence it can be made smaller that $\tau-2$ by taking a small enough $\varepsilon$), this yields
\begin{align}\label{eq:iteration-cost-mismatch}
\Big(t_0^{\tfrac{1}{\tau-2}} + t_0\log^2 n\Big)\cdot t_0^{\eps} 
< \mathcal{C}_{s_0}(t_0)^{1-\Omega(\eps)} + \mathcal{C}_{s_0}(t_0)^{\tau-2}\log^2n
< \mathcal{C}_{s_0}(t_0),
\end{align}
where for the second inequality we have used that $\mathcal{C}_{s_0}(t_0) \ge k/2 \ge (\log n)^{2/(3-\tau)+\eps}$ by assumption. However, since 
\[
\mathcal{C}_{s_0}(t_0) \le \mathcal{C}(t) = k \le n^{(\tau-2)/(\tau-1)+O(\eps)}
\]
by assumption (and $\eps$ is arbitrarily small), Lemma \ref{lem:cost-expanded-relation} tells us that \eqref{eq:iteration-cost-mismatch} occurs with vanishing probability. Hence we must have $t_0 \ge \mathcal{C}_{s_0}(t_0)^{\tau-2-\eps}$ whp, and we will assume that such an inequality holds for the rest of the proof. In particular, since
$\mathcal{C}_{s_0}(t_0) \ge k/2 \ge (\log n)^{2(1+\eps/(\tau-2-\eps))/(3-\tau)}$ (again by our assumption on $k$), this implies that $t_0 \ge (\log n)^{2(\tau-2)/(3-\tau)}$ and hence the assumptions of Lemma \ref{lem:cost=seen} with $i=0$ are fulfilled. Additionally, note that we also obtain a lower bound of the form $t_0 \ge k^{\Omega(1)}$.

We now apply Lemma \ref{lem:cost=seen} together with $\mathcal{C}_{s_0}(t_0)> t^{\eps}\cdot\mathcal{C}_{s_{1}}(t_{1})$ (which we assumed at the start of the argument) to deduce that
\begin{align*}
    |S_{s_0}(t_0)| \ge \mathcal{C}_{s_0}(t_0) \cdot t_0^{-\varepsilon/2}
    \ge \mathcal{C}_{s_1}(t_1) \cdot t^{\eps/2},
\end{align*}
and by Observation \ref{obs:cost>=seen} this yields
\begin{align}\label{eq:lower-bound-seen0}
    |S_{s_0}(t_0)| \ge t^{\eps/2} \cdot (|S_{s_1}(t_1)|-1).
\end{align}
We claim that $|S_{s_1}(t_1)|\geq 2$. To see this, note that $|S_{s_1}(t_1)|=1$ if and only if $t_1=0$ and hence, due to line 4 of the algorithm $\textnormal{V-BFS}_{approx}(\mathcal{G},s_0,s_1)$, after two iterations of the while-loop we must have $t_0, t_1 \ge 1$ (except if one of the vertices $s_0, s_1$ is isolated, which we excluded), which implies that $|S_{s_0}(t_0)|,|S_{s_1}(t_1)| \ge 2$. In particular, the lower bound on $t_0 = \omega(1)$ implies that $t_1 \ge 1$ and hence $|S_{s_1}(t_1)| \ge 2$. Therefore, we deduce from \eqref{eq:lower-bound-seen0} and $t=\omega(1)$ that
\begin{align*}
    |S_{s_0}(t_0)| \ge t^{\eps/2} \cdot |S_{s_1}(t_1)|/2 \ge t^{\eps/3} \cdot |S_{s_1}(t_1)|,
\end{align*}
which implies

\begin{align}\label{eq:lower-bound-seen-difference}
    |S_{s_0}(t_0)|-|S_{s_1}(t_1)| \ge t^{\eps/4} \cdot |S_{s_1}(t_1)|.
\end{align}

Since in line 4 of the algorithm $\textnormal{V-BFS}_{approx}(\mathcal{G},s_0,s_1)$ we are choosing the side with the smallest $S_{s_i}$ set, a difference of $j$ between $S_{s_0}$ and $S_{s_1}$ can only happen if we are at an iteration where we are expanding a vertex of degree at least $j$. Hence, \eqref{eq:lower-bound-seen-difference} implies that $v^0_{t_0}$, the $t_0$-th vertex being expanded on the $s_0$-side, has degree 
\begin{align*}
    \deg(v^0_{t_0}) \ge t^{\eps/4} \cdot |S_{s_1}(t_1)|
    \ge k^{\Omega(\eps)},
\end{align*}
where the second inequality uses $t \ge t_0 \ge k^{\Omega(1)}$ (and $|S_{s_1}(t_1)|\ge 1$). Hence we have shown that
\[
\pr(\mathcal{C}_{s_0}(t_0)\ge t^{\eps}\cdot\mathcal{C}_{s_{1}}(t_{1})) \le \pr(\deg(v^0_{t_0}) \ge k^{\Omega(\eps)}).
\]
Thus, by the lower bound on $k$ in the assumptions, it remains to show that 
\begin{align}\label{eq:prob-degree-v-t0-large}
    \pr(\deg(v^0_{t_0}) \ge (\log n)^{\Omega(\eps)} ) = o(1).
\end{align}
Note that since $\mathcal{C}(t)=k \le n^{(\tau-2)/(\tau-1)+O(\eps)} \le O(n^{1-\eps})$, we can use equation \eqref{eq:bfs-new-weight-upper-bound} in Lemma \ref{lem:bfs-new-weight} to bound the density of the weight $W^*$ of $v^0_{t_0}$; in particular, we have $\pr(W^* \ge w)=O(w^{-(\tau-2)})$ for all $w\in[1,\infty)$.
By a simple union bound we then have
\begin{align*}
    \pr(\deg(v^0_{t_0}) \ge (\log n)^{\Omega(\eps)}) \le \pr(\deg(v^0_{t_0}) \ge (\log n)^{\Omega(\eps)} \mid W^* \le \log\log n) + \pr(W^*\ge \log\log n).
\end{align*}
The first term can be upper bounded by $2^{-(\log n)^{\Omega(\eps)}}=o(1)$ using Chernoff bounds, while the second term is upper bounded by $O((\log\log n)^{-(\tau-2)})=o(1)$. This shows that \eqref{eq:prob-degree-v-t0-large} holds and concludes the proof.
\end{proof}

The following lemma shows that the contributions to the cost $\mathcal{C}_{s_i}(t_i)$ are ``heavy-tailed". In other words, a few vertices in $Q_{s_i}(t_i)$ have degrees (or weights) that are almost as large as the total cost. This is a consequence of the very heavy tail of the distribution of the weights encountered during the BFS (see Lemma \ref{lem:bfs-new-weight}).

\begin{lemma}\label{lem:expansion-vertex-approx-algo}
    Let $\mathcal{G}$ be a Chung-Lu graph on $n$ vertices and let $s_0$ and $s_1$ be two vertices of $\mathcal{G}$ chosen uniformly at random.
    Let $i\in\{0, 1\}$, let $\eps>0$ be arbitrarily small, let $2(\log n)^{2(1+\eps/(\tau-2-\eps))/(3-\tau)} \le  k \le n^{\frac{\tau-2}{\tau-1}+O(\eps)}$ and consider the algorithm $\textnormal{V-BFS}_{approx}(\mathcal{G},s_0,s_1)$
   
    after expanding $t_i$ vertices on the $s_i$-side such that $k^{1+\eps} \le \mathcal{C}_{s_i}(t_i) \le k^{1+O(\eps)}$.
    Then, with high probability, the algorithm has expanded at least $\log k$ vertices of weight in $[k,2k]$ on the $s_i$-side. 
   
\end{lemma}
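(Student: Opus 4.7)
My plan is to prove the lemma in three steps: first, establish a polynomial-in-$k$ lower bound on the number of expanded vertices $t_i$; second, use Lemma~\ref{lem:bfs-new-weight} to obtain a uniform per-vertex lower bound on the conditional probability that a newly discovered vertex has weight in $[k,2k]$; third, conclude via a stochastic-dominance Chernoff argument.

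For the first step, since $\mathcal{C}(t) \le 2k^{1+O(\eps)} \le n^{1-\Omega(1)}$ (using $k \le n^{(\tau-2)/(\tau-1)+O(\eps)}$ and $(\tau-2)/(\tau-1) < 1$), Lemma~\ref{lem:cost-expanded-relation} applies on the $s_i$-side, yielding whp $\mathcal{C}_{s_i}(t_i) \le \bigl(t_i^{1/(\tau-2)} + t_i \log^2 n\bigr) \cdot t_i^{o(1)}$. A direct calculation shows that the tailored lower bound $k \ge 2(\log n)^{2(1+\eps/(\tau-2-\eps))/(3-\tau)}$, combined with $\mathcal{C}_{s_i}(t_i) \ge k^{1+\eps}$, forces $t_i \ge (\log n)^{2(\tau-2)/(3-\tau)}$; at this scale the term $t_i^{1/(\tau-2)}$ dominates $t_i \log^2 n$. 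Inverting then gives $t_i \ge k^{(\tau-2)(1+\eps) - o(1)}$ whp.

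For the second step, I would apply Lemma~\ref{lem:bfs-new-weight} to the BFS on side $s_i$, whose hypothesis $\mathcal{C}_{s_i}(t_i) = O(n^{1-\Omega(1)})$ is met. To invoke case~1 of equation~\eqref{eq:bfs-new-weight} for weights $w \in [k, 2k]$, I need every parent weight $w_u$ appearing in the process to satisfy $w_u \le n/w$. By the upper bound in Lemma~\ref{lem:max-weight-shifted-distribution}, whp $\max_{v \in Q_{s_i}(t_i)\setminus\{s_i\}} W_v \le t_i^{1/(\tau-2)}\log t_i \le k^{1+O(\eps)}\log k$, while the starting vertex $s_i$, being uniformly random, has $W_{s_i} \le \log n$ whp; both are much smaller than $n/(2k)$ because $k^{2+O(\eps)} \le n^{2(\tau-2)/(\tau-1)+O(\eps)} \ll n$ (using $\tau < 3$). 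Case~1 therefore applies, the conditional density of a newly discovered vertex's weight is $\Theta(w^{1-\tau})$, and integrating over $[k,2k]$ gives $\Pr\bigl(W_v \in [k,2k] \mid \text{history up to discovery of } v\bigr) \ge c \cdot k^{2-\tau}$ for an absolute constant $c > 0$.

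For the third step, I set up a standard sequential coupling: since the conditional probability that the $j$-th newly discovered vertex has weight in $[k,2k]$ is at least $p := c k^{2-\tau}$ regardless of the history, the indicators $X_j := \mathbbm{1}_{\{W_{v_j} \in [k,2k]\}}$ stochastically dominate i.i.d.\ Bernoulli$(p)$ random variables $Y_j$. Applying Theorem~\ref{thm:dubhashichernoff} to $\sum_{j=2}^{t_i^\star} Y_j$ with $t_i^\star := k^{(\tau-2)(1+\eps)-o(1)}$ (which is a deterministic lower bound for $t_i$ whp), whose expectation is $\tfrac{c}{2} \cdot k^{\eps(\tau-2) - o(1)}$, yields whp at least $k^{\Omega(\eps)}$ expanded vertices on side $s_i$ with weight in $[k,2k]$. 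Since $k \to \infty$ with $n$ and $\eps(\tau-2)$ is a positive constant, this is $\gg \log k$, concluding the proof. The main obstacles will be (i) carefully tracking the calibration of the polylogarithmic threshold in the hypothesis on $k$ so that $t_i$ is forced above $(\log n)^{2(\tau-2)/(3-\tau)}$, and (ii) justifying the stochastic-dominance coupling cleanly despite $t_i$ being random — the latter is resolved by running the coupling up to the deterministic threshold $t_i^\star$ from the first step.
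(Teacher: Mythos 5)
Your proposal follows essentially the same route as the paper's proof: a lower bound on the number of iterations needed to accumulate side-cost $k^{1+\eps}$ via Lemma~\ref{lem:cost-expanded-relation}, a per-expansion probability $\Theta(k^{-(\tau-2)})$ of drawing a weight in $[k,2k]$ via the first case of equation~\eqref{eq:bfs-new-weight} after verifying the parent-weight condition, and a concentration argument. The only methodological difference is in the last step, where the paper iterates an exponential-moment bound on $\E[\exp(-\sum_{t'} B_{t'})]$ while you use stochastic domination by i.i.d.\ Bernoulli variables plus a Chernoff bound run up to a deterministic iteration threshold; these are interchangeable, and your handling of the randomness of $t_i$ is sound.

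The one place where your argument as written does not close is the parent-weight bound in your second step. You claim $\max_{v\in Q_{s_i}(t_i)\setminus\{s_i\}} W_v \le t_i^{1/(\tau-2)}\log t_i \le k^{1+O(\eps)}\log k$, but the second inequality requires $t_i \le k^{(\tau-2)(1+O(\eps))}$, and the only upper bound on $t_i$ you have established is $t_i \le \mathcal{C}_{s_i}(t_i) \le k^{1+O(\eps)}$. That gives only $t_i^{1/(\tau-2)} \le k^{(1+O(\eps))/(\tau-2)}$, which is too weak: at the top of the allowed range $k = n^{(\tau-2)/(\tau-1)+O(\eps)}$ one gets $k \cdot k^{1/(\tau-2)} = k^{(\tau-1)/(\tau-2)} = n^{1+O(\eps)}$, so the requirement $W_u \le n/(2k)$ is not implied. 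The sharper bound $t_i \le k^{(\tau-2)(1+O(\eps))+o(1)}$ is in fact true, but proving it needs the \emph{lower} bound on the maximum weight from Lemma~\ref{lem:max-weight-shifted-distribution} (equivalently Lemma~\ref{lem:cost=seen}) to tie $t_i$ back to the cost. The paper sidesteps this entirely: every parent of a vertex discovered by iteration $t_i$ is itself an already-expanded vertex, so its degree is at most the accumulated side cost $k^{1+O(\eps)}$, and Corollary~\ref{cor:weight-concentration} then gives $W_u \le O(k^{1+O(\eps)}+\log^2 n) = o(n/k)$ since $k^{2+O(\eps)} \le n^{2(\tau-2)/(\tau-1)+O(\eps)} = o(n)$ for $\tau<3$. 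With that substitution your argument goes through.
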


\begin{proof}
Denote by $T_k$ the first iteration (of the algorithm) at which the $s_i$-side cost passes the threshold $k^{1+\eps}$, i.e.\ 
$T_k\coloneqq \min\{t'\in \mathbb{N}: \mathcal{C}_{s_i}(t')\geq k^{1+\eps}\}$.
We start by showing that $T_k \ge k^{\tau-2}\log^2 k$ whp. Suppose that $T_k < k^{\tau-2}\log^2 k$. Since $\eps$ is arbitrarily small and $k\le n^{\frac{\tau-2}{\tau-1}+O(\eps)}$, the assumptions of Lemma \ref{lem:cost-expanded-relation} are fulfilled and hence we deduce that whp
\begin{align*}
    \mathcal{C}_{s_i}(T_k)
    \le \Big(T_k^{\tfrac{1}{\tau-2}} + T_k\log^2 n\Big)\cdot T_k^{o(1)}
    < \Big(k(\log k)^{2/(\tau-2)} + k^{\tau-2}\log^2 k\log^2 n\Big)\cdot k^{o(1)}.
\end{align*}
By assumption we know that $\log^2 n < k^{3-\tau}$, and hence we obtain
\begin{align*}
    \mathcal{C}_{s_i}(T_k)
    < \Big(k(\log k)^{2/(\tau-2)} + k\log^2 k\Big)\cdot k^{o(1)}
    = k^{1+o(1)},
\end{align*}
which contradicts the lower bound $\mathcal{C}_{s_i}(T_k) \ge k^{1+\eps}$. Thus, we have $T_k \ge k^{\tau-2}\log^2 k$ whp and we will assume this lower bound for the rest of the proof.

For $1 \le t' \le T_k$, denote by $B'_{t'}$ the the event that at iteration $t'$ on the $s_i$-side the algorithm expands a vertex of weight in $[k,2k]$ and write $B_{t'}$ for the corresponding indicator random variable, i.e.\ $B_{t'}\coloneqq \mathbbm{1}_{B'_{t'}}$. Note that, if $\sum_{t'=1}^{T_k}B_{t'}\geq \log k$, then obviously by iteration $T_k$ the algorithm has expanded at least $\log k$ vertices with weight in $[k,2k]$. Hence, the goal is to show that $\sum_{t'=1}^{T_k}B_{t'}\geq \log k$ whp as $n$ (and hence $k$) tends to $\infty$. 

Consider the $t'$-th vertex $v_{t'}$ that is expanded on the $s_i$-side and denote by $u_{t'}$ the parent vertex of $v_{t'}$. Later on we will need an upper bound on the weights of the vertices $u_{t'}$; to this end, we begin by splitting
\begin{align*}
\pr\Big(\sum_{t'=1}^{T_k}B_{t'}< \log k\Big)
\leq \pr\Big(\Big\{\sum_{t'=1}^{T_k}B_{t'}< \log k\Big\}\cap \bigcap_{t'\leq T_k} \{W_{u_{t'}} \le n/2k\}\Big)
+\pr\Big(\bigcup_{t'\leq T_k} \{W_{u_{t'}} > n/2k\}\Big).
\end{align*}
Using Corollary \ref{cor:weight-concentration}, we can bound
\begin{equation}\label{TTT}
    \pr\Big(\bigcup_{t'\leq T_k} \{W_{u_{t'}} > n/2k\}\Big)
    \leq \pr\Big(\bigcup_{t'\leq T_k} \{n/2k < W_{u_{t'}} \le O(\deg(u_{t'})+ \log^2 n)\} \Big)+o(1).
\end{equation}
As noted above, $u_{t'}$ was expanded at some iteration $t''\leq t'-1$ on the $s_i$-side and hence at iteration $t'$ we know everything about $u_{t'}$, i.e.\ we know its degree and its weight. Observe that, by definition of $T_k$, the degree of $u_{t'}$ is at most $k^{1+\eps}$ for all $t'\leq T_k$. Hence, since $k>(\log n)^{2/(3-\tau)}>\log^2 n$, we see that 
\[
\deg(u_{t'})+ \log^2 n \leq O(k^{1+\eps}) \le  o(n/k),
\]
where the second inequality holds because for $\eps>0$ small enough we have
\[
k^{2+\eps}  \le \big(n^{(\tau-2)/(\tau-1)+O(\eps)}\big)^{2+\eps} \le o\big(n^{(\tau-2)/(\tau-1)+1/(\tau-1)}\big) = o(n),
\]
where we used the inequality $\tau<3$. This shows that $\{n/2k < W_{u_{t'}} \le O(\deg(u_{t'}) + \log^2 n)\}$
is an empty event for $t'\leq T_k$, and hence by \eqref{TTT} we have 
\begin{align*}
    \pr\Big(\bigcup_{t'\leq T_k} \{W_{u_{t'}} > n/2k\}\Big) = o(1).
\end{align*}
Thus it only remains to show that
\[
\pr\Big(\Big\{\sum_{t'=1}^{T_k}B_{t'}<\log k \Big\}\cap \bigcap_{t'\leq T_k} \{W_{u_t'} \le n/2k\}\Big) = o(1).
\]
To this end, let us denote by $\mathcal{F}_j$ the information gathered by the algorithm until  and including the $j$-th iteration on the $s_i$-side, and observe that the event $\{W_{u_{t'}} \le n/2k\}$ is $\mathcal{F}_{t'-1}$-measurable (as it was expanded at some step $t''\leq t'-1$). Using our assumption on $\mathcal{C}_{s_i}(t_i)$ together with Lemma \ref{lem:balanced-cost}, we know that $\mathcal{C}(T_k) \le n^{\tfrac{\tau-2}{\tau-1}+O(\eps)}$. Since we also have $k\le n^{\tfrac{\tau-2}{\tau-1}+O(\eps)}$, we can use equation \eqref{eq:bfs-new-weight} in Lemma \ref{lem:bfs-new-weight} to obtain

\begin{align}\label{eq:P(B')}
    \pr(B'_{t'} \mid \mathcal{F}_{t'-1}) = \int_{k}^{2k}\Theta(w^{1-\tau})dw = \Theta(k^{-(\tau-2)}).
\end{align}
Now, since for any non-negative random variable $Y\geq 0$ and any event $\mathcal{H}$ we have, by Markov's inequality (for $h>0$)
\[
\pr(\{Y\geq h\}\cap\mathcal{H})\leq \frac{\mathbb{E}[Y\mathbbm{1}_{\mathcal{H}}]}{h},
\]
we obtain
\begin{align*}
    &\pr\Big(\Big\{\sum_{t'=1}^{T_k}B_{t'} < \log k\Big\}\cap \bigcap_{t'\leq T_k} \{W_{u_{t'}} \le n/2k\}\Big) \\
    &\qquad= \pr\Big(\Big\{\exp\big(-\textstyle\sum_{t'=1}^{T_k}B_{t'}\big) > 1/k\Big\}\cap \bigcap_{t'\leq T_k} \{W_{u_{t'}} \le n/2k\}\Big) \\
    &\qquad\le k\mathbb{E}\Big[\exp\big(-\textstyle\sum_{t'=1}^{T_k}B_{t'}\big)\mathbbm{1}_{\bigcap_{t'\leq T_k} \{W_{u_{t'}} \le n/2k\}}\Big].
\end{align*}
With the purpose of bounding the last expectation from above, we can write, using the tower property of conditional expectation
\begin{align}
\begin{split}\label{condexp}
    &\mathbb{E}\Big[\exp\big(-\textstyle\sum_{t'=1}^{T_k}B_{t'}\big)\mathbbm{1}_{\bigcap_{t'\leq T_k} \{W_{u_{t'}} \le n/2k\}}\Big]\\
    &\qquad= \mathbb{E}\Big[\exp\big(-\textstyle\sum_{t'=1}^{T_k-1}B_{t'}\big)\mathbbm{1}_{\bigcap_{t'\leq T_k} \{W_{u_{t'}} \le n/2k\}}\mathbb{E}[\exp(-B_{T_k}) \mid \mathcal{F}_{T_k-1}]\Big].
\end{split}
\end{align}
Now using \eqref{eq:P(B')} as well as the classical inequality  $1+x\le e^{x}$ (which is valid for every real $x$), we obtain (on $\{W_{u_{T_k}} \le n/2k\}$)
\[
\mathbb{E}[\exp(-B_{T_k}) \mid \mathcal{F}_{T_k-1}]
=1-(1-e^{-1})\pr(B'_{T_k} \mid \mathcal{F}_{T_k-1})
\le e^{-(1-e^{-1})\pr(B'_{T_k} \mid \mathcal{F}_{T_k-1})}
\le e^{-\Theta(k^{-(\tau-2)})}.
\]
Substituting this estimate back into \eqref{condexp} and dropping the event $\{W_{u_{T_k}} \le n/2k\}$ from the intersection we arrive at
\begin{align*}
    &\mathbb{E}\Big[\exp\big(-\textstyle\sum_{t'=1}^{T_k}B_{t'}\big)\mathbbm{1}_{\bigcap_{t'\leq T_k} \{W_{u_{t'}} \le n/2k\}}\Big]\\
    &\qquad\le \mathbb{E}\Big[\exp\big(-\textstyle\sum_{t'=1}^{T_k-1}B_{t'}\big) \mathbbm{1}_{\bigcap_{t'\leq T_k-1} \{W_{u_{t'}} \le n/2k\}} e^{-\Theta(k^{-(\tau-2)})}\Big].
\end{align*}
By iterating we obtain
\[
\mathbb{E}\Big[\exp\big(-\textstyle\sum_{t'=1}^{T_k}B_{t'}\big)\mathbbm{1}_{\bigcap_{t'\leq T_k} \{W_{u_{t'}} \le n/2k\}}\Big] \le e^{-\Theta(T_k k^{-(\tau-2)})} = e^{-\Omega(\log^2 k)},
\]
where the last line holds since we are assuming that $T_k \ge k^{\tau-2}\log^2 k$. Therefore we arrive at
\[
\pr\Big(\Big\{\sum_{t'=1}^{T_k}B_{t'} < \log k\Big\}\cap \bigcap_{t'\leq T_k} \{W_{u_{t'}} \le n/2k\}\Big)
\le k e^{-\Omega(\log^2 k)} = o(1),
\]
which concludes the proof.
\end{proof}

The next lemma pinpoints a stopping condition for the vertex-balanced algorithms. It states that, as soon as \textit{each} side has expanded a logarithmic number of vertices of weight slightly larger than $n^{(\tau-2)/(\tau-1)}$, then whp the two search trees intersect. As the proof shows, it will be enough to consider the vertex of maximal weight in the graph as the common neighbor which lies in this intersection.

\begin{lemma}\label{lem:meeting-vertex-approx-algo}
    Let $\mathcal{G}$ be a Chung-Lu graph on $n$ vertices and let $s_0$ and $s_1$ be two vertices of $\mathcal{G}$ chosen uniformly at random.
    Let $\eps>0$, and suppose the algorithm $\textnormal{V-BFS}_{approx}(\mathcal{G},s_0,s_1)$ or $\textnormal{V-BFS}_{exact}(\mathcal{G},s_0,s_1)$ after iteration $t$ has expanded $\Omega(\log n)$ vertices of weight in the interval $[n^{(\tau-2)/(\tau-1)+\eps}, 2n^{(\tau-2)/(\tau-1)+\eps}]$ on each side. Then $S_{s_0}(t) \cap S_{s_1}(t) \neq \emptyset$ with high probability.
\end{lemma}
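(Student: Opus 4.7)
The plan is to identify a specific vertex $v^* \in \mathcal{V}$ of nearly maximum weight and argue that, with high probability, it lies in both $S_{s_0}(t)$ and $S_{s_1}(t)$. First, by Lemma~\ref{lem:maxweight}, whp there exists $v^* \in \mathcal{V}$ with $W_{v^*} \ge n^{1/(\tau-1)}/\log n$; I will fix such a $v^*$ and note that whp $v^* \notin \{s_0, s_1\}$ since the uniformly chosen sources have weight $O(\log^2 n)$ by Corollary~\ref{cor:weight-concentration}. The key calculation is that for every vertex $u$ with $W_u \ge n^{(\tau-2)/(\tau-1)+\eps}$,
\[
\tfrac{W_u W_{v^*}}{n} \ge \tfrac{n^{(\tau-2)/(\tau-1)+\eps} \cdot n^{1/(\tau-1)}/\log n}{n} = \tfrac{n^\eps}{\log n} = \omega(1),
\]
so by Lemma~\ref{lem:marginal}, $\Pr[u v^* \in \mathcal{E} \mid W_u, W_{v^*}] \ge c$ for some absolute constant $c > 0$.

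For each $i \in \{0,1\}$, I will enumerate in order of expansion the special vertices (those with weight in the stated interval) expanded on side $s_i$ as $u^i_1, u^i_2, \ldots$ The main probabilistic step will be to argue that, conditional on everything the algorithm has observed strictly before it queries the edge $u^i_k v^*$, the indicator $\mathbbm{1}[u^i_k v^* \in \mathcal{E}]$ is a Bernoulli with parameter at least $c$, independent of the history. This uses the mutual independence of edges in Chung--Lu graphs given the weight sequence, together with the observation that $u^i_k v^*$ is queried for the first time precisely when $u^i_k$ is expanded. Iterating this estimate over $k = 1, \ldots, \lceil\log n\rceil$ then shows that the probability that none of the first $\lceil\log n\rceil$ expanded special vertices on side $s_i$ is adjacent to $v^*$ is at most $(1-c)^{\lceil\log n\rceil} = n^{-\Omega(1)}$.

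On the event of the lemma's hypothesis (at least $\lceil\log n\rceil$ expanded special vertices on each side), a union bound over $i$ then yields that whp at least one expanded special vertex on each side is adjacent to $v^*$; the corresponding expansions add $v^*$ to each $S_{s_i}(t)$, so $v^* \in S_{s_0}(t) \cap S_{s_1}(t)$.

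The main obstacle will be making the conditional Bernoulli step fully rigorous. The identity of $u^i_k$ depends on the algorithm's entire past, which can itself depend on edges incident to $v^*$, since the balancing rule in line~4 compares the sizes $|S_{s_j}|$ and these may include $v^*$ once $v^*$ has been discovered on some side. I plan to handle this via the principle of deferred decisions: at the moment the algorithm would query $u^i_k v^*$, this edge has never been revealed (it is queried only when $u^i_k$ or $v^*$ itself is expanded), and by conditional independence of Chung--Lu edges its value is independent of all past revelations. Carefully tracking which edges have been queried at each step is what licenses the iterated product bound used above.
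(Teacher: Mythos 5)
Your proposal is correct and follows essentially the same route as the paper's proof: both fix a vertex of (near-)maximal weight $W_{v^*}=n^{1/(\tau-1)-o(1)}$, observe that each expanded vertex of weight $\Omega(n^{(\tau-2)/(\tau-1)+\eps})$ connects to it with constant probability, and conclude via $(1-c)^{\Omega(\log n)}=o(1)$ on each side. If anything, your explicit deferred-decisions treatment of the conditioning (the fact that the identity of the expanded special vertices depends on the history, while the edge to $v^*$ is queried only at expansion time) is more careful than the paper, which simply invokes conditional independence of the two edges given the weights.
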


\begin{proof}
    We observe that $S_{s_0}(t)$ intersects $S_{s_1}(t)$ in particular if an expanded vertex on the $s_0$-side has a common neighbor with some vertex expanded from the $s_1$-side. We will lower-bound this probability by considering as a potential common neighbor only a vertex $v_{max}$ of maximal weight $W_{\max}$ in the graph. By Lemma \ref{lem:maxweight}, we know that $W_{\max}=n^{\frac{1}{\tau-1} -\frac{\varepsilon}{2}}$ whp. 
    Denote by $U_{s_0}(t)$ and $U_{s_1}(t)$ the respective set of vertices on each side with  weight in $[n^{(\tau-2)/(\tau-1)+\eps}, 2n^{(\tau-2)/(\tau-1)+\eps}]$ that have been expanded up to iteration $t$. By assumption we know that $\min\{|U_{s_0}(t)|, |U_{s_1}(t)|\} \ge \Omega(\log n)$. Denote the elements of these sets by $u^1_{s_0}, u^2_{s_0}, \dots$ and $u^1_{s_1}, u^2_{s_1}, \dots$ respectively. Now for $U_{s_0}(t)$ (and analogously for $U_{s_1}(t)$), it holds for for each $i \in \{1, \dots, \min\{|U_{s_0}(t)|, |U_{s_1}(t)|\}\}$, for some small enough constant $ c >0$, that 
    \[  \Pr(u^i_{s_0}v_{max}\in\mathcal{E} \mid W_{u^i_{s_0}}, W_{{max}}) = \Theta\Big(\min\Big\{\frac{W_{u^i_{s_0}}W_{{max}}}{n}, 1\Big\}\Big) = \Theta\Big(\min\Big\{\frac{n^{\frac{(\tau-2)+1}{\tau-1} -\frac{\varepsilon}{2} + \varepsilon}}{n}, 1\Big\}\Big) \ge c. \]
    Now for any such $i$, a pair of vertices $u^i_{s_0}, u^i_{s_1}$ has $v_{max}$ as a common neighbor if and only if both $u^i_{s_0}v_{max}$ and $u^i_{s_1}v_{max}$ are present in the edge set $\mathcal{E}$. Due to the independence of these two events conditioned on $W_{u^i_{s_0}}, W_{u^i_{s_1}}, W_{{max}}$, this happens with probability
    \[\Pr(u^i_{s_0}v_{max}\in\mathcal{E} \mid W_{u^i_{s_0}}, W_{{max}})\cdot \Pr(u^i_{s_1}v_{max}\in\mathcal{E} \mid W_{u^i_{s_1}}, W_{{max}}) \ge c^2.\]
    Therefore we can upper-bound the probability that $S_{s_0}(t) \cap S_{s_1}(t) = \emptyset$ by
    \[
    \Pr(S_{s_0}(t) \cap S_{s_1}(t) = \emptyset) \le (1-c^2)^{\Omega(\log n)}=o(1),
    \]
    which concludes the proof.
\end{proof}

The preceding results suffice to prove our first main result, the runtime bound for the approximate vertex-balanced algorithm.

\begin{theorem}\label{thm:vertex-approx}
    Let $\mathcal{G}$ be a Chung-Lu graph on $n$ vertices and let $s_0$ and $s_1$ be two vertices of $\mathcal{G}$ chosen uniformly at random. Assuming that $s_0$ and $s_1$ are in the giant component of $\mathcal{G}$, $\textnormal{V-BFS}_{approx}(\mathcal{G},s_0,s_1)$ is a correct algorithm and with high probability it terminates in time at most $n^{(\tau-2)/(\tau-1)+o(1)}$.
\end{theorem}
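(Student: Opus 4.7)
The plan is as follows. Correctness is immediate from Lemma~\ref{lem:vertex-approx-correct}, which guarantees that the returned path has length at most $d(s_0,s_1)+1$ whenever $s_0,s_1$ lie in the same component---which, by the giant-component assumption, they do. For the runtime bound, I would fix an arbitrarily small $\eps>0$ and show that whp the algorithm terminates before its total cost $\mathcal{C}(t)$ surpasses $n^{(\tau-2)/(\tau-1)+C\eps}$ for some constant $C=C(\eps)$; letting $\eps\to 0$ then yields the claimed $n^{(\tau-2)/(\tau-1)+o(1)}$ bound.

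The core argument chains Lemmas~\ref{lem:balanced-cost},~\ref{lem:expansion-vertex-approx-algo} and~\ref{lem:meeting-vertex-approx-algo} together. Set $k:=n^{(\tau-2)/(\tau-1)+\eps}$ and suppose, towards a contradiction, that at some iteration $t^*$ we have $\mathcal{C}(t^*)=k^{1+C\eps}$ while $S_{s_0}(t^*)\cap S_{s_1}(t^*)=\emptyset$, i.e.\ the algorithm has not yet terminated. Using the deterministic bound $t^*\le\mathcal{C}(t^*)$ (from Lemma~\ref{lem:cost-expanded-relation}, which applies since the giant-component hypothesis rules out isolated sources), Lemma~\ref{lem:balanced-cost} then yields
\[
\mathcal{C}_{s_0}(t_0),\;\mathcal{C}_{s_1}(t_1)\;\ge\;\frac{\mathcal{C}(t^*)}{2(t^*)^{\eps}}\;\ge\;k^{1+\eps}
\]
whp, provided $C$ is chosen sufficiently large. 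The hypotheses of Lemma~\ref{lem:expansion-vertex-approx-algo} are then met on each side with the above $k$, so whp the algorithm has expanded at least $\log k=\Omega(\log n)$ vertices of weight in $[k,2k]=[n^{(\tau-2)/(\tau-1)+\eps},\,2n^{(\tau-2)/(\tau-1)+\eps}]$ on each side. Lemma~\ref{lem:meeting-vertex-approx-algo}, applied with the same $\eps$, then forces $S_{s_0}(t^*)\cap S_{s_1}(t^*)\neq\emptyset$ whp, a contradiction.

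The main obstacle I anticipate is purely the bookkeeping of the $\eps$-constants: one has to verify that the range hypotheses $2(\log n)^{2(1+\eps/(\tau-2-\eps))/(3-\tau)}\le\mathcal{C}(t^*)\le n^{(\tau-2)/(\tau-1)+O(\eps)}$ of Lemma~\ref{lem:balanced-cost} and $k^{1+\eps}\le\mathcal{C}_{s_i}(t_i)\le k^{1+O(\eps)}$ of Lemma~\ref{lem:expansion-vertex-approx-algo} are compatible, and that the various $O(\eps)$ slack terms can be absorbed by choosing $C$ large enough without pushing the final threshold beyond $n^{(\tau-2)/(\tau-1)+O(\eps)}$. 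The alternative case, in which the algorithm happens to terminate before $\mathcal{C}(t)$ ever reaches the target threshold $k^{1+C\eps}$, only makes the runtime bound easier, so it suffices to rule out the opposite case via the above contradiction argument. Taking a union bound over the $O(\log n)$ values of the threshold $K$ needed to cover the range of possible termination costs (or arguing once with $K$ taken as the first time $\mathcal{C}(t)$ exceeds $k^{1+C\eps}$) then yields the whp statement.
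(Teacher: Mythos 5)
Your overall strategy---chaining Lemma~\ref{lem:balanced-cost}, Lemma~\ref{lem:expansion-vertex-approx-algo} and Lemma~\ref{lem:meeting-vertex-approx-algo} to show that once the total cost reaches roughly $n^{(\tau-2)/(\tau-1)}$ both sides must have expanded $\Omega(\log n)$ vertices of weight about $n^{(\tau-2)/(\tau-1)+\eps}$, whence the trees have met---is exactly the paper's argument, and the correctness part is handled identically. However, there is one genuine gap: your contradiction is set up at ``some iteration $t^*$ with $\mathcal{C}(t^*)=k^{1+C\eps}$,'' and your fallback (``$K$ taken as the first time $\mathcal{C}(t)$ exceeds $k^{1+C\eps}$'') does not repair it. The cost is a jump process: expanding a single vertex can increase $\mathcal{C}$ by that vertex's entire degree, so the first iteration at which the cost exceeds the threshold may land far above it. This matters twice. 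First, Lemma~\ref{lem:balanced-cost} and Lemma~\ref{lem:expansion-vertex-approx-algo} both carry \emph{upper}-bound hypotheses ($\mathcal{C}(t)\le n^{(\tau-2)/(\tau-1)+O(\eps)}$, resp.\ $\mathcal{C}_{s_i}(t_i)\le k^{1+O(\eps)}$) which you cannot verify if the cost has overshot. Second, even if the trees meet during the very expansion that causes the overshoot, the algorithm still pays the full degree of that vertex, so the runtime bound itself fails without controlling the jump. This is not $\eps$-bookkeeping: it requires a probabilistic argument.

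The paper spends roughly the first half of its proof on precisely this point. It defines $t$ as the first iteration with $\mathcal{C}(t)\ge n^{(\tau-2)/(\tau-1)+4\eps}$ and shows that whp $\mathcal{C}(t)\le n^{(\tau-2)/(\tau-1)+5\eps}$, by bounding the degree of the crossing vertex $v_t$: its weight follows the conditional density \eqref{eq:bfs-new-weight-upper-bound} of Lemma~\ref{lem:bfs-new-weight}, and conditioned on $W_{v_t}=\Omega(n^{(\tau-2)/(\tau-1)+4\eps})$ (which the crossing event essentially forces, via Lemma~\ref{lem:degree-concentration}), the probability that $W_{v_t}\ge C^{-1}n^{(\tau-2)/(\tau-1)+5\eps}$ is $\Theta(n^{-(\tau-2)\eps})=o(1)$. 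With the cost thus pinned in the window $[n^{(\tau-2)/(\tau-1)+4\eps},\,n^{(\tau-2)/(\tau-1)+5\eps}]$, the rest of your chain goes through exactly as you describe (the paper additionally splits off the harmless case where the cost never reaches the threshold, as you do). You should add this overshoot-control step; once it is in place your argument is complete and matches the paper's.
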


\begin{proof}
Since $s_0$ and $s_1$ both lie in the giant component, there exists at least one path connecting $s_0$ and $s_1$ and such a path will eventually be found by the algorithm. Any path output by $\textnormal{V-BFS}_{approx}(\mathcal{G},s_0,s_1)$ will have length at most $d(s_0,s_1)+1$ by Lemma~\ref{lem:vertex-approx-correct}. Hence all that remains to prove is that a path is indeed found in the claimed runtime.

Let $\eps>0$ be arbitrarily small. We start by showing that whp there is some iteration $t$ such that $\mathcal{C}(t) \in [n^{(\tau-2)/(\tau-1)+4\eps}, n^{(\tau-2)/(\tau-1)+5\eps}]$. Defining $t$ as the first iteration for which $\mathcal{C}(t) \ge n^{(\tau-2)/(\tau-1)+4\eps}$, this boils down to showing that whp $\mathcal{C}(t) \le n^{(\tau-2)/(\tau-1)+5\eps}$. Denote by $v_t$ the $t$-th vertex that is expanded by the algorithm, which is the vertex that makes the cost cross the threshold $n^{(\tau-2)/(\tau-1)+4\eps}$. By definition of $t$, we know that $\mathcal{C}(t-1)<n^{(\tau-2)/(\tau-1)+4\eps}$ and $\deg(v_t) \ge n^{(\tau-2)/(\tau-1)+4\eps}-\mathcal{C}(t-1)$. Denoting by $\mathcal{F}$ the intersection of these two events, we need to show that
\begin{align*}
    \pr(\deg(v_t) \ge n^{(\tau-2)/(\tau-1)+5\eps}-\mathcal{C}(t-1) \mid \mathcal{F}) = o(1). 
\end{align*}
On $\mathcal{F}$, we have $n^{(\tau-2)/(\tau-1)+5\eps}-\mathcal{C}(t-1) \ge n^{(\tau-2)/(\tau-1)+5\eps}/2$, hence it suffices to show that
\begin{align*}
    \pr(\deg(v_t) \ge n^{(\tau-2)/(\tau-1)+5\eps}/2 \mid \mathcal{F}) = o(1).
\end{align*}
By Lemma \ref{lem:degree-concentration}, it is actually enough to prove that
\begin{align*}
    \pr(W_{v_t} \ge C^{-1}n^{(\tau-2)/(\tau-1)+5\eps} \mid \mathcal{F}) = o(1),
\end{align*}
for some large constant $C$. Note that by Lemma \ref{lem:bfs-new-weight}, $W_{v_t}$ follows the distribution given by \eqref{eq:bfs-new-weight-upper-bound}, with the extra conditioning on $\mathcal{F}$. This distribution can be stochastically dominated by replacing the conditioning on $\mathcal{F}$ with a conditioning on the event $\mathcal{F}' \coloneqq \{W_{v_t} \ge \Omega(n^{(\tau-2)/(\tau-1)+4\eps})\}$ (where we also used Lemma \ref{lem:degree-concentration} again). Using equation \eqref{eq:bfs-new-weight-upper-bound} and the definition of conditional probability, we easily get
\begin{align*}
    \pr(W_{v_t} \ge C^{-1}n^{(\tau-2)/(\tau-1)+5\eps} \mid \mathcal{F}') = \Theta(n^{-(\tau-2)\eps}) = o(1),
\end{align*}
which shows that indeed whp if the cost passes the threshold, then it does not exceed $n^{(\tau-2)/(\tau-1)+5\eps}$. If the cost never passes the threshold, then the proof is complete. Hence we will assume for the rest of the proof that indeed there is an iteration $t$ such that $\mathcal{C}(t) \in [n^{(\tau-2)/(\tau-1)+4\eps}, n^{(\tau-2)/(\tau-1)+5\eps}]$.

Let $t=t_0+t_1$ be this iteration. By symmetry, let us assume without loss of generality that $\mathcal{C}_{s_1}(t_1) \le \mathcal{C}_{s_0}(t_0)$. Clearly, since $\mathcal{C}(t) = \mathcal{C}_{s_0}(t_0) + \mathcal{C}_{s_1}(t_1)$, this implies that $\mathcal{C}_{s_0}(t_0) \ge \mathcal{C}(t)/2 \ge n^{(\tau-2)/(\tau-1)+3\eps}$. By Lemma \ref{lem:balanced-cost}, we know that whp $\mathcal{C}_{s_0}(t_0) \le t^{\eps}\mathcal{C}_{s_1}(t_1)$. Note that each iteration corresponds to one vertex expansion, and each vertex is expanded at most once by the algorithm, so $t\le |\mathcal{V}|=n$.

Hence we have $\mathcal{C}_{s_0}(t_0) \le n^{\eps}\mathcal{C}_{s_1}(t_1)$, or equivalently $\mathcal{C}_{s_1}(t_1) \ge n^{-\eps}\mathcal{C}_{s_0}(t_0) \ge n^{(\tau-2)/(\tau-1)+2\eps}$. Since $\mathcal{C}_{s_i}(t_i) \le \mathcal{C}(t) \le n^{(\tau-2)/(\tau-1)+5\eps}$ for all $i\in\{0,1\}$, and for small $\eps>0$ we have
\[
n^{(\tau-2)/(\tau-1)+2\eps}
\ge  \big(n^{(\tau-2)/(\tau-1)+\eps}\big)^{1+\eps},
\]
the conditions of Lemma \ref{lem:expansion-vertex-approx-algo} are fulfilled with $k \coloneqq n^{\frac{\tau-2}{\tau-1}+\eps}$ for both $i=0$ and $i=1$, and hence whp we have expanded at least $\log k = \Omega(\log n)$ vertices of weight in $[n^{\frac{\tau-2}{\tau-1}+\eps},2n^{\frac{\tau-2}{\tau-1}+\eps}]$ on both sides. By Lemma \ref{lem:meeting-vertex-approx-algo}, this implies that whp $S_{s_0}(t_0) \cap S_{s_1}(t_1) \neq \emptyset$. In other words, the condition ``$\Gamma(v) \cap S_{\overline{s}} \neq \emptyset$" on line 8 of $\textnormal{V-BFS}_{approx}(\mathcal{G},s_0,s_1)$ has been met whp, in which case the algorithm then terminates. We have thus shown that the runtime is whp at most $n^{(\tau-2)/(\tau-1)+5\eps}$, and since $\eps>0$ can be chosen arbitrarily small this concludes the proof.
\end{proof}

\subsubsection{Exact Vertex-Balanced BBFS}

The aim of this subsection is to complete the proof of our second main result, the runtime bound for the exact vertex-balanced algorithm, which is the content of Theorem~\ref{thm:vertex-exact}. Up to the first time when a path between $s_0$ and $s_1$ is discovered, the exact and approximate vertex-balanced algorithms behave precisely in the same way, so we can focus on lines~14-25 of Algorithm \ref{algo:vertex-exact} (since lines~9-13 only increase the runtime by an additive constant). We distinguish two cases, depending on the weight of the vertices in the queues $Q_{s_0}^-(t), Q_{s_1}^-(t)$ once the algorithm enters line~14.
If both of them contain a vertex of weight at least $n^{1/2+o(1)}$, then whp we will find an edge between the queues in time less than $n^{1/2+o(1)}$. This case is treated in Lemma~\ref{lem:meeting-vertex-exact-algo}, building on Lemmata \ref{lem:degree-sum<max-weight} and \ref{lem:expansion-vertex-exact-algo}. Otherwise, at least one of the queues does not have any vertex of weight larger than $n^{1/2+o(1)}$. In that case, we show in Proposition \ref{prop:small-weight-implies-small-layer} that the algorithm's choice of the smallest remaining queue (on line~14) coincides with the queue containing no weight larger than $n^{1/2+o(1)}$, and Lemma \ref{lem:degree-sum<max-weight} then shows that this queue can be emptied in time $n^{1/2+o(1)}$. We start by showing that, in a queue containing vertices of weight at most $n^{1/2+o(1)}$, the sum of the degrees of the vertices in this queue is also upper bounded by $n^{1/2+o(1)}$. This is again a consequence of the heavy-tail distribution of the weights discovered during the BFS.

\begin{lemma}\label{lem:degree-sum<max-weight}
     Let $\mathcal{G}$ be a Chung-Lu graph on $n$ vertices. Consider a set $Q$ of vertices whose weights follow the conditional distribution given by equation \eqref{eq:bfs-new-weight}. Let $\eps>0$ be arbitrarily small and suppose that, for all $v\in Q$ and their corresponding parent vertex $u\in\mathcal{V}$, we have $W_v, W_u \le O(n^{1/2+\eps})$. Then with high probability
    \begin{align*}
        \sum_{v\in Q} \deg(v) \le n^{1/2+O(\eps)}.
    \end{align*}   

\end{lemma}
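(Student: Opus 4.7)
The plan is to combine three ingredients: Lemma~\ref{lem:degree-concentration} to pass from degrees to weights, a bound on $|Q|$ derived from the max-weight hypothesis, and concentration of power-law sums via Omelchenko's theorem.

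First, I would apply Lemma~\ref{lem:degree-concentration}, which gives whp that $\deg(v) = O(W_v + \log^2 n)$ uniformly in $v \in \mathcal{V}$, so that
\begin{align*}
    \sum_{v \in Q} \deg(v) \le O\Bigl(\sum_{v \in Q} W_v + |Q| \log^2 n\Bigr).
\end{align*}
It therefore suffices to show $|Q| \le n^{(\tau-2)/2 + O(\eps)}$ and $\sum_{v \in Q} W_v \le n^{1/2 + O(\eps)}$ whp, since $(\tau-2)/2 < 1/2$ will absorb the $|Q| \log^2 n$ term.

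Second, I would bound $|Q|$ by contradiction. Assume $|Q| \ge n^{(\tau-2)/2 + C\eps}$ for some large constant $C$. Using the conditional density in \eqref{eq:bfs-new-weight} together with the hypothesis $W_u = O(n^{1/2+\eps})$, one verifies that each $W_v$ has tail probability $\pr(W_v \ge n^{1/2+\eps}) \ge \Omega(n^{(2-\tau)/2 - O(\eps)})$: the relevant regime here is the second case of \eqref{eq:bfs-new-weight}, where the density is $\Theta(w^{-\tau} n / W_u)$ and $n/W_u \ge n^{1/2 - \eps}$ for the assumed $W_u$. Mimicking the lower-bound argument in Lemma~\ref{lem:max-weight-shifted-distribution}, this gives $\pr(\max_{v\in Q} W_v < n^{1/2+\eps}) \le \exp(-\Omega(|Q|\,n^{(2-\tau)/2 - O(\eps)})) = o(1)$ whenever $C$ is chosen sufficiently large, contradicting the hypothesis $\max_{v \in Q} W_v \le O(n^{1/2+\eps})$. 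Hence $|Q| \le n^{(\tau-2)/2 + O(\eps)}$ whp.

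Third, I would bound $\sum_{v \in Q} W_v$. By \eqref{eq:bfs-new-weight-upper-bound}, each $W_v$ is stochastically dominated by a power-law of exponent $\tau - 1 \in (1,2)$, so I can invoke Theorem~1 of \cite{omelchenko2019concentration} exactly as in the proof of Lemma~\ref{lem:cost-expanded-relation} to obtain $\sum_{v \in Q} W_v \le |Q|^{1/(\tau-2)+o(1)}$ whp. Plugging in the bound $|Q| \le n^{(\tau-2)/2 + O(\eps)}$ yields $\sum_{v \in Q} W_v \le n^{1/2 + O(\eps)}$ whp, and combining everything gives the claim.

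The main obstacle will be the second step: the density \eqref{eq:bfs-new-weight} has different regimes depending on the product $w \cdot W_u$, and one must be careful to track which regime governs the tail integral when $W_u$ is close to the maximum allowed $n^{1/2+\eps}$, so that the $O(\eps)$ slack in the exponents absorbs all case distinctions. A related subtlety is that the lower-bound step on $\max_{v \in Q} W_v$ formally needs (at least) pairwise negative correlation of the events $\{W_v \ge n^{1/2+\eps}\}$ across $v \in Q$; this is inherited from the conditionally independent sampling in the BFS once one fixes the parent vertices, and should be verified in the same spirit as the arguments preceding Lemma~\ref{lem:max-weight-shifted-distribution}.
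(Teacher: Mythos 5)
Your proposal is correct and relies on the same probabilistic ingredients as the paper's proof (Lemma~\ref{lem:degree-concentration} to pass from degrees to weights, the conditional density~\eqref{eq:bfs-new-weight}, Omelchenko's concentration bound for sums of power-law variables with exponent $\tau-1\in(1,2)$, and a lower bound on the tail of that density), but it assembles them in a different order. The paper never bounds $|Q|$ explicitly: it sandwiches the weights in $Q$ between i.i.d.\ power-laws of exponents $\tau-1$ and $\tau-1+4\eps$ (using $n/W_u \ge \Omega(n^{1/2-\eps}) > W_v^{1-4\eps}$ to collapse the case distinction in~\eqref{eq:bfs-new-weight} that worries you), deduces $\sum_{v\in Q}W_v \le |Q|^{O(\eps)}\cdot\max_{v\in Q}W_v$ by playing the Omelchenko upper bound on the sum against the lower bound on the maximum, and then finishes with the hypothesis $\max_{v\in Q}W_v\le O(n^{1/2+\eps})$ together with the trivial bound $|Q|\le n$. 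You instead invert the max--size relation to extract $|Q|\le n^{(\tau-2)/2+O(\eps)}$ by contradiction and then feed that into Omelchenko; this is logically near-equivalent, since both routes ultimately rest on the fact that for these heavy tails the sum and the maximum of $N$ samples are both $N^{1/(\tau-2)\pm o(1)}$. What the paper's ordering buys is that the intermediate bound on $|Q|$ (and the associated independence/negative-correlation bookkeeping for the event $\{\max_{v\in Q}W_v < n^{1/2+\eps}\}$ that you correctly flag as the delicate point) is packaged once and for all inside Lemma~\ref{lem:max-weight-shifted-distribution}; your version re-derives it inline, which is fine but duplicates that verification. Your regime analysis of~\eqref{eq:bfs-new-weight} for the tail lower bound is sound: in both relevant cases the tail at $n^{1/2+\eps}$ is $\Omega(n^{(2-\tau)/2-O(\eps)})$, and the $O(\eps)$ slack absorbs the difference.
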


\begin{proof}
By Lemma \ref{lem:degree-concentration}, we know that whp
\begin{align}\label{eq:sum-degree<=sum-weight}
    \sum_{v\in Q} \deg(v) 
    \le \sum_{v\in Q} O(W_v + \log^2 n) 
    \le O\Big(\sum_{v\in Q} W_v\Big) + O(|Q|\log^2 n)
    \le O\Big(\log^2 n\sum_{v\in Q} W_v\Big),
\end{align}
where the last inequality holds because $|Q| = \sum_{v\in Q} 1 \le \sum_{v\in Q} W_v$. Observe that, since for each $v\in Q$ with parent vertex $u$ it holds that $W_u \le O(n^{1/2+\eps}) = o(n)$, we are never in the last case of equation \eqref{eq:bfs-new-weight}. Hence, for each such $v\in Q$ we have
\begin{align*}
    f_{W_v}(w) = \Theta(w^{-\tau}\cdot\min\{w, \tfrac{n}{W_u}\}).
\end{align*}
Since by assumption $W_u, W_v \le O(n^{1/2+\eps})$, we also have
\[
\frac{n}{W_u} \ge \Omega(n^{1/2-\eps}) > n^{(1/2+\eps)(1-4\eps)+\eps^2} > W_v^{1-4\eps},
\]
and hence the sequence $(W_v)_{v\in Q}$ can be stochastically sandwiched by two sequences of i.i.d.\ random variables following power-laws with respective exponents $\tau-1$ and $\tau-1+4\eps$. Since $\tau<3$ and $\eps$ is small, we have $\tau-2+4\eps<1$. Hence, using the same reasoning that was used to derive the inequality \eqref{eq:sum-weight<=max-weight} in the proof of Lemma \ref{lem:cost-expanded-relation}, we see that whp
\begin{align*}
    \sum_{v\in Q} W_v \le |Q|^{O(\eps)} \cdot \max_{v\in Q} W_v \le n^{1/2+O(\eps)},
\end{align*}
where the second inequality comes from the assumption $\max_{v\in Q} W_v \le O(n^{1/2+\eps})$ and the trivial inequality $|Q|\le |\mathcal{V}| = n$. Plugging the above equation in \eqref{eq:sum-degree<=sum-weight} concludes the proof.
\end{proof}

The next lemma guarantees that, if the maximal weight in the set $Q_{s_i}^- (t)$ on a side $s_i$ exceeds $n^{1/2+\varepsilon}$, then at least order $\log n$ many vertices of weight $\Theta(n^{1/2}$) have been added to the queue beforehand. The key insight needed for the proof is that vertices of weight of order $n^{1/2}$ are polynomially more abundant than vertices of weight of order $n^{1/2+\delta}$, for any $\delta>0$.

\begin{lemma}\label{lem:expansion-vertex-exact-algo}
   Let $\mathcal{G}$ be a Chung-Lu graph on $n$ vertices and let $s_0$ and $s_1$ be two vertices of $\mathcal{G}$ chosen uniformly at random. Consider the algorithm $\textnormal{V-BFS}_{exact}(\mathcal{G},s_0,s_1)$ and suppose that the condition ``$\Gamma(v) \cap S_{\overline{s}} \neq \emptyset$" on line 8 is satisfied for the first time at iteration $t$. Assume that, for some arbitrarily small constant $\eps >0$, we have $\max_{v\in Q_{s_i}^-(t)} W_v \ge n^{1/2+\eps}$ for all $i\in\{0,1\}$. Then, for $i \in \{0,1\}$, the set $Q_{s_i}(t) \cup Q_{s_i}^-(t)$ with high probability contains $\Omega(\log n)$ vertices of weight in $[n^{1/2}, 2n^{1/2}]$. Furthermore, for any $\delta>0$, with high probability the following holds: Over the course of the run of the algorithm, at least $\Omega(\log n)$ such vertices were added to $Q_{s_i}(t) \cup Q_{s_i}^-(t)$ before any vertex of weight $\Omega(n^{1/2+\delta})$ was added.
\end{lemma}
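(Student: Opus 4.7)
The plan is to track the sequence of vertices added to $Q_{s_i}(t) \cup Q^-_{s_i}(t)$ over the course of the algorithm and, at each addition, compare the conditional probability that the new vertex has ``medium'' weight in $[n^{1/2},2n^{1/2}]$ versus ``high'' weight at least $n^{1/2+\delta}$. The key is that, by Lemma~\ref{lem:bfs-new-weight}, the former is polynomially more likely than the latter, so many medium additions must precede the first high one. First, Part~1 follows from Part~2 applied with $\delta=\eps$: the assumption $\max_{v\in Q_{s_i}^-(t)} W_v \geq n^{1/2+\eps}$ guarantees that some vertex of weight $\geq n^{1/2+\delta}$ is added to $Q_{s_i}\cup Q^-_{s_i}$ by iteration $t$, and Part~2 then produces $\Omega(\log n)$ medium-weight vertices added even earlier, hence contained in $Q_{s_i}(t)\cup Q^-_{s_i}(t)$. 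So I focus on Part~2.

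Fix $\delta>0$ smaller than $\min\{\eps,(3-\tau)/10\}$ and define, on side $s_i$, two stopping times: $T^\delta_i$, the first iteration at which a vertex of weight $\geq n^{1/2+\delta}$ is added, and $S_i$, the first iteration with $\mathcal{C}_{s_i}(S_i) > n^{1/2-3\delta}$. For every iteration $t_i<S_i$ the cost is $O(n^{1/2-3\delta})$, so by Corollary~\ref{cor:weight-concentration} every expanded vertex---and in particular the parent $u$ of any new addition---has weight $O(n^{1/2-3\delta}+\log^2 n)\leq n^{1/2-\delta}/2$, which puts us in Case~1 of equation~\eqref{eq:bfs-new-weight} for any $w\leq 2n^{1/2}$; the density condition $w\cdot\mathcal{C}_{s_i}(t_i)=O(n^{1-3\delta})$ is comfortably satisfied. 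Integrating yields conditional probability $p_1=\Theta(n^{-(\tau-2)/2})$ of a medium addition, while \eqref{eq:bfs-new-weight-upper-bound} bounds the conditional probability of a high addition by $p_2 = O(n^{-(\tau-2)(1/2+\delta)})$, so the crucial ratio is $p_2/p_1 = O(n^{-(\tau-2)\delta})$.

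I then split into two cases. In \emph{Case~A} ($T^\delta_i\leq S_i$), the ratio bound holds at every addition up to $T^\delta_i$; restricting to the sub-sequence of medium-or-high additions, each such event is high rather than medium with conditional probability at most $p_2/(p_1+p_2)=O(n^{-(\tau-2)\delta})$, and a conditional moment generating function argument analogous to the one at the end of the proof of Lemma~\ref{lem:expansion-vertex-approx-algo} shows that the first $\log n$ such events are all medium with probability $1-o(1)$. In \emph{Case~B} ($T^\delta_i>S_i$), it suffices to produce $\Omega(\log n)$ medium-weight additions by iteration $S_i-1$, since these all automatically precede $T^\delta_i$; reasoning as in the proof of Lemma~\ref{lem:expansion-vertex-approx-algo} (using the power-law concentration of Theorem~1 in \cite{omelchenko2019concentration} to argue that $S_i\geq n^{(\tau-2)(1/2-3\delta)-o(1)}$ whp, in analogy with the bound $T_k\geq k^{\tau-2}\log^2 k$), the expected count of medium additions before $S_i$ is $n^{\Omega(1)}\gg\log n$, and the same moment generating function argument gives concentration.

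The main obstacle is the bootstrapping needed to apply Lemma~\ref{lem:bfs-new-weight}: the full density formula \eqref{eq:bfs-new-weight}, which is what supplies the lower bound on $p_1$, requires $w\cdot\mathcal{C}_{s_i}(t_i)=O(n^{1-\eps})$, and for $w=2n^{1/2}$ this fails as soon as the cost reaches $n^{1/2}$. Introducing the stopping time $S_i$ at the safely smaller threshold $n^{1/2-3\delta}$ and then splitting into Cases~A and B is exactly what sidesteps this issue: we never need the lower density bound in a regime where it breaks, because either the first high addition has already happened (Case~A, where we conclude by the ratio bound) or the cost itself has grown only through small-weight additions, which are plentiful enough that the concentration of medium additions follows (Case~B).
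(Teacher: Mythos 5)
Your core idea---compare, at each discovery, the conditional probability of a weight in $[n^{1/2},2n^{1/2}]$ against that of a weight $\Omega(n^{1/2+\delta})$ via Lemma~\ref{lem:bfs-new-weight}, note that the ratio is $O(n^{-(\tau-2)\delta})$, and conclude by concentration---is exactly the paper's argument. But the stopping time $S_i$ and the Case~A/Case~B split rest on a misconception, and Case~B as written does not prove the right statement. The obstacle you describe (``the condition $w\cdot\mathcal{C}_{s_i}(t_i)=O(n^{1-\eps})$ fails as soon as the cost reaches $n^{1/2}$'') never arises in this lemma: every vertex of $Q_{s_i}(t)\cup Q_{s_i}^-(t)$ is discovered by iteration $t$, and up to iteration $t$ the exact algorithm coincides with $\textnormal{V-BFS}_{approx}$, so Theorem~\ref{thm:vertex-approx} gives $\mathcal{C}(t)\le n^{(\tau-2)/(\tau-1)+o(1)}$. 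Since $(\tau-2)/(\tau-1)<1/2$, we get $w\cdot\mathcal{C}_{s_i}(t_i)\le 2n^{1/2}\cdot n^{(\tau-2)/(\tau-1)+o(1)}=O(n^{1-\Omega(1)})$ for every relevant addition and every $w\le 2n^{1/2}$, so the two-sided density \eqref{eq:bfs-new-weight} applies throughout and no case distinction is needed. This single observation is what the paper's (very short) proof leans on and what your write-up misses.

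The concrete gap is in Case~B: you produce $\Omega(\log n)$ medium-weight additions ``by iteration $S_i-1$'', but the lemma requires these vertices to lie in $Q_{s_i}(t)\cup Q_{s_i}^-(t)$, i.e.\ to be discovered by iteration $t$ and in a layer no later than that of $Q_{s_i}^-(t)$. Because $\mathcal{C}_{s_i}(t)=n^{(\tau-2)/(\tau-1)+o(1)}\ll n^{1/2-3\delta}$, whp $t<S_i$, so additions between iterations $t$ and $S_i$ do not count (indeed, in $\textnormal{V-BFS}_{exact}$ no vertices are appended to the queues after iteration $t$ at all, so $S_i$ is never even reached by the addition process). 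The case is salvageable---if no vertex of weight $\Omega(n^{1/2+\delta})$ is discovered by iteration $t$ the ``before'' clause is vacuous and Part~1 suffices, and Part~1 only needs your Case~A with $\delta=\eps$, which does apply since the assumed vertex of weight $\ge n^{1/2+\eps}$ in $Q_{s_i}^-(t)$ forces $T_i^{\eps}\le t<S_i$---but as written the Case~B argument addresses the wrong time window. The remaining ingredients (the ratio bound, the implicit layer argument showing that earlier-discovered vertices indeed land in $Q_{s_i}(t)\cup Q_{s_i}^-(t)$, and the Chernoff conclusion) match the paper.
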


\begin{proof}
By symmetry and a union bound, it suffices to prove the statement for $i=0$. So consider the sequence $(W_v)_{v\in (Q_{s_0}(t) \cup Q_{s_0}^-(t))\setminus\{s_0\}}$, where the order is the one in which the corresponding vertices were added to the queue.
We start by noting that, until the condition ``$\Gamma(v) \cap S_{\overline{s}} \neq \emptyset$" is satisfied, i.e.\ until iteration $t$, the algorithms $\textnormal{V-BFS}_{exact}(\mathcal{G},s_0,s_1)$ and $\textnormal{V-BFS}_{approx}(\mathcal{G},s_0,s_1)$ behave exactly in the same way, and in particular accumulate the same cost. Hence, using Theorem \ref{thm:vertex-approx}, we know that $\mathcal{C}(t) \le n^{(\tau-2)/(\tau-1)+o(1)} \le O(n^{1-\eps})$. In particular, $\mathcal{C}_{s_0}(t) = O(n^{1-\eps})$, so by Lemma \ref{lem:bfs-new-weight} we know the density of the weights $W_v$ for $v\in (Q_{s_0}(t) \cup Q_{s_0}^-(t))\setminus\{s_0\}$ satisfy equation \eqref{eq:bfs-new-weight-upper-bound}, i.e.\ they are stochastically dominated by a sequence of i.i.d.\ random variables following a power-laws with exponent $\tau-1$. In particular, the probability of adding a weight in $[n^{1/2}, 2n^{1/2}]$ is $n^{\Omega(\eps)}$ times more likely than adding a weight of order $\Omega(n^{1/2+\eps})$. The statement now follows using Chernoff bounds.
\end{proof}

The next lemma ensures that the algorithm terminates with cost at most $n^{1/2+O(\delta)}$ once the conditions guaranteed by the previous lemma are met. This will settle the first case in the proof of Theorem~\ref{thm:vertex-exact}.

\begin{lemma}\label{lem:meeting-vertex-exact-algo}
    Let $\mathcal{G}$ be a Chung-Lu graph on $n$ vertices and let $s_0$ and $s_1$ be two vertices of $\mathcal{G}$ chosen uniformly at random, and assume that $s_0$ and $s_1$ are in the giant component of $\mathcal{G}$. Consider the algorithm $\textnormal{V-BFS}_{exact}(\mathcal{G},s_0,s_1)$ and let $t$ be the first iteration for which the condition ``$\Gamma(v) \cap S_{\overline{s}} \neq \emptyset$" on line 8 is fulfilled. Assume further that each of the sets $Q_{s_i}(t) \cup Q_{s_i}^-(t)$, $i\in\{0,1\}$, contains $\Omega(\log n)$ vertices of weight in $[n^{1/2}, 2n^{1/2}]$ that will be or have been expanded before any vertex of weight $\Omega(n^{1/2+\delta})$, for some arbitrarily small $\delta>0$. Then with high probability the algorithm terminates with total cost at most $n^{1/2+O(\delta)}$.  
\end{lemma}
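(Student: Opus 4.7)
The plan is to split the total cost into (i) the cost $\mathcal{C}(t)$ incurred up to and including iteration $t$, and (ii) the cost incurred inside the inner while-loop on lines~15--24 of Algorithm~\ref{algo:vertex-exact} (the ``layer-emptying'' phase). For (i), since $\textnormal{V-BFS}_{exact}$ and $\textnormal{V-BFS}_{approx}$ behave identically up to iteration $t$, Theorem~\ref{thm:vertex-approx} gives $\mathcal{C}(t)\le n^{(\tau-2)/(\tau-1)+o(1)}$ whp; as $(\tau-2)/(\tau-1)<1/2$, this is absorbed by $n^{1/2+O(\delta)}$, so it suffices to bound the layer-emptying cost by $n^{1/2+O(\delta)}$ whp (if the check on line~9 already triggers, the algorithm returns at iteration $t$ with cost $\mathcal{C}(t)$, which is also within the target).

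Next I would observe that the $\Omega(\log n)$ vertices of weight in $[n^{1/2},2n^{1/2}]$ promised by the hypothesis must lie in $Q_{s_i}^-(t)$ rather than in $Q_{s_i}(t)$. Indeed, by Lemma~\ref{lem:degree-concentration} each such vertex has degree $\Theta(n^{1/2})$ whp, so a single one being in $Q_{s_i}(t)$ would force $\mathcal{C}_{s_i}(t)=\Omega(n^{1/2})$, contradicting $\mathcal{C}(t)\le n^{(\tau-2)/(\tau-1)+o(1)}$. Let $U_i\subseteq Q_{s_i}^-(t)$ denote the set of those $\Omega(\log n)$ vertices which precede, in pop-order within $Q_{s_i}^-(t)$, every vertex of weight $\Omega(n^{1/2+\delta})$. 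For fixed $v\in U_p$ and $v''\in U_{\overline{p}}$ the Chung-Lu model gives $\Pr[vv''\in\mathcal{E}\mid W_v,W_{v''}]=\Theta(\min\{W_vW_{v''}/n,1\})=\Theta(1)$, and the relevant edge variables are mutually independent across $(v,v'')\in U_p\times U_{\overline{p}}$. Hence $\Pr[v\text{ has no neighbor in }U_{\overline{p}}]\le (1-\Theta(1))^{|U_{\overline{p}}|}=n^{-\Omega(1)}$, and a union bound over $v\in U_p$ shows that whp \emph{every} $v\in U_p$ has a neighbor in $U_{\overline{p}}\subseteq Q_{\overline{p}}^-(t)$. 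Consequently, the first vertex of $U_p$ popped by the while-loop triggers the condition on line~17, and the algorithm terminates having popped only vertices that precede the first vertex of weight $\Omega(n^{1/2+\delta})$ in $Q_p^-(t)$.

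Finally, I would bound the cost via Lemma~\ref{lem:degree-sum<max-weight}. Let $V^*$ denote the set of vertices popped before the while-loop terminates. By construction $W_v\le O(n^{1/2+\delta})$ for every $v\in V^*$, while the parent of any $v\in Q_{s_p}^-(t)$ lies in $Q_{s_p}(t)$ and hence has weight $O(n^{(\tau-2)/(\tau-1)+o(1)})\le O(n^{1/2+\delta})$ by Corollary~\ref{cor:weight-concentration} (combined with the degree bound imposed by $\mathcal{C}(t)$). Since $w\cdot\mathcal{C}(t)\le n^{1/2+\delta+(\tau-2)/(\tau-1)+o(1)}=O(n^{1-\eps})$ for a fixed $\eps>0$ whenever $\delta$ is small enough (using $(\tau-2)/(\tau-1)<1/2$), Lemma~\ref{lem:bfs-new-weight} ensures that the weights of vertices in $V^*$ follow the conditional density~\eqref{eq:bfs-new-weight}. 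Applying Lemma~\ref{lem:degree-sum<max-weight} with $\eps:=\delta$ then yields $\sum_{v\in V^*}\deg(v)\le n^{1/2+O(\delta)}$ whp, which is exactly the cost of the layer-emptying phase. Combined with the bound on $\mathcal{C}(t)$, this gives the claimed total cost of $n^{1/2+O(\delta)}$. The main technical obstacle I anticipate is verifying that the conditioning on the event ``$t$ is the first iteration at which the condition on line~8 is satisfied'' does not invalidate the density estimates of Lemma~\ref{lem:bfs-new-weight}; this should follow because the conditioning event is contained in the bounded-cost BFS event on which those estimates are already established.
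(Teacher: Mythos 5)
Your proposal is correct and follows essentially the same route as the paper's proof: bound $\mathcal{C}(t)$ via Theorem~\ref{thm:vertex-approx}, use the $\Omega(\log n)$ weight-$\Theta(n^{1/2})$ pairs (each adjacent with constant probability, independently) to show the line-17 condition triggers before any weight-$\Omega(n^{1/2+\delta})$ vertex is expanded, and then invoke Lemma~\ref{lem:degree-sum<max-weight} for the layer-emptying cost. Your explicit observation that these vertices must still sit in $Q_{s_i}^-(t)$ (since expanding one would already cost $\Omega(n^{1/2})$) is a small but welcome sharpening of a point the paper leaves implicit.
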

\begin{proof}
By Theorem \ref{thm:vertex-approx}, we know that the condition ``$\Gamma(v) \cap S_{\overline{s}} \neq \emptyset$" on line 8 is met within cost at most $n^{(\tau-2)/(\tau-1)+o(1)}$ whp, so we can safely ignore this initial cost and assume that we are in the inner while-loop of Algorithm \ref{algo:vertex-exact} (note that $\frac{\tau-2}{\tau-1}<\frac{1}{2}$ since $\tau<3$).
The proof then consists of two parts. First, we show that whp the condition ``$\Gamma(v') \cap Q^-_{\overline{p}} \neq \emptyset$" on line 17 has been fulfilled under the given assumptions. Then we show that this generates a total cost bounded above by $n^{1/2+O(\delta)}$.

For the first part, note that the condition ``$\Gamma(v') \cap Q^-_{\overline{p}} \neq \emptyset$" is satisfied in particular when a vertex in $Q_{s_0}(t) \cup Q_{s_0}^-(t)$ is connected to a vertex in $Q_{s_1}(t) \cup Q_{s_1}^-(t)$. After iteration $t$, Algorithm \ref{algo:vertex-exact} enters its inner while-loop and expands all vertices in $Q^-_p(t)$ for some $p\in\{s_0,s_1\}$. By symmetry, we may assume that $p=s_0$. By assumption, there are $\Omega(\log n)$ disjoint pairs $(u_0,u_1)\in (Q_{s_0}(t) \cup Q_{s_0}^-(t)) \times (Q_{s_1}(t) \cup Q_{s_1}^-(t))$ satisfying $W_{u_0}, W_{u_1} \in [n^{1/2}, 2n^{1/2}]$.
Each such vertex pair is connected independently with probability
\[
\Pr(u_0u_1\in\mathcal{E} \mid W_{u_0}, W_{u_1}) =  \Theta\Big(\min\Big\{\frac{W_{u_0}W_{u_1}}{n}, 1\Big\}\Big) = \Theta\Big(\min\Big\{\frac{n^{\frac{1}{2}}\cdot n^{\frac{1}{2}}}{n}, 1\Big\}\Big) \ge c,
\]
where the last inequality holds for some constant $c>0$. Therefore, we can upper-bound the probability that the algorithm does not terminate before expanding these $\Omega(\log n)$ vertices on the $s_0$-side by the probability that none of these vertex pairs is connected, that is, by $(1-c)^{\Omega (\log n)} = o(1)$. 

For the second part, note that by assumption together with the above reasoning, we know that the algorithm has terminated before expanding any vertex of weight $\Omega(n^{1/2+\delta})$. Moreover, by Theorem \ref{thm:vertex-approx}, the cost up to iteration satisfies whp $\mathcal{C}(t) \le n^{(\tau-2)/(\tau-1)+\delta} = o(n^{1/2})$, and in particular $\mathcal{C}_{p}(t) \le n^{(\tau-2)/(\tau-1)+\delta}$. Therefore, by Lemma \ref{lem:bfs-new-weight}, all vertices expanded by the algorithm after iteration $t$ follow the distribution given by equation \eqref{eq:bfs-new-weight}. Hence, since $\delta$ is small, we can use Lemma \ref{lem:degree-sum<max-weight} to bound the additional cost (after iteration $t$) by $n^{1/2+O(\delta)}$, which concludes the proof.
\end{proof}

We turn to the second case, assuming therefore that $\max_{v\in Q_{s_i}^-(t)} W_v \le n^{1/2+o(1)}$ for some $i\in\{0,1\}$. This second scenario implies, as we demonstrate below, that the above bound on the maximal weight also holds for the queue which will be emptied by the algorithm, i.e.\ for side $p \coloneqq \argmin_{s_0,s_1}\{|Q_{s_0}^-(t)|, |Q_{s_1}^-(t)|\}$. To establish this, we use the tight relation that ties the largest weight in the queue to the number of vertices in that queue (see Lemma \ref{lem:max-weight-shifted-distribution}).

\begin{proposition}\label{prop:small-weight-implies-small-layer}
     Let $\mathcal{G}$ be a Chung-Lu graph on $n$ vertices and let $s_0$ and $s_1$ be two vertices of $\mathcal{G}$ chosen uniformly at random. Consider the algorithm $\textnormal{V-BFS}_{exact}(\mathcal{G},s_0,s_1)$ and the first iteration $t$ at which the condition \textquotedblleft$\Gamma(v) \cap S_{\overline{s}} \neq \emptyset$\textquotedblright on line 8 is satisfied. Let $p \coloneqq \argmin_{s_0,s_1}\{|Q_{s_0}^-(t)|, |Q_{s_1}^-(t)|\}$ as set on line 14 of the algorithm. Then, for any $\eps>0$, we have
     \begin{equation*}
         \mathbb{P}\Big(\big\{\exists i\in \{0,1\}: \max_{v\in Q_{s_i}^-(t)} W_v \le n^{1/2+\eps}\big\} \cap \big\{\max_{v\in Q_{p}^-(t)} W_v > n^{1/2+2\eps}\big\}\Big)=o(1).
     \end{equation*}
\end{proposition}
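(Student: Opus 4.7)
The key idea behind this proof is that Lemma~\ref{lem:max-weight-shifted-distribution} establishes a tight two-sided link between $\max_{v\in Q} W_v$ and $|Q|$ for a BFS-discovered set, roughly $\max_{v\in Q} W_v \approx |Q|^{1/(\tau-2)}$ up to polylogarithmic factors. Since $p$ is chosen as the side with smaller $|Q^-|$, a small maximum weight on one side should translate into a small queue on the $p$-side, and hence a small maximum weight there.

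The plan is as follows. By a union bound over $i \in \{0,1\}$ it suffices to fix $i$ and bound $\pr(\mathcal{A}_i \cap \mathcal{B})$, where $\mathcal{A}_i \coloneqq \{\max_{v\in Q_{s_i}^-(t)} W_v \le n^{1/2+\eps}\}$ and $\mathcal{B} \coloneqq \{\max_{v\in Q_p^-(t)} W_v > n^{1/2+2\eps}\}$. If $p=s_i$ then $\mathcal{A}_i$ and $\mathcal{B}$ are trivially disjoint, so I can restrict attention to the case $p=s_{1-i}$, in which the definition of $p$ on line~14 yields $|Q_{s_{1-i}}^-(t)| \le |Q_{s_i}^-(t)|$. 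Up to iteration $t$ the algorithms $\textnormal{V-BFS}_{exact}$ and $\textnormal{V-BFS}_{approx}$ behave identically, so by Theorem~\ref{thm:vertex-approx} I will condition on the high-probability event $\{\mathcal{C}(t) \le n^{(\tau-2)/(\tau-1)+o(1)}\}$.

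Next I introduce the thresholds $A_1 \coloneqq n^{(\tau-2)(1/2+\eps)} (\log n)^{2(\tau-2)}$ and $A_2 \coloneqq n^{(\tau-2)(1/2+2\eps)}/(\log n)^{\tau-2}$ and observe that $A_2/A_1 = n^{\eps(\tau-2)}/(\log n)^{3(\tau-2)} \to \infty$. I then split on whether $|Q_{s_i}^-(t)| \ge A_1$ or not. In the first sub-case, on $\mathcal{A}_i$ the hypotheses of the lower bound of Lemma~\ref{lem:max-weight-shifted-distribution} are satisfied for $Q_{s_i}^-(t)$ (cost bounded, max weight at most $n^{1/2+\eps}$, and $|Q|\to\infty$ since $|Q|\ge A_1$), so whp $\max_{v\in Q_{s_i}^-} W_v \ge A_1^{1/(\tau-2)}/\log A_1 = n^{1/2+\eps}(\log n)^2/\log A_1 > n^{1/2+\eps}$, contradicting $\mathcal{A}_i$. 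In the second sub-case, $|Q_{s_{1-i}}^-(t)| \le |Q_{s_i}^-(t)| < A_1$, and by Lemma~\ref{lem:bfs-new-weight} each weight in $Q_{s_{1-i}}^-(t)$ has density at most $O(w^{1-\tau})$, so the tail bound $\pr(W>w) \le O(w^{-(\tau-2)})$ together with a union bound over the at most $A_1$ vertices yields
\[
\pr(\mathcal{B} \cap \{|Q_{s_{1-i}}^-| < A_1\}) \le A_1 \cdot O(n^{-(\tau-2)(1/2+2\eps)}) = O\Big(\frac{(\log n)^{2(\tau-2)}}{n^{\eps(\tau-2)}}\Big) = o(1).
\]

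The main technical subtlety lies in carefully verifying the hypotheses of Lemma~\ref{lem:max-weight-shifted-distribution} in the first sub-case. One minor issue is that $Q_{s_i}^-(t)$ formally contains the source $s_i$, whose weight is drawn from the unconditional power-law rather than from the BFS-conditional distribution of equation~\eqref{eq:bfs-new-weight}. I will handle this either by excluding $s_i$ from the set to which the lemma is applied, or by noting that $W_{s_i} = O(\log^2 n)$ whp since $s_i$ is uniformly random, so its contribution is negligible in the asymptotic argument.
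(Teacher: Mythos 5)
Your proposal is correct and rests on the same two pillars as the paper's proof: the minimality of $|Q_p^-(t)|$ among the two queues, and the tight size-versus-maximum-weight relation of Lemma~\ref{lem:max-weight-shifted-distribution} (whose hypotheses you verify via Theorem~\ref{thm:vertex-approx} and Lemma~\ref{lem:bfs-new-weight}, exactly as the paper does). The only differences are cosmetic: where you split on an explicit cardinality threshold $A_1$ and dispose of the small-queue case by a direct union bound over at most $A_1$ vertices with tail $O(w^{-(\tau-2)})$, the paper instead intersects the upper-bound event for the large-weight side with the lower-bound event for the small-weight side and chains the inequalities into a deterministic contradiction; both routes are equally valid (and your unused threshold $A_2$ can simply be dropped).
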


\begin{proof}
Let us start by noticing that if $\max_{v\in Q_{s_i}^-(t)} W_v \le n^{1/2+2\eps}$ for all $i\in\{0,1\}$, then we are trivially done since $p$ is set either to $s_0$ or to $s_1$. Hence we focus on showing that 
\begin{equation}\label{newgoal}
\mathbb{P}\Big(\big\{\max_{v\in Q_{p}^-(t)} W_v > n^{1/2+\varepsilon}\big\} \cap \mathcal{F}\Big)=o(1),
\end{equation}
where we set
\begin{align*}
    \mathcal{F} \coloneqq \big\{\max_{v\in  Q_{s_0}^-(t)} W_v \le n^{1/2+\eps}  \text{ and }  \max_{v\in  Q_{s_1}^-(t)} W_v > n^{1/2+2\eps}\big\}.
\end{align*}
Note that, by symmetry and a union bound, this can be done without loss of generality (we can just swap the roles of $s_0$ and $s_1$). 

First of all, note that the algorithms $\textnormal{V-BFS}_{approx}(\mathcal{G},s_0,s_1)$ and $\textnormal{V-BFS}_{exact}(\mathcal{G},s_0,s_1)$ behave exactly in the same way until the condition \textquotedblleft$\Gamma(v) \cap S_{\overline{s}} \neq \emptyset$\textquotedblright on line 8 is satisfied for the first time. By Theorem \ref{thm:vertex-approx}, this induces a cost of at most $n^{\frac{\tau-2}{\tau-1}+\eps}$ whp, so the conditions of Lemma \ref{lem:bfs-new-weight} are satisfied, which implies by Lemma \ref{lem:max-weight-shifted-distribution} that \eqref{eq:max-weight-shifted-upper-bound} holds whp (as their cardinality goes to infinity) for both $Q_{s_0}^-(t)$ and $Q_{s_1}^-(t)$. Combined with the fact that $\max_{v\in Q_{s_0}^-(t)}W_v \le n^{1/2+\eps}$ on the event $\mathcal{F}$, this also implies that the additional assumption of Lemma \ref{lem:max-weight-shifted-distribution} are satisfied by $Q_{s_0}^-(t)$.

Secondly, note that on the event $\mathcal{F}$ we have $|Q_{s_1}^-(t)|=\omega(1)$ whp. Indeed, using (the conditional version of) Markov's inequality and equation \eqref{eq:bfs-new-weight-upper-bound} from Lemma \ref{lem:bfs-new-weight}, we readily see that
\begin{align*}
   \mathbb{P}( \{|Q_{s_1}^-(t)|=O(1)\} \cap \mathcal{F})
   &\le \mathbb{E}\Big[\mathbbm{1}_{\{|Q_{s_1}^-(t)|=O(1)\}}\mathbb{P}( \max_{v\in Q_{s_1}^-(t)}W_v > n^{1/2+2\varepsilon} \mid |Q_{s_1}^-(t)|)\Big]\\
   &\leq O\Big(\int_{n^{1/2+2\varepsilon}}^{\infty}w^{-\tau+1}dw\Big)\mathbb{E}\Big[\mathbbm{1}_{\{|Q_{s_1}^-(t)|=O(1)\}}\cdot|Q_{s_1}^-(t)|\Big] = o(1).
\end{align*}
Hence we can bound the probability in \eqref{newgoal} by
\begin{align*}
    \mathbb{P}\Big(\big\{\max_{v\in Q_{p}^-(t)} W_v > n^{1/2+\varepsilon}\big\} \cap \mathcal{F}\Big) \le
    \mathbb{P}\Big(\big\{\max_{v\in Q_{p}^-(t)} W_v > n^{1/2+\varepsilon}\big\} \cap \{|Q_{s_1}^-(t)|=\omega(1)\} \cap \mathcal{F}\Big) + o(1).
\end{align*}
Observe that, if $|Q_{s_0}^-(t)| < |Q_{s_1}^-(t)|$, then $p=s_0$ on line 14 of the $\textnormal{V-BFS}_{exact}(\mathcal{G},s_0,s_1)$ algorithm and hence on the event $\mathcal{F}$ we have
\[\max_{v\in Q_{p}^-(t)} W_v =\max_{v\in Q_{s_0}^-(t)} W_v \leq n^{1/2+\eps}.\]
Therefore we obtain
\begin{equation}\label{eq:smaller-layer}
    \mathbb{P}\Big(\big\{\max_{v\in Q_{p}^-(t)} W_v > n^{1/2+\varepsilon}\big\} \cap \mathcal{F}\Big) \le 
    \mathbb{P}\big(\{|Q_{s_0}^-(t)| \geq |Q_{s_1}^-(t)|\} \cap \{|Q_{s_1}^-(t)|=\omega(1)\} \cap \mathcal{F}\big)+o(1).
\end{equation}
Define the events
\begin{align*}
     \mathcal{E}_0&\coloneqq \Big\{\max_{v\in Q_{s_0}^-(t)}W_v \ge |Q_{s_0}^-(t)|^{1/(\tau-2)}/\log|Q_{s_0}^-(t)|\Big\}, \\
     \mathcal{E}_1&\coloneqq \Big\{\max_{v\in Q_{s_1}^-(t)}W_v \le |Q_{s_1}^-(t)|^{1/(\tau-2)}\log|Q_{s_1}^-(t)|\Big\}.
\end{align*}
By equation \eqref{eq:max-weight-shifted-upper-bound} in Lemma \ref{lem:max-weight-shifted-distribution}, we have
\begin{align*}
    \pr(\mathcal{E}_1^c \cap \{|Q_{s_1}^-(t)|=\omega(1)\})=o(1),
\end{align*}
and hence by \eqref{eq:smaller-layer} we have
\begin{align}
\begin{split}\label{eq:control-s1-max-weight}
    &\mathbb{P}\Big(\big\{\max_{v\in Q_{p}^-(t)} W_v > n^{1/2+\varepsilon}\big\} \cap \mathcal{F}\Big) \\
    &\qquad\le \mathbb{P}\big(\{|Q_{s_0}^-(t)| \geq |Q_{s_1}^-(t)|\} \cap \{|Q_{s_1}^-(t)|=\omega(1)\} \cap \mathcal{F} \cap \mathcal{E}_1\big)+o(1).
\end{split}
\end{align}
Recall that $Q_{s_0}^-(t)$ also satisfies the additional assumptions of Lemma \ref{lem:max-weight-shifted-distribution}. Since $Q_{s_0}^-(t) = \omega(1)$ on the event $\{|Q_{s_0}^-(t)| \geq |Q_{s_1}^-(t)|\} \cap \{|Q_{s_1}^-(t)|=\omega(1)\}$, we obtain, using equation \eqref{eq:max-weight-shifted-lower-bound} in Lemma \ref{lem:max-weight-shifted-distribution},
\begin{align*}
    \mathbb{P}\big(\{|Q_{s_0}^-(t)| \geq |Q_{s_1}^-(t)|\} \cap \{|Q_{s_1}^-(t)|=\omega(1)\} \cap \mathcal{F} \cap \mathcal{E}_0^c\big) = o(1),
\end{align*}
and hence going back to \eqref{eq:control-s1-max-weight} we arrive at
\begin{align}\label{eq:control-s0-max-weight}
    \mathbb{P}\Big(\big\{\max_{v\in Q_{p}^-(t)} W_v > n^{1/2+\varepsilon}\big\} \cap \mathcal{F}\Big)
    \le \mathbb{P}\big(\{|Q_{s_0}^-(t)| \geq |Q_{s_1}^-(t)|\} \cap \mathcal{F} \cap \mathcal{E}_0 \cap \mathcal{E}_1\big)+o(1).
\end{align}
On the event $\mathcal{F} \cap \mathcal{E}_0$, we have
\begin{align*}
    n^{1/2+\eps} \ge \max_{v\in Q_{s_0}^-(t)}W_v \ge |Q_{s_0}^-(t)|^{1/(\tau-2)}/\log|Q_{s_0}^-(t)|.
\end{align*}
Hence, on the event $\{|Q_{s_0}^-(t)| \geq |Q_{s_1}^-(t)|\} \cap \mathcal{F} \cap \mathcal{E}_0$, we have
\begin{align*}
    n^{1/2+\eps} \ge |Q_{s_1}^-(t)|^{1/(\tau-2)}/\log|Q_{s_0}^-(t)|.
\end{align*}
So we conclude that, on the event $\{|Q_{s_0}^-(t)| \geq |Q_{s_1}^-(t)|\} \cap \mathcal{F} \cap \mathcal{E}_0 \cap \mathcal{E}_1$, 
\begin{align*}
    n^{1/2+\eps} \ge \frac{\max_{v\in Q_{s_1}^-(t)}W_v}{\log|Q_{s_0}^-(t)|\log|Q_{s_1}^-(t)|}
    \ge \frac{\max_{v\in Q_{s_1}^-(t)}W_v}{\log^2|Q_{s_0}^-(t)|}
    \ge \frac{n^{1/2+2\eps}}{\log^2|Q_{s_0}^-(t)|} ,
\end{align*}
which implies that $\log^2|Q_{s_0}^-(t)| \ge n^{\eps}$, which is impossible because of the trivial upper bound $|Q_{s_0}^-(t)| \le |\mathcal{V}| = n$. Therefore, we deduce that the probability in \eqref{eq:control-s0-max-weight} satisfies
\begin{align*}
     \mathbb{P}\big(\{|Q_{s_0}^-(t)| \geq |Q_{s_1}^-(t)|\} \cap \mathcal{F} \cap \mathcal{E}_0 \cap \mathcal{E}_1\big) = 0,
\end{align*}
which concludes the proof.
\end{proof}

Collecting the previous results allows us to prove the main theorem of this subsection, which is our second main result.

\begin{theorem}\label{thm:vertex-exact}
    Let $\mathcal{G}$ be a Chung-Lu graph on $n$ vertices and let $s_0$ and $s_1$ be two vertices of $\mathcal{G}$ chosen uniformly at random. Assuming that $s_0$ and $s_1$ are in the giant component of $\mathcal{G}$, $\textnormal{V-BFS}_{exact}(\mathcal{G},s_0,s_1)$ is a correct algorithm and with high probability it terminates in time at most $n^{1/2+o(1)}$.
\end{theorem}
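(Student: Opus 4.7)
The plan is to combine the correctness lemma with a case analysis on the weights present in the two queues $Q^-_{s_0}(t), Q^-_{s_1}(t)$ at the moment $t$ when the approximate stopping condition on line~8 of Algorithm~\ref{algo:vertex-exact} is first triggered. Correctness is immediate from Lemma~\ref{lem:vertex-exact-correct}, so everything reduces to the runtime bound. Since $\textnormal{V-BFS}_{exact}$ behaves identically to $\textnormal{V-BFS}_{approx}$ up to and including iteration $t$, Theorem~\ref{thm:vertex-approx} gives that whp $\mathcal{C}(t) \le n^{(\tau-2)/(\tau-1)+o(1)} = o(n^{1/2})$ (using $\tau < 3$), so the ``pre-meeting" part of the cost is negligible. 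It remains to bound the additional cost incurred in the inner block of the algorithm (lines~9--28), noting that lines~9--13 add at most one more vertex expansion and thus only a lower-order contribution; the essential work is emptying the queue $Q^-_p$ chosen on line~14.

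Fix an arbitrarily small $\eps > 0$ and split into two cases according to the weights in the queues $Q^-_{s_0}(t), Q^-_{s_1}(t)$.

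\textbf{Case 1:} $\max_{v \in Q^-_{s_i}(t)} W_v \ge n^{1/2+\eps}$ for both $i \in \{0,1\}$. Here I would directly apply Lemma~\ref{lem:expansion-vertex-exact-algo} with a small auxiliary parameter $\delta \in (0, \eps)$ to obtain that whp, for each $i$, the set $Q_{s_i}(t) \cup Q^-_{s_i}(t)$ contains $\Omega(\log n)$ vertices of weight in $[n^{1/2}, 2n^{1/2}]$ that were added before any vertex of weight $\Omega(n^{1/2+\delta})$. Lemma~\ref{lem:meeting-vertex-exact-algo} then concludes that the inner while-loop terminates with total cost $n^{1/2+O(\delta)}$. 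Since $\delta$ can be chosen arbitrarily small, this gives a bound of $n^{1/2+o(1)}$.

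\textbf{Case 2:} There exists $i \in \{0,1\}$ with $\max_{v \in Q^-_{s_i}(t)} W_v \le n^{1/2+\eps}$. Let $p = \argmin_{s_0,s_1}\{|Q^-_{s_0}(t)|, |Q^-_{s_1}(t)|\}$ be the side chosen on line~14. By Proposition~\ref{prop:small-weight-implies-small-layer}, whp we then also have $\max_{v \in Q^-_p(t)} W_v \le n^{1/2+2\eps}$. For every vertex $v' \in Q^-_p(t)$ that is expanded in the inner loop, its parent vertex in the BFS lies in the previous layer of the $p$-side and was expanded before $t$, hence its weight is at most $O(\deg(\textnormal{parent}) + \log^2 n)$ by Corollary~\ref{cor:weight-concentration}, and is thus bounded by $n^{(\tau-2)/(\tau-1)+o(1)} = o(n^{1/2})$ since the pre-meeting cost is at most $n^{(\tau-2)/(\tau-1)+o(1)}$. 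Lemma~\ref{lem:bfs-new-weight} therefore applies to the weights of vertices in $Q^-_p(t)$ (which were added as neighbors during the pre-meeting BFS), so Lemma~\ref{lem:degree-sum<max-weight} gives $\sum_{v' \in Q^-_p(t)} \deg(v') \le n^{1/2+O(\eps)}$, bounding the additional cost of emptying $Q^-_p$. Combined with the $o(n^{1/2})$ pre-meeting cost, the total runtime is $n^{1/2+O(\eps)}$, and letting $\eps \to 0$ yields $n^{1/2+o(1)}$.

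The main obstacle is Case~2: one needs to justify that the algorithm's balancing rule on line~14 (choosing the smaller \emph{queue}) aligns with picking the layer that is ``cheap" to empty (the one whose vertices have only moderate weight). This is exactly the content of Proposition~\ref{prop:small-weight-implies-small-layer}, and the rest of the argument then rests on showing that a queue of power-law weights bounded by $n^{1/2+O(\eps)}$ has total degree bounded by $n^{1/2+O(\eps)}$, which is a consequence of the heavy-tailed concentration already established in Lemma~\ref{lem:degree-sum<max-weight}. Verifying that the hypotheses of these two lemmas are satisfied (in particular the weight bounds on the parent vertices, needed to invoke the conditional density of Lemma~\ref{lem:bfs-new-weight}) is the one slightly delicate bookkeeping step.
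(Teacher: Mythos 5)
Your proposal is correct and follows essentially the same route as the paper's proof: reduce to the post-meeting cost via Theorem~\ref{thm:vertex-approx}, then split into the same two cases, handling the first via Lemma~\ref{lem:expansion-vertex-exact-algo} together with Lemma~\ref{lem:meeting-vertex-exact-algo} and the second via Proposition~\ref{prop:small-weight-implies-small-layer} together with Lemma~\ref{lem:degree-sum<max-weight}. The bookkeeping you flag at the end (verifying the hypotheses of Lemma~\ref{lem:bfs-new-weight} for the vertices in $Q^-_p(t)$) is exactly the step the paper also relies on, and your justification of it is sound.
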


\begin{proof}
Both $s_0$ and $s_1$ lie in the giant component by assumption. Hence they are connected by at least one path and it will eventually be found by the algorithm. By Lemma~\ref{lem:vertex-exact-correct}, any such path will have length exactly $d(s_0,s_1)$. Hence the remainder of the proof consists of showing that such a path will be found in time at most $n^{1/2+o(1)}$. Note that until some path is found, the algorithms $\textnormal{V-BFS}_{exact}(\mathcal{G},s_0,s_1)$ and $\textnormal{V-BFS}_{approx}(\mathcal{G},s_0,s_1)$ behave exactly the same and in particular accumulate the same runtime. By Theorem~\ref{thm:vertex-approx}, which bounds the total runtime of $\textnormal{V-BFS}_{approx}(\mathcal{G},s_0,s_1)$ by $n^{(\tau-2)/(\tau-1)+o(1)}$, we may therefore assume that a (not necessarily shortest) path has been found and that up to and including line 14 of the algorithm, also $\textnormal{V-BFS}_{exact}(\mathcal{G},s_0,s_1)$ has accumulated cost at most $n^{(\tau-2)/(\tau-1)+o(1)}=o(n^{1/2})$. We may therefore focus our analysis on the runtime of lines 14-25 of its pseudocode. We proceed by making a case distinction based on the vertex weights that are contained in the queues when the algorithm enters line 14. If for some $\eps >0$ we have $\max_{v\in Q_{s_i}^-(t)} W_v \ge n^{1/2+\eps}$ for all $i\in\{0,1\}$, then Lemma~\ref{lem:expansion-vertex-exact-algo} becomes applicable, guaranteeing that for any $\delta>0$ and for $i \in \{0,1\}$, at least $\Omega(\log n)$ vertices of weight in $[n^{1/2}, 2n^{1/2}]$ are expanded over the course of the algorithm before any vertex of weight $\Omega(n^{1/2+\delta})$ is expanded. Therefore, the requirements of Lemma~\ref{lem:meeting-vertex-exact-algo} are met and we can conclude that the algorithm terminates with total cost at most $n^{1/2 + O(\delta)}$. Since $\delta$ can be chosen arbitrarily small, this settles the first case. In the other case, Proposition~\ref{prop:small-weight-implies-small-layer} can be applied. Hence whp the layer $Q^-_p(t)$ that the algorithm empties satisfies $\max_{v\in Q^-_p(t)} W_v \le n^{1/2+O(\eps)}$. Moreover, by Lemma \ref{lem:bfs-new-weight} all the vertices in this layer have a weight that follows the distribution given by equation \eqref{eq:bfs-new-weight}. Hence we can use Lemma \ref{lem:degree-sum<max-weight} to conclude that $\sum_{v\in Q^-_p(t)} \deg(v) \le n^{1/2+O(\eps)}$ whp. This sum of degrees is an upper bound for the runtime of lines 14-25, and since $\eps>0$ was arbitrary this concludes the proof.
\end{proof}

\subsubsection{Edge-Balanced BBFS}
The main goal of this subsection is to prove the runtime bound of $n^{1/2 + o(1)}$ for the edge-balanced algorithm $\textnormal{E-BFS}_{approx}(\mathcal{G},s_0,s_1)$, stated in Theorem~\ref{thm:edge-approx}. We begin with a proposition that says it takes at most $n^{1/2+o(1)}$ rounds to reach vertices of weight close to $n^{1/2}$ (or larger) on both sides.

\begin{proposition}
\label{prop:expansion-edge-approx-algo}
    Let $\mathcal{G}$ be a Chung-Lu graph on $n$ vertices and let $s_0$ and $s_1$ be two vertices of $\mathcal{G}$ that an adversary can choose by looking at the weight sequence $(W_v)_{v\in \mathcal{V}}$, and consider the $\textnormal{E-BFS}_{approx}(\mathcal{G},s_0,s_1)$ algorithm. Assuming that $s_0$ and $s_1$ are in the giant component of $\mathcal{G}$, then for arbitrarily small $\eps>0$ with high probability there are rounds $k_0, k_1 \le n^{1/2+o(1)}$ such that $W_{v_{s_0}(k_0)}, W_{v_{s_1}(k_1)} \ge n^{1/2-\eps}$.
    
\end{proposition}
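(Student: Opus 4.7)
The plan is to exploit the fact that the edge-balanced algorithm, restricted to one side, performs an ordinary BFS: if the BFS from $s_0$ is currently expanding a vertex $v$ after $t$ edge explorations, then $v_{s_0}(k_0)=v$ at some total round $k_0\le 2t+O(1)$, since the algorithm swaps sides at each round. By symmetry and a union bound, it suffices to produce a witness on side $s_0$. If $W_{s_0}\ge n^{1/2-\eps}$ we may take $k_0=1$, so assume $W_{s_0}<n^{1/2-\eps}$, which by Lemma~\ref{lem:degree-concentration} gives $\deg(s_0)=O(n^{1/2-\eps})$ whp. Enumerate the vertices expanded on side $s_0$ in order as $v_1=s_0,v_2,\ldots$, set $N := n^{(\tau-2)(1/2-\eps)+\eta}$ for a small constant $\eta=\eta(\eps)>0$ to be chosen, and let $j^*$ be the first index with $W_{v_{j^*}}\ge n^{1/2-\eps}$. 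Since $v_{j^*}$ becomes the current parent only after $v_1,\ldots,v_{j^*-1}$ are fully expanded, the problem reduces to showing (i)~$j^*\le N$ whp and (ii)~$\sum_{k=2}^{j^*-1}\deg(v_k)\le n^{1/2+o(1)}$ whp.

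For (i), I would define $B_j := \ind{\{W_{v_j}\ge n^{1/2-\eps}\}}$ and the BFS filtration $\mathcal{F}_{j-1}$ and, following the martingale-style argument at the end of the proof of Lemma~\ref{lem:expansion-vertex-approx-algo}, observe that on $\mathcal{F}_{j-1}\cap\{B_1=\dots=B_{j-1}=0\}$ the parent weight $w_u$ is below $n^{1/2-\eps}$, so for the relevant range of $w$ we are in case~1 of Lemma~\ref{lem:bfs-new-weight}. Integrating the resulting density $\Theta(w^{1-\tau})$ over $[n^{1/2-\eps},\infty)$ yields $\pr(B_j=1\mid\mathcal{F}_{j-1},B_1=\dots=B_{j-1}=0)\ge p := \Omega(n^{-(\tau-2)(1/2-\eps)})$, and iterating gives $\pr(j^*>N)\le(1-p)^N\le\exp(-\Omega(n^\eta))=o(1)$. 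For (ii), condition on $\{W_{v_k}<n^{1/2-\eps}:k\le N\}$. Under this conditioning we remain in case~1 of Lemma~\ref{lem:bfs-new-weight} throughout, since any two such weights multiply to at most $n^{1-2\eps}\ll n$, so the weights $(W_{v_k})_{k=2}^N$ are independent with density $\Theta(w^{1-\tau})$ truncated above at $n^{1/2-\eps}$. Theorem~1 of~\cite{omelchenko2019concentration}, used exactly as in the proof of Lemma~\ref{lem:cost-expanded-relation}, then gives $\sum_{k=2}^N W_{v_k}\le N^{1/(\tau-2)+o(1)}=n^{1/2-\eps+\eta/(\tau-2)+o(1)}$, which is $\le n^{1/2+o(1)}$ after choosing $\eta<\eps(\tau-2)$. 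Combining with Lemma~\ref{lem:degree-concentration} and $\deg(s_0)=O(n^{1/2-\eps})$ yields $\sum_{k<j^*}\deg(v_k)\le n^{1/2+o(1)}$, whence $k_0\le n^{1/2+o(1)}$ whp.

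The main obstacle is a mild circularity: the refined case-1 density in Lemma~\ref{lem:bfs-new-weight} requires $w\cdot C(t)=O(n^{1-\eps'})$, but $C(t)$ is precisely what we are trying to bound. I would resolve this by a bootstrap: the universal upper bound \eqref{eq:bfs-new-weight-upper-bound} requires only $C(t)=O(n^{1-\eps'})$ (a comfortable range, since we expect $C(t)\le n^{1/2+o(1)}$ all the way up to iteration $N$), so we first use it to control the cost, and then upgrade to the sharper case-1 form on the strip $[n^{1/2-\eps},n^{1/2+\eps-\eps'}]$ of $w$-values where the refined density does apply — exactly the range that dominates the $\Omega(n^{-(\tau-2)(1/2-\eps)})$ probability bound in Step~(i). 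The tail of $w$ beyond that strip is then absorbed using the universal bound and contributes only a lower-order correction.
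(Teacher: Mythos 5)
Your proposal is correct, but it takes a genuinely different route from the paper's. The paper uses a two-stage ``climb-then-jump'' argument: it first invokes Lemma~\ref{lem:expansion-vertex-approx-algo} (with $k=n^{(\tau-2)/(\tau-1)}$) to reach, within $o(n^{1/2})$ rounds, a single vertex $v_0$ of weight at least $n^{(\tau-2)/(\tau-1)}$; it then notes that $v_0$ has $\Omega(n^{(\tau-2)/(\tau-1)})$ neighbours, each of which, in the uniformly random exploration order, has probability $\Omega(n^{-(\tau-2)/2})$ of carrying weight at least $n^{1/2-\eps}$, so a heavy vertex appears among the first $o(n^{1/2})$ discovered neighbours of $v_0$; finally it bounds the cost accumulated before that vertex is itself expanded via Lemma~\ref{lem:degree-sum<max-weight}, essentially as you do. You replace the first two stages by a single coupon-collector count over all discoveries: each new vertex is heavy with conditional probability $p=\Omega(n^{-(\tau-2)(1/2-\eps)})$, so $N=n^{(\tau-2)(1/2-\eps)+\eta}$ discoveries suffice, and the heavy-tail concentration bounds their total degree by $N^{1/(\tau-2)+o(1)}\le n^{1/2-c}$ once $\eta<\eps(\tau-2)$. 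Your route is more self-contained (it does not reuse the vertex-balanced machinery of Lemma~\ref{lem:expansion-vertex-approx-algo}) and makes the bootstrap explicit: the cost stays at $n^{1/2-c}$ until the heavy vertex is found, which is precisely what validates the case-1 density of Lemma~\ref{lem:bfs-new-weight}; the paper instead just restricts attention to rounds of cost at most $n^{1/2+o(1)}$ and shows the good event occurs before that threshold is crossed. Two points to tighten, both cosmetic: the lower bound on $p$ should come from integrating over the bounded strip $[n^{1/2-\eps},2n^{1/2-\eps}]$ rather than $[n^{1/2-\eps},\infty)$, since the case-1 form is only guaranteed there (you acknowledge this in your final paragraph), and the iteration $\pr(j^*>N)\le(1-p)^N$ must be carried out on the intersection with the event that the cost is controlled, with the complement handled separately exactly as in the proof of Lemma~\ref{lem:expansion-vertex-approx-algo}.
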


\begin{proof}
Let $\eps>0$ be arbitrarily small. Note that after $k$ rounds we have expanded $\frac{k}{2}$ edges on each side if $k$ is even, or $\frac{k+1}{2}$ edges on the $s_0$-side and $\frac{k-1}{2}$ edges on the $s_1$-side if $k$ is odd.
Therefore, by a union bound it is enough to focus on one side, say the $s_0$-side, and show that whp the considered unidirectional BFS satisfies the desired property: we start expanding a vertex of weight at least $n^{1/2-\eps}$ after $k_0$ rounds (or expanded edges), where $k_0 \le n^{1/2+o(1)}$. Note that when we focus on the unidirectional BFS of the $s_0$-side, then the edge-balanced algorithm $\textnormal{E-BFS}_{approx}(\mathcal{G},s_0,s_1)$ and the vertex-balanced algorithm $\textnormal{V-BFS}_{approx}(\mathcal{G},s_0,s_1)$ are equivalent and the round number after expanding $t_0$ vertices corresponds to the cost $\mathcal{C}_{s_0}(t_0)$.

The adversary can choose $s_0$ after looking at the weight sequence $(W_v)_{v\in \mathcal{V}}$. If they choose $s_0$ with $W_{s_0} \ge n^{1/2-\eps}$ then we are immediately done (namely we have $k_0=1$), so without loss of generality we can assume that $W_{s_0} \le n^{1/2-\eps}$. Suppose that $W_{s_0} \le n^{(\tau-2)/(\tau-1)}$. Then by Lemma \ref{lem:expansion-vertex-approx-algo} with the choice $k_{\ref{lem:expansion-vertex-approx-algo}} \coloneqq n^{(\tau-2)/(\tau-1)}$ (note that the number of expanded edges on the $s_0$-side will reach $k_{\ref{lem:expansion-vertex-approx-algo}}$ eventually since we assume that $s_0$ is in the giant component), we know that whp after at most $(n^{(\tau-2)/(\tau-1)})^{1+o(1)} = o(n^{1/2})$ rounds we are expanding a vertex of weight at least $n^{(\tau-2)/(\tau-1)}$. Therefore we can also assume that the BFS is at a vertex $v_0$ with $W_{v_0} \in [n^{(\tau-2)/(\tau-1)}, n^{1/2-\eps}]$, and moreover that $v_0$ is the vertex of largest weight that has been expanded so far.

For the rest of the proof, we only focus on rounds before $n^{1/2+o(1)}$, i.e.\ where the cost of the algorithm is bounded by $n^{1/2+o(1)}$. Hence, the condition $w \cdot C(t) = O(n^{1-\eps})$ of Lemma \ref{lem:bfs-new-weight} is satisfied for all $w=O(n^{1/2-\eps})$. Therefore, since $v_0$ is the vertex of largest weight expanded so far, for every vertex $v\in\Gamma(v_0)\setminus\{s_0\}$ with parent vertex $u$ of weight $W_u \le W_{v_0}$ and all $w=O(n^{1/2-\eps})$ we have $f_{W_v}(w \mid W_{u}, \textnormal{BFS}^v(\mathcal{G},s))) = \Theta(w^{-\tau}\cdot\min\{w, \tfrac{n}{W_{u}}\})$ by Lemma \ref{lem:bfs-new-weight}. In particular, if we choose $v\in\Gamma(v_0)\setminus\{s_0\}$ uniformly at random, we have
\[
\pr(W_v \ge n^{1/2-\eps})
\ge \pr(W_v \in [n^{1/2-\eps}, 2n^{1/2-\eps}])
= \int_{n^{1/2-\eps}}^{2n^{1/2-\eps}} \Theta(w^{-\tau}\cdot\min\{w, \tfrac{n}{W_{u}}\}) dw.
\]
Since $W_u \le W_{v_0} \le n^{1/2-\eps}$ and the integration domain is $[n^{1/2-\eps}, 2n^{1/2-\eps}]$, we have $w <\tfrac{n}{W_{u}}$ in the integral above, and hence
\begin{align}\label{eq:prob-weight-sqrt-n}
    \pr(W_v \ge n^{1/2-\eps})
    \ge \int_{n^{1/2-\eps}}^{2n^{1/2-\eps}} \Omega(w^{1-\tau})dw
    = \Omega(n^{-(\tau-2)/2}).
\end{align}
Since $W_{v_0} \ge n^{(\tau-2)/(\tau-1)}$, by Lemma \ref{lem:degree-concentration} we know that $|\Gamma(v_0)\setminus\{s_0\}| = \Omega(n^{(\tau-2)/(\tau-1)})$ whp. These $\Omega(n^{(\tau-2)/(\tau-1)})$ vertices are discovered in a random order by the BFS started at $s_0$, so by \eqref{eq:prob-weight-sqrt-n}, since $2<\tau<3$, whp among the first $n^{(\tau-2)/(\tau-1)} = o(n^{1/2})$ vertices of $\Gamma(v_0)\setminus\{s_0\}$ discovered by that BFS there is at least on vertex of weight in $[n^{1/2-\eps},\infty)$. Consequently, if we denote by $k_0'$ the first round in which a vertex of weight in $[n^{1/2-\eps},\infty)$ is discovered by the BFS, then we know that $k_0'=o(n^{1/2})$ whp.

The number of rounds $k_0$ it takes until the algorithm is actually expanding this vertex of weight in $[n^{1/2-\eps},\infty)$ is then the sum of the degrees of the previously expanded vertices. The weight of $s_0$ is at most $n^{1/2-\eps}$, and hence its degree is $O(n^{1/2-\eps})$ whp by Lemma \ref{lem:degree-concentration}, so we can safely ignore this vertex. As we have seen, all the assumptions of Lemma \ref{lem:bfs-new-weight} are fulfilled, and hence the distribution of the other vertices expanded by the BFS up to round $k_0$ is given by equation \eqref{eq:bfs-new-weight}. Moreover, by definition of $k_0$ all the weights involved are smaller than $n^{1/2-\eps}$, and therefore we can apply Lemma \ref{lem:degree-sum<max-weight} to conclude that the sum of the degrees of these vertices is at most $n^{1/2+o(1)}$ whp, which concludes the proof.

\end{proof}

The second and final auxiliary fact needed to prove the main result of this subsection is Lemma \ref{lem:meeting-edge-approx-algo} below. This lemma guarantees that from the moment at which the algorithm has started to expand a vertex of weight close to $n^{1/2}$ or larger on each side (and potentially finished to do so), it will terminate in at most $n^{g(\tau)+o(1)}$ rounds. In this runtime bound, $g(\tau)$ is an explicit function which depends only on the power-law parameter $\tau$. Furthermore, as shown in Remark \ref{rem:g(tau)<1/2}, this function satisfies $g(\tau)<1/2$ for all $\tau\in (2,3)$, hence this second phase until termination takes less than $n^{1/2}$ rounds. 

\begin{lemma}\label{lem:meeting-edge-approx-algo}
    Let $\mathcal{G}$ be a Chung-Lu graph on $n$ vertices and let $s_0$ and $s_1$ be two vertices of $\mathcal{G}$ that an adversary can choose by looking at the weight sequence $(W_v)_{v\in \mathcal{V}}$. Let $\eps>0$ be arbitrarily small, and suppose that at some point the algorithm $\textnormal{E-BFS}_{approx}(\mathcal{G},s_0,s_1)$ has started to expand two vertices $v_0$ and $v_1$ (on the $s_0$-side resp.\ the $s_1$-side) of weights $W_{v_0},W_{v_1} \ge n^{1/2-\eps}$ (and potentially finished to expand one or both of them). Then with high probability the algorithm terminates in the next $n^{\frac{\tau^2-4\tau+5}{2(\tau-1)}+o(1)} \eqqcolon n^{g(\tau)+o(1)}$ rounds.
\end{lemma}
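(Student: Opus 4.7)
The plan is as follows. Let $W_0 \coloneqq W_{v_0}$ and $W_1 \coloneqq W_{v_1}$ and assume without loss of generality that $W_0 \le W_1$. By hypothesis $W_0 \ge n^{1/2-\eps}$, and by Lemma~\ref{lem:maxweight} we have $W_1 \le n^{1/(\tau-1)+o(1)}$ whp. Fix $\delta > 0$ small compared to $\eps$ and set $k \coloneqq n^{g(\tau)+\delta}$. A direct algebraic identity gives $g(\tau) = \tfrac{3-\tau}{2(\tau-1)} + \tfrac{\tau-2}{2}$, which is strictly less than both $1/(\tau-1)$ and $1/2$ on $\tau \in (2,3)$. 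Thus, for $\eps$ and $\delta$ sufficiently small, Lemma~\ref{lem:degree-concentration} ensures that whp $k < D_i \coloneqq \deg(v_i)$ for $i \in \{0,1\}$, so in the next $2k$ rounds the edge-balanced algorithm samples exactly $k$ new edges from each of $v_0$ and $v_1$ without exhausting their neighborhoods or moving on to the next vertex in the queue.

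The key reduction is as follows. Let $R_i$ denote the set of neighbors of $v_i$ revealed on side $s_i$ by the end of these $2k$ rounds (including any that were already revealed before we began counting). Conditional on the graph $\mathcal{G}$ and on not having terminated, $R_i$ is a uniform random subset of $N(v_i)$ of size at least $k$, and $R_0, R_1$ are independent of each other. If there exists $u \in R_0 \cap R_1$, then at the round in which the second of the two discoveries $v_0 u$, $v_1 u$ occurs, the algorithm detects $u \in S_{\overline{s}}$ on line~16 of Algorithm~\ref{algo:edge-approx} and terminates. Hence it suffices to show that $\Pr[R_0 \cap R_1 = \emptyset] = o(1)$.

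For this, I bound the expected number of common neighbors. Using the marginal connection probability from Lemma~\ref{lem:marginal} together with the power-law weight density, and splitting the integral at $w = n/W_1$ and $w = n/W_0$, one obtains
\[
\mathbb{E}\big[\,|N(v_0) \cap N(v_1)|\,\big] = \Theta\!\left( n \int_1^{\infty} w^{-\tau}\min\!\left\{\tfrac{wW_0}{n},1\right\}\min\!\left\{\tfrac{wW_1}{n},1\right\} dw \right) = \Theta\!\left( W_0\, W_1^{\tau-2}\, n^{2-\tau} \right),
\]
which is $n^{\Omega(1)}$ because $W_0 \ge n^{1/2-\eps}$. Since the indicators $\mathbbm{1}[u \in N(v_0) \cap N(v_1)]$ for distinct $u \notin \{v_0,v_1\}$ depend on disjoint edges of $\mathcal{G}$ and are therefore independent, Chebyshev's inequality gives $|N(v_0) \cap N(v_1)| = \Omega(W_0\, W_1^{\tau-2}\, n^{2-\tau})$ whp.

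Finally, conditional on $\mathcal{G}$ (satisfying the above lower bound on common neighbors and $D_i = \Theta(W_i)$), the indicators $X_u \coloneqq \mathbbm{1}[u \in R_0 \cap R_1]$ for $u \in N(v_0) \cap N(v_1)$ are pairwise negatively correlated: sampling without replacement within each side gives negative correlation, and $R_0, R_1$ are independent across the two sides. Their sum satisfies
\[
\mathbb{E}\!\left[\,|R_0 \cap R_1|\,\big|\,\mathcal{G}\,\right] = |N(v_0) \cap N(v_1)|\cdot \frac{k^2}{D_0 D_1} \;\ge\; \Omega\!\left( k^2\, W_1^{\tau-3}\, n^{2-\tau} \right),
\]
which diverges polynomially after substituting $k = n^{g(\tau)+\delta}$, $W_1 \le n^{1/(\tau-1)+o(1)}$, and the identity for $g(\tau)$. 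Lemma~\ref{lem:chebyshev-negatively-correlated} then yields $\Pr[R_0 \cap R_1 = \emptyset \mid \mathcal{G}] = o(1)$, completing the proof. The main obstacle I anticipate is disentangling the randomness of the graph from that of the algorithm's edge ordering, and arguing cleanly that, after conditioning on earlier exploration, the remaining unsampled edges of $v_0, v_1$ still form uniform random subsets of their neighborhoods. Secondary care is needed for the piecewise integration computing $\mathbb{E}[|N(v_0) \cap N(v_1)|]$ and for the exponent arithmetic verifying that the worst case $W_1 = n^{1/(\tau-1)+o(1)}$ saturates the bound $n^{g(\tau)}$.
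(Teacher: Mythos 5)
Your proposal is correct and follows essentially the same route as the paper's proof: lower-bound $|\Gamma(v_0)\cap\Gamma(v_1)|$ by $\Omega(W_0W_1^{\tau-2}n^{2-\tau})$ (the paper does this by restricting to common neighbors of weight at least $n/\max\{W_0,W_1\}$ and applying Chernoff, you by the full integral and a second-moment bound), and then run a negative-correlation/second-moment argument on the indicators that a common neighbor lies among the first roughly $n^{g(\tau)}$ neighbors explored on each side, with the exponent arithmetic cancelling exactly as you computed. The only cosmetic slip is the claim that exactly $k$ new edges of each $v_i$ are sampled ``without exhausting their neighborhoods'' --- if $v_i$ was already partially or fully expanded this need not hold --- but the argument only requires that the first $\min\{k,\deg(v_i)\}$ neighbors of $v_i$ in the uniformly random exploration order have been revealed by the end of the window, which is true in all cases.
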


\begin{remark}\label{rem:g(tau)<1/2}
    The exponent $g(\tau)$ is strictly smaller than $1/2$ on the range $\tau\in(2,3)$. Indeed, the function $g(\tau) = \frac{\tau^2-4\tau+5}{2(\tau-1)}$ is strictly convex on the interval $[2,3]$ (since its second derivative is $2/(\tau-1)^3 > 0$), and $g(2) = g(3) = 1/2$.
\end{remark}

\begin{proof}
The idea is that, since $v_0$ and $v_1$ have many common neighbors, at least one such neighbor will be discovered after $n^{g(\tau)}\log n$ edges have been expanded on each side. For that we start by lower bounding the number $|\Gamma(v_0) \cap \Gamma(v_1)|$ of neighbors that $v_0$ and $v_1$ share. Let $\overline{w} \coloneqq \tfrac{n}{\max\{W_{v_0}, W_{v_1}\}}$, and note that $\overline{w}\le n^{1/2+\eps}$ since $W_{v_0},W_{v_1} \ge n^{1/2-\eps}$, and whp $\overline{w} \ge n^{(\tau-2)/(\tau-1)}/\log n$ by Lemma \ref{lem:maxweight}, which we will assume for the rest of the proof. Conditioned on $W_{v_0}, W_{v_1}$, for all vertices $u\in\mathcal{V}$ the events $\{W_u \ge \overline{w}\}\cap\{u \in \Gamma(v_0) \cap \Gamma(v_1)\}$ are independent. Moreover,
\begin{align*}
\pr(W_u \ge \overline{w}, u \in \Gamma(v_0) \cap \Gamma(v_1))
&= \pr(W_u \ge \overline{w})\pr(u \in \Gamma(v_0) \cap \Gamma(v_1) \mid W_u \ge \overline{w}) \\
&\ge \Theta\Big(\overline{w}^{-(\tau-1)} \min\Big\{\frac{\overline{w}W_{v_0}}{n}, 1\Big\}\cdot\min\Big\{\frac{\overline{w}W_{v_1}}{n}, 1\Big\}\big) \\
&= \Theta\Big(\frac{\overline{w}^{3-\tau}W_{v_0}W_{v_1}}{n^2}\Big),
\end{align*}
where the last equality comes from the definition of $\overline{w} = \tfrac{n}{\max\{W_{v_0}, W_{v_1}\}}$. Therefore,
\begin{align*}
    \E[|\Gamma(v_0) \cap \Gamma(v_1)|]
    \ge \E[|\{u\in\Gamma(v_0) \cap \Gamma(v_1) \mid W_u \ge \overline{w}\}|]
    \ge \Theta\Big(\frac{\overline{w}^{3-\tau}W_{v_0}W_{v_1}}{n}\Big).
\end{align*}
In particular, note that since $W_{v_0}, W_{v_1} \ge n^{1/2-\eps}$, we have $\E[|\Gamma(v_0) \cap \Gamma(v_1)|] \ge \Omega(\overline{w}^{3-\tau}/n^{2\eps})$, which is $\omega(1)$ since $\overline{w} \ge n^{(\tau-2)/(\tau-1)}/\log n$ and $\eps$ is small. Thus we can use Chernoff bounds to deduce that whp
\begin{align}\label{eq:common-neighborhood-lower-bound}
    |\Gamma(v_0) \cap \Gamma(v_1)| \ge \frac{\overline{w}^{3-\tau}W_{v_0}W_{v_1}}{n\log n}.
\end{align}
For the rest of the proof, we condition on event \eqref{eq:common-neighborhood-lower-bound} to hold.

For $i\in\{0,1\}$ and $u\in\Gamma(v_i)$, let us denote by $F_i(u)$ the event that $u$ is among the first $n^{g(\tau)}\log n$ neighbors of $v_i$ that are explored by the algorithm $\textnormal{E-BFS}_{approx}(\mathcal{G},s_0,s_1)$. We now need to show that whp there is a vertex $u\in\Gamma(v_0) \cap \Gamma(v_1)$ for which $F_0(u) \cap F_1(u)$ occurs. We define the random variable
\[
X \coloneqq \sum_{u\in\Gamma(v_0) \cap \Gamma(v_1)} X_u,
\]
where $X_u$ is the indicator random variable that $F_0(u) \cap F_1(u)$ occurs. By Lemma \ref{lem:degree-concentration} we know that whp $\deg(v_i) = \Theta(W_{v_i})$ for $i\in\{0,1\}$, so let us condition on that event for the rest of the proof. The neighbors of $v_i$ are explored in a uniformly random order by $\textnormal{E-BFS}_{approx}(\mathcal{G},s_0,s_1)$, hence for a given $u\in\Gamma(v_i)$ we have $\pr(F_i(u)) = \frac{n^{g(\tau)}\log n}{\deg(v_i)} = \Theta(\frac{n^{g(\tau)}\log n}{W_{v_i}})$. Moreover, the order in which the neighbors of $v_0$ are explored is independent of the order in which the neighbors of $v_0$ are explored, hence for every $u\in\Gamma(v_0) \cap \Gamma(v_1)$ we have 
\[
\E[X_u] = \pr(F_0(u) \cap F_1(u))
= \pr(F_0(u))\pr(F_1(u))
= \Theta\Big(\frac{n^{2g(\tau)}\log^2 n}{W_{v_0}W_{v_1}}\Big).
\]
Using equation \eqref{eq:common-neighborhood-lower-bound}, this implies
\begin{align*}
    \E[X] 
    \ge \frac{\overline{w}^{3-\tau}W_{v_0}W_{v_1}}{n\log n} \cdot \Theta\Big(\frac{n^{2g(\tau)}\log^2 n}{W_{v_0}W_{v_1}}\Big) 
    = \Theta\Big(\frac{\overline{w}^{3-\tau}\log n}{n^{1-2g(\tau)}}\Big).
\end{align*}
Using $\overline{w} \ge n^{(\tau-2)/(\tau-1)}/\log n$ we get
\begin{align}\label{eq:lower-bound-early-common-neighbors}
    \E[X] 
    \ge \Omega\big(n^{(3-\tau)(\tau-2)/(\tau-1)+2g(\tau)-1} (\log n)^{\tau-2}\big)
    =\Omega((\log n)^{\tau-2}).
\end{align}

Note that $(X_u)_{u\in\Gamma(v_0) \cap \Gamma(v_1)}$ is a collection of Bernoulli random variables. We claim that these random variables are pairwise negatively correlated, i.e.\ that $\E[X_uX_v] \le \E[X_u]\E[X_v]$ for all distinct $u,v \in\Gamma(v_0) \cap \Gamma(v_1)$. Since these are indicator random variables, this statement is equivalent to $\pr(X_u= 1 \mid X_v = 1) \le \pr(X_u=1)$. Indeed, knowing that $X_v = 1$ just tells us that $v$ is among the first $n^{g(\tau)}\log n$ neighbors of $v_i$ to be explored by that side of the BFS, $i=0,1$. This only makes it less likely that $u$ is also among the first $n^{g(\tau)}\log n$ neighbors of $v_i$ to be explored by that side of the BFS. Since $X$ is a sum of negatively correlated Bernoulli random variables, we know by Lemma~\ref{lem:chebyshev-negatively-correlated} that $\E[X] = \omega(1) \Rightarrow \pr(X=0)=o(1)$. Hence, equation \eqref{eq:lower-bound-early-common-neighbors} allows us to conclude the proof.

\end{proof}

We are now in a position to prove the main theorem of this subsection.

\begin{theorem}
\label{thm:edge-approx}
    Let $\mathcal{G}$ be a Chung-Lu graph on $n$ vertices and let $s_0$ and $s_1$ be two vertices of $\mathcal{G}$ that an adversary can choose by looking at the weight sequence $(W_v)_{v\in \mathcal{G}}$. Assuming that $s_0$ and $s_1$ are in the giant component of $\mathcal{G}$, $\textnormal{E-BFS}_{approx}(\mathcal{G},s_0,s_1)$ is a correct algorithm and with high probability it terminates in time at most $n^{1/2+o(1)}$.
\end{theorem}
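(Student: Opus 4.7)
The plan is to prove the runtime bound by combining the two auxiliary results \ref{prop:expansion-edge-approx-algo} and \ref{lem:meeting-edge-approx-algo}, splitting the execution of $\textnormal{E-BFS}_{approx}(\mathcal{G},s_0,s_1)$ into two phases whose runtimes are each at most $n^{1/2+o(1)}$. Correctness and the bound $d(s_0,s_1)+1$ on the output path length are immediate from Lemma \ref{lem:edge-approx-correct}, and since $s_0$ and $s_1$ lie in the giant component they are connected, so the while-loop will eventually terminate and return a genuine path rather than $\emptyset$. Hence the entire content of the proof is the runtime estimate.

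For the first phase, I would invoke Proposition \ref{prop:expansion-edge-approx-algo}: with high probability there exist rounds $k_0,k_1 \le n^{1/2+o(1)}$ at which the algorithm begins to expand vertices $v_0$ on the $s_0$-side and $v_1$ on the $s_1$-side of weights at least $n^{1/2-\eps}$, for any fixed small $\eps>0$. Since the algorithm alternates between the two sides after every edge exploration, the total number of rounds needed until \emph{both} sides have started expanding such a high-weight vertex is at most $2\max\{k_0,k_1\} \le n^{1/2+o(1)}$ whp. If the algorithm has already terminated before reaching this point (because the two search trees met earlier), we are trivially done, so we may assume it has not.

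For the second phase, I would apply Lemma \ref{lem:meeting-edge-approx-algo} starting from the configuration at the end of the first phase: we have $W_{v_0}, W_{v_1} \ge n^{1/2-\eps}$ and both vertices are in the process of being expanded. The lemma guarantees that within $n^{g(\tau)+o(1)}$ additional rounds, where $g(\tau) = \frac{\tau^2-4\tau+5}{2(\tau-1)}$, the algorithm discovers a common neighbor of $v_0$ and $v_1$, which causes the while-loop's termination condition $u\in S_{\overline{s}}$ (line~16) to be triggered and the algorithm to return a path. By Remark \ref{rem:g(tau)<1/2}, $g(\tau) < 1/2$ for every $\tau\in(2,3)$, so this second phase adds at most $n^{1/2+o(1)}$ rounds as well.

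Putting the two phases together, the total number of rounds executed by $\textnormal{E-BFS}_{approx}(\mathcal{G},s_0,s_1)$ is at most $n^{1/2+o(1)} + n^{g(\tau)+o(1)} = n^{1/2+o(1)}$ with high probability. Since the runtime of the algorithm equals the number of edge explorations, i.e.\ the number of rounds, this yields the claimed bound and concludes the proof. There is no essential obstacle beyond correctly stitching together these two probabilistic estimates via a union bound and verifying that the ``first-phase" high-weight vertices satisfy the hypotheses of Lemma \ref{lem:meeting-edge-approx-algo} at the moment phase two begins, which they do by construction.
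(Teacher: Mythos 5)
Your proposal is correct and follows essentially the same route as the paper's proof: correctness via Lemma \ref{lem:edge-approx-correct}, then the two-phase runtime decomposition using Proposition \ref{prop:expansion-edge-approx-algo} for the first $n^{1/2+o(1)}$ rounds and Lemma \ref{lem:meeting-edge-approx-algo} together with Remark \ref{rem:g(tau)<1/2} for the remaining $n^{g(\tau)+o(1)}$ rounds. No substantive differences.
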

\begin{proof}
The correctness follows directly from Lemma \ref{lem:edge-approx-correct}, so we only need to prove the runtime bound. Let $\eps>0$ be arbitrarily small. By Proposition \ref{prop:expansion-edge-approx-algo}, whp after at most $n^{1/2+o(1)}$ rounds we have started to expand a vertex of weight at least $n^{1/2-\eps}$ on each side. From this point on, whp it takes at most $n^{g(\tau)+o(1)}$ additional rounds for the algorithm to terminate by Lemma \ref{lem:meeting-edge-approx-algo}, where $g(\tau) \coloneqq \frac{\tau^2-4\tau+5}{2(\tau-1)}$. Finally by Remark \ref{rem:g(tau)<1/2} we know that $g(\tau)<1/2$, hence the total runtime is indeed at most $n^{1/2+o(1)}$.
\end{proof}

\subsubsection{Vertices outside the giant component}

Finally, we analyze the runtimes of our algorithms in  the various cases where at least one of the starting vertices lies outside the giant component. We will show that in each such situation, our algorithms will terminate in time $n^{o(1)}$ whp.

We first treat the case where both starting vertices lie outside the giant component. The runtime bound in this case is a straightforward consequence of the at most polylogarithmic sizes of all non-giant components.

\begin{lemma}\label{lem:runtime-outside-giant}
Let $\mathcal{G}$ be a Chung-Lu graph or a GIRG on $n$ vertices and $s_0, s_1$ two vertices, both of them outside the (unique) giant component of $\mathcal{G}$. Then any run of $\textnormal{V-BFS}_{approx}(\mathcal{G},s_0,s_1)$, $\textnormal{V-BFS}_{exact}(\mathcal{G},s_0,s_1)$, or $\textnormal{E-BFS}_{approx}(\mathcal{G},s_0,s_1)$ terminates in time $n^{o(1)}$ with high probability.
\end{lemma}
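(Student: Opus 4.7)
The plan is to exploit the fact that the non-giant components of $\mathcal{G}$ are tiny, so each of the two searches cannot perform more than a polylogarithmic amount of work before exhausting its own component and triggering termination of the outer while-loop.

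First I would invoke Theorem~\ref{thm:component-sizes}: with high probability every component of $\mathcal{G}$ other than the giant one has at most polylogarithmic size. We condition on this event and let $C_0, C_1$ denote the components of $s_0$ and $s_1$, so $|C_0|, |C_1| \le \mathrm{polylog}(n) = n^{o(1)}$. A key (deterministic) observation is that any vertex $v$ lying in a component of size $k$ satisfies $\deg(v) \le k-1$, since all neighbors of $v$ must lie in the same component. In particular, for every $v \in C_0 \cup C_1$ we have $\deg(v) \le n^{o(1)}$, and the number of edges in $C_i$ is at most $\binom{|C_i|}{2} \le n^{o(1)}$.

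Next I would observe that, by Observation~\ref{obs:layers}, each unidirectional BFS simply explores its own component layer by layer, and the outer while-loop (line~3 of Algorithms~\ref{algo:vertex-approx} and~\ref{algo:vertex-exact}, and line~4 of Algorithm~\ref{algo:edge-approx}) terminates as soon as one of the two queues $Q_{s_0}^-$, $Q_{s_1}^-$ becomes empty. Since $C_0$ and $C_1$ are disjoint from each other (otherwise $s_0$ and $s_1$ would share a component, and then both would be outside the giant so the component would still be polylogarithmic, and the argument below applies unchanged with $C_0=C_1$), no path between $s_0$ and $s_1$ exists, and hence both searches must fully exhaust their components before the algorithm terminates and returns $\emptyset$. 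By Definition~\ref{def:cost}, the total cost incurred on side $s_i$ is then at most
\begin{align*}
    \mathcal{C}_{s_i} \;\le\; \sum_{v \in C_i} \deg(v) \;=\; 2\,|E(C_i)| \;\le\; n^{o(1)},
\end{align*}
so the total runtime of either vertex-balanced algorithm is $\mathcal{C}_{s_0} + \mathcal{C}_{s_1} = n^{o(1)}$. Note that the inner while-loop of Algorithm~\ref{algo:vertex-exact} (lines~14--25) is never entered, because the condition on line~8 can only trigger when $\Gamma(v)$ meets $S_{\bar s}$, which cannot happen between distinct components. For the edge-balanced algorithm, each iteration of the while-loop processes exactly one edge on the currently active side, so the total number of rounds is at most $2(|E(C_0)| + |E(C_1)|) = n^{o(1)}$ as well.

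I do not anticipate a significant obstacle here: the argument is essentially a combinatorial upper bound enabled by Theorem~\ref{thm:component-sizes}, and the only place probability enters is in conditioning on the high-probability event that all non-giant components are polylogarithmic. The remaining bookkeeping (that the balancing rule cannot cause the algorithm to do more work than a full exhaustion of both components, and that the termination condition in line~3 / line~4 is indeed reached) is immediate from the pseudocode and from Observation~\ref{obs:layers}.
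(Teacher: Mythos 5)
Your proposal is correct and follows essentially the same route as the paper: condition on Theorem~\ref{thm:component-sizes} so that all non-giant components are polylogarithmic, and bound the total cost by the (at most polylogarithmic) number of edges in the two components. The only tiny imprecision is the claim that \emph{both} searches must fully exhaust their components before termination --- the while-loop exits as soon as one queue empties --- but since you only use this as a pessimistic upper bound (exactly as the paper does), the conclusion is unaffected.
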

\begin{proof}
We may pessimistically assume that $s_0$ and $s_1$ are in different components and that the searches started from both points exhaust their respective component. Since all non-giant components contain at most polylogarithmically many vertices whp by Theorem \ref{thm:component-sizes}, and hence at most polylogarithmically many edges, whp this takes time $O(\textnormal{polylog(n)}) = n^{o(1)}$. 
\end{proof}

When one starting vertex is located within the giant component, bounding component sizes does not suffice anymore, since a priori our algorithms could explore a large chunk of the giant component before exhausting the non-giant component in which the other starting vertex lies. As the next two lemmata show, the first for the vertex-balanced algorithms, the second for the edge-balanced algorithm, their respective balancing mechanisms prevent precisely that, leading to upper runtime bounds of $n^{o(1)}$.

\begin{lemma}\label{lem:runtime-outside-giant-vertex}
Let $\mathcal{G}$ be a Chung-Lu graph on $n$ vertices and $s_0, s_1$ two vertices of $\mathcal{G}$ chosen uniformly at random. If exactly one of them lies outside the (unique) giant component of $\mathcal{G}$, then the algorithms $\textnormal{V-BFS}_{approx}(\mathcal{G},s_0,s_1)$ and $\textnormal{V-BFS}_{exact}(\mathcal{G},s_0,s_1)$ terminate in time $n^{o(1)}$ with high probability.
\end{lemma}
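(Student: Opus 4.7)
Without loss of generality I assume $s_0$ lies in a non-giant component $C$ of $\mathcal{G}$ and $s_1$ in the giant component, so that by Theorem~\ref{thm:component-sizes} we have $|C|\le \textnormal{polylog}(n)$ with high probability. Since $s_0$ and $s_1$ lie in distinct components, no $s_0$-$s_1$ path exists, so both algorithms can terminate only by exhausting the queues (returning $\emptyset$). The cost on side $s_0$ is bounded easily: every $v\in C$ has all its neighbors in $C$, so $\deg(v)\le |C|$, and hence $\mathcal{C}_{s_0}\le \sum_{v\in C}\deg(v)\le |C|^2=\textnormal{polylog}(n)$.

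The crux is bounding the number $T_1$ of expansions performed on side $s_1$. The balancing rule on line~4 of Algorithms~\ref{algo:vertex-approx} and~\ref{algo:vertex-exact} forces $|S_{s_1}|\le |S_{s_0}|$ immediately before any expansion on side $s_1$; since the previously expanded vertices on that side are distinct and all lie in $S_{s_1}$, we obtain the deterministic bound $T_1-1\le |S_{s_0}|\le |C|$, hence $T_1\le |C|+1=\textnormal{polylog}(n)$ whp. To bound the cost on side $s_1$ I would then multiply $T_1$ by the maximum degree of an expanded vertex. Since $s_1$ is uniformly random and the power law has $\tau>2$, we have $W_{s_1}\le \log^2 n$ whp, so by Lemma~\ref{lem:degree-concentration} $\deg(s_1)=\textnormal{polylog}(n)$ whp. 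For the $T_1-1$ subsequently expanded vertices, Lemma~\ref{lem:bfs-new-weight} gives the density upper bound~\eqref{eq:bfs-new-weight-upper-bound}, so a simple union bound over polylogarithmically many such vertices yields maximum weight $\textnormal{polylog}(n)$ whp, which via Lemma~\ref{lem:degree-concentration} becomes maximum degree $\textnormal{polylog}(n)$. Therefore $\mathcal{C}_{s_1}\le T_1\cdot\textnormal{polylog}(n)=\textnormal{polylog}(n)$, and summing both sides gives a total runtime of $n^{o(1)}$.

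The only obstacle I foresee is a mild circularity when invoking Lemma~\ref{lem:bfs-new-weight}, whose density bound requires the running cost on side $s_1$ to satisfy $\mathcal{C}_{s_1}(t)=O(n^{1-\eps})$, which is essentially what I am trying to verify. I will resolve this by a standard stopping-time argument: defining $t^\star$ as the first iteration at which $\mathcal{C}_{s_1}$ exceeds, say, $n^{1/2}$, Lemma~\ref{lem:bfs-new-weight} applies up to $t^\star$, and the polylogarithmic cost bound derived above shows that $\mathcal{C}_{s_1}$ stays well below $n^{1/2}$, so $t^\star$ is never reached and the bound holds throughout the run.
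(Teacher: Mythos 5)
Your proof is correct, but it follows a genuinely different route from the paper's. The paper's proof leans on the cost-balancing machinery: it invokes Lemma~\ref{lem:balanced-cost} to get $\mathcal{C}_{\text{giant side}} \le t^{\eps}\,\mathcal{C}_{\text{small side}}$ once the total cost exceeds a polylogarithmic burn-in threshold, combines this with the handshake-lemma observation that the cost on the small side can never exceed twice the (polylogarithmic) number of edges in that component, and concludes a total cost of $n^{O(\eps)}$ for arbitrary $\eps>0$. You instead extract a purely \emph{deterministic} consequence of the balancing rule -- that just before the $T_1$-th expansion on the giant side the $T_1-1$ previously expanded vertices all lie in $S_{\text{giant}}$, so $T_1-1\le|S_{\text{giant}}|\le|S_{\text{small}}|\le|C|$ -- and then pay separately for the degrees of these polylogarithmically many expansions via Lemma~\ref{lem:bfs-new-weight}, Lemma~\ref{lem:degree-concentration}, a union bound, and a stopping-time argument to break the circularity in applying the density lemma. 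Your approach buys a sharper $\operatorname{polylog}(n)$ bound (versus the paper's $n^{O(\eps)}$) and avoids importing the full strength of Lemma~\ref{lem:balanced-cost}, at the price of having to control the maximum expanded degree on the giant side explicitly; the paper's version is shorter given that the balancing lemma is already established, since the cost comparison handles the degree control implicitly. Your counting bound on the \emph{number} of expansions (rather than the cost) on the larger side is a clean structural observation that the paper does not use. Two small points to make explicit if you write this up: the tie-breaking in $\argmin$ does not break your bound, since the inequality $T_1-1\le|S_{\text{giant}}|\le|S_{\text{small}}|$ only needs $|S_{\text{giant}}|\le|S_{\text{small}}|$ at the moment an expansion is performed on the giant side; and the union bound over expanded vertices should be phrased over the first $K=\operatorname{polylog}(n)$ expansions conditionally on the discovery history, with the exponent of the polylog threshold chosen large relative to the (fixed) polylog exponent in Theorem~\ref{thm:component-sizes}.
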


\begin{proof}
By Theorem \ref{thm:component-sizes} we can assume without loss of generality that there is a unique giant component, and that all other components have at most polylogarithmic size. Moreover, by symmetry and a union bound, we can also assume that $s_0$ lies in the giant component and $s_1$ lies in another component. Note that the two algorithms will terminate (latest) once the smaller component is exhausted (line 3 in Algorithms \ref{algo:vertex-approx} and \ref{algo:vertex-exact}). What remains to show is therefore, in intuitive terms, that the algorithms remain ``balanced", i.e.\ that we do not expand a polynomial number of edges in the giant component before fully exhausting the smaller component. Note that $\textnormal{V-BFS}_{approx}(\mathcal{G},s_0,s_1)$ and $\textnormal{V-BFS}_{exact}(\mathcal{G},s_0,s_1)$ behave differently only from the first iteration in which a path has been found. This does not occur by assumption, since $s_0$ and $s_1$ lie in different components. Hence we can jointly analyze their runtime in this case, assuming without loss of generality that we are considering $\textnormal{V-BFS}_{approx}(\mathcal{G},s_0,s_1)$.

Let $\eps>0$ and consider a run of this algorithm, and the smallest iteration $\Tilde{t}$ such that $\mathcal{C}(\Tilde{t}) \ge 2(\log n)^{2(1+\varepsilon/(\tau-2-\varepsilon))/(3-\tau)}$, pessimistically assuming that by then the algorithm has not terminated yet. Then Lemma~\ref{lem:balanced-cost} applies and continues to apply as long as the total cost is bounded from above by $n^{(\tau-2)/(\tau-1)}$. 
Now observe that it always holds that $t\le n$ and, by definition of $\mathcal{C}_{s_1}(t_1)$ and by the handshake lemma, $\mathcal{C}_{s_1}(t_1)$ can never exceed twice the number of edges in the non-giant component that contains $s_1$, which is at most polylogarithmic. In particular, when $\mathcal{C}_{s_1}(t_1)$ reaches this number, we know that the algorithm has terminated. Hence, using Lemma \ref{lem:balanced-cost}, we can bound the total cost for $t \ge \Tilde{t}$ by 
\begin{align*}
    \mathcal{C}(t) =\mathcal{C}_{s_0}(t_0)+\mathcal{C}_{s_1}(t_1)\le (t^\varepsilon+1) \mathcal{C}_{s_1}(t)= (t^{\varepsilon}+1)\cdot O(\textnormal{polylog(n)}) \le n^{2\varepsilon}.
\end{align*}
Since $\varepsilon>0$ was arbitrary, this concludes the proof.
\end{proof}

\begin{lemma}\label{lem:runtime-outside-giant-edge}
Let $\mathcal{G}$ be a Chung-Lu graph or a GIRG on $n$ vertices and $s_0, s_1$ two vertices, at least one of them outside the (unique) giant component of $\mathcal{G}$. Then $\textnormal{E-BFS}_{approx}(\mathcal{G},s_0,s_1)$ terminates in time $n^{o(1)}$ with high probability.
\end{lemma}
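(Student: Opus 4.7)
My plan is to split the analysis into two cases depending on whether both or only one of $s_0, s_1$ lies outside the giant component. The case where both lie outside is already fully handled by Lemma~\ref{lem:runtime-outside-giant}, so I would simply invoke that lemma directly. The remaining case is that exactly one of the two vertices, say $s_1$ by symmetry, lies outside the giant component. I would let $C$ denote the connected component of $\mathcal{G}$ containing $s_1$ and apply Theorem~\ref{thm:component-sizes} to get $|C|\le\mathrm{polylog}(n)$ w.h.p., which immediately gives $|E_C|\le |C|^2=\mathrm{polylog}(n)$ for the number of edges with both endpoints in $C$.

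The crux of the argument is the strict edge-by-edge alternation enforced by the assignment $s\gets\bar{s}$ on line~23 of Algorithm~\ref{algo:edge-approx}: at every point in the execution, the numbers of edges processed on the two sides differ by at most one. Hence it suffices to bound the number of rounds spent on the $s_1$-side, because this immediately bounds the total runtime up to a factor of two and an additive constant. Since the $s_1$-side BFS is confined to $C$, and since each edge of $E_C$ is explored at most twice during a BFS (once when each endpoint serves as the current parent vertex $v_{s_1}$), I obtain the hard upper bound $\sum_{v\in C}\deg(v)=2|E_C|$ on the number of $s_1$-side rounds.

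It then remains to verify that the algorithm terminates promptly once the $s_1$-side is exhausted. After $2|E_C|$ rounds on the $s_1$-side, every vertex of $C$ has been popped from $Q_{s_1}^-$ and every incident edge added to some $E_{v}$, so both $Q_{s_1}^-$ and $Q_{s_1}^+$ are empty; the replenishment step on line~21 does not help, and at the next check of the while-loop guard on line~4 with $s=s_1$ the loop exits and the algorithm returns $\emptyset$. Moreover, since $s_0$ and $s_1$ lie in distinct components throughout, $S_{s_0}\cap S_{s_1}=\emptyset$ always holds, so the early-termination path on line~16 is never triggered. Putting everything together yields a total runtime of at most $4|E_C|+O(1)=\mathrm{polylog}(n)=n^{o(1)}$ w.h.p. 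The only mild obstacle is careful bookkeeping around the interplay between the edge-exploration counter, the queue-pop steps (lines~5--10), and the while-loop guard; no fresh probabilistic input beyond Theorem~\ref{thm:component-sizes} is required, and the edge-by-edge balancing does all the heavy lifting.
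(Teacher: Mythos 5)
Your proposal is correct and follows essentially the same route as the paper: bound the size (and hence edge count) of the non-giant component by Theorem~\ref{thm:component-sizes}, then use the strict edge-by-edge alternation of line~23 to conclude that the total number of rounds is at most twice the number needed to exhaust that component, after which the while-loop guard fails. Your explicit case split (delegating the both-outside case to Lemma~\ref{lem:runtime-outside-giant}) and the accounting via $\sum_{v\in C}\deg(v)=2|E_C|$ are just slightly more detailed versions of the paper's argument.
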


\begin{proof}
By Theorem \ref{thm:component-sizes} we can assume without loss of generality that there is a unique giant component, and that all other components have at most polylogarithmic size, and in particular contain at most $O(\textnormal{polylog(n)})$ edges. Moreover, by symmetry and a union bound, we can also assume that $s_0$ lies in the giant component and $s_1$ lies in another component. Note that the algorithm will terminate (latest) once the smaller component is exhausted (line 4 in Algorithm \ref{algo:edge-approx}). In each iteration of the while-loop of the algorithm, exactly one edge is explored (line 11) and after every iteration of the while-loop the algorithm switches between the search-sides (line 23). Therefore after $2 \cdot O(\textnormal{polylog(n)}) \le n^{o(1)}$ iterations, the non-giant component is fully explored, the condition in line 4 is not fulfilled anymore and the algorithm terminates. 
\end{proof}

Combining Lemmata \ref{lem:runtime-outside-giant}-\ref{lem:runtime-outside-giant-edge}, we directly get the following proposition.

\begin{proposition}\label{prop:runtime-outside-giant}
Let $\mathcal{G}$ be a Chung-Lu graph on $n$ vertices and $s_0, s_1$ two vertices chosen uniformly at random. If at least one of them lies outside the (unique) giant component of $\mathcal{G}$, then any run of $\textnormal{V-BFS}_{approx}(\mathcal{G},s_0,s_1)$, $\textnormal{V-BFS}_{exact}(\mathcal{G},s_0,s_1)$, or $\textnormal{E-BFS}_{approx}(\mathcal{G},s_0,s_1)$ terminates in time $n^{o(1)}$ with high probability. Moreover, if the vertices are chosen adversarially, then the runtime bound continues to hold for $\textnormal{E-BFS}_{approx}(\mathcal{G},s_0,s_1)$.
\end{proposition}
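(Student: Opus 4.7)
My plan is to prove this proposition by a straightforward case distinction, invoking the three preceding lemmata as black boxes. The statement is essentially a consolidation of Lemmata~\ref{lem:runtime-outside-giant}, \ref{lem:runtime-outside-giant-vertex}, and~\ref{lem:runtime-outside-giant-edge}, each of which handles a subcase, and no new probabilistic argument is needed.

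First, I would split into two cases based on how many of the two source vertices lie outside the giant component. In Case 1, both $s_0$ and $s_1$ lie outside the giant component. Then Lemma~\ref{lem:runtime-outside-giant} applies directly and yields the $n^{o(1)}$ runtime bound whp, uniformly for all three algorithms $\textnormal{V-BFS}_{approx}$, $\textnormal{V-BFS}_{exact}$, and $\textnormal{E-BFS}_{approx}$. In Case 2, exactly one of $s_0, s_1$ lies outside the giant component. By symmetry and a union bound, I may assume that $s_1$ lies outside the giant component and $s_0$ lies within it. For the vertex-balanced algorithms $\textnormal{V-BFS}_{approx}$ and $\textnormal{V-BFS}_{exact}$, the bound follows from Lemma~\ref{lem:runtime-outside-giant-vertex}. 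For the edge-balanced algorithm $\textnormal{E-BFS}_{approx}$, the bound follows from Lemma~\ref{lem:runtime-outside-giant-edge}.

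For the ``moreover'' part concerning adversarial choice of $s_0, s_1$ for $\textnormal{E-BFS}_{approx}$, I note that Lemmata~\ref{lem:runtime-outside-giant} and~\ref{lem:runtime-outside-giant-edge} are in fact stated (and proven) for arbitrary choices of $s_0, s_1$ rather than uniformly random ones: their proofs only use the deterministic structural property that every non-giant component has polylogarithmic size (Theorem~\ref{thm:component-sizes}), combined with the fact that $\textnormal{E-BFS}_{approx}$ processes exactly one edge per iteration and alternates sides, so the non-giant component is exhausted after at most $O(\textnormal{polylog}(n))$ iterations regardless of which two vertices were chosen. Since the randomness in $s_0, s_1$ is never used in those arguments, the same conclusion transfers verbatim to the adversarial setting.

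There is no genuine obstacle here: the proposition is essentially a summary statement, and the only care needed is to observe that the adversarial extension works precisely because the edge-balancing mechanism immediately caps the runtime by the size of the smaller component, irrespective of where $s_0$ is placed in the giant component.
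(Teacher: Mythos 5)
Your proposal is correct and matches the paper's approach exactly: the paper simply states that the proposition follows by combining Lemmata~\ref{lem:runtime-outside-giant}, \ref{lem:runtime-outside-giant-vertex}, and~\ref{lem:runtime-outside-giant-edge}, which is precisely the case distinction you carry out. Your added observation that the adversarial claim for $\textnormal{E-BFS}_{approx}$ is covered because Lemmata~\ref{lem:runtime-outside-giant} and~\ref{lem:runtime-outside-giant-edge} are stated for arbitrary (not uniformly random) vertices is accurate and consistent with how those lemmata are formulated and proved.
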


\subsubsection{Applicability of runtime bounds to GIRGs}
\label{subsubsec:applicability}

While our runtime bounds rely in part on intermediate statements, in particular the conditional weight density of vertices found by the BFS from each side given by Lemma~\ref{lem:bfs-new-weight}, which we prove only for Chung-Lu graphs, we remark that they continue to apply for GIRGs. The intuitive reasons for this are the following. First note that for both GIRGs and Chung-Lu graphs, the sum of the degrees of vertices expanded so far is dominated by the vertex of largest degree and this degree is of the same order in GIRGs and Chung-Lu graphs (with the same power-law exponent). This implies in particular that up to constant factors, the number of neighbors at graph distance $d$ in a GIRG is lower-bounded by the number vertices at graph distance $d$ in a Chung-Lu graph (with the same power-law exponent). On the other hand, the number of neighbors at graph distance $d$ in a Chung-Lu graph is also an upper bound since the underlying geometry of GIRGs only makes it more likely to encounter the same vertex more than once during the BFS.
 
Now for the vertex-balanced algorithm, as soon as the algorithm has explored $n^{\frac{\tau-2}{\tau-1}+o(1)}$ edges, it will have expanded many vertices of weight $n^{\frac{\tau-2}{\tau-1}}$ and due to the balancing, this holds on both sides. But in the course of this expansion, the algorithm will have discovered the maximum-degree vertex from both sides whp, and this also holds for GIRGs. This leads to the termination of the algorithm in the vertex-balanced approximate case. 
 
For the edge-balanced algorithm, note again that both for GIRGs and Chung-Lu graphs, the searches on each side (due to the balancing) will quickly discover and start to expand a vertex of weight $n^{\frac{1}{2}+o(1)}$. Once the algorithm has started to expand a vertex of weight close to $n^{\frac{1}{2}+o(1)}$ on both sides, since the neighborhoods of such vertices have a large intersection in GIRGs as well, we will find a common neighbor quickly, in particular in time less than $n^{\frac{1}{2}}$ and the algorithm terminates. This gives the runtime bound for the edge-balanced approximate algorithm. 
 
Finally, for the vertex-balanced exact algorithm, either both queues contain a vertex of weight of order larger than order $n^{\frac{1}{2}+o(1)}$ and then both queues also contain many vertices of weight order $n^{\frac{1}{2}+o(1)}$ and these occur polynomially more often and will therefore be expanded first and lead to the termination of the algorithm. If not, then the queue with smaller maximum is also smaller in size and will be emptied completely, leading to the termination of the algorithm. 

\section{Experimental results}\label{sec:simulations}

We performed some simulations in order to compare the performance of our proposed algorithms with the layer-balanced algorithm analyzed in \cite{borassi2019kadabra}, both on generated networks (Chung-Lu graphs and GIRGs) and on real-world networks. The edge-balanced algorithm is designed to be robust against an adversarial choice of the source and target nodes, however it is not clear what is the best strategy for this adversary, and it is forbiddingly expensive to test all possibilities. This makes it difficult to properly test the edge-balanced algorithm. Hence we focus on the approximate and exact vertex-balanced algorithms (Algorithms \ref{algo:vertex-approx}, abbreviated as VBA, and Algorithm \ref{algo:vertex-exact}, abbreviated as VBE, respectively) in this section.

\begin{figure}
\centering
\begin{subfigure}{0.32\textwidth}
    \includegraphics[width=\textwidth]{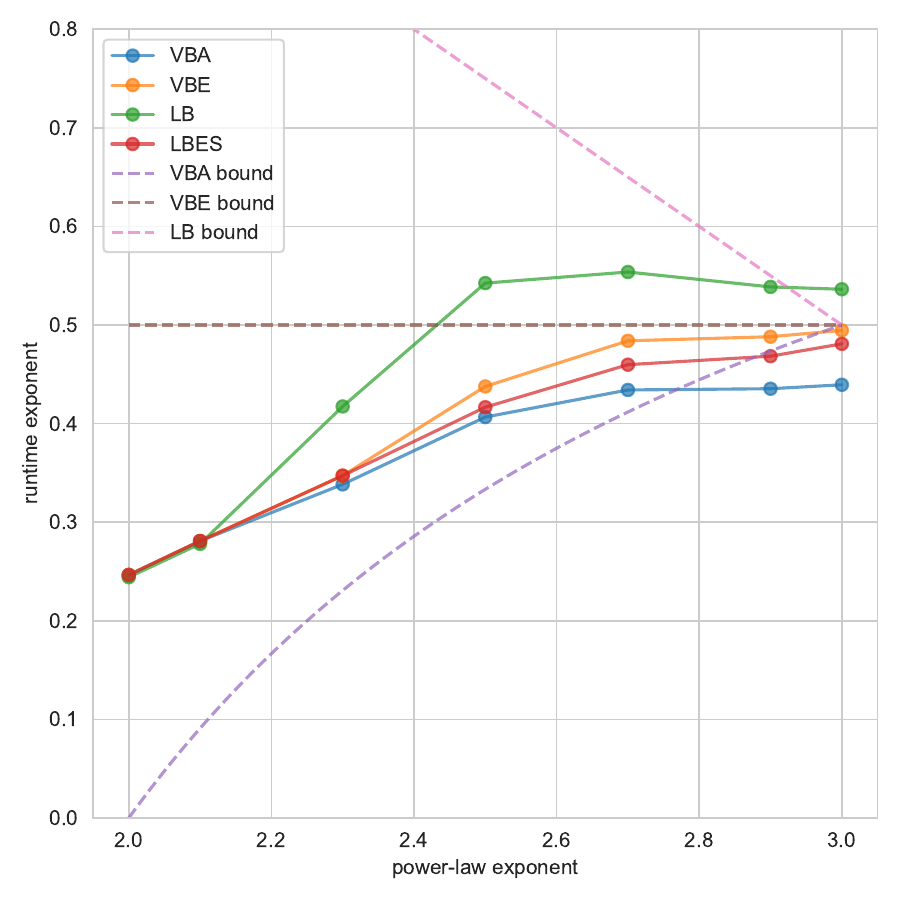}
    \caption{Chung-Lu graphs}
    \label{fig:tau-runtime-exponent-chung-lu}
\end{subfigure}
\begin{subfigure}{0.32\textwidth}
    \includegraphics[width=\textwidth]{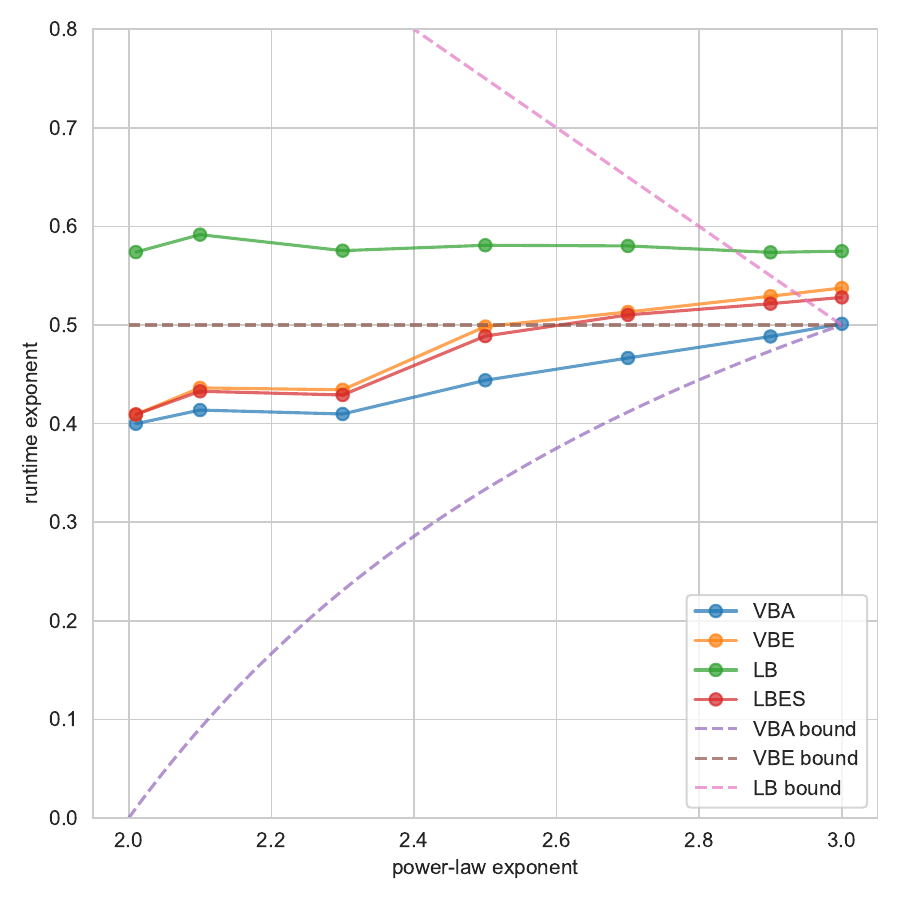}
    \caption{GIRGs with $\alpha=1.5$}
    \label{fig:tau-runtime-exponent-girg1.5}
\end{subfigure}
\begin{subfigure}{0.32\textwidth}
    \includegraphics[width=\textwidth]{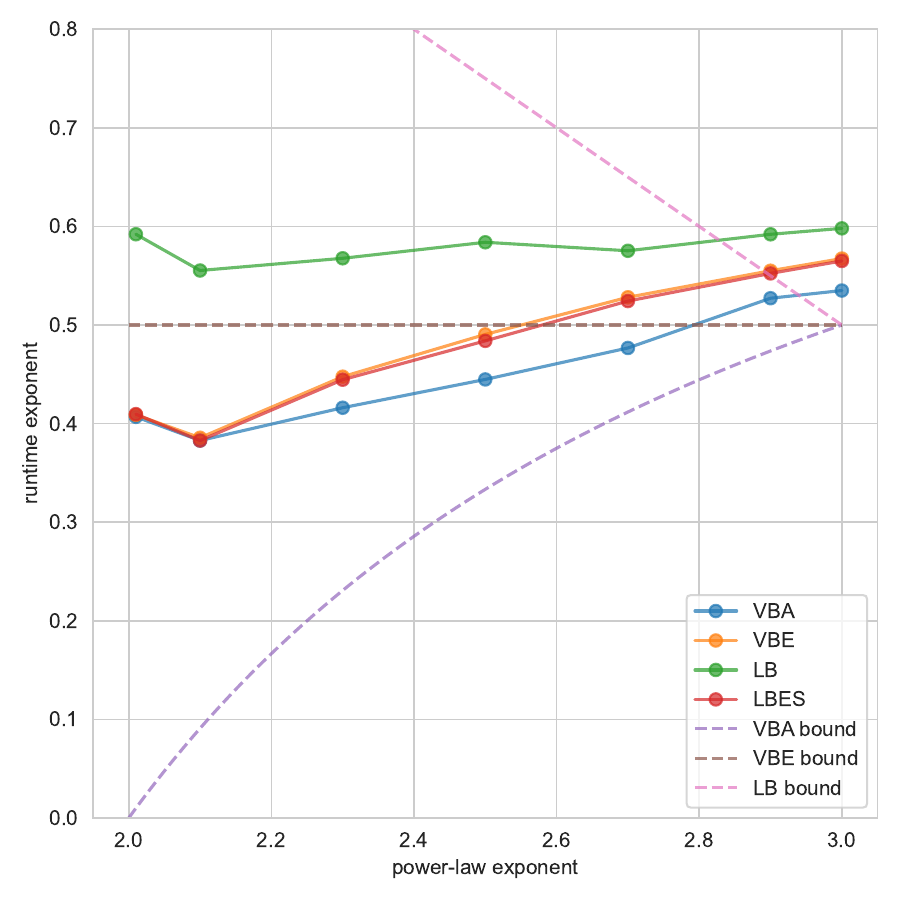}
    \caption{GIRGs with $\alpha=5$}
    \label{fig:tau-runtime-exponent-girg5}
\end{subfigure}

\caption{Plot of the runtime exponent $\rho$ of the cost for the four algorithms VBA, VBE, LB and LBES on three graphs models. For each data point, three different connected graphs with $n\approx80'000$ nodes and $m\approx 1'200'000$ edges were generated, and on each of these graphs the algorithms were run for 100 random pairs of nodes $(s,t)$. The runtime $\mathcal{C}$ is then taken as the median cost over these 300 algorithm runs, and $\rho$ is computed by solving the equation $\mathcal{C}=m^{\rho}$. The theoretical guarantees on the exponent $\rho$ (up to a $o(1)$ term) are also plotted for comparison (as VBA bound, VBE bound, and LB bound). The generated GIRGs have underlying dimension $d=2$. We performed the same experiment for higher dimension $d$ and obtained very similar plots.}
\label{fig:tau-runtime-exponent-plots}
\end{figure}

Figure \ref{fig:tau-runtime-exponent-plots} shows how the exponent of the runtime varies as a function of the power-law exponent $\tau\in(2,3)$ for various algorithms, namely the VBA and VBE algorithms, as well as two versions of the layer-balanced (exact) algorithms. One version is the original layer-balanced bidirectional BFS analyzed in \cite{borassi2019kadabra}, which we refer to as LB, and we also plot the performance of the early-stopping version implemented in \cite{blasius2024external} (abbreviated LBES), where the last layer is not fully expanded, but instead the algorithm expands the layer vertex-by-vertex and terminates as soon as a path is found (which makes it similar to our vertex-balanced algorithms). Unsurprisingly, our VBA algorithm is always the fastest. However, both our VBE algorithm and the LBES algorithm perform very similarly. The advantage of our two vertex-balanced algorithms becomes clear when we compare them with the original LB algorithm, and this gap is wide for Chung-Lu graphs with $\tau\in(2.5,3)$ or for GIRGs with $\tau\in(2,2.5)$. All four algorithms are slightly slower on GIRGs compared to Chung-Lu graphs, and this may be due to the fact that the clustering of vertices in GIRGs makes it more likely that vertices are discovered multiple times, which could slow down the search (by a subpolynomial factor).
Figure \ref{fig:tau-runtime-exponent-plots} also displays the theoretical upper bound guarantees on the runtime exponents, which are $\tfrac{\tau-2}{\tau-1}$ for the VBA algorithm, $\tfrac{1}{2}$ for the VBE algorithm, and $\tfrac{4-\tau}{2}$ for the layer-balanced algorithm(s). Note that the first bound $\tfrac{\tau-2}{\tau-1}$, although an upper bound for the exponent for VBA in the limit $n\to \infty$, lies empirically \emph{below} the corresponding runtime for VBA for large ranges of $\tau$. We believe that this is due to the $n^{o(1)}$ terms, which are still not negligible. Those terms are more relevant when the exponent of the VBA runtime is close to zero, which our theoretical results predict when $\tau$ is close to 2. Hence, our estimate for the exponent is likely unstable for VBA when $\tau$ is close to 2. Nevertheless, our experimental results are in line with all theoretical findings (namely Theorems \ref{thm:vertex-approx-intro}-\ref{thm:vertex-exact-intro} as well as the results proven in \cite{borassi2019kadabra}), although for the VBA algorithm (whose runtime is upper bounded by $n^{(\tau-2)/(\tau-1)+o(1)}$ whp) the $n^{o(1)}$ term is clearly visible, especially in GIRGs. 

\begin{figure}
    \centering
    \subfloat{\includegraphics[width=0.33\textwidth]{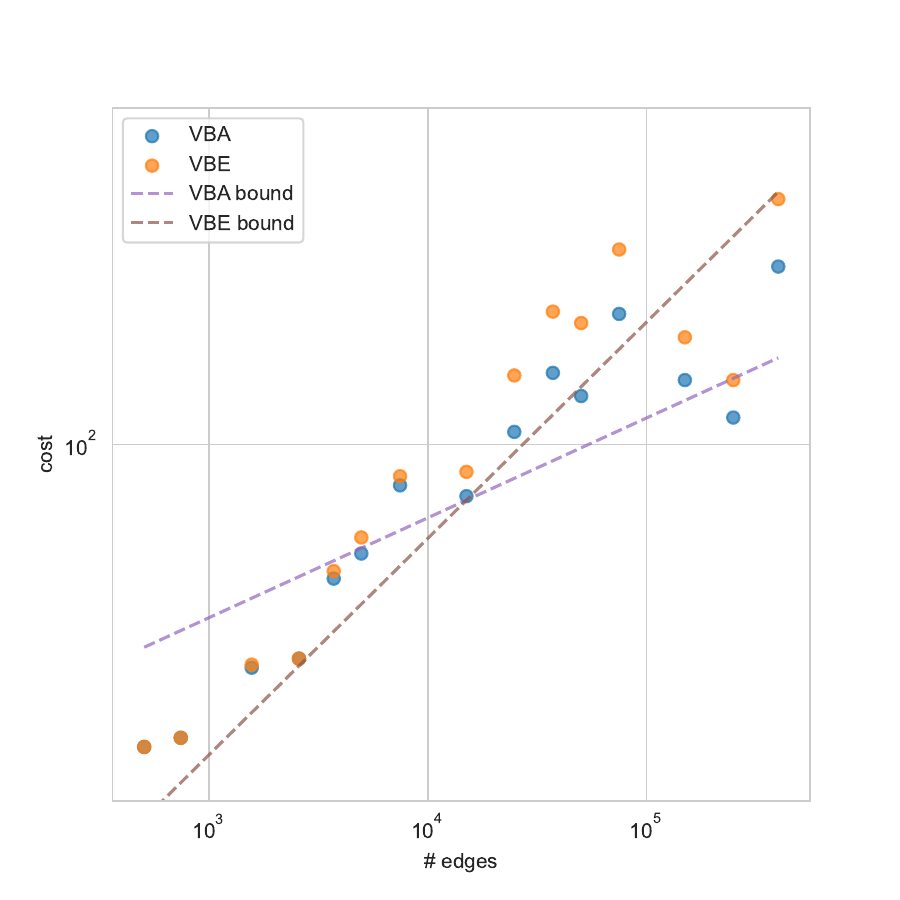}}
    \subfloat{\includegraphics[width=0.33\textwidth]{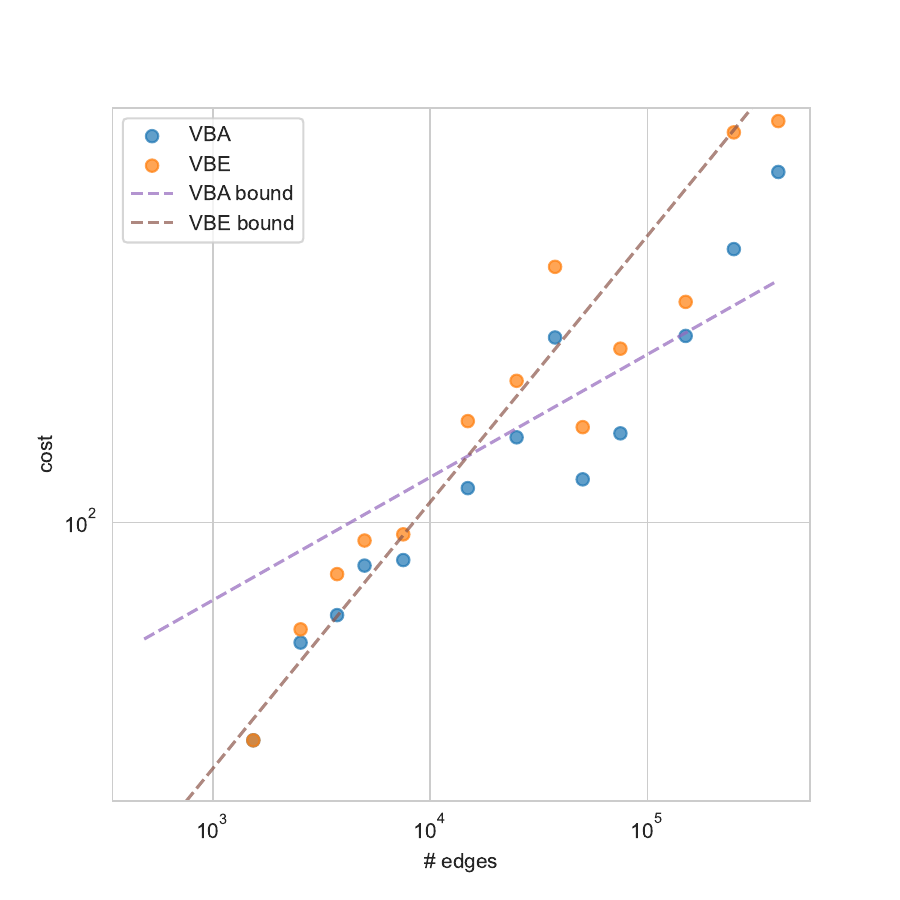}}
    \subfloat{\includegraphics[width=0.33\textwidth]{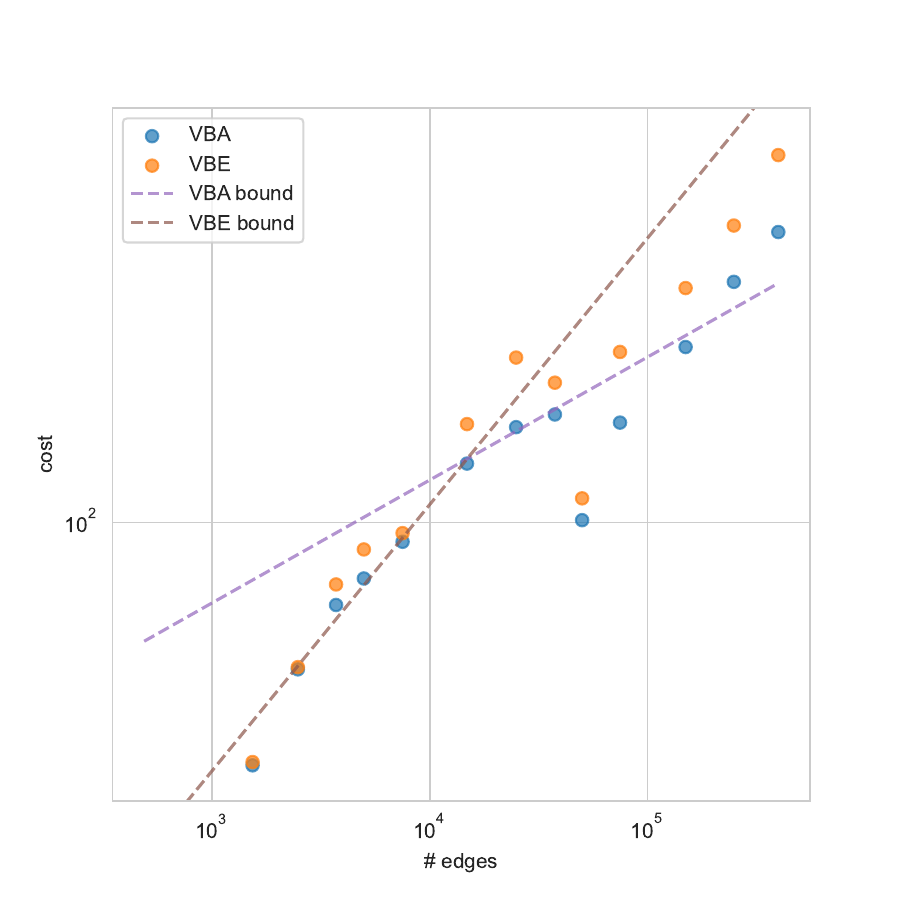}}
    \\
    \subfloat{\includegraphics[width=0.33\textwidth]{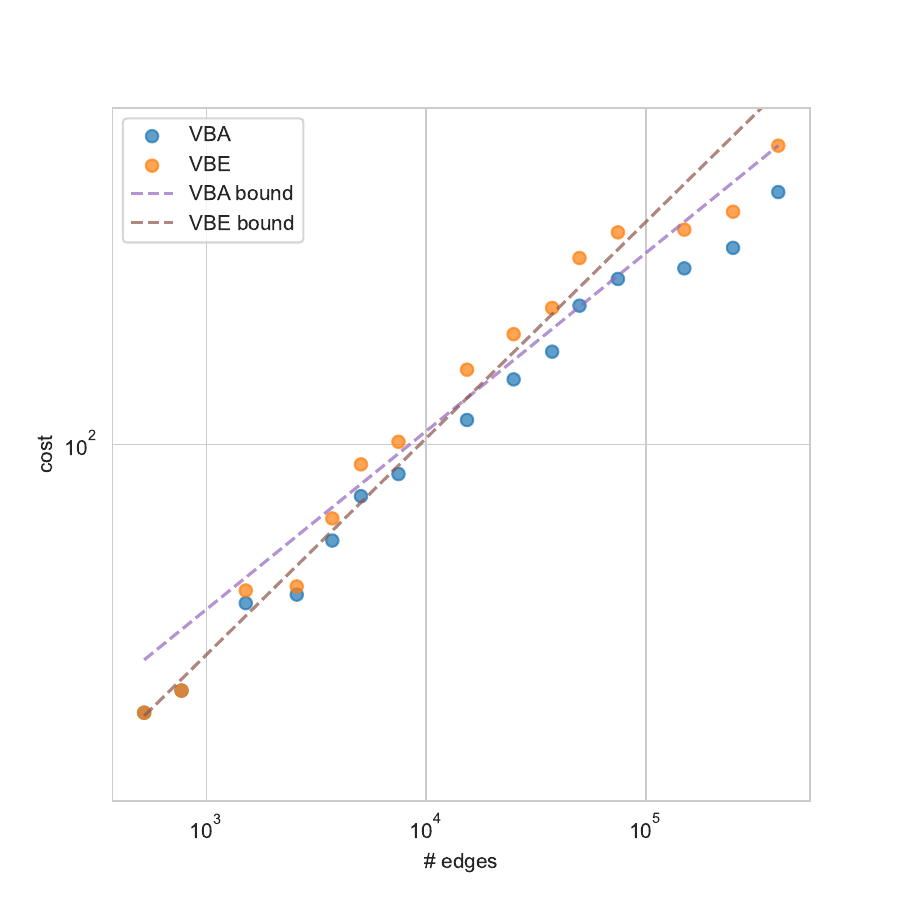}}
    \subfloat{\includegraphics[width=0.33\textwidth]{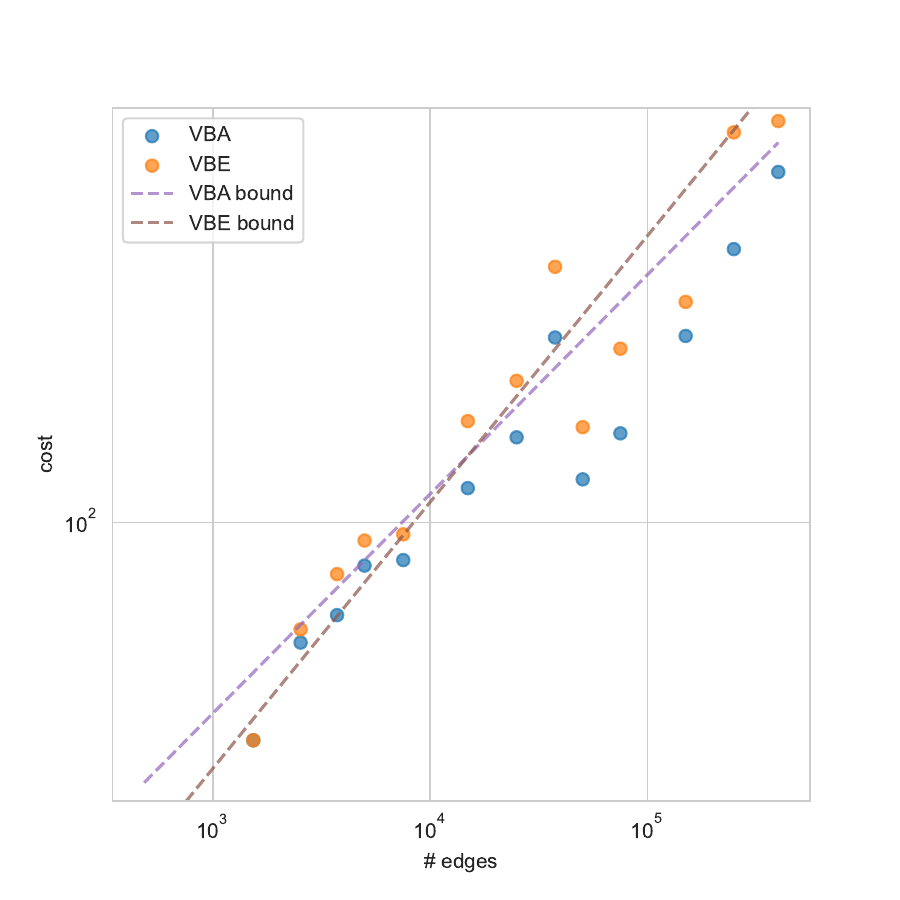}}
    \subfloat{\includegraphics[width=0.33\textwidth]{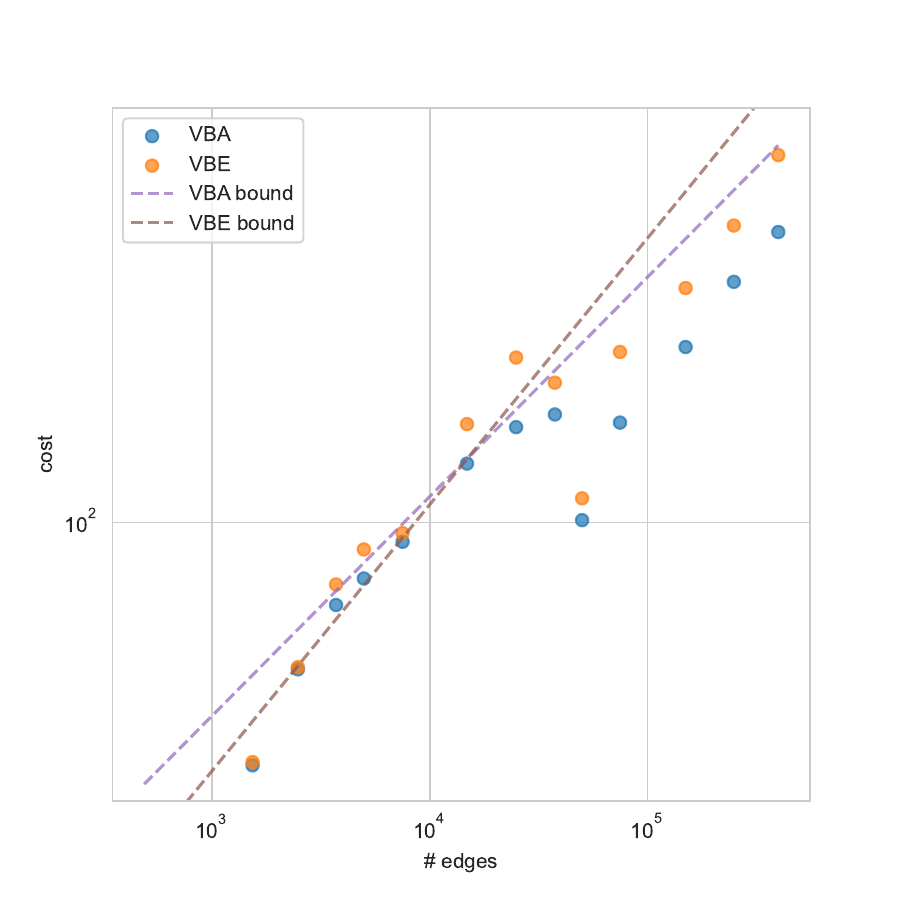}}
    \caption{Log-log plots of the runtime (or cost) of the VBA and VBE algorithms compared to the number of edges in the graph. The two leftmost subfigures show Chung-Lu graphs, the two middle subfigures GIRGs with $\alpha=1.5$, and the  two rightmost subfigures GIRGs with $\alpha=5$. In the top row all graphs are generated with power-law parameter $\tau=2.3$ and in the bottom row with $\tau=2.7$. All graphs have average degree 10, and all GIRGs have dimension $d=2$. The dashed lines (VBA bound and VBE bound) indicate the theoretical bounds from Theorem \ref{thm:vertex-approx-intro}-\ref{thm:vertex-exact-intro}, and are shifted up for ease of comparison. Each datapoint is the median value obtained from first sampling three graph instances with the same parameters and aggregating the runs over 100 random pairs of nodes $(s,t)$ on each graph instance.
    }
    \label{fig:log-edge-log-cost-plots}
\end{figure}

In the log-log plots of Figure~\ref{fig:log-edge-log-cost-plots}, we focus on the vertex-balanced algorithms VBA and VBE and display how the runtimes scale with increasing graph size (measured in the number of edges) for Chung-Lu graphs and GIRGs. This is then compared to our theoretical upper bounds (Theorems \ref{thm:vertex-approx-intro}-\ref{thm:vertex-exact-intro}), which are displayed as (shifted) dashed lines whose slopes yield the bound on the runtime exponent $\rho$. For the VBE algorithm, this bound on the runtime exponent is $\rho = \frac{1}{2}$, while for the VBA algorithm it is $\rho = \frac{\tau-2}{\tau-1}$, which for the generated graphs with $\tau=2.3$ is $\rho \approx 0.23$ and for $\tau=2.7$ is $\rho \approx 0.41$. Across parameter configurations (and graph models), the experimental runtimes are consistent with our theoretical predictions. Also here, we see that the VBA tends to be faster than the VBE, though the difference is not as stark as in Figure~\ref{fig:tau-runtime-exponent-plots}. Remarkably, for smaller graphs (up to ca.\ 300 edges), the runtimes of the VBA and the VBE coincide almost perfectly.

We also analyze the performance of the two vertex-balanced algorithms, as well as the two versions of the layer-balanced algorithm on generated networks juxtaposed to real-world networks (we examine a subset of 2740 networks from the Network Repository \cite{rossi2015network}). To this avail, we used the approach proposed in \cite{blasius2024external} by Bläsius and Fischbeck. The key idea is to define a (degree) \emph{heterogeneity} measure and a \emph{locality} measure. This allows to quantify similarity between real-world networks and generated networks, with the hope that the BFS algorithms perform similarly on real-world networks and generated networks of comparable heterogeneity and locality. We refer the reader to \cite{blasius2024external} for the definitions of these heterogeneity and locality measures, as well as the network data selection and the parameter choices for the generated networks, see also Figure \ref{fig:generated-graph-params} for an illustration of the heterogeneity and locality measures of our generated networks.

\begin{figure}
    \centering
    \includegraphics[width=0.7\textwidth]{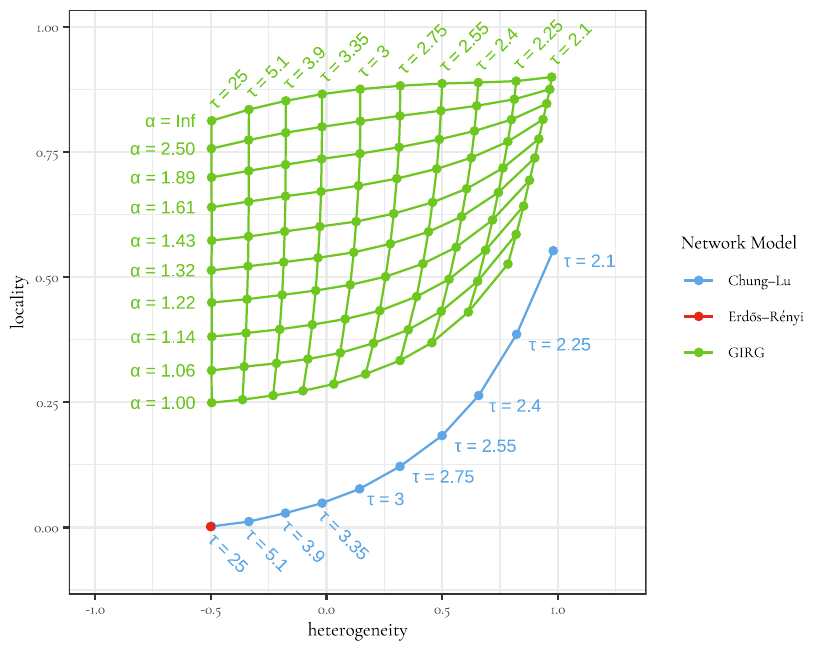}
    \caption{Heterogeneity and locality of the generated networks from the different models. Each point is the average of five samples with the given parameter configuration. This is a slightly adapted version of Figure 3 in \cite{blasius2024external}.}
    \label{fig:generated-graph-params}
\end{figure}

\begin{figure}
\centering
\begin{subfigure}{0.7\textwidth}
    \includegraphics[width=\textwidth]{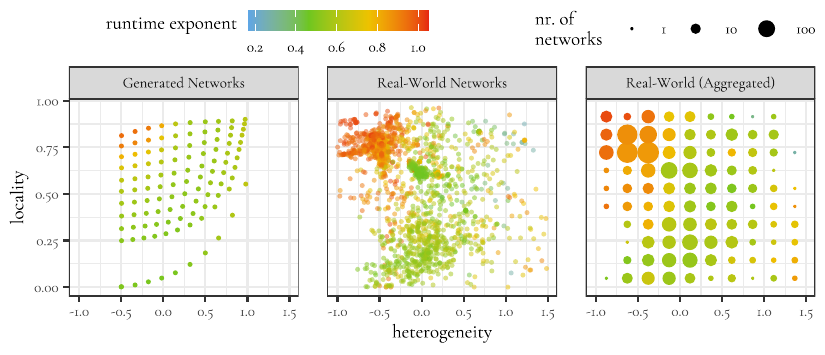}
    \caption{Vertex-balanced approximate algorithm (VBA).}
    \label{fig:real-generated-comparison-vba}
\end{subfigure}
\begin{subfigure}{0.7\textwidth}
    \includegraphics[width=\textwidth]{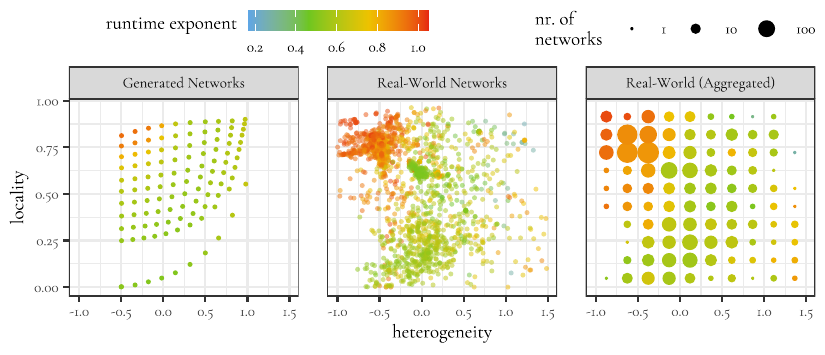}
    \caption{Vertex-balanced exact algorithm (VBE).}
    \label{fig:real-generated-comparison-vbe}
\end{subfigure}
\begin{subfigure}{0.7\textwidth}
    \includegraphics[width=\textwidth]{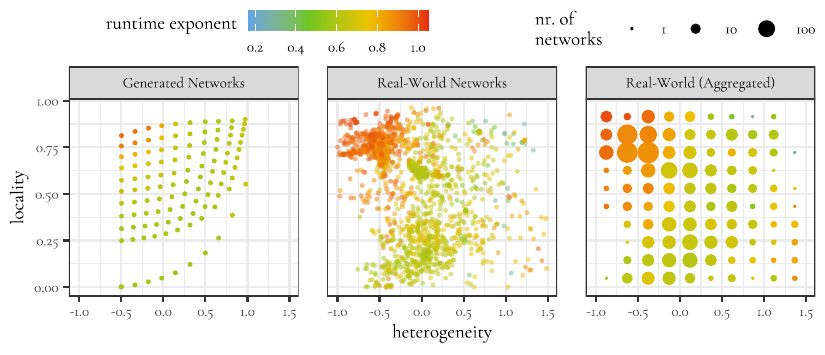}
    \caption{Layer-balanced algorithm without early stopping (LB).}
    \label{fig:real-generated-comparison-lbes}
\end{subfigure}
\begin{subfigure}{0.7\textwidth}
    \includegraphics[width=\textwidth]{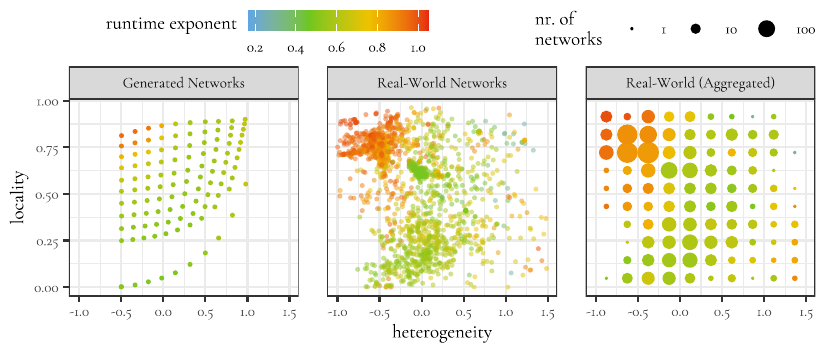}
    \caption{Layer-balanced algorithm with early stopping (LBES).}
    \label{fig:real-generated-comparison-lb}
\end{subfigure}
\caption{The runtime exponent $\rho$ of the cost $\mathcal{C}=m^{\rho}$ of the algorithms VBA (\ref{fig:real-generated-comparison-vba}), VBE (\ref{fig:real-generated-comparison-vbe}), LB (\ref{fig:real-generated-comparison-lb}) and LBES (\ref{fig:real-generated-comparison-lbes}), averaged over 100 $st$-pairs, and plotted against the networks' heterogeneity and locality. }
\label{fig:real-generated-comparison}
\end{figure}

Figure \ref{fig:real-generated-comparison} shows the results of this performance comparison between generated networks and real-world networks. In all four plots, we see qualitatively that generated networks are good proxies for estimating the algorithms' runtime on real-world networks (of similar heterogeneity and locality), with all algorithms being slowest on networks of high locality and low heterogeneity. Note that for the range of power-law exponents $\tau$ that we consider in our proofs, the runtime in the generated networks, displayed in the leftmost plot for each algorithm in Figure \ref{fig:real-generated-comparison}, does not exhibit any runtime dependence on the locality. The orange and red datapoints (corresponding to runtime exponents $\rho$ larger than 0.7) belong to network instances with $\tau>3$ (cf.\ Figure \ref{fig:generated-graph-params}). As already observed in Figure \ref{fig:tau-runtime-exponent-plots}, the VBA, VBE, and LBES algorithms have very similar running times on generated networks, and this fairly surprising insight also transfers to real-world networks. As expected, the LB algorithm is slower than the other three algorithms both on generated and real-world networks. Note that whenever the goal is to simulate the runtimes on surrogate graphs tailored to specific real networks, it is crucial to fit the model parameters to the real network. For the case of GIRGs, we refer the reader to \cite{dayan2024expressivity}.

Finally, our theoretical results make the surprising prediction that the size of both bidirectional search trees is dominated by the largest expanded degree of a single vertex. In Figure~\ref{fig:costfrac} we thus plot the ratio of the degree of the highest-degree vertex which is expanded throughout the execution to the total cost (i.e.\ the sum of the degrees of all expanded vertices) of the four algorithms VBA, VBE, LB and LBES. While the ratio is highest for the vertex-balanced algorithms, also for the layer-balanced algorithms we see that both for Chung-Lu graphs and GIRGs, and for the entire range of $\tau\in(2,3)$, the expanded vertex of maximal degree has a median contribution on the order of a constant fraction of the cost (sometimes accounting for more than half of the total cost!). We remark here that we conducted the same experiments with the ratio of the degree of the final expanded vertex to the total cost. This yielded very similar plots, with values of the same order of magnitude - indicating that the final expanded vertex  has a degree of similar order as the expanded vertex of maximum degree or even coincides with this vertex.

\begin{figure}

    \centering
   
    \begin{subfigure}{0.32\textwidth}
    \includegraphics[width=\textwidth]{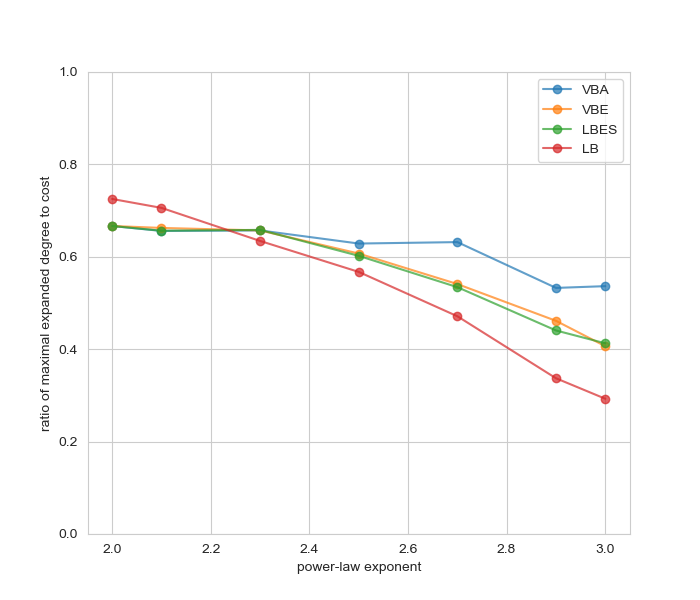}
    \caption{Chung-Lu graphs}
    \label{fig:degreecostratio_cl}
    \end{subfigure}
    \begin{subfigure}{0.32\textwidth}
    \includegraphics[width=\textwidth]{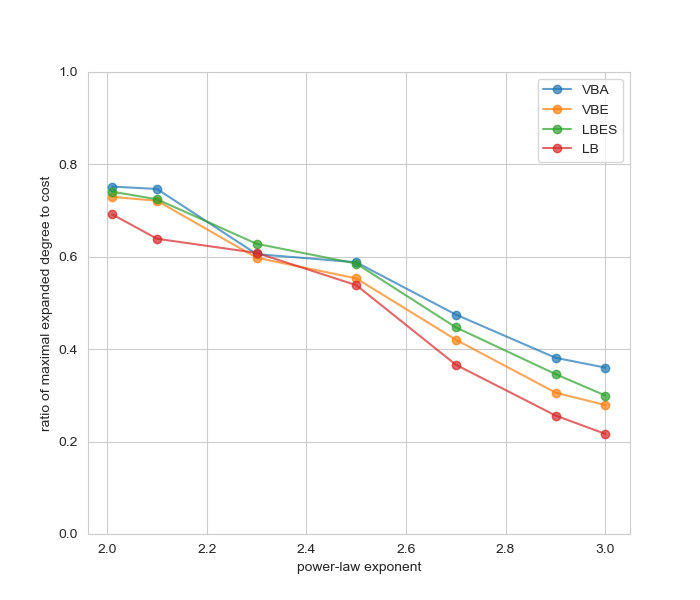}
    \caption{GIRGs with $\alpha=1.5$, $d=2$}
    \label{fig:degreecostratio_girgs_1.5}
    \end{subfigure}
    \begin{subfigure}{0.32\textwidth}
    \includegraphics[width=\textwidth]{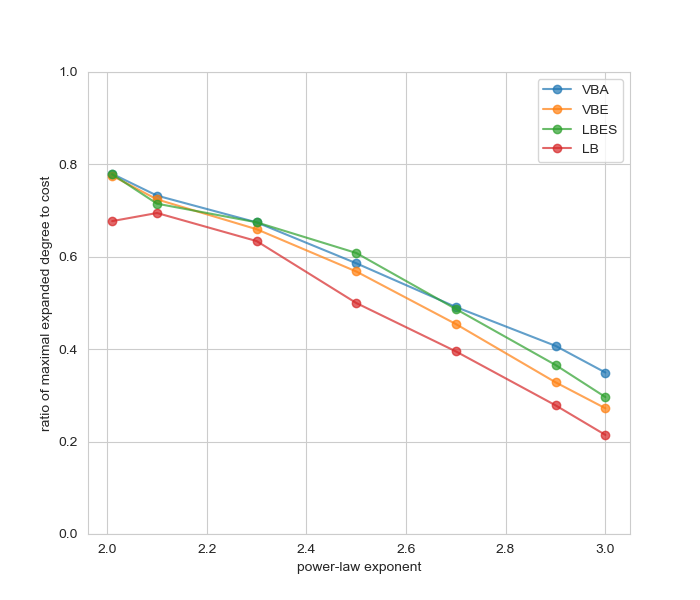}
    \caption{GIRGs with $\alpha=5$, $d=2$}
    \label{fig:degreecostratio_girgs_5.0}
    \end{subfigure}
    
    \caption{Ratio of  maximum degree among the expanded vertices vs.\ cost (i.e.\ sum of degrees of all expanded vertices). For each data point, three different connected graphs with $n\approx80'000$ nodes and $m\approx 1'200'000$ edges were generated, and on each of these graphs the algorithms were run for 100 random pairs of nodes $(s,t)$. The plots show the median value over these 300 runs.}
    \label{fig:costfrac}   
\end{figure}

\paragraph{Acknowledgments}
We thank Familie Türtscher for their hospitality during the research retreat in Buchboden in July 2023 where this project was started.

\bibliographystyle{abbrv}	
\bibliography{references}

\end{document}